\newtheorem{definition}{Definition}[chapter]
\newtheorem{theorem}[definition]{Theorem}
\newtheorem{lemma}[definition]{Lemma}
\newtheorem{prop}[definition]{Proposition}
\newtheorem{corollary}[definition]{Corollary}
\newtheorem{example}[definition]{Example}
\newtheorem{problem}[definition]{Problem}
\newcommand{\fracinline}[2]{
\raisebox{0.4ex}{\small $#1$}
\raisebox{0ex}{\large $/$}
\raisebox{-0.2ex}{\small $#2$}
}
\def\epsfsize#1#2{\ifnum#1>\hsize\hsize\else#1\fi}
\begin{document}


\title{Trace Equivalence and Epistemic Logic to Express Security Properties} 

\major{Division of Mathematics and Mathematical Sciences}
\course{Mathematical Sciences Course}

\author{Kiraku Minami}


\date{\today}


\labaffiliation{Graduate School of Science, Kyoto University}


\maketitle

\pagenumbering{roman}
\begin{abstract}
\addcontentsline{toc}{chapter}{Abstract}
In process algebras, security properties are expressed as equivalences between processes, but which equivalence is suitable is not clear. This means that there is a gap between an intuitive security notion and the formulation. Appropriate formalization is essential for verification, and our purpose is bridging this gap. By chasing scope extrusions, we prove that trace equivalence is congruent. Moreover, we construct an epistemic logic for the applied pi calculus and show that its logical equivalence agrees with the trace equivalence. We use the epistemic logic to show that trace equivalence is pertinent 
in the presence of a non-adaptive attacker.
\end{abstract}

\chapter*{Acknowledgements}
\addcontentsline{toc}{chapter}{Acknowledgements}
\setcounter{page}{2}
\noindent
I am sincerely appreciative of Professor Masahito Hasegawa who provided helpful comments and suggestions. I would also like to convey a sense of gratitude to my family for their moral support and warm encouragements. 

\tableofcontents

\chapter{Introduction}
\pagenumbering{arabic}
\setcounter{page}{1}

\section{Background}
In modern society, information and communications technology is indispensable to our daily lives, and a number of communication protocols are proposed to securely transmit information using a network. Verification of safety of these protocols is also essential, but it is not easy.

In the first place, how to define security notions is not clear. As a matter of fact, various different definitions of the same security property have been proposed. In addition, how to model communication is also not clear; many such models have been developed. In this work, we focus on process algebras because it allows us to handle composition easily. In process algebras, typical security properties such as secrecy are represented by an equivalence between processes.

For instance, secrecy of a message $M$ is expressed as equivalence between a process sending $M$ and a process sending $M'$.

Many process equivalences exist in the literature, but which is the most suitable for capturing security properties is not always clear. For instance, Delaune, Kremer and Ryan  \cite{delaune2009verifying} proposed the definition of privacy in electronic voting in terms of the applied pi calculus \cite{DBLP:journals/corr/AbadiBF16} as follows. 
\begin{definition}[{\cite[Definition 9]{delaune2009verifying}}]
A voting protocol respects vote-privacy (or just privacy) if
\begin{equation*}
S[V_A \{\fracinline{a}{v}\} | V_B \{\fracinline{b}{v}\}] \approx _l S[V_A \{\fracinline{b}{v}\} | V_B \{\fracinline{a}{v}\}]
\end{equation*}
for all possible votes $a$ and $b$.
\end{definition}
\noindent
Intuitively, this definition states that an attacker cannot distinguish two situations where votings are swapped.
Note that indistinguishability is expressed by labelled bisimilarity $\approx_l$ in this definition. Is it the most suitable? This is a nontrivial question. Chadha, Delaune and Kremer \cite{chadha2009epistemic} claim that trace equivalence is more suitable regarding privacy.

In fact, an outputted message is not explicitly expressed in the applied pi calculus. It is represented via an alias variable. This feature enables us to handle cryptographic protocols and suggests that trace equivalence means an attacker's indistinguishability. This is because an attacker can observe only labelled transition. We recall the syntax and semantics of the calculus in Chapter \ref{applied-pi}.

Bisimilarity and traced equivalences are both well-studied equivalence relations on labelled transition systems.
However, trace equivalence in the applied pi calculus (and
other variants of the pi calculus \cite{Milner:1992:CMP:162037.162038,Milner:1992:CMP:162037.162039,Milner:1999:CMS:329902}) has not drawn much attention.
This is probably because trace equivalence is the coarsest among major equivalences. However, security properties sometimes require that actually different processes are regarded as the same. Thereby, a fine equivalence such as bisimilarity does not have to be always adequate.

Epistemic logic is often used to capture security notions (e.g. \cite{chadha2009epistemic,mano2010role,tsukada2016compositional}). It enables us to directly express security properties. For example, when a message $M$ sent by an agent $a$ is anonymous, we can say that an adversary cannot know who sent $M$. In an epistemic logic, we can express it with a formula such like $\lnot K\mathrm{Send}(a, M)$. This logical formulation is close to our intuition. Nevertheless, research about an epistemic logic for the applied pi calculus is few.
\section{Contributions}
We prove that trace equivalence for the applied pi calculus is a congruence in Chapter \ref{cong}. Secondly, we give an epistemic logic that characterizes trace equivalence in Chapter \ref{epi}. In addition, we define security properties such as secrecy and role-interchangeability \cite{mano2010role,tsukada2016compositional} using trace equivalence and our epistemic logic. 
Moreover, we show that secrecy and openness generally are not preserved by parallel composition.
On the other hand, total secrecy is characterized by trace equivalence, so it is preserved by an application of a context. 
The same partly holds for role-interchangeability.
Finally, we prove that the satisfaction problem for the epistemic logic is undecidable.

Our results suggest that trace equivalence is more suitable to express security notions than labelled bisimilarity. 

\chapter{The Applied Pi Calculus}\label{applied-pi}
\section{Syntax}
The applied pi calculus \cite{DBLP:journals/corr/AbadiBF16} is an extension of the pi calculus \cite{Milner:1992:CMP:162037.162038,Milner:1992:CMP:162037.162039,Milner:1999:CMS:329902} and can handle cryptographic protocols.

Let $\Sigma$ be a signature equipped with an equational theory. $\Sigma$ consists of a finite set of function symbols with their arity. 
Names, variables, and terms are defined on $\Sigma$ by the following grammar:

\begin{align*} 
M, N::=&n, a, b, c, d &\mathrm{name}\\
    |&x, y, z, w &\mathrm{variable}\\
    |&f(M_1,...,M_n) &\mathrm{function\ application}
\end{align*}

The above $f$ is a function symbol with arity $n$ on $\Sigma$. 
We say that a term is ground when it contains no variables. Meta-variables $u, v$ are used to represent over both names and variables. We describe a sequence of terms $M_1, ..., M_n$ as $\widetilde{M}$. We denote by n($M$) and v($M$) the set of names and variables in $M$.
The grammar for processes is given below:

\begin{align*} 
P, Q, R, S::=& 0 &\mathrm{nil}\\ 
   |& \overline{M} \langle N \rangle .P &\mathrm{output}\\
   |& M (x).P &\mathrm{ input}\\
   |& \nu n.P &\mathrm{ name\ restriction}\\
   |& \mathrm{if} ~M=N~ \mathrm{then} ~P~ \mathrm{else} ~Q &\mathrm{ conditional\ branch}\\
   |& P+Q &\mathrm{ nondeterministic\ choice}\\
   |& P|Q &\mathrm{ parallel\ composition}\\
   |& !P &\mathrm{ replication}
\end{align*}

0 cannot do anything. $\overline{M} \langle N \rangle .P$ sends $N$ on a channel $M$ and then behaves as $P$. $M(x).P$ waits for input from a channel $M$ and then behaves as $P$ with the variable $x$ replaced by the received message. In both cases, we omit $P$ if it is 0. The process $\nu n.P$ creates a new restricted name $n$ and then behaves as $P$. A restricted name is often regarded as a nonce, a key and so on. The conditional branch $\mathrm{if} ~M=N~ \mathrm{then} ~P~ \mathrm{else} ~Q$ behaves as P when $M=N$ and as $Q$ otherwise. Note that $M = N$ means equality on $\Sigma$. We omit $Q$ if $Q$ is 0. $P+Q$ nondeterministically behaves $P$ or $Q$. $P|Q$ is the parallel composition of $P$ and $Q$; the replication $!P$ behaves as unbounded copies of $P$ running concurrently.

We call processes defined above plain processes. Moreover, we define extended processes:

\begin{align*} 
A, B, C, D, E::=& P &\mathrm{ plain\ process}\\
   |& \nu n.A &\mathrm{ name\ restriction}\\
   |& \nu x.A &\mathrm{ variable\ restriction}\\
   |& A|B &\mathrm{ paralell\ composition}\\
   |& \{ M/x \} &\mathrm{ active\ substitution}
\end{align*}

We consider that $\{M/x\}$ is floating around processes. If it touches other processes, a substituion happens. We will describe in detail later. If we want to impose reins on this effect, we restrict variables. $\{M/x\}$ often represents messages outputted to an environment. Moreover, it is sometimes interpreted as attacker's knowledge in view of security because an active substitution $\{M/x\}$ appears when a process sends a message $M$ to an environment. When $\tilde{M}=M_1,\dots,M_n$ and $\tilde{x}=x_1,\dots,x_n, \{\tilde{M}/\tilde{x}\}$ expresses $\{M_1/x_1\}|\dots|\{M_n/x_n\}$. In particular, if $n=0$, it is a null process 0. 

Let $\widetilde{u} = u_1,...,u_n$. Then $\nu \widetilde{u}. A = \nu u_1. ... \nu u_n. A$.

Let $\sigma = \{\widetilde{M}/\widetilde{x}\}$, then $M\sigma$ is a term obtained by replacing all $x_i$ with $M_i$.  We denote by fn($A$) and fv($A$) the sets of free names and variables respectively. These are inductively defined below.

\begin{align*}  
\mathrm{ fn}(0) =& \emptyset \\
\mathrm{ fn}(\overline{M} \langle N \rangle .P) =& \mathrm{ n}(M) \cup \mathrm{ n}(N) \cup \mathrm{ fn}(P) \\
\mathrm{ fn}(M (x).P) =& \mathrm{ n}(M) \cup \mathrm{ fn}(P) \\
\mathrm{ fn}(\nu n.P) =& \mathrm{ fn}(P)\setminus \{n\} \\
\mathrm{ fn}(\mathrm{if} ~M=N~ \mathrm{then} ~P~ \mathrm{else} ~Q) =& \mathrm{ n}(M) \cup \mathrm{ n}(N) \cup \mathrm{ fn}(P) \cup \mathrm{ fn}(Q) \\
\mathrm{ fn}(P+Q) =& \mathrm{ fn}(P) \cup \mathrm{ fn}(Q) \\
\mathrm{ fn}(P|Q) =& \mathrm{ fn}(P) \cup \mathrm{ fn}(Q) \\
\mathrm{ fn}(!P) =& \mathrm{ fn}(P) \\
\mathrm{ fn}(\nu n.A) =& \mathrm{ fn}(A) \setminus \{n\} \\
\mathrm{ fn}(\nu x.A) =& \mathrm{ fn}(A) \\
\mathrm{ fn}(A|B) =& \mathrm{ fn}(A) \cup \mathrm{ fn}(B) \\
\mathrm{ fn}(\{M/x\}) =& \mathrm{ n}(M)
\end{align*}

\begin{align*} 
\mathrm{ fv}(0) =& \emptyset \\
\mathrm{ fv}(\overline{M} \langle N \rangle .P) =& \mathrm{ v}(M) \cup \mathrm{ v}(N) \cup \mathrm{ fv}(P) \\
\mathrm{ fv}(M (x).P) =& \mathrm{ v}(M) \cup \mathrm{ fv}(P) \setminus \{x\} \\
\mathrm{ fv}(\nu n.P) =& \mathrm{ fv}(P) \\
\mathrm{ fv}(\mathrm{if} ~M=N~ \mathrm{then} ~P~ \mathrm{else} ~Q) =& \mathrm{ v}(M) \cup \mathrm{ v}(N) \cup \mathrm{ fv}(P) \cup \mathrm{ fv}(Q) \\
\mathrm{ fv}(P+Q) =& \mathrm{ fv}(P) \cup \mathrm{ fv}(Q) \\
\mathrm{ fv}(P|Q) =& \mathrm{ fv}(P) \cup \mathrm{ fv}(Q) \\
\mathrm{ fv}(!P) =& \mathrm{ fv}(P) \\
\mathrm{ fv}(\nu n.A) =& \mathrm{ fv}(A) \\
\mathrm{ fv}(\nu x.A) =& \mathrm{ fv}(A) \setminus \{x\} \\
\mathrm{ fv}(A|B) =& \mathrm{ fv}(A) \cup \mathrm{ fv}(B) \\
\mathrm{ fv}(\{M/x\}) =& \mathrm{ v}(M) \cup \{x\}
\end{align*}

Furthermore, We denote by bn($A$) and rv($A$)  the sets of bound names and restricted variables respectively. Note that $\nu x.$ appears only in an extended process.

\begin{align*} 
\mathrm{ bn}(0) =& \emptyset \\
\mathrm{ bn}(\overline{M} \langle N \rangle .P) =& \mathrm{ bn}(P) \\
\mathrm{ bn}(M (x).P) =& \mathrm{ bn}(P) \\
\mathrm{ bn}(\nu n.P) =& \mathrm{ bn}(P) \cup \{n\} \\
\mathrm{ bn}(\mathrm{if} ~M=N~ \mathrm{then} ~P~ \mathrm{else} ~Q) =& \mathrm{ bn}(P) \cup \mathrm{ bn}(Q) \\
\mathrm{ bn}(P+Q) =& \mathrm{ bn}(P) \cup \mathrm{ bn}(Q) \\
\mathrm{ bn}(P|Q) =& \mathrm{ bn}(P) \cup \mathrm{ bn}(Q) \\
\mathrm{ bn}(!P) =& \mathrm{ bn}(P) \\
\mathrm{ bn}(\nu n.A) =& \mathrm{ bn}(A) \cup \{n\} \\
\mathrm{ bn}(\nu x.A) =& \mathrm{ bn}(A) \\
\mathrm{ bn}(A|B) =& \mathrm{ bn}(A) \cup \mathrm{ bn}(B) \\
\mathrm{ bn}(\{M/x\}) =& \emptyset
\end{align*}

\begin{align*} 
\mathrm{ rv}(P) =& \emptyset \\
\mathrm{ rv}(\nu n.A) =& \mathrm{ rv}(A) \\
\mathrm{ rv}(\nu x.A) =& \mathrm{ rv}(A) \cup \{x\} \\
\mathrm{ rv}(A|B) =& \mathrm{ rv}(A) \cup \mathrm{ rv}(B) \\
\mathrm{ rv}(\{M/x\}) =& \emptyset
\end{align*}

We also define $\mathrm{n}(A) = \mathrm{fn}(A) \cup \mathrm{bn}(A)$ and $\mathrm{v}(A) = \mathrm{fv}(A) \cup \mathrm{rv}(A)$. 
The domain dom($A$) of an extended process $A$ is also inductively defined below. If variables in other concurrently running processes are in $\mathrm{dom}(A)$, $A$ affects them.

\begin{align*} 
\mathrm{ dom}(P)            =& \emptyset \\
\mathrm{ dom}(\nu n.A)    =& \mathrm{ dom}(A) \\
\mathrm{ dom}(\nu x.A)    =& \mathrm{ dom}(A) \setminus \{ x \} \\
\mathrm{ dom}(A|B)         =& \mathrm{ dom}(A) \cup \mathrm{ dom}(B) \\
\mathrm{ dom}(\{ M/x \} ) =& \{ x \}
\end{align*}

We basically identify an extended process with an $\alpha$-equivalent process, but we carefully handle $\alpha$-equivalent processes in Chapter \ref{cong}, \ref{epi}.
Hereafter, we assume three restrictions. First, suppose that substitutions are cycle-free with respect to variables.
\begin{definition}
An active substitution $\sigma$ is cycle-free $\overset{\mathrm{def}}{\Leftrightarrow}$ $\sigma$ is written as $\{M_1/x_1,...,M_n/x_n\}$ by reordering so that each $M_i$ does not contain $x_1,...,x_i$ by reordering.
\end{definition}
Secondly, we can compose $A$ and $B$ in parallel only if $\mathrm{ dom}(A) \cap \mathrm{ dom}(B) = \emptyset$. Lastly, we can restrict only variables in a domain. We say that $A$ is closed if $\mathrm{ fv}(A) = \mathrm{ dom}(A)$.

Note the proposition below.

\begin{prop}
Consider an active substitution $\sigma = \{M_1/x_1,...,M_n/x_n\}$ where each $M_i$ does not contain $x_1,...,x_i$. Then, there exists an active substitution $\rho \equiv \sigma$ such that each $x_i\rho$ contains no variables other than ones in $\mathrm{fv}(\sigma) \setminus \mathrm{dom}(\sigma)$.
\end{prop}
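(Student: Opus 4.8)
The plan is to recognize the statement as the normalization of a triangular (cycle-free) substitution into one whose ranges are free of domain variables, and to realize this normalization entirely through structural congruence. The only reason a range term $M_i$ can mention a domain variable at all is that the cycle-free hypothesis permits $M_i$ to contain $x_{i+1}, \dots, x_n$, while forbidding $x_1, \dots, x_i$. In particular $M_n$ already contains no variable of $\mathrm{dom}(\sigma) = \{x_1, \dots, x_n\}$, so the component $\{M_n/x_n\}$ is already in the desired form. The strategy is to eliminate domain variables from the ranges in order of decreasing index, each elimination step being justified by the structural-congruence rule for active substitutions $\{M/x\} \mid A \equiv \{M/x\} \mid A\{M/x\}$.

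Concretely, I would argue by induction on $n = |\mathrm{dom}(\sigma)|$, the case $n = 0$ being immediate. For the inductive step, writing $\sigma = \{M_n/x_n\} \mid B$ with $B = \{M_1/x_1\} \mid \dots \mid \{M_{n-1}/x_{n-1}\}$ (using commutativity and associativity of $\mid$), the congruence rule yields $\sigma \equiv \{M_n/x_n\} \mid B\{M_n/x_n\}$, that is $\sigma \equiv \{M_n/x_n\} \mid \{M_1'/x_1\} \mid \dots \mid \{M_{n-1}'/x_{n-1}\}$ where $M_i' = M_i\{M_n/x_n\}$. Since $x_i \neq x_n$, only the range terms are affected, and the domains $\{x_1\}, \dots, \{x_n\}$ remain pairwise disjoint, so every parallel composition stays legal. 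I would then check that $\sigma' = \{M_1'/x_1, \dots, M_{n-1}'/x_{n-1}\}$ is again cycle-free: because $M_n$ contains no domain variable, substituting it into $M_i$ introduces none of $x_1, \dots, x_i$ and so preserves the cycle-free ordering. Applying the induction hypothesis to $\sigma'$ produces $\rho' \equiv \sigma'$ whose ranges avoid $\mathrm{dom}(\sigma')$, and setting $\rho = \rho' \mid \{M_n/x_n\}$ gives the claim.

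The remaining work is free-variable bookkeeping, which I expect to be the only delicate point. I would track that $\mathrm{v}(M_i') \subseteq (\mathrm{v}(M_i) \setminus \{x_n\}) \cup \mathrm{v}(M_n)$ and that $x_n \notin \mathrm{fv}(\sigma')$, whence $\mathrm{fv}(\sigma') \setminus \mathrm{dom}(\sigma') \subseteq \mathrm{fv}(\sigma) \setminus \mathrm{dom}(\sigma)$; this inclusion is exactly what promotes the induction hypothesis's guarantee (ranges contained in $\mathrm{fv}(\sigma') \setminus \mathrm{dom}(\sigma')$) to the desired guarantee for $\sigma$. Together with $\mathrm{v}(M_n) \subseteq \mathrm{fv}(\sigma) \setminus \mathrm{dom}(\sigma)$ for the $x_n$-component, every range term of $\rho$ then contains only variables of $\mathrm{fv}(\sigma) \setminus \mathrm{dom}(\sigma)$. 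No variable-capture issue arises, since active substitutions bind nothing and their singleton domains are kept disjoint throughout, so the congruence rule applies without $\alpha$-renaming side conditions.
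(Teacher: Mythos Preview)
Your proposal is correct and follows essentially the same approach as the paper: induction on $n$, peeling off the last component $\{M_n/x_n\}$, applying the structural-congruence rule $A \mid \{M/x\} \equiv A\{M/x\} \mid \{M/x\}$ to eliminate $x_n$ from the remaining ranges, and then invoking the induction hypothesis on the resulting $(n-1)$-component substitution. Your write-up is in fact more explicit than the paper's about the cycle-freeness check and the inclusion $\mathrm{fv}(\sigma') \setminus \mathrm{dom}(\sigma') \subseteq \mathrm{fv}(\sigma) \setminus \mathrm{dom}(\sigma)$, both of which the paper uses but states only briefly.
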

\begin{proof}
We prove by induction on $n$. The base case is trivial. We consider the induction case.

Let $\sigma' = \{M_n/x_n\}$ and $\sigma'' = \{M_1\sigma'/x_1,...,M_{n-1}\sigma'/x_{n-1}\}$. Each $M_i\sigma'$ does not contain $x_n$, so there exists $\rho' \equiv \sigma''$ such that each $x_i\rho'$ contains no variables other than ones in $\mathrm{fv}(\sigma'') \setminus \mathrm{dom}(\sigma'')$ by induction hypothesis. Then,
\begin{equation*}
\sigma = \{M_1/x_1,...,M_{n-1}/x_{n-1}\} | \sigma' \equiv \{M_1\sigma'/x_1,...,M_{n-1}\sigma'/x_{n-1}\} | \sigma' \equiv \rho' | \sigma'.
\end{equation*}
$\rho' | \sigma'$ meets the condition against $n$. Note that $\mathrm{fv}(\sigma'') \setminus \mathrm{dom}(\sigma'') \subseteq \mathrm{fv}(\sigma) \setminus \mathrm{dom}(\sigma)$.
\end{proof}

\section{Semantics}
\subsection{Internal Reduction}
Internal communication, conditional branch and nondeterministic choice are unobservable actions.

A context is an expression containing one hole. The hole is represented by [\_]. An evaluation context is a context whose hole is neither under a replication, a conditional, nor an action prefix. 
Structural equivalence $\equiv$ is the smallest equivalence relation between extended processes which is closed under an application of an evaluation context and $\alpha$-conversion, such that:

\begin{align*} 
A|0 & \equiv A \\
(A|B)|C & \equiv A|(B|C) \\
A|B & \equiv B|A \\
(\nu u.A)|B & \equiv \nu u.(A|B) ~~ \mathrm{if} ~u \notin \mathrm{ fn}(B) \cup \mathrm{ fv}(B) \\
\nu u. \nu v.A & \equiv \nu v. \nu u.A \\
!P & \equiv P|!P \\
P+Q & \equiv Q+P \\
\nu x.\{ M/x \} & \equiv 0 \\
A| \{ M/x \} & \equiv A \{ M/x \} | \{ M/x \} \\
\{ M/x \} & \equiv \{ N/x \} ~~ \mathrm{if}\ \Sigma \vdash M = N
\end{align*}

That is, $|$ beheves like a commutative monoid, and + behaves like a  commutative magma. Restriction is also commutative.
Note that $\{M/x\}$ in $A\{M/x\}$ is not an active substitution, it is an ordinal substitution. $A\{M/x\}$ is the same as $A$ expect that all free occurences of $x$ is replaced with $M$, and it is a capture-avoiding substitution. The last rule depends on $\Sigma$.
We omit redundant parentheses when distinction between structural equivalent processes is not necessary.

\begin{definition}\label{int_red}
Internal reduction $\rightarrow$ is the smallest relation between extended processes closed under structural equivalence and application of evaluation contexts such that:

\begin{align*} 
\mathrm{if} ~M=N~ \mathrm{then} ~P~ \mathrm{else} ~Q & \rightarrow P ~~ \ \mathrm{when} ~ \Sigma \vdash M = N \\
\mathrm{if} ~M=N~ \mathrm{then} ~P~ \mathrm{else} ~Q & \rightarrow Q ~~ \ \mathrm{when} ~ \Sigma \not\vdash M = N\ (M\ \mathrm{ and}\ N\ \mathrm{ are\ ground.}) \\
P+Q & \rightarrow P \\
\overline{M} \langle N \rangle .P | M (x).Q & \rightarrow P | Q \{ N/x \}
\end{align*}
\end{definition}

The second rule requires that $M$ and $N$ are ground, and it is essential. If this condition is not satisfied, unreasonable reduction happens when an active substituion concurrently exists. For example, consider $\mathrm{if} ~x=a~ \mathrm{then} ~P~ \mathrm{else} ~Q | \{a/x\}$.

An internal reduction is often called silent action.

\subsection{Labelled Transition}
Labelled semantics enables us to express unsafe communication.

We denote by $\alpha$ a labelled action. The labelled semantics gives a ternary relation $A \overset{\alpha}{\longrightarrow} A'$ on two processes and a label. A label is an input form or an output form. In the former case, $\alpha$ is written as $M(N)$. It means that a message $N$ is received on a channel $M$. In the latter case, $\alpha$ is written as $\nu x. \overline{M} \langle x \rangle$, where $x$ is a variable which is fresh. It means that a message is sent on channel $M$ via an alias variable $x$. It is a notable feature that a sent message does not explicitly appear in a label.

We define n($\alpha$), fv($\alpha$) and rv($\alpha$) as with a process. 
\begin{align*}
\mathrm{ n}(M(N)) =& \mathrm{ n}(M) \cup \mathrm{ n}(N) \\
\mathrm{ n}(\nu x. \overline{M} \langle x \rangle) =& \mathrm{ n}(M)\\
\mathrm{ fv}(M(N)) =& \mathrm{ v}(M) \cup \mathrm{ v}(N)\\
\mathrm{ fv}(\nu x. \overline{M} \langle x \rangle) =& \mathrm{ v}(M) \\
\mathrm{ rv}(M(N)) =& \emptyset \\
\mathrm{ rv}(\nu x. \overline{M} \langle x \rangle) =& \{x\}.
\end{align*}
Additionally, we regard that a silent action contains no variables and names.

We recall structural operational semantics for the applied pi calculus.

\begin{prooftree}
\AxiomC{}
\UnaryInfC{$ M(x).P \overset{M(N)}{\longrightarrow} P \{ N/x \} $}
\end{prooftree}
\begin{prooftree}
\AxiomC{$ x \notin \mathrm{ fv}(\overline{M} \langle N \rangle .P) $}
\UnaryInfC{$ \overline{M} \langle N \rangle .P \overset{\nu x.\overline{M} \langle x \rangle}{\longrightarrow} P| \{ N/x \} $}
\end{prooftree}
\begin{prooftree}
\AxiomC{$A \overset{\alpha}{\longrightarrow} A' \land u \notin \mathrm{ n}(\alpha) \cup \mathrm{ v}(\alpha)$}
\UnaryInfC{$\nu u.A \overset{\alpha}{\longrightarrow} \nu u.A'$}
\end{prooftree}
\begin{prooftree}
\AxiomC{$ A \overset{\alpha}{\longrightarrow} A' \land \mathrm{ rv}(\alpha) \cap \mathrm{ fv}(B) = \emptyset $}
\UnaryInfC{$ A|B \overset{\alpha}{\longrightarrow} A'|B $}
\end{prooftree}
\begin{prooftree}
\AxiomC{$ A \equiv A' \land A' \overset{\alpha}{\longrightarrow} B' \land B' \equiv B $}
\UnaryInfC{$ A \overset{\alpha}{\longrightarrow} B $}
\end{prooftree}

We denote by $\mu$ a silent action and a labelled action. We define $\Longrightarrow$ as the transitive reflective closure of $\longrightarrow$, and $\overset{\alpha}{\Longrightarrow}$ as $\Longrightarrow \overset{\alpha}{\longrightarrow} \Longrightarrow$. $\overset{\mu}{\Longrightarrow}$ is the former when $\mu$ is silent and is the latter when $\mu = \alpha$.

\section{Equivalence}
\subsection{Labelled Bisimulation}

If $\mathrm{ fn}(A) \cap \mathrm{ bn}(A) = \emptyset$ and any bound name is not restricted twice, we say that $A$ is name-distinct \cite{chadha2009epistemic}. If $\mathrm{ fv}(A) \cap \mathrm{ rv}(A) = \emptyset$ and any restricted variable is not restricted twice, we say that $A$ is variable-distinct.
When $A$ and $B$ are name-variable-distinct and $\mathrm{fn}(A) \cap \mathrm{bn}(B) = \mathrm{bn}(A) \cap \mathrm{fn}(B) = \mathrm{bn}(A) \cap \mathrm{bn}(B) = \emptyset$, we say that $A$ and $B$ are  bind-exclusive.

A frame is an extended process made from 0 and active substitutions by restriction and parallel composition. $\varphi$ and $\psi$ are used to represent frames. 
We define a map fr from an extended process to a frame below.

\begin{align*} 
\mathrm{ fr}(P) & = 0 \\
\mathrm{ fr}(\nu n.A) & =\nu n.\mathrm{ fr}(A) \\
\mathrm{ fr}(\nu x.A) & =\nu x.\mathrm{ fr}(A) \\
\mathrm{ fr}(A|B) & =\mathrm{ fr}(A)|\mathrm{ fr}(B) \\
\mathrm{ fr}(\{ M/x \} ) &=\{ M/x \}
\end{align*}

From now, we identify a frame with a structural equivalent frame.

\begin{definition} 
$(M = N)\varphi \overset{\mathrm{def}}{\Leftrightarrow} \mathrm{ v} (M) \cup \mathrm{ v}(N) \subseteq \mathrm{ dom}(\varphi) \land M\sigma = N\sigma$ where $\varphi \equiv \nu \widetilde{n}.\sigma \land \widetilde{n} \cap (\mathrm{ n}(M) \cup \mathrm{ n}(N)) = \emptyset$ for some names $\widetilde{n}$ and active substituions $\sigma$.
\end{definition}

It follows from  \cite[Lemma D.1]{DBLP:journals/corr/AbadiBF16} that this definition is well-defined.

\begin{definition} 
The static equivalence $\approx_s$ on closed frames is given by
\begin{equation*}
\varphi \approx_s \psi \overset{\mathrm{def}}{\Leftrightarrow} \mathrm{dom}(\varphi) = \mathrm{dom}(\psi) \land \forall M, N; (M=N)\varphi \Leftrightarrow (M=N)\psi.
\end{equation*}
for closed frames $\varphi$ and $\psi$.

The static equivalence on closed processes is given by
\begin{equation*}
A \approx_s B \overset{\mathrm{def}}{\Leftrightarrow} \mathrm{fr}(A) \approx_s \mathrm{fr}(B)
\end{equation*}
for closed processes $A$ and $B$.
\end{definition}

We say that $\varphi$ and $\psi$ are statically equivalent when $\varphi \approx_s \psi$. In addition, we say that $A$ and $B$ are static equivalent when $A \approx_s B$.
It is proved in  \cite[Lemma 4.1]{DBLP:journals/corr/AbadiBF16} that a static equivalence is closed by an application of an evaluation context.

We recall labelled bisimulation. It intuitively means that two processes operate in the same manner.

\begin{definition} 
A binary relation $\mathcal{R}$ between closed extended processes is called a labelled simulation if and only if whenever $A\mathcal{R}B$,
\begin{enumerate}
\item $A \approx_s B$
\item $A \overset{\mu}{\longrightarrow} A' \land A' : \mathrm{closed} \land \mathrm{fv}(\mu) \subseteq \mathrm{dom}(A) \Rightarrow \exists B'\ \mathrm{s.t.}\ B \overset{\mu}{\Longrightarrow} B' \land A\mathcal{R}B$
\end{enumerate}
If both $\mathcal{R}$ and $\mathcal{R}^{-1}$ are labelled simulations, we say that $\mathcal{R}$ is a labelled bisimulation.
We call the largest labelled bisimulation a labelled bisimilarity.
\end{definition}

It is proved in  \cite[Theorem 4.1]{DBLP:journals/corr/AbadiBF16} the labelled bisimilarity is closed by an application of an evaluation context.

Note that structural equivalence is stronger than labelled bisimilarity.

\begin{lemma}
Let $\sigma$ and $\rho$ be active substitutions. We assume that their images of variables contain no variables in domains. Then, $\sigma \equiv \rho \Rightarrow [\mathrm{dom}(\sigma) = \mathrm{dom}(\rho) \land \forall x \in \mathrm{dom}(\sigma); x\sigma = x\rho]$.
\end{lemma}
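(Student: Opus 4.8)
The plan is to prove the statement by induction on the structure of the derivation of the structural equivalence $\sigma \equiv \rho$, or equivalently on the length of a chain of single-step rewrites connecting $\sigma$ and $\rho$. Both $\sigma$ and $\rho$ are frames consisting entirely of active substitutions combined by parallel composition (no restrictions, since they are substitutions with the stated hypothesis on their images), so the only structural equivalence rules that can apply between two such frames are those not involving restriction or plain-process constructs: associativity, commutativity, and unit laws for $|$, together with the equational-theory rule $\{M/x\} \equiv \{N/x\}$ when $\Sigma \vdash M = N$. The key observation is that each of these admissible rules manifestly preserves both the domain and the value $x\sigma$ assigned to every $x$ in the domain.

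First I would make precise that because each image $x\sigma$ and $x\rho$ contains no variables in the respective domains, the active substitutions behave like ordinary simultaneous substitutions, so that $\mathrm{dom}$ and the assignment $x \mapsto x\sigma$ are genuinely well-defined functional data attached to the frame. I would then verify the base cases rule by rule: for $A|0 \equiv A$ and associativity/commutativity of $|$, the multiset of component substitutions is unchanged, so the domain and every $x\sigma$ are literally the same. For the rule $\{M/x\} \equiv \{N/x\}$ with $\Sigma \vdash M = N$, the domain $\{x\}$ is preserved and $x\sigma = M = N = x\rho$ holds by the equational theory, which is exactly the conclusion for that $x$. The rule $\nu x.\{M/x\} \equiv 0$ cannot occur here because it introduces a restriction, taking us outside the class of pure substitution-frames; I would note explicitly that our hypotheses keep us within that class throughout.

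For the inductive step I would argue that structural equivalence is the reflexive–transitive–symmetric closure of these rules under evaluation contexts, and that applying an evaluation context to a substitution-frame either leaves it a substitution-frame (the context is built from $|$ with other substitutions) or leads outside the class; in the former case the induction hypothesis on the subframe transports the domain-and-value invariant to the whole frame, and transitivity composes the equalities $x\sigma = x\tau = x\rho$ across intermediate frames $\tau$. The main obstacle I expect is bookkeeping around $\alpha$-conversion and the capture-avoiding nature of the active substitutions: I must ensure that renaming bound names or variables, which $\equiv$ permits, never alters the recorded value $x\sigma$ for a domain variable $x$, and that the side condition $u \notin \mathrm{fn}(B) \cup \mathrm{fv}(B)$ on the scope-extrusion rule (which in any case cannot apply within a restriction-free frame) is respected. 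Handling this cleanly is essentially the content of \cite[Lemma D.1]{DBLP:journals/corr/AbadiBF16} on well-definedness, so I would lean on that result to discharge the $\alpha$-equivalence subtleties rather than re-deriving them.
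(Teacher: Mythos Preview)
Your approach---induction on a derivation of $\equiv$---is genuinely different from the paper's, and it has a gap you have only partially identified. The paper does not analyse the $\equiv$-derivation at all. It places $\sigma$ and $\rho$ in parallel with the testing process $\mathrm{if}~x = x\sigma~\mathrm{then}~\overline{a}\langle s\rangle~\mathrm{else}~\overline{b}\langle s\rangle$, lets the active substitutions act (the hypothesis that $x\sigma$ contains no domain variables keeps the term $x\sigma$ fixed under both), and then uses that structurally equivalent processes have the same transitions: the $\sigma$-side takes the \emph{then} branch and outputs on $a$, so the $\rho$-side must too, which forces $x\rho = x\sigma$. This is purely operational and needs no case analysis on the rules defining~$\equiv$.

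The hole in your plan is not $\alpha$-conversion but the axiom $\nu y.\{N/y\} \equiv 0$. A derivation witnessing $\sigma \equiv \rho$ may pass through intermediate frames carrying restrictions, e.g.\ $\{a/x\} \equiv \{a/x\}\mid 0 \equiv \{a/x\} \mid \nu y.\{b/y\}$, and such a $\tau$ is not a pure active substitution, so ``$x\tau$'' is undefined and your transitivity step $x\sigma = x\tau = x\rho$ does not type-check. (You also omitted the rule $A\mid\{M/x\} \equiv A\{M/x\}\mid\{M/x\}$, though under the hypothesis it happens to be a no-op.) Repairing this forces you to extend the invariant to arbitrary frames and show it is preserved by every $\equiv$-rule, which is exactly the content of \cite[Lemma D.1]{DBLP:journals/corr/AbadiBF16}; and once you invoke that lemma the result follows in one line (take $M = x$, $N = x\sigma$), making the structural induction superfluous. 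So as written your argument either has a real gap or collapses to a bare citation of Lemma~D.1; the paper's testing argument is a self-contained third route that needs neither.
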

\begin{proof}
We assume that $\sigma \equiv \rho$. It immediately follows that $\mathrm{dom}(\sigma) = \mathrm{dom}(\rho)$. We arbitrarily take $x \in \mathrm{dom}(\sigma)$.
Because structural equivalence is preserved by a parallel composition,
\begin{equation*}
\mathrm{if} ~x=x\sigma~ \mathrm{then} ~\overline{a}\langle s \rangle~ \mathrm{else} ~\overline{b}\langle s \rangle | \sigma \equiv \mathrm{if} ~x=x\sigma~ \mathrm{then} ~\overline{a}\langle s \rangle~ \mathrm{else} ~\overline{b}\langle s \rangle | \rho.
\end{equation*}
Because $x\sigma$ does not contain variables in $\mathrm{dom}(\sigma)$,
\begin{equation*}
\mathrm{if} ~x\sigma=x\sigma~ \mathrm{then} ~\overline{a}\langle s \rangle~ \mathrm{else} ~\overline{b}\langle s \rangle | \sigma \equiv \mathrm{if} ~x\rho=x\sigma~ \mathrm{then} ~\overline{a}\langle s \rangle~ \mathrm{else} ~\overline{b}\langle s \rangle | \rho.
\end{equation*}
The left-hand side can act as below:
\begin{equation*}
\mathrm{if} ~x\sigma=x\sigma~ \mathrm{then} ~\overline{a}\langle s \rangle~ \mathrm{else} ~\overline{b}\langle s \rangle | \sigma \longrightarrow \overline{a}\langle s \rangle | \sigma \overset{\nu y. \overline{a} \langle y \rangle}{\longrightarrow} \{s/y\} | \sigma
\end{equation*}
Hence, the right-hand side must act as below:
\begin{equation*}
\mathrm{if} ~x\rho=x\sigma~ \mathrm{then} ~\overline{a}\langle s \rangle~ \mathrm{else} ~\overline{b}\langle s \rangle | \rho \longrightarrow \overline{a}\langle s \rangle | \rho \overset{\nu y. \overline{a} \langle y \rangle}{\longrightarrow} \{s/y\} | \rho
\end{equation*}
In other words, it must hold that $x\rho = x\sigma$.
\end{proof}

\subsection{Trace Equivalence}
At first, we define a trace. It can be regarded as a history of observable actions.

\begin{definition} 
A trace $\mathbf{tr}$ is a finite derivation $\mathbf{tr} = A_0 \overset{\mu_1}{\Longrightarrow} ... \overset{\mu_n}{\Longrightarrow} A_n$ such that every $A_i$ is closed and $\mathrm{fv}(\mu_i) \subseteq \mathrm{dom}(A_{i-1})$ for all $i$. If $A_n$ can perform no actions, the trace $\mathbf{tr}$ is said to be complete or maximal.
Given a trace $\mathbf{tr}$, let $\mathbf{tr}[i]$ be its $i$-th process $A_i$ and $\mathbf{tr}[i,j]$ be the subderivation $A_i \overset{\mu_{i+1}}{\Longrightarrow} ... \overset{\mu_j}{\Longrightarrow} A_j$ where $0 \leq i \leq j \leq n$. The length of a trace $\mathbf{tr}$ is denoted by $|\mathbf{tr}| = n$.
If each $\overset{\mu_i}{\Longrightarrow}$ accord with $\overset{\mu_i}{\longrightarrow}$, we say that $\mathbf{tr}$ is full.
\end{definition}
\begin{definition} 
Let $\mathbf{tr}$ be a trace $A_0 \overset{\mu_1}{\Longrightarrow} ... \overset{\mu_n}{\Longrightarrow} A_n$ and $\mathbf{tr}'$ be a trace $B_0 \overset{\mu'_1}{\Longrightarrow} ... \overset{\mu'_m}{\Longrightarrow} B_m$.

$\mathbf{tr} \sim_t \mathbf{tr}' \overset{\mathrm{def}}{\Leftrightarrow} n=m \land \mu_i = \mu'_i \land A_i \approx_s B_i\ for\ all\ i$.
\end{definition}
When \textbf{tr} $\sim_t$ \textbf{tr}$'$, we say that they are statically equivalent.
\begin{definition} 
Let $\mathbf{tr}$ be a trace $A_0 \overset{\mu_1}{\Longrightarrow} ... \overset{\mu_n}{\Longrightarrow} A_n$.
We arbitrarily take processes $A_{01},...,A_{0m_0},...,A_{n-1,1},...,A_{n-1,m_{n-1}}$ such that 
\begin{equation*}
A_0 \longrightarrow A_{01} \longrightarrow ... \longrightarrow A_{0l_0} \overset{\mu_1}{\longrightarrow} A_{0,l_0+1} \longrightarrow ... \longrightarrow A_{0m_0} = A_1 \longrightarrow ... \longrightarrow A_n
\end{equation*}
\noindent
and each transition is derived without $\alpha$-conversion. If $\mu_i$ is silent, $m_i$ can be 0. The above derivation is denoted by $\mathrm{unfold(\mathbf{tr})}$.
\end{definition}
Note that $\mathrm{unfold(\mathbf{tr})}$ is not always unique.
\begin{definition}
A trace $\mathbf{tr}$ is safe with respect to $A$. 
$\overset{\mathrm{def}}{\Leftrightarrow}$ Every action in $\mathbf{tr}$ contains no elements in $\mathrm{bn}(A) \cup \mathrm{rv}(A)$ and each transition is derived without $\alpha$-conversion.

If $\mathbf{tr}$ is a trace of $A$, we merely say that $\mathbf{tr}$ is safe or $\mathbf{tr}$ is a safe trace.
\end{definition}
\begin{definition} 
Let A and B be two closed name-variable-distinct extended processes. $A \subseteq_t B$ if and only if we can always complete the procedure below.
\begin{enumerate}
\item We $\alpha$-convert $A$ and $B$ such that $A$ and $B$ are bind-exclusive.
\item We arbitrarily choose a trace $\mathbf{tr}$ of $A$ which is safe with respect to $A$ and $B$. 
\item We take a safe trace of $B$ which is static equivalent to $\mathbf{tr}$.
\end{enumerate}
When $A \subseteq_t B \land B \subseteq_t A$, we say that they are trace equivalent and denote by $A \approx_t B$.

Let A and B be two name-variable-distinct extended processes. We $\alpha$-convert $A$ and $B$ such that $A$ and $B$ are bind-exclusive. Let $\sigma$ be a map that maps a variable in $(\mathrm{fv}(A)\setminus \mathrm{dom}(A)) \cup (\mathrm{fv}(B)\setminus \mathrm{dom}(B))$ to a ground term. 
When $A\sigma \approx_t B\sigma$ for all $\sigma$ and capture-avoiding, we also say that they are trace equivalent and denote as $A \approx_t B$.
\end{definition}
Note that $\sigma$ causes capture-avoiding substitution.

This definition may look strange, but this is equivalent to the standard definition.
\begin{prop}\label{std-def}\leavevmode \par
Let A and B be two closed name-variable-distinct extended processes.

$A \subseteq_t B \Leftrightarrow$ For any trace \textbf{tr} of $A$, there exists a trace $\mathbf{tr}'$ of $B$ such that $\mathbf{tr} \sim_t \mathbf{tr}'$.
\end{prop}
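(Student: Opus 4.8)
The plan is to reduce both implications to a single normalization lemma: \emph{every} trace of a closed process is $\sim_t$-equivalent to a \emph{safe} trace of that same process. Since the right-hand condition quantifies over all traces while $\subseteq_t$, by its very definition, quantifies only over safe ones, once this lemma is in hand both directions become bookkeeping with the transitivity of $\sim_t$.

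First I would record that $\sim_t$ is an equivalence relation; reflexivity and symmetry are immediate, and transitivity holds because $\sim_t$ is the conjunction of label equality and componentwise static equivalence, and $\approx_s$ is itself transitive. This lets me chain $\sim_t$ freely across the two steps of each direction. The core work is the normalization lemma: given a trace $\mathbf{tr}=A_0\overset{\mu_1}{\Longrightarrow}\cdots\overset{\mu_n}{\Longrightarrow}A_n$ of $A$ and any finite set $S$ of names and variables, I would construct a safe trace $\mathbf{tr}_s$ of $A$ whose actions avoid $S$ and satisfy $\mathbf{tr}\sim_t\mathbf{tr}_s$. I would argue by induction on $|\mathbf{tr}|$, unfolding each $\overset{\mu_i}{\Longrightarrow}$ and inspecting the derivation of the single labelled step. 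Two things must be controlled along the way. First, the fresh output variable in a label $\nu x.\overline{M}\langle x\rangle$, together with any bound name that a side-condition forces to move, can be chosen outside $S$ and outside everything already used; this is just $\alpha$-conversion of the emitting subprocess, which preserves $\approx_s$ at every intermediate index. Second, the $\alpha$-conversion steps licensed by the structural-equivalence rule must be removed, so that each transition is genuinely ``derived without $\alpha$-conversion''; I would push every such renaming back into one globally consistent choice of bound names across the whole derivation, so the same transitions go through using only the monoid and commutativity laws of $\equiv$. Because injective renaming of bound names and $\alpha$-conversion both preserve static equivalence, the resulting $\mathbf{tr}_s$ has $A_i\approx_s(\mathbf{tr}_s)[i]$ at every index, hence $\mathbf{tr}\sim_t\mathbf{tr}_s$.

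The two implications then follow. For $(\Rightarrow)$, assuming $A\subseteq_t B$, take an arbitrary trace $\mathbf{tr}$ of $A$ and apply the lemma with $S=\mathrm{bn}(A)\cup\mathrm{rv}(A)\cup\mathrm{bn}(B)\cup\mathrm{rv}(B)$ to obtain a safe $\mathbf{tr}_s$ with $\mathbf{tr}\sim_t\mathbf{tr}_s$; feeding $\mathbf{tr}_s$ into the procedure defining $\subseteq_t$ yields a safe trace $\mathbf{tr}'_s$ of $B$ with $\mathbf{tr}_s\sim_t\mathbf{tr}'_s$, and transitivity gives $\mathbf{tr}\sim_t\mathbf{tr}'_s$. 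For $(\Leftarrow)$, after $\alpha$-converting $A$ and $B$ to be bind-exclusive, any safe trace $\mathbf{tr}$ of $A$ is in particular a trace of $A$, so the right-hand condition supplies a trace $\mathbf{tr}'$ of $B$ with $\mathbf{tr}\sim_t\mathbf{tr}'$; normalizing $\mathbf{tr}'$ with the same $S$ produces a safe trace $\mathbf{tr}'_s$ of $B$ with $\mathbf{tr}'\sim_t\mathbf{tr}'_s$, and transitivity gives $\mathbf{tr}\sim_t\mathbf{tr}'_s$, which completes step~3 of the procedure.

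I expect the main obstacle to be the second part of the normalization lemma: eliminating the $\alpha$-conversions buried inside an arbitrary derivation and replacing them by one coherent renaming of bound names over the entire trace, without disturbing the labels or the static equivalences at intermediate processes. This is precisely the ``chasing scope extrusions'' that the restriction and parallel-composition rules make delicate, and it is where the closedness and bind-exclusiveness hypotheses do the real work.
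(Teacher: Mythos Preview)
Your overall architecture---reduce both directions to a single normalization step plus transitivity of $\sim_t$---matches the paper's approach exactly. The gap is in the normalization lemma itself: as you state it (``every trace of $A$ is $\sim_t$-equivalent to a \emph{safe} trace of \emph{that same process} $A$''), it is false. The definition of $\sim_t$ requires literal equality of the action labels $\mu_i$, and static equivalence requires equal domains; hence neither the names occurring in an input label nor the fresh variable $x$ in an output label $\nu x.\overline{M}\langle x\rangle$ can be altered without breaking $\sim_t$. Concretely, take $A=\nu n.\,c(y).0$ and the trace $A \overset{c(n)}{\Longrightarrow} \nu n'.0$: the label contains $n\in\mathrm{bn}(A)$, and the restriction rule's side-condition $u\notin\mathrm n(\alpha)$ blocks any derivation of a $c(n)$-step from $A$ without $\alpha$-conversion, so no safe trace of $A$ has this label. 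Your suggested fix---choose the output variable outside $S$---does not help: replacing $x$ by $x'$ changes the domain of every subsequent process from one containing $x$ to one containing $x'$, so $\approx_s$ (which demands $\mathrm{dom}(\varphi)=\mathrm{dom}(\psi)$) fails at those indices, and with it $\sim_t$.

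The paper resolves this not by touching the labels but by moving the binders: it first $\alpha$-converts $A$ and $B$ to $A',B'$ that are bind-exclusive \emph{and} bind no names occurring in the actions of the given trace, replaces $A$ by $A'$ at the head of the trace (licensed by closure under $\equiv$), and only then normalizes the intermediate processes so that no step uses $\alpha$-conversion. In other words, safeness is achieved relative to an $\alpha$-variant of the starting process, not the original one; your lemma needs the extra hypothesis that the labels of $\mathbf{tr}$ already avoid $\mathrm{bn}(A)\cup\mathrm{rv}(A)$ (and $\mathrm{bn}(B)\cup\mathrm{rv}(B)$), and that hypothesis must be discharged by $\alpha$-converting $A$ and $B$ up front. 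Once you make that adjustment, the rest of your plan---including the part you flagged as the real obstacle, threading one coherent renaming through the derivation---goes through and is exactly what the paper does.
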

\begin{proof}
$\Rightarrow$) We arbitrarily take a trace \textbf{tr} of $A$.

First, we $\alpha$-convert $A$ and $B$ to $A'$ and $B'$ respectively such that $A'$ and $B'$ are bind-exclusive and they bind no names in actions that are in \textbf{tr}. We replace $A$ with $A'$ in \textbf{tr}.

Secondly, we convert processes in \textbf{tr} to structural equivalent processes such that 
\begin{equation*}
C \overset{\mu}{\Longrightarrow} D\ \mathrm{in}\ \mathbf{tr}\ \Rightarrow \mathrm{fn}(D) \subseteq \mathrm{fn}(C) \cup \mathrm{n}(\mu).
\end{equation*}
Thirdly, we $\alpha$-convert processes in \textbf{tr} such that every transition is derived without $\alpha$-conversion.

Now, we got a safe trace $\mathbf{tr}''$ of $A'$, so we can obtain a safe trace $\mathbf{tr}'$ of $B'$ such that $\mathbf{tr}'' \sim_t \mathbf{tr}'$ because of $A \subseteq_t B$.

Then, we $\alpha$-convert $B'$ in $\mathbf{tr}''$ to $B$ and obtain a trace of $B$ that is statically equivalent to \textbf{tr}.

$\Leftarrow$) We $\alpha$-convert $A$ and $B$ to $A'$ and $B'$ respectively such that $A'$ and $B'$ are bind-exclusive.

Next, we arbitrarily choose a trace \textbf{tr} of $A'$ which is safe with respect to $A'$ and $B'$.

We $\alpha$-convert $A'$ in $\mathbf{tr}$ to $A$.

By assumption, we obtain a trace $\mathbf{tr}'$ of $B$ that is statically equivalent to \textbf{tr}.

We replace $B$ with $B'$ in $\mathbf{tr}'$.

Moreover, we convert processes in $\mathbf{tr}'$ to structural equivalent processes such that
\begin{equation*}
C \overset{\mu}{\Longrightarrow} D\ \mathrm{in\ tr}'\ \Rightarrow \mathrm{fn}(D) \subseteq \mathrm{fn}(C) \cup \mathrm{n}(\mu).
\end{equation*}
In addition, we $\alpha$-convert processes in $\mathbf{tr}'$ such that every transition is derived without $\alpha$-conversion.

Now, we got a safe trace of $B'$ that is statically equivalent to \textbf{tr}.
\end{proof}
\begin{definition}
\begin{align*}
\mathrm{tr}(A) &= \{\mathbf{tr}|\mathbf{tr}\ \mathrm{is\ a\ safe\ trace\ of}\ A\} \\
\mathrm{tr_{max}}(A) &= \{\mathbf{tr} \in \mathrm{tr}(A)|\mathbf{tr}\ \mathrm{is\ maximal}\}
\end{align*}
\end{definition}
Derivations are closed under structural equivalence, so structural equivalence is finer than trace equivalence. In addition, labelled bisimilarity is also finer than trace equivalence.

\begin{example}
We consider
\begin{align*}
P &= \overline{a}\langle s \rangle. \overline{b}\langle s \rangle + \overline{a}\langle s \rangle. \overline{c}\langle s \rangle \\
Q &= \overline{a}\langle s \rangle. (\overline{b}\langle s \rangle + \overline{c}\langle s \rangle).
\end{align*}
They have the same traces, so they are trace equivalent. However, they are not bisimilar. In fact, $P \longrightarrow \overline{a}\langle s \rangle. \overline{b}\langle s \rangle$ is not simulated by $Q$.
\end{example}

We later show that a non-adaptive active attacker cannot distinguish trace equivalent processes.
\begin{problem}\label{tr-pb}\leavevmode \par
\textbf{Input:} Closed extended processes $A, B$ and a trace $\mathbf{tr} \in \mathrm{tr}(B)$.

\textbf{Question:} Does there exist a trace $\mathbf{tr}' \in \mathrm{tr}(A)$ such that $\mathbf{tr} \sim_t \mathbf{tr}'$?
\end{problem}
\begin{prop}\label{tr-ud}
There are signatures for which Problem \ref{tr-pb} is undecidable, even though the input processes are restricted to plain processes.
\end{prop}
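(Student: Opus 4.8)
The plan is to reduce a problem with undecidable word problem to Problem \ref{tr-pb}, exploiting the freedom to choose the signature $\Sigma$ and its equational theory. By the classical results of Post and Markov, there is a finite equational presentation whose word problem is undecidable: fix finitely many unary function symbols $g_1,\dots,g_k$ (the generators), a constant $\epsilon$, and finitely many axioms $u_j(z)=v_j(z)$ (the defining relations, written with a context variable $z$ so that equational reasoning generates the full two-sided Thue congruence). Then deciding $\Sigma\vdash w=w'$ for ground words $w=g_{i_1}(\cdots g_{i_m}(\epsilon)\cdots)$ is undecidable. I take this $\Sigma$ as the witnessing signature; it is finite, as the paper requires, and no axiom of the calculus forces the theory to be decidable.

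The core observation is a frame lemma: for ground terms $w,w'$ we have $\{w/x\}\approx_s\{w'/x\}$ if and only if $\Sigma\vdash w=w'$. Indeed $\mathrm{dom}(\{w/x\})=\mathrm{dom}(\{w'/x\})=\{x\}$, and the test $(x=w)$ holds in $\{w/x\}$ (since $\Sigma\vdash w=w$) so it must hold in $\{w'/x\}$, forcing $\Sigma\vdash w'=w$; conversely, if $\Sigma\vdash w=w'$ then $M\{w/x\}$ and $M\{w'/x\}$ are $\Sigma$-equal for every test term $M$, so all equality tests agree. Given an instance $(w,w')$ of the word problem I set $B=\overline{a}\langle w\rangle$ and $A=\overline{a}\langle w'\rangle$, both closed plain processes over a free channel name $a$, and I let $\mathbf{tr}$ be the single-step trace $B\overset{\nu x.\overline{a}\langle x\rangle}{\longrightarrow}\{w/x\}$, a safe trace of $B$. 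Since $A$ can only perform that one output, its unique trace of matching length and label ends in $\{w'/x\}$, and the initial frames $\mathrm{fr}(A)=\mathrm{fr}(B)=0$ are trivially statically equivalent; hence a trace $\mathbf{tr}'\in\mathrm{tr}(A)$ with $\mathbf{tr}\sim_t\mathbf{tr}'$ exists iff $\{w'/x\}\approx_s\{w/x\}$, i.e.\ iff $\Sigma\vdash w=w'$. As $(w,w')\mapsto(A,B,\mathbf{tr})$ is clearly computable, undecidability of the word problem transfers to Problem \ref{tr-pb}, and the constructed inputs $A,B$ are plain, which is the stated strengthening.

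Concretely I would carry out: (i) fix the equational theory with undecidable word problem and check it satisfies the finiteness condition on signatures; (ii) prove the frame lemma from the definitions of $(M=N)\varphi$ and $\approx_s$; (iii) define the reduction and verify that $A,B$ are closed plain processes and that $\mathbf{tr}$ is a genuine safe trace of $B$; (iv) show that trace matching collapses to the single final static-equivalence check.

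The part needing the most care is step (iv): I must rule out \emph{spurious} matches, i.e.\ confirm that the only candidate $\mathbf{tr}'$ of $A$ with the same length and the same label $\nu x.\overline{a}\langle x\rangle$ as $\mathbf{tr}$ is the one whose final frame is structurally equivalent to $\{w'/x\}$, so that $\mathbf{tr}\sim_t\mathbf{tr}'$ genuinely reduces to $\Sigma\vdash w=w'$ and no other behaviour of $A$ can accidentally satisfy $\sim_t$. A secondary subtlety is presenting the defining relations with the context variable $z$ so that the induced congruence on terms coincides exactly with the intended Thue congruence on words; this is what makes the frame lemma encode precisely the monoid's word problem rather than merely some decidable fragment.
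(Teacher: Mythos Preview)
Your proof is correct, but it takes a different route from the paper's. The paper reduces from the undecidability of \emph{static equivalence} (cited as \cite[Proposition~5]{abadi2006deciding}): given arbitrary frames $\varphi=\nu\widetilde{n}.\{M_1/x_1,\dots,M_l/x_l\}$ and $\psi=\nu\widetilde{m}.\{N_1/x_1,\dots,N_l/x_l\}$, it builds plain processes $P=\nu\widetilde{n}.\overline{a}\langle M_1\rangle\cdots\overline{a}\langle M_l\rangle$ and $Q=\nu\widetilde{m}.\overline{a}\langle N_1\rangle\cdots\overline{a}\langle N_l\rangle$ and shows that the unique label-matching trace of $Q$ is statically equivalent to the canonical trace of $P$ iff $\varphi\approx_s\psi$. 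You instead reduce directly from the word problem for a finitely presented monoid (Post--Markov), observing that for the very special frames $\{w/x\}$ with $w$ ground, static equivalence collapses exactly to $\Sigma\vdash w=w'$; your processes then need only a single output. Your argument is more elementary and self-contained---it does not rely on the cited undecidability of static equivalence as a black box, and the gadget is simpler. The paper's argument is more modular: it shows that Problem~\ref{tr-pb} is at least as hard as static equivalence for \emph{any} signature, so any hardness result for static equivalence transfers immediately, not only those arising from word problems.
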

\begin{proof}
It is proved in  \cite[Proposition 5]{abadi2006deciding} that static equivalence is undecidable in general.
We reduce the decision problem for static equivalence to Problem \ref{tr-pb}.

Let $\varphi$ and $\psi$ be traces. We assume that $\mathrm{dom}(\varphi) = \mathrm{dom}(\psi)$.

Let $\varphi = \nu \widetilde{n}. \{\fracinline{M_1}{x_1},...,\fracinline{M_l}{x_l}\}, \psi = \nu \widetilde{m}. \{\fracinline{N_1}{x_1},...,\fracinline{N_l}{x_l}\}$.

Let $P = \nu \widetilde{n}. \overline{a}\langle M_1 \rangle ... \overline{a}\langle M_l \rangle, Q = \nu \widetilde{m}. \overline{a}\langle N_1 \rangle ... \overline{a}\langle N_l \rangle$, where $a \notin \widetilde{n} \cup \widetilde{m}$.

Let \textbf{tr} be $P \overset{\nu x_1. \overline{a}\langle x_1 \rangle}{\longrightarrow} \nu \widetilde{n}.(\overline{a}\langle M_2 \rangle ... \overline{a}\langle M_l \rangle | \{\fracinline{M_1}{x_1}\}) \overset{\nu x_2. \overline{a}\langle x_2 \rangle}{\longrightarrow} ... \overset{\nu x_l. \overline{a}\langle x_l \rangle}{\longrightarrow} \nu \widetilde{n}. \{\fracinline{M_1}{x_1},...,\fracinline{M_l}{x_l}\}$.

We prove that $\varphi \approx_s \psi \Leftrightarrow$ there exists a trace $\mathbf{tr}' \in \mathrm{tr}(A)$ such that $\mathbf{tr} \sim_t \mathbf{tr}'$.

A trace $\mathbf{tr}' \in \mathrm{tr}(Q)$ whose actions correspond to \textbf{tr} is the only below:
\begin{equation*}
Q \overset{\nu x_1. \overline{a}\langle x_1 \rangle}{\longrightarrow} ... \overset{\nu x_l. \overline{a}\langle x_l \rangle}{\longrightarrow} \nu \widetilde{m}. \{\fracinline{N_1}{x_1},...,\fracinline{N_l}{x_l}\}.
\end{equation*}
We assume that $\mathbf{tr} \sim_t \mathbf{tr}'$. Then, $\nu \widetilde{n}. \{\fracinline{M_1}{x_1},...,\fracinline{M_l}{x_l}\} \approx_s \nu \widetilde{m}. \{\fracinline{N_1}{x_1},...,\fracinline{N_l}{x_l}\}$, so $\Leftarrow$ holds.

Next, we assume that $\varphi \approx_s \psi$. Then,
\begin{align*}
\mathrm{fr}(\mathbf{tr}[i]) \equiv& \nu x_{i+1} ... x_l. \varphi \\
\mathrm{fr}(\mathbf{tr}'[i]) \equiv& \nu x_{i+1} ... x_l. \psi.
\end{align*}
Therefore, it follows that $\mathbf{tr} \sim_t \mathbf{tr}'$. Namely, $\Rightarrow$ also holds.
\end{proof}
\begin{prop}\label{tr-de}
If the static equivalence on a signature $\Sigma$ is decidable, Problem \ref{tr-pb} is decidable. 
\end{prop}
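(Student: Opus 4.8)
The plan is to reduce the question to a finite number of static-equivalence tests, each decidable by hypothesis. Write the given trace as $\mathbf{tr} = B_0 \overset{\mu_1}{\Longrightarrow} \cdots \overset{\mu_n}{\Longrightarrow} B_n$. Since $\mathbf{tr}$ is part of the input, the label sequence $\mu_1,\ldots,\mu_n$ and the processes $B_0,\ldots,B_n$ are fixed and available to the algorithm. By the definition of $\sim_t$, any witness $\mathbf{tr}' = A_0 \overset{\mu_1}{\Longrightarrow} \cdots \overset{\mu_n}{\Longrightarrow} A_n \in \mathrm{tr}(A)$ must carry exactly these labels and satisfy $A_i \approx_s B_i$ for every $i$; in particular $A_0 = A$ must already satisfy $A \approx_s B_0$, which we test first and reject the instance if it fails. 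So the problem becomes: decide whether $A$ admits a labelled transition sequence carrying $\mu_1,\ldots,\mu_n$ whose $i$-th intermediate process is statically equivalent to $B_i$.

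The first observation that tames the search is that silent reduction does not change the frame. Each internal reduction rewrites a plain redex — a conditional, a nondeterministic choice, or a communication $\overline{M}\langle N\rangle.P \mid M(x).Q \to P \mid Q\{N/x\}$ — into another plain process, and the communication substitutes $N$ for $x$ directly rather than creating an active substitution; replication is unfolded only through $\equiv$. Hence $C \to C'$ implies $\mathrm{fr}(C) \equiv \mathrm{fr}(C')$, so $C \approx_s C'$, and the frame of $A$ is altered only by the output transitions $\overset{\nu x.\overline{M}\langle x\rangle}{\longrightarrow}$ occurring inside the weak steps $\overset{\mu_i}{\Longrightarrow} = \Longrightarrow \overset{\mu_i}{\longrightarrow} \Longrightarrow$. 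Consequently the constraints $A_i \approx_s B_i$ depend only on the terms emitted at the observable transitions, and the silent steps serve purely to enable the next labelled action. This lets us analyse each block $\overset{\mu_i}{\Longrightarrow}$ as: perform silent reductions until $\mu_i$ is enabled, fire $\mu_i$, and record the resulting frame.

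It remains to make the exploration of $A$'s behaviour finite. The depth of the overall search is bounded by $n$, and the labelled steps branch finitely, since the channel of $\mu_i$ and, for inputs, the received term are fixed. The delicate point — and the main obstacle — is bounding the silent reductions: the reachable set $\{C' : A_{i-1} \Longrightarrow C'\}$ need not be finite, because replication permits arbitrarily long internal computations before a labelled action becomes available. One therefore has to argue that, whenever a matching trace exists, one exists in which the number of silent steps inside each block (equivalently, the number of replica copies spawned) is bounded by a computable function of $A$, the $B_i$, and the labels; under such a bound the portion of $A$'s configuration graph that must be explored is finite and finitely branching, so the search terminates. I expect establishing this bound to be the crux of the argument. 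For finite-control (replication-free) processes it is immediate, since the set of derivatives of $A$ is then finite up to $\equiv$.

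Granting the bound, the algorithm enumerates the finitely many candidate sequences $A = A_0, A_1, \ldots, A_n$ produced by the bounded search that carry the labels $\mu_1,\ldots,\mu_n$, and for each one tests the $n+1$ static equivalences $A_i \approx_s B_i$. Every such test terminates because static equivalence on $\Sigma$ is decidable by hypothesis, and there are finitely many tests, so the whole procedure halts. It answers \emph{yes} exactly when some candidate passes all tests, i.e.\ exactly when a trace $\mathbf{tr}' \in \mathrm{tr}(A)$ with $\mathbf{tr} \sim_t \mathbf{tr}'$ exists. Hence Problem \ref{tr-pb} is decidable whenever static equivalence on $\Sigma$ is.
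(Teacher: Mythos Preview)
Your strategy—enumerate the candidate $A$-traces carrying the fixed label sequence and test the intermediate static equivalences—is exactly the paper's argument. The paper's proof is two sentences: it asserts that ``the number of traces in $\mathrm{tr}(A)$ whose actions correspond to ones in $\mathbf{tr}$ is finite because every action yields finite processes,'' and then invokes decidability of $\approx_s$ for the remaining checks.

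You have, however, named a real gap that the paper does not fill. The finiteness assertion is precisely the bound you flag as unestablished and then ``grant'': in the presence of replication the silent phases inside each $\overset{\mu_i}{\Longrightarrow}$ can be arbitrarily long, produce unboundedly many structurally distinct intermediates, and even emit unboundedly many distinct terms at a subsequent output (so the resulting frames need not fall into finitely many $\approx_s$-classes). The paper's phrase ``every action yields finite processes'' at best speaks about single strong steps, not weak ones, and does not justify finiteness of the search space. Your proposal is therefore no less complete than the paper's own proof; you have simply made explicit a lacuna the paper leaves implicit. For replication-free $A$ both arguments go through; for general $A$ neither supplies the missing bound.
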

\begin{proof}
The number of traces in $\mathrm{tr}(A)$ whose actions correspond to ones in \textbf{tr} is finite because every action yields finite processes.

This is why we only have to check whether each process is statically equivalent to the correspondent process in \textbf{tr}.
\end{proof}
\begin{definition}
A convergent subterm theory is an equational theory defined by finite equations whose each right-hand side is a proper subterm of the left-hand side.
\end{definition}
For example, if $\Sigma = \{\mathrm{dec}, \mathrm{enc}\}$ and an equational theory on $\Sigma$ is defined by
\begin{equation*}
\mathrm{dec}(\mathrm{enc}(M, k), k) = M,
\end{equation*}
this theory is convergent subterm.

It is proved in  \cite[Theorem 1]{abadi2006deciding} that static equivalence on a convergent subterm theory is decidable, so the corollary below immediately follows.
\begin{corollary}
If the equational theory on $\Sigma$ is a convergent subterm theory , then Problem \ref{tr-pb} is decidable.
\end{corollary}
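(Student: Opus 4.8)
The plan is to obtain this statement as a direct composition of two results already in hand: the external decidability theorem for static equivalence and the reduction supplied by Proposition \ref{tr-de}. Since Proposition \ref{tr-de} reduces the decidability of Problem \ref{tr-pb} to the decidability of static equivalence $\approx_s$ on $\Sigma$, it suffices to exhibit a decision procedure for $\approx_s$ under the assumption that the equational theory is convergent subterm. No new combinatorial or process-theoretic work is needed; the corollary is an instantiation.

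Concretely, I would first invoke \cite[Theorem 1]{abadi2006deciding}, which asserts that static equivalence on a convergent subterm theory is decidable. This furnishes exactly the hypothesis required by Proposition \ref{tr-de}, namely that $\approx_s$ on the signature $\Sigma$ be decidable. Instantiating Proposition \ref{tr-de} with this $\Sigma$ then yields that Problem \ref{tr-pb} is decidable, which completes the argument. Recall that the decision procedure implicit in Proposition \ref{tr-de} first enumerates the finitely many traces of $A$ whose labels match those of the input trace $\mathbf{tr}$, and then tests static equivalence of each resulting process against the corresponding process of $\mathbf{tr}$; both stages terminate precisely because $\approx_s$ is decidable, so the convergent subterm hypothesis feeds straight into the second stage.

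The step I would flag as requiring any attention at all is the verification that the hypothesis of Proposition \ref{tr-de} is met \emph{verbatim}: the proposition demands decidability of static equivalence on the ambient signature, and \cite[Theorem 1]{abadi2006deciding} delivers exactly this for convergent subterm theories over that same signature. Once one observes that the convergent subterm theory under consideration is indeed the equational theory equipping $\Sigma$, the two results chain with no gap. Thus there is essentially no genuine obstacle here, since the substantive content lives entirely in the two cited facts rather than in their juxtaposition, and the implication is immediate.
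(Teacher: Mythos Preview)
Your proposal is correct and matches the paper's own argument essentially verbatim: invoke \cite[Theorem~1]{abadi2006deciding} to obtain decidability of static equivalence for convergent subterm theories, then apply Proposition~\ref{tr-de}. There is nothing to add.
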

\section{Examples}
We express Diffie-Hellman key exchange and the man-in-the-middle attack by the applied pi calculus.

The fundamental Diffie-Hellman protocol enables two parties to make a shared secret key by exchanging messages on public channels.

We consider the binary function symbol $f$ and the unary function symbol $g$. They satisfy $f(x, g(y)) = f(y, g(x))$. We intend that $f(x, y) = y^x\ \mathrm{mod}\ p$ and $g(x) = \alpha^x\ \mathrm{mod}\ p$ where $p$ is a prime, $\alpha$ is a generator of $(\mathbb{Z}/p\mathbb{Z})^*$, and $x$ and $y$ are elements in $(\mathbb{Z}/p\mathbb{Z})^*$.
However, such a specific definition is not necessary for the discussion below.

Diffie-Hellman protocol is expressed in the applied pi calculus as below. Note that $k_A$ and $k_B$ are variables. Alice behaves as $\nu a, k_A. \overline{c} \langle g(a) \rangle .d(x).(P_A|\{f(a,x)/k_A\})$ and Bob behaves as $\nu b, k_B.c(y). \overline{d} \langle g(b) \rangle .(P_B|\{f(b,y)/k_B\})$. We suppose that $P_A$ does not contain $x$ for simplicity. $P_B$ is similar.
\begin{align*} 
&\nu a, k_A. \overline{c} \langle g(a) \rangle .d(x).(P_A|\{f(a,x)/k_A\}) | \nu b, k_B.c(y). \overline{d} \langle g(b) \rangle .(P_B|\{f(b,y)/k_B\}) \\
\longrightarrow& \nu a,k_A.(d(x).(P_A|\{f(a,x)/k_A\}) | \nu b,k_B. \overline{d} \langle g(b) \rangle .P_B|\{f(b,g(a))/k_B\}) \\
\longrightarrow& \nu a,k_A.\nu b,k_B. (P_A|\{f(a,g(b))/k_A\}|P_B|\{f(b,g(a))/k_B\}).
\end{align*}
In the first transition, Alice sent $g(a)$ on a channel $c$, and Bob received it. In the second transition, Bob sent $g(b)$ on a channel $d$, and Alice received it. 
Then, $Alice$ makes a shared key using $g(b)$. That is, she calculates $f(a, g(b))$. Similarly, $Bob$ calculates $f(b, g(a))$. They coincide by the equation above.
The transitions above represent a good case. If someone eavesdrops, it is expressed as below.
\begin{align*} 
&\nu a,k_A. \overline{c} \langle g(a) \rangle .d(x).(P_A|\{f(a,x)/k_A\}) | \nu b,k_B.c(y). \overline{d} \langle g(b) \rangle .(P_B|\{f(b,y)/k_B\}) \\
\overset{\nu z.\overline{c} \langle z \rangle}{\longrightarrow}& \nu a,k_A.(d(x).(P_A|\{f(a,x)/k_A\}) | \{g(a)/z\}) | \nu b,k_B.c(y). \overline{d} \langle g(b) \rangle .(P_B|\{f(b,y)/k_B\}) \\
\overset{c(z)}{\longrightarrow}& \nu a,k_A.(d(x).(P_A|\{f(a,x)/k_A\}) | \{g(a)/z\} | \nu b,k_B. \overline{d} \langle g(b) \rangle .P_B|\{f(b,g(a))/k_B\}) \\
\overset{\nu w.\overline{d} \langle w \rangle}{\longrightarrow}& \nu a,k_A.(d(x).(P_A|\{f(a,x)/k_A\}) | \{g(a)/z\} | \nu b,k_B. ((P_B|\{f(b,y)/k_B\})\{g(a)/y\} | \{g(b)/w\})) \\
\overset{d(w)}{\longrightarrow}& \nu a,k_A.\nu b,k_B.(P_A|\{f(a,g(b))/k_A\} | \{g(a)/z\} | P_B|\{f(b,g(a))/k_B\} | \{g(b)/w\}).
\end{align*}
Unlike the first example, sending and receiving are observable. Note that the domain of the last process contains $z$ and $w$. The environment ---now it is an eavesdropper--- can use them freely. In other words, the eavesdropper gets the value of $g(a)$ and $g(b)$. On the other hand, she cannot directly use $a$ and $b$ because they are restricted. Moreover, she cannot calculate $a$ and $b$ from $g(a)$ and $g(b)$ because such equations do not exist. This corresponds to the complexity of discrete logarithm problems.

If an active attacker exists, this protocol is no longer safe.
\begin{align*} 
&\nu a,k_A. \overline{c} \langle g(a) \rangle .d(x).(P_A|\{f(a,x)/k_A\}) |\nu b,k_B.c(y). \overline{d} \langle g(b) \rangle .(P_B|\{f(b,y)/k_B\}) \\
\overset{\nu z.\overline{c} \langle z \rangle}{\longrightarrow}& \nu a,k_A.(d(x).(P_A|\{f(a,x)/k_A\}) | \{g(a)/z\}) | \nu b,k_B.c(y). \overline{d} \langle g(b) \rangle .(P_B|\{f(b,y)/k_B\}) \\
\overset{c(g(r))}{\longrightarrow}& \nu a,k_A.(d(x).(P_A|\{f(a,x)/k_A\}) | \{g(a)/z\} )| \nu b,k_B. \overline{d} \langle g(b) \rangle .(P_B|\{f(b,g(r))/k_B\}) \\
\overset{\nu w.\overline{d} \langle w \rangle}{\longrightarrow}& \nu a,k_A.(d(x).(P_A|\{f(a,x)/k_A\}) | \{g(a)/z\} )| \nu b,k_B. (P_B|\{f(b,g(r))/k_B\} | \{g(b)/w\}) \\
\overset{d(g(s))}{\longrightarrow}& \nu a,k_A.(P_A|\{f(a,g(s))/k_A\} | \{g(a)/z\} )|\nu b,k_B. (P_B|\{f(b,g(r))/k_B\} | \{g(b)/w\})
\end{align*}
The attacker sent $g(r)$ to Bob and sent $g(s)$ to Alice.
Then, $Alice$ calculates $f(a, g(s))$ and use it as a secret key. The value is same with $f(s, z)$ in this situation, so the attacker can read secret messages encrypted by the key. Bob is similar. This man-in-the-middle attack is well known. This is why Diffie-Hellman protocol needs authentication.

\chapter{Congruency of Trace Equivalence\label{cong}}
This chapter proves the theorem below by case analysis, and this is our main result.
\begin{theorem}\label{main-cong} 
$\approx_t$ is a congruence.
\end{theorem}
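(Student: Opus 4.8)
The plan is to show that $\approx_t$ is preserved by each process constructor of the calculus; closure under arbitrary contexts then follows by induction on the structure of the context. Since trace equivalence on open processes is defined by quantifying over all closing ground substitutions $\sigma$ and reducing to the closed case, I would first pass to closed processes: any closing substitution for $C[A]$ restricts to a closing substitution for $A$, so it suffices to treat $A,B$ closed. For the closed case I would work throughout with Proposition \ref{std-def}, which replaces the procedural definition of $\subseteq_t$ by the workable criterion that every trace of one side is matched, up to static equivalence, by a trace of the other.

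The easy constructors are the prefixes, restriction, nondeterministic choice, conditional, and replication. For a prefix or a guard, every trace of $C[A]$ begins with a fixed initial segment (an input or output action, or a silent branch resolution) whose residual is a trace of $A$ after pushing restrictions through with the structural laws, in particular $(\nu u.A)|B \equiv \nu u.(A|B)$ and the restriction transition rule; matching the initial segment is immediate and the residual is matched using $A\approx_t B$. For $\nu u.A$ I would observe that a safe trace of $\nu u.A$ is a safe trace of $A$ in which $u$ stays restricted throughout, since the side condition $u\notin\mathrm{n}(\alpha)\cup\mathrm{v}(\alpha)$ of the restriction rule prevents $u$ from escaping into a label, so the matching trace of $B$ may simply be re-restricted by $u$. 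Replication reduces to the parallel case: a finite trace of $!A$ unfolds only finitely many copies, hence is a trace of $A|\dots|A$, and closure under $|$ (below) together with $A^k\approx_t B^k$ gives $!A\approx_t\,!B$.

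The hard case is parallel composition, i.e. $A\approx_t B \Rightarrow A|C \approx_t B|C$. Given a safe trace of $A|C$, I would decompose each global step into one of three kinds: a step of $A$ alone (silent or labelled, possibly extruding the scope of a name restricted in $A$), a step of $C$ alone, or an internal communication across the $|$. Projecting onto $A$'s contributions yields a safe trace of $A$, while the interaction with $C$ is recorded as the messages $C$ consumed from $A$ through alias variables and the messages $A$ consumed from $C$. Invoking $A\approx_t B$ produces a statically equivalent safe trace of $B$ carrying the same labels, which I then recompose with $C$, reusing the recorded interaction, to build a trace of $B|C$. Static equivalence of the composite frames at each index follows from static equivalence of the $A$- and $B$-projections and the fact, cited as \cite[Lemma 4.1]{DBLP:journals/corr/AbadiBF16}, that $\approx_s$ is closed under application of an evaluation context, giving $\nu\widetilde{n}.(\mathrm{fr}(A_i)|\mathrm{fr}(C_i)) \approx_s \nu\widetilde{n}.(\mathrm{fr}(B_i)|\mathrm{fr}(C_i))$.

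The main obstacle is justifying that the recomposition is legitimate, i.e. that $C$ can perform exactly the same moves against $B$ as it did against $A$. Because $C$ may have received from $A$ an alias variable and later used it inside a conditional or an output, the validity of $C$'s subsequent actions depends on the exported terms; static equivalence of the projected frames is precisely what guarantees that each such test of $C$ has the same outcome with $B$ in place of $A$, and that every message $C$ sends is one $B$ can still receive. The remaining difficulty is combinatorial but delicate: one must keep the alias variables of the $A$-trace, the $B$-trace, and $C$ aligned, and maintain the bind-exclusiveness and safety conditions so that pushing restrictions through $|$ during scope extrusion never captures a free name or variable of $C$. This bookkeeping --- chasing scope extrusions while preserving freshness --- is where the real work lies, and it would rely on the earlier lemma on structural equivalence of substitutions together with the careful $\alpha$-management built into the safe-trace definition.
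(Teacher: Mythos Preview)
Your proposal is correct and follows essentially the same architecture as the paper: closure under each constructor, with the prefixes, restriction, conditional, choice, and replication reduced to the hard case of parallel composition, and the latter handled by decomposing a trace of $A|C$ into contributions of $A$, contributions of $C$, and cross-communications, projecting to a trace of $A$, matching via $A\approx_t B$, and recomposing with $C$ using static equivalence to justify that $C$'s tests and channels behave identically. The paper carries this out by introducing an explicit \emph{concurrent normal form} for traces of $A|C$ (tracking the substitutions $\sigma_m,\rho_m$ exchanged across the bar and the extruded name sets $\widetilde{r_m},\widetilde{s_m}$), which is precisely the formalisation of the ``record the interaction / chase scope extrusions / align alias variables'' bookkeeping you describe; note also, as the paper does, that the output-prefix and replication cases genuinely depend on the parallel case (the residual after an output is $P\,|\,\{N/x\}$, and a finite trace of $!P$ is a trace of $P^n$), so the order of the argument matters.
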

In this chapter and the next chapter, we assume that every process is bind-exclusive. Otherwise, we $\alpha$-convert so that it is satisfied.
This chapter and the appendix use many results in  \cite{DBLP:journals/corr/AbadiBF16}

At first, we summarize lemmas for this chapter. We omit all proofs here. See Appendix. We will define the notion of normal processes later (Definition \ref{nml-sb}, \ref{pnf} and the remark before them).
\begin{lemma}\label{drop-sigma}
$P \overset{\alpha \sigma}{\longrightarrow} A \Rightarrow \sigma | P \overset{\alpha}{\longrightarrow} \sigma | A$.
\end{lemma}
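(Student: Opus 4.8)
The plan is to normalise $\sigma$ and then induct on the derivation of $P \overset{\alpha\sigma}{\longrightarrow} A$. First I would invoke the Proposition above to replace $\sigma$ by an $\equiv$-equivalent active substitution $\rho$ whose images contain no variable of its own domain; such a $\rho$ is idempotent ($M\rho\rho = M\rho$), which keeps the bookkeeping below manageable. This replacement is harmless: by congruence of $\equiv$ the sides $\sigma | P$ and $\sigma | A$ move only up to $\equiv$; the hypothesis persists because $\alpha\sigma$ and $\alpha\rho$ agree up to the equational theory (the Lemma on $\equiv$-equivalent substitutions); and $\overset{\alpha}{\longrightarrow}$ is closed under $\equiv$. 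Renaming $\rho$ back to $\sigma$, I thus assume $\sigma$ idempotent, and I also $\alpha$-convert so that every restricted name and every output alias $x$ is fresh for $\sigma$. Now I induct on the derivation of $P \overset{\alpha\sigma}{\longrightarrow} A$; since $\alpha$ is a genuine label there is no silent case, so the last rule is the input axiom, the output axiom, the restriction rule, the parallel rule, or the structural rule.

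The two axioms are the crux, because $P$ offers the substituted channel $M_0\sigma$ while the target label $\alpha$ must carry the bare channel $M_0$, and one cannot strip $\sigma$ off $P$ directly. The device is to route through the fully substituted process. In the input case $P = M'(x).P'$ with $\alpha = M_0(N_0)$, $M' = M_0\sigma$, $N' = N_0\sigma$ and $A = P'\{N'/x\}$; I set $Q := M_0(x).(P'\sigma)$, so that idempotence gives $Q\sigma = M'(x).(P'\sigma) = P\sigma$ and hence $\sigma | P \equiv P\sigma | \sigma \equiv Q | \sigma$. Now $Q$ literally offers $M_0$, so $Q \overset{M_0(N_0)}{\longrightarrow} (P'\sigma)\{N_0/x\}$ by the input axiom, and the parallel rule yields $Q | \sigma \overset{\alpha}{\longrightarrow} (P'\sigma)\{N_0/x\} | \sigma$ (the side condition is vacuous as $\mathrm{rv}(\alpha) = \emptyset$). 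It then remains to check $(P'\sigma)\{N_0/x\} | \sigma \equiv \sigma | A$, which I would reduce to the identity $((P'\sigma)\{N_0/x\})\sigma = (P'\{N_0\sigma/x\})\sigma$ and verify by a one-variable case split, using freshness of $x$ and idempotence of $\sigma$. The output axiom is handled identically with $Q := \overline{M_0}\langle N'\sigma\rangle.(P'\sigma)$, keeping the alias $x$ fresh so that $\mathrm{rv}(\alpha) = \{x\}$ misses $\mathrm{fv}(\sigma)$.

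The three inductive cases are then routine, once the side conditions are transferred from $\alpha\sigma$ to $\alpha$: because $\sigma$ renames only variables it never deletes names, so $\mathrm{n}(\alpha) \subseteq \mathrm{n}(\alpha\sigma)$, and freshness of the alias gives $\mathrm{rv}(\alpha) = \mathrm{rv}(\alpha\sigma)$. For $P = \nu n.R$ (with $n$ a name, as $P$ is plain) I would apply the induction hypothesis to $R$, use scope extrusion $\sigma | \nu n.R \equiv \nu n.(\sigma | R)$ (legitimate since $n$ is fresh for $\sigma$), fire the restriction rule (whose hypothesis $n \notin \mathrm{n}(\alpha)$ follows from $n \notin \mathrm{n}(\alpha\sigma)$), and extrude back. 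For $P = R_1 | R_2$ the induction hypothesis on $R_1$ together with the parallel rule and the monoid laws for $|$ give the claim. For the structural rule I would use congruence of $\equiv$ under the context $\sigma | [\_]$ and closure of $\overset{\alpha}{\longrightarrow}$ under $\equiv$, the only delicate point being to keep the intermediate process inside the class for which the induction hypothesis is stated. I expect the genuine obstacle to lie entirely in the two axioms, namely the structural-equivalence manoeuvre that recovers the template channel and the substitution-composition identity, rather than in the inductive cases.
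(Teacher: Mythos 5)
Your core manoeuvre in the two axiom cases --- replacing $P$ by a process $Q$ with the bare channel such that $Q\sigma = P\sigma$, so that $\sigma \mid P \equiv P\sigma \mid \sigma \equiv Q \mid \sigma$ via the law $A\mid\{M/x\} \equiv A\{M/x\}\mid\{M/x\}$ --- is exactly the structural-equivalence step the paper uses. The difference is that the paper does not induct on the derivation at all: it invokes the characterization lemma \cite[Lemma B.10]{DBLP:journals/corr/AbadiBF16}, which says that \emph{any} labelled transition of a plain process factors, up to $\equiv$, through a prefix form $\nu\widetilde{n}.(N\sigma(x).P'\mid P_2)$ (resp.\ the output analogue), and then performs your base-case argument once. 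That choice is not cosmetic: it is precisely what disposes of the case you yourself flag as delicate, the closure-under-$\equiv$ rule. As you set things up, the induction hypothesis is stated for plain $P$, but the structural rule rewrites $P$ to a process $P'$ that need not be syntactically plain (e.g.\ $0 \equiv \nu x.\{M/x\}$), so the induction does not close without first generalizing the statement to extended processes (whereupon the base cases and the frame bookkeeping change). You name the problem but do not resolve it; the paper's route through B.10 is the standard way out, and without it or an equivalent generalization your inductive proof has a hole.

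The second genuine gap is the normalization step. Your justification that the hypothesis survives replacing $\sigma$ by the idempotent $\rho$ --- ``$\alpha\sigma$ and $\alpha\rho$ agree up to the equational theory'' --- is false, and the Lemma you cite does not apply since it assumes \emph{both} substitutions have images avoiding their domains, which is exactly what $\sigma$ lacks. Concretely, for $\sigma = \{y/x\}\mid\{a/y\}$ and $\alpha = c(x)$ one has $\alpha\sigma = c(y)$ but $\alpha\rho = c(a)$, and $\Sigma \not\vdash y = a$; so $P \overset{\alpha\sigma}{\longrightarrow} A$ gives you no transition labelled $\alpha\rho$, and the reduction to the idempotent case does not go through as stated. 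The lemma is nevertheless true for general cycle-free $\sigma$, because structural equivalence lets you apply $\sigma$ in parallel as many times as needed ($\sigma\mid B \equiv \sigma\mid B\sigma \equiv \sigma\mid B\sigma\sigma \equiv \cdots$), and iterating a cycle-free substitution stabilizes; so the fix is either to drop the normalization and carry the iteration through your base cases (you only need $Q\sigma^{(k)} = P\sigma^{(k)}$ for some $k$, not $Q\sigma = P\sigma$), or to observe that in the paper's applications $\sigma$ is already in the normalized form produced by $\uplus$. As written, though, both the reduction step and the structural case would fail.
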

\begin{lemma}\label{drop-nu}
$\nu u.A \overset{\mu}{\longrightarrow} B \land A:\mathrm{closed} \land \mathrm{fv}(\mu) \subseteq \mathrm{dom}(\nu u.A) \land \mathrm{n}(\mu) \cap \mathrm{bn}(\nu u.A) = \emptyset$

$\Rightarrow \exists B' \ s.t.\ A \overset{\mu}{\longrightarrow} B' \land B \equiv \nu u.B'$
\end{lemma}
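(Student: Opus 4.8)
The plan is to read this lemma as the exact converse of the restriction rule of the labelled semantics: that rule lets us prefix a transition with $\nu u$ precisely when $u \notin \mathrm{n}(\mu) \cup \mathrm{v}(\mu)$, and here we want to strip one off. So the first step is to recover that side condition from the hypotheses. If $u$ is a name then $u \in \mathrm{bn}(\nu u.A)$, and $\mathrm{n}(\mu) \cap \mathrm{bn}(\nu u.A) = \emptyset$ forces $u \notin \mathrm{n}(\mu)$; if $u$ is a variable then $u \notin \mathrm{dom}(\nu u.A)$ while $\mathrm{fv}(\mu) \subseteq \mathrm{dom}(\nu u.A)$, so $u \notin \mathrm{fv}(\mu)$. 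After an $\alpha$-conversion keeping the alias variable of an output label away from $u$ (giving $u \notin \mathrm{rv}(\mu)$ as well), we obtain $u \notin \mathrm{n}(\mu) \cup \mathrm{v}(\mu)$, which is exactly what the restriction rule requires. I would treat the silent and labelled cases uniformly; for silent $\mu$ the statement is the analogous inversion for internal reduction, whose trivial direction is just closure of $\rightarrow$ under the evaluation context $\nu u.[\,\cdot\,]$, and whose converse carries the same difficulty described below.

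The main argument is an induction on the derivation of $\nu u.A \overset{\mu}{\longrightarrow} B$. If the last rule is the restriction rule applied to the displayed prefix $\nu u$, then its premise already gives $A \overset{\mu}{\longrightarrow} A'$ with $B = \nu u.A'$, and we take $B' = A'$. The difficulty is concentrated entirely in the structural-equivalence rule, where the derivation factors as $\nu u.A \equiv C$, $C \overset{\mu}{\longrightarrow} D$, $D \equiv B$ with the middle transition a strict subderivation. The trouble is that $C$ need not carry $\nu u$ at its head: scope extrusion may have floated the restriction inward, as in $\nu u.(A_1|A_2) \equiv (\nu u.A_1)|A_2$ when $u \notin \mathrm{fn}(A_2)\cup\mathrm{fv}(A_2)$, or may have erased it, as in $\nu x.\{M/x\} \equiv 0$.

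Consequently the heart of the proof — and the step I expect to be the main obstacle — is a scope-extrusion analysis of the relation $\nu u.A \equiv C$: I would show that any such $C$ can be brought, using only the monoid, commutativity and restriction-migration laws, to a canonical shape in which the restricted $u$ is displayed, and then trace the subderivation $C \overset{\mu}{\longrightarrow} D$ through that shape. When the extrusion pushes $\nu u$ into a single parallel component and the transition is derived by the parallel rule from $\nu u.A_1 \overset{\mu}{\longrightarrow} A_1'$, I reapply the statement to that strictly smaller component; so the induction runs on a measure that decreases as the restriction migrates inward, rather than on the transition derivation alone. When the extrusion is the law $\nu x.\{M/x\} \equiv 0$, the hypothesis that $u$ lies in a domain (or that $\mu$ does not touch $u$) makes the case vacuous, since the resulting process offers no transition involving $u$. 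Reassembling the components and pushing $\nu u$ back outward by structural equivalence then yields the required $B'$ with $A \overset{\mu}{\longrightarrow} B'$ and $B \equiv \nu u.B'$. The delicate bookkeeping is ensuring that the $\alpha$-renamings used to normalise $C$ neither capture $u$ nor alter the action $\mu$ — precisely the ``chasing scope extrusions'' flagged in the introduction, and where essentially all of the care will be needed.
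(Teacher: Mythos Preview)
Your plan is a genuine strategy, but it diverges substantially from the paper's proof and carries a real risk at exactly the point you flag.  The paper does \emph{not} induct on the transition derivation at all: it computes $\mathrm{pnf}(A) = \nu\widetilde{n}.(\sigma\,|\,P)$, observes that $\mathrm{pnf}(\nu u.A)$ is then known explicitly (either $\nu u\widetilde{n}.(\sigma\,|\,P)$ when $u$ is a name, or $\nu\widetilde{n}.(\sigma'\,|\,P)$ with $\sigma' = \sigma_{|\mathrm{dom}(\sigma)\setminus\{u\}}$ when $u$ is a variable), and then invokes the off-the-shelf inversion lemmas for transitions out of normal forms---Lemma~B.23 for the silent case and Lemma~B.19 for the labelled case, both from~\cite{DBLP:journals/corr/AbadiBF16}---to obtain a transition of $P$ itself.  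Rebuilding with $\nu\widetilde{n}.(\sigma\,|\,{-})$ and Lemma~\ref{drop-sigma} gives the required $B'$ in one step.  All of the scope-extrusion bookkeeping you describe is absorbed into the existence and properties of partial normal forms, which the paper imports rather than reproves.

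The concern with your route is the induction measure.  You correctly note that after one structural-equivalence step $\nu u.A \equiv C$, $C\overset{\mu}{\longrightarrow}D$, the head of $C$ need not display $\nu u$; you then propose to normalise $C$ and reapply the statement to a ``strictly smaller component''.  But normalising $C$ means replacing the subderivation by a \emph{new} derivation built via the structural rule, which is no longer a strict subderivation of the original; and ``the restriction migrates inward'' is not obviously a well-founded measure on its own (replication $!P \equiv P\,|\,!P$ can duplicate material, and the substitution axioms can move names around).  Making this rigorous amounts to proving exactly the normal-form inversion results the paper cites.  Your approach is not wrong in spirit, but the cleanest way to discharge its hard step is to adopt the paper's technique.
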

\begin{lemma}\label{change-label}
Let $\nu \widetilde{n}. (\sigma | P)$ be a closed normal process.

$\nu \widetilde{n}. (\sigma | P) \overset{\alpha}{\longrightarrow} A \land \alpha\sigma = \beta\sigma \land \mathrm{fv}(\alpha) \subseteq \mathrm{dom}(\sigma) \land \widetilde{n} \cap (\mathrm{n}(\alpha) \cup \mathrm{n}(\beta)) = \emptyset$

$ \Rightarrow \nu \widetilde{n}. (\sigma | P) \overset{\beta}{\longrightarrow} A$.
\end{lemma}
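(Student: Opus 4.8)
The plan is to strip the restricted names $\widetilde{n}$ with Lemma~\ref{drop-nu}, to show that a labelled transition of a closed normal process is determined by its label only up to applying $\sigma$, and then to re-introduce the restrictions. First I would note that $\sigma|P$ is closed, since restricting names does not change free variables, and that $\mathrm{dom}(\nu\widetilde{n}.(\sigma|P))=\mathrm{dom}(\sigma)$, so the hypotheses $\mathrm{fv}(\alpha)\subseteq\mathrm{dom}(\sigma)$ and $\widetilde{n}\cap\mathrm{n}(\alpha)=\emptyset$ are exactly the side conditions of Lemma~\ref{drop-nu}. Applying it once for each name in $\widetilde{n}$ produces some $A'$ with $\sigma|P\overset{\alpha}{\longrightarrow}A'$ and $A\equiv\nu\widetilde{n}.A'$. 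I then intend to rebuild the $\beta$-transition of $\sigma|P$ and wrap it back under $\nu\widetilde{n}$, the side conditions of the restriction rule being met because $\widetilde{n}$ consists of names and $\widetilde{n}\cap\mathrm{n}(\beta)=\emptyset$.

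The heart of the argument is the claim that, for a closed normal process, a labelled transition depends on its label only through the image $\alpha\sigma$. Concretely I would establish the projection $\sigma|P\overset{\alpha}{\longrightarrow}A'\Rightarrow P\sigma\overset{\alpha\sigma}{\longrightarrow}B$ for some $B$ with $A'\equiv\sigma|B$. The tools are the structural identity $\sigma|P\equiv\sigma|P\sigma$ (iterating $A|\{M/x\}\equiv A\{M/x\}|\{M/x\}$), the closure of $\longrightarrow$ under $\equiv$, and closedness, which forces $P\sigma$ to be ground, because $\mathrm{fv}(P)\subseteq\mathrm{dom}(\sigma)$ and the images of a normal $\sigma$ carry no domain variables. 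Since the frame $\sigma$ is passive, every transition of $\sigma|P\sigma$ originates from $P\sigma$, and the channel — and, for an input, the received message — carried by $\alpha$ can be rewritten to its $\sigma$-image by the same structural rule together with $\{M/x\}\equiv\{N/x\}$ when $\Sigma\vdash M=N$.

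With the projection in hand the conclusion is short. First note that $\mathrm{fv}(\beta)\subseteq\mathrm{dom}(\sigma)$ is forced: $\alpha\sigma$ is ground in its term positions, so $\beta\sigma=\alpha\sigma$ leaves $\beta$ no free variable outside $\mathrm{dom}(\sigma)$. Because $\alpha\sigma=\beta\sigma$, the projected transition reads $P\sigma\overset{\beta\sigma}{\longrightarrow}B$; applying Lemma~\ref{drop-sigma} to the plain process $P\sigma$ with label $\beta$ gives $\sigma|P\sigma\overset{\beta}{\longrightarrow}\sigma|B$, and then $\sigma|P\equiv\sigma|P\sigma$ together with closure under $\equiv$ yields $\sigma|P\overset{\beta}{\longrightarrow}\sigma|B\equiv A'$. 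Re-applying the restriction rule once for each name of $\widetilde{n}$ and using closure under $\equiv$ a last time gives $\nu\widetilde{n}.(\sigma|P)\overset{\beta}{\longrightarrow}\nu\widetilde{n}.(\sigma|B)\equiv\nu\widetilde{n}.A'\equiv A$, i.e. a transition to $A$ itself, as required.

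The main obstacle I anticipate is the projection step itself, which is the converse direction of Lemma~\ref{drop-sigma}. It has to be proved by induction on the SOS derivation modulo structural equivalence, splitting into the input and output cases; the delicate points are the capture-avoidance when $\sigma$ meets the input binder and the freshness of the output alias variable, which must be arranged so that rewriting the label to its $\sigma$-image does not disturb the extracted residual $B$. Everything else is routine bookkeeping with the restriction rule and with closure of $\longrightarrow$ under $\equiv$.
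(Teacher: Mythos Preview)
Your proposal is correct and follows the same skeleton as the paper's proof: project the transition down to the plain process, swap the label via $\alpha\sigma=\beta\sigma$, then lift back with Lemma~\ref{drop-sigma} and the restriction rule. The only real difference is in how the projection step is obtained. The paper invokes \cite[Lemma~B.19]{DBLP:journals/corr/AbadiBF16} directly, which in one shot gives $P\overset{\alpha\sigma}{\longrightarrow}B$ with $A\equiv\nu\widetilde{n}.(\sigma|B)$ from the hypothesis on $\nu\widetilde{n}.(\sigma|P)$; you instead first peel off $\nu\widetilde{n}$ with Lemma~\ref{drop-nu} and then propose to re-derive the $\sigma$-projection by an SOS induction. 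That inner projection is exactly the labelled case of Lemma~\ref{shift-sigma} (and is itself proved in the appendix via \cite[Lemma~B.19]{DBLP:journals/corr/AbadiBF16}), so you are re-proving an available lemma rather than citing it. Your route is sound, just longer; the paper's shortcut is to lean on the normal-form inversion lemma from \cite{DBLP:journals/corr/AbadiBF16}.
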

\begin{lemma}\label{erase-sigma}
$\sigma | A \overset{\mu}{\longrightarrow} \sigma | B \land \mathrm{dom}(\sigma) \cap \mathrm{fv}(\mu) = \emptyset \land x \in \mathrm{dom}(\sigma) \Rightarrow x\sigma$: closed 

$\Rightarrow A\sigma \overset{\mu}{\longrightarrow} B\sigma$.
\end{lemma}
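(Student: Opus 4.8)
The plan is to exploit the fact that a \emph{ground} active substitution running in parallel is completely inert: it can neither perform an action nor take part in an internal communication, and since each $x\sigma$ is closed it can be pushed through the structural law $A \mid \{M/x\} \equiv A\{M/x\} \mid \{M/x\}$ without any variable interference. Concretely, I would first record that, because every $x\sigma$ is ground, $\sigma$ is cycle-free and applying it is order-independent, so iterating the active-substitution law gives $\sigma \mid A \equiv \sigma \mid A\sigma$ and $\sigma \mid B \equiv \sigma \mid B\sigma$, where $A\sigma$ and $B\sigma$ now contain no free variable of $\mathrm{dom}(\sigma)$. Feeding these equivalences into the structural-equivalence rule for transitions turns the hypothesis $\sigma \mid A \overset{\mu}{\longrightarrow} \sigma \mid B$ into $\sigma \mid A\sigma \overset{\mu}{\longrightarrow} \sigma \mid B\sigma$.

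The core step is then an inert-frame lemma: if $D$ has no free variable in $\mathrm{dom}(\sigma)$ and $\mathrm{dom}(\sigma) \cap \mathrm{fv}(\mu) = \emptyset$, then $\sigma \mid D \overset{\mu}{\longrightarrow} E$ forces $E \equiv \sigma \mid D'$ for some $D'$ with $D \overset{\mu}{\longrightarrow} D'$. I would prove this by induction on the derivation of the transition. A frame carries no action axiom and exhibits no input or output prefix, so it can never be the active party in the axiom, communication, or parallel rules; hence, possibly after a structural rearrangement, the action originates in $D$, and the restriction rule merely passes the label down. The hypothesis $\mathrm{dom}(\sigma) \cap \mathrm{fv}(\mu) = \emptyset$ guarantees that the transported label is unchanged when $\sigma$ is detached. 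Applying this to $D = A\sigma$ (whose derivatives again avoid $\mathrm{dom}(\sigma)$, since $\mu$ does) yields $A\sigma \overset{\mu}{\longrightarrow} D'$ with $\sigma \mid D' \equiv \sigma \mid B\sigma$.

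It remains to cancel the inert frame. Restricting every domain variable $\widetilde{x} = \mathrm{dom}(\sigma)$ is an evaluation context, so $\nu \widetilde{x}.(\sigma \mid D') \equiv \nu \widetilde{x}.(\sigma \mid B\sigma)$; since $D'$ and $B\sigma$ contain no variable of $\mathrm{dom}(\sigma)$ they scope-extrude out of the restriction, and $\nu \widetilde{x}.\sigma \equiv 0$ by iterating $\nu x.\{M/x\} \equiv 0$, leaving $D' \equiv B\sigma$. Because the transition relation is closed under structural equivalence, this upgrades $A\sigma \overset{\mu}{\longrightarrow} D'$ to the desired $A\sigma \overset{\mu}{\longrightarrow} B\sigma$. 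I expect the genuine obstacle to be the structural-equivalence case of the inert-frame induction: one must check that no sequence of rearrangements lets $\sigma$ actually rewrite the acting subterm of $D$ (which is harmless here precisely because $D\sigma = D$) and that $\alpha$-conversions and scope manipulations never dissolve $\sigma$ before the action fires. Lemma \ref{drop-nu} can be used to peel restrictions cleanly in that case, and putting $A$ in normal form $\nu \widetilde{n}.(\rho \mid P)$ beforehand would further confine the bookkeeping to the plain part $P$.
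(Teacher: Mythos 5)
Your proposal is correct, but it is considerably more roundabout than the paper's proof, which consists of exactly the manoeuvre you postpone to your final paragraph. The paper restricts the given transition directly: writing $\widetilde{x} = \mathrm{dom}(\sigma)$, the restriction rule (for labelled $\mu$) and closure of internal reduction under evaluation contexts (for silent $\mu$) yield $\nu \widetilde{x}.(\sigma | A) \overset{\mu}{\longrightarrow} \nu \widetilde{x}.(\sigma | B)$ --- legitimate because $\widetilde{x}$ consists of variables disjoint from $\mathrm{fv}(\mu)$, and the alias variable of an output cannot lie in $\mathrm{dom}(\sigma)$ since the target $\sigma | B$ must respect domain-disjointness of parallel composition. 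Then $\nu \widetilde{x}.(\sigma | A) \equiv \nu \widetilde{x}.(\sigma | A\sigma) \equiv A\sigma \,|\, \nu \widetilde{x}.\sigma \equiv A\sigma$, which is precisely your own groundness and cancellation computation, and likewise $\nu \widetilde{x}.(\sigma | B) \equiv B\sigma$; closure of transitions under structural equivalence finishes the proof. Your middle step --- the inert-frame residual lemma proved by induction on derivations --- is therefore unnecessary: the hypothesis already exhibits the residual in the form $\sigma | B$, so nothing needs to be discovered about where the action originates or what shape the derivative takes. That lemma is also where all the risk in your plan lives: the structural-equivalence case you flag is genuinely delicate, and the clean way to discharge it would not be a raw induction on derivations but a passage to partial normal forms citing Lemmas B.19 and B.23 of \cite{DBLP:journals/corr/AbadiBF16}, exactly as the paper's other appendix lemmas (Lemmas \ref{drop-nu}, \ref{change-label} and \ref{shift-sigma}) do. What your detour buys is a reusable shape lemma, useful in situations where the target of the transition is not given in advance; what the paper's route buys is a two-line proof with no derivation analysis at all.
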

\begin{lemma}\label{shift-sigma}
$\sigma | P \overset{\mu}{\longrightarrow} B \land \sigma | P$: closed normal $\land \mathrm{fv}(\mu) \subseteq \mathrm{dom}(\sigma)$

$\Rightarrow \exists B'\ \mathrm{s.t.}\ P\sigma \overset{\mu\sigma}{\longrightarrow} B' \land B \equiv \sigma | B'$.
\end{lemma}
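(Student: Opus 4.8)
The plan is to reduce the given transition to one whose label no longer mentions $\mathrm{dom}(\sigma)$, peel off the frame $\sigma$ using Lemma~\ref{erase-sigma}, and then reattach it by structural equivalence. I would first dispose of the silent case: if $\mu$ is silent then $\mu\sigma = \mu$ carries no variables, so the side condition $\mathrm{dom}(\sigma)\cap\mathrm{fv}(\mu)=\emptyset$ of Lemma~\ref{erase-sigma} holds vacuously and the argument below specialises directly. So assume $\mu=\alpha$ is a labelled action.

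The first real step is a label change. Because $\sigma|P$ is closed and normal, each image $x\sigma$ is ground and contains no variable of $\mathrm{dom}(\sigma)$ (closedness forces the image variables to lie in $\mathrm{dom}(\sigma)$, while the normal form forbids exactly those, so the images are ground); hence $\alpha\sigma$ is idempotent under $\sigma$, i.e. $(\alpha\sigma)\sigma=\alpha\sigma$. With $\mathrm{fv}(\alpha)\subseteq\mathrm{dom}(\sigma)$ and no outer restriction (the $\widetilde n$ of Lemma~\ref{change-label} is empty here, so its name-disjointness condition is vacuous), Lemma~\ref{change-label} with $\beta=\alpha\sigma$ upgrades $\sigma|P\overset{\alpha}{\longrightarrow}B$ to $\sigma|P\overset{\alpha\sigma}{\longrightarrow}B$, and now $\mathrm{fv}(\alpha\sigma)\cap\mathrm{dom}(\sigma)=\emptyset$.

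Next I need the frame $\sigma$ to survive the transition, that is $B\equiv\sigma|C$ for some $C$. This should hold because the active substitutions comprising $\sigma$ carry no transition rules of their own and cannot be consumed: the collapsing rule $\nu x.\{M/x\}\equiv 0$ is unavailable, since no variable of $\mathrm{dom}(\sigma)$ is restricted in $\sigma|P$. Every rule that fires does so inside $P$, using $\sigma$ only to resolve domain variables via $A|\{M/x\}\equiv A\{M/x\}|\{M/x\}$, and leaves $\sigma$ intact. Granting $B\equiv\sigma|C$, I apply Lemma~\ref{erase-sigma} to $\sigma|P\overset{\alpha\sigma}{\longrightarrow}\sigma|C$ — its hypotheses $\mathrm{dom}(\sigma)\cap\mathrm{fv}(\alpha\sigma)=\emptyset$ and groundness of each $x\sigma$ are precisely what the label change and the closed-normal assumption supply — obtaining $P\sigma\overset{\alpha\sigma}{\longrightarrow}C\sigma$. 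Setting $B'=C\sigma$ finishes the argument: $B\equiv\sigma|C\equiv\sigma|C\sigma=\sigma|B'$, where the middle equivalence is one more application of the substitution-merging rule.

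I expect the delicate point to be the persistence claim $B\equiv\sigma|C$. Turning the informal observation that $\sigma$ has no rules and is carried along into a proof requires tracking the structural-equivalence steps interleaved in the derivation: ruling out that a silent communication pairs a prefix of $P$ with part of $\sigma$ (impossible, since $\sigma$ offers neither an input nor an output prefix), and handling the conditional case, where the else-branch's ground-term requirement forces the resolution $\sigma|P\equiv\sigma|P\sigma$ before any reduction can occur, again leaving $\sigma$ untouched. All the remaining bookkeeping — idempotence of $\sigma$ on $\alpha\sigma$, freshness of the output alias, and groundness of the images — is routine given the normal-form hypotheses.
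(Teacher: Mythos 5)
Your overall architecture is plausible and most of your side conditions check out: closedness together with the partial normal form does make every $x\sigma$ ground (your parenthetical argument is exactly right), so $(\alpha\sigma)\sigma=\alpha\sigma$, Lemma~\ref{change-label} with $\widetilde{n}$ empty and $\beta=\alpha\sigma$ legitimately upgrades the label, Lemma~\ref{erase-sigma} then applies with the hypotheses you list, and the final rewriting $B\equiv\sigma|C\equiv\sigma|C\sigma$ is a correct use of the merging rule $A|\{M/x\}\equiv A\{M/x\}|\{M/x\}$. But the step you yourself flag as delicate --- the frame-persistence claim $B\equiv\sigma|C$ --- is a genuine gap, not routine bookkeeping: in this calculus it \emph{is} the mathematical content of the lemma. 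Your informal justification ($\sigma$ offers no prefixes, the collapsing rule $\nu x.\{M/x\}\equiv 0$ is unavailable) tacitly assumes the transition is derived by a rule firing syntactically ``inside $P$,'' but the semantics is closed under structural equivalence: a derivation of $\sigma|P\overset{\mu}{\longrightarrow}B$ may first rewrite $\sigma|P$ by $\equiv$ into a shape in which $\sigma$'s bindings have been distributed throughout $P$, parallel components rearranged and restrictions moved, and only then apply an SOS rule, with a further $\equiv$-step at the end. Showing that no such derivation can destroy the frame requires an induction over derivations interleaved with structural rewriting, which your sketch does not supply.

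The paper does not attempt this by hand: its proof obtains persistence and the label substitution \emph{simultaneously} by citing \cite[Lemmas B.23 and B.19]{DBLP:journals/corr/AbadiBF16} (for silent and labelled $\mu$ respectively), which yield $P\overset{\mu\sigma}{\longrightarrow}C$ with $B\equiv\sigma|C$ in one stroke, and then pushes $\sigma$ into $P$ via \cite[Lemmas B.3 and B.11]{DBLP:journals/corr/AbadiBF16} (substitution preserves reductions and labelled transitions), giving $P\sigma\overset{\mu\sigma}{\longrightarrow}C\sigma$. If you replace your ``granting'' step by an appeal to those lemmas, your proof closes --- but then the label-change detour through Lemma~\ref{change-label} becomes redundant, since B.19 already hands you the label $\alpha\sigma$ on $P$ directly; conversely, without such an appeal the hardest step remains unproven. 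One minor further point: in the silent case Lemma~\ref{change-label} is inapplicable as stated (it concerns labelled actions $\alpha$), though, as you note, no label change is needed there since a silent action carries no variables.
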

\section{The Case of Applying a Context without Composition}
This case is straightforward, and the proof is simple, but some propositions rely on Proposition \ref{para}.
\begin{prop} 
$P \approx_t Q \Rightarrow \overline{M} \langle N \rangle .P \approx_t \overline{M} \langle N \rangle .Q$
\end{prop}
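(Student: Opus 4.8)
The plan is to exploit the rigidity of an output prefix: the process $\overline{M} \langle N \rangle .P$ admits no internal reduction, since none of the structural-equivalence rules apply at an output prefix and an evaluation context may not place its hole under an action prefix. Hence its only possible transition is the output $\overline{M} \langle N \rangle .P \overset{\nu x.\overline{M} \langle x \rangle}{\longrightarrow} P | \{N/x\}$ for a fresh $x$. Because $\approx_t$ is symmetric, it suffices to prove $\overline{M} \langle N \rangle .P \subseteq_t \overline{M} \langle N \rangle .Q$ under the hypothesis $P \subseteq_t Q$ (the reverse inclusion is identical with the roles of $P$ and $Q$ swapped). For non-closed $P, Q$ I would first apply the defining substitution $\sigma$ of $\approx_t$, using $(\overline{M} \langle N \rangle .P)\sigma = \overline{M\sigma} \langle N\sigma \rangle .(P\sigma)$ together with $P\sigma \approx_t Q\sigma$, to reduce to the closed case.

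First I would fix a safe trace $\mathbf{tr}$ of $\overline{M} \langle N \rangle .P$, after $\alpha$-converting so that $\overline{M} \langle N \rangle .P$ and $\overline{M} \langle N \rangle .Q$ are bind-exclusive and the alias variable $x$ is fresh for both. Since no silent reduction is enabled at an output prefix, every $\Longrightarrow$ preceding the first labelled action is reflexive, and that first labelled action is forced to be $\nu x.\overline{M} \langle x \rangle$, producing $P | \{N/x\}$. Thus $\mathbf{tr}$ splits as a (trivial, stuttering) silent prefix, the forced output step, and a suffix $\mathbf{tr}[1,n]$ which is itself a safe trace of $P | \{N/x\}$.

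Next I would invoke Proposition \ref{para}: from $P \subseteq_t Q$ we obtain $P | \{N/x\} \subseteq_t Q | \{N/x\}$, so the suffix $\mathbf{tr}[1,n]$ is matched by a safe trace $\mathbf{tr}'$ of $Q | \{N/x\}$ with $\mathbf{tr}[1,n] \sim_t \mathbf{tr}'$. I would then reconstruct a trace of $\overline{M} \langle N \rangle .Q$ by prepending the forced step $\overline{M} \langle N \rangle .Q \overset{\nu x.\overline{M} \langle x \rangle}{\longrightarrow} Q | \{N/x\}$, where $Q | \{N/x\}$ is $\mathbf{tr}'[0]$, together with a matching silent stutter if one was present. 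The reconstructed trace carries exactly the labels of $\mathbf{tr}$; pairwise static equivalence holds at every positive index by $\mathbf{tr}[1,n] \sim_t \mathbf{tr}'$, while at the index-$0$ processes it is immediate because $\mathrm{fr}(\overline{M} \langle N \rangle .P) = 0 = \mathrm{fr}(\overline{M} \langle N \rangle .Q)$, so the whole reconstructed trace is statically equivalent to $\mathbf{tr}$, witnessing $\overline{M} \langle N \rangle .P \subseteq_t \overline{M} \langle N \rangle .Q$.

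The main obstacle is not the dynamics, which collapse to a single forced first transition, but the bookkeeping needed to stay inside the definition of $\subseteq_t$: ensuring bind-exclusivity and freshness of the alias $x$ simultaneously on the $P$- and $Q$-sides, checking that the output transition and any silent stutter are derived without $\alpha$-conversion so that the reconstructed trace is genuinely safe, and verifying that prepending a common step to statically equivalent frames preserves static equivalence at each index. Once these are discharged, everything reduces to Proposition \ref{para} applied to the frame $\{N/x\}$.
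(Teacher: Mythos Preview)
Your proposal is correct and follows essentially the same route as the paper: reduce to the closed case, observe that the only first step is the forced output $\nu x.\overline{M}\langle x\rangle$ producing $P|\{N/x\}$, invoke Proposition~\ref{para} on the suffix, and prepend the matching output on the $Q$-side. One small wrinkle: you phrase the reduction as ``under the hypothesis $P\subseteq_t Q$'' and then invoke Proposition~\ref{para} to get $P|\{N/x\}\subseteq_t Q|\{N/x\}$, but Proposition~\ref{para} is stated only for $\approx_t$; the paper avoids this by keeping the full hypothesis $P\approx_t Q$ while still proving just one inclusion for the prefixed processes, and you should do the same.
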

\begin{proof}
Assume that $P \approx_t Q$. 

For all assignments $\sigma$, it is sufficient to prove that $\overline{M\sigma} \langle N\sigma \rangle .P\sigma \approx_t \overline{M\sigma} \langle N\sigma \rangle .Q\sigma$. Thus, we can suppose that $P, Q, M$ and $N$ are closed without loss of generality.
We arbitrarily take a safe trace \textbf{tr} of $\overline{M} \langle N \rangle .P$. 
Here, unfold(\textbf{tr}) is $\overline{M} \langle N \rangle .P \overset{\nu x.\overline{M} \langle x \rangle}{\longrightarrow} A \overset{\mu}{\longrightarrow}$ ... and $A \equiv P|\{N/x\}$ for some $x$.

By Proposition \ref{para}, $P|\{N/x\} \approx_t Q|\{N/x\}$, so there exists a trace of $Q|\{N/x\}$ which is statically equivalent to the part $A \overset{\mu}{\longrightarrow}$ ... of unfold(\textbf{tr}). We add $\overline{M} \langle N \rangle .Q \overset{\nu x.\overline{M} \langle x \rangle}{\longrightarrow}$ to it and we get a trace of $\overline{M} \langle N \rangle .Q$ that is statically equivalent to unfold(\textbf{tr}). We omit some silent actions and get the desired trace. Hence, $ \overline{M} \langle N \rangle .P \subseteq_t  \overline{M} \langle N \rangle .Q$ and vice versa.
\end{proof}
\begin{prop} 
$P \approx_t Q \Rightarrow M(x).P \approx_t M(x).Q$
\end{prop}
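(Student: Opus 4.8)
The plan is to mimic the proof just given for the output prefix, but with one crucial substitution: where the output case invokes Proposition \ref{para} to push trace equivalence through the active substitution $\{N/x\}$ that floats \emph{beside} the continuation, the input case will instead appeal directly to the definition of trace equivalence for open processes, because here the received message is substituted \emph{into} the continuation. First I would reduce to the closed case exactly as before. Assuming $P \approx_t Q$, for every assignment $\sigma$ sending each free variable of $M(x).P$ and $M(x).Q$ \emph{other than} $x$ to a ground term, it suffices to prove $\overline{M}\sigma(x).P\sigma \approx_t \overline{M}\sigma(x).Q\sigma$; since $\sigma$ is capture-avoiding and $x \notin \mathrm{dom}(\sigma)$, the prefix structure is preserved. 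Hence I may assume $M$ is closed and $\mathrm{fv}(P), \mathrm{fv}(Q) \subseteq \{x\}$, while still retaining $P \approx_t Q$ as processes whose only possible free variable is $x$.

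Next I would take an arbitrary safe trace $\mathbf{tr}$ of $M(x).P$ and inspect unfold$(\mathbf{tr})$. Since $M(x).P$ is an input-prefixed plain process it admits no internal reduction, so its first transition must be the input rule $M(x).P \overset{M(N)}{\longrightarrow} P\{N/x\}$. Because $\mathrm{dom}(M(x).P) = \emptyset$, well-formedness of the trace forces $\mathrm{fv}(M(N)) = \mathrm{v}(M) \cup \mathrm{v}(N) \subseteq \emptyset$, so $N$ is ground and $P\{N/x\}$ is closed. Thus unfold$(\mathbf{tr})$ has the shape $M(x).P \overset{M(N)}{\longrightarrow} P\{N/x\} \overset{\mu}{\longrightarrow} \cdots$.

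Now, because $\{N/x\}$ is a ground closing substitution and $P \approx_t Q$ holds for \emph{all} such substitutions, I obtain $P\{N/x\} \approx_t Q\{N/x\}$ as closed processes. Therefore the part $P\{N/x\} \overset{\mu}{\longrightarrow}\cdots$ of unfold$(\mathbf{tr})$ is matched by a statically equivalent trace of $Q\{N/x\}$. I then prepend $M(x).Q \overset{M(N)}{\longrightarrow} Q\{N/x\}$ with the \emph{same} label $M(N)$; since both initial processes are plain and hence have the empty frame, $M(x).P \approx_s M(x).Q$, so the first positions also agree statically. Omitting the silent actions needed to reconcile $\longrightarrow$ with $\Longrightarrow$ yields a safe trace of $M(x).Q$ statically equivalent to $\mathbf{tr}$, giving $M(x).P \subseteq_t M(x).Q$; the symmetric argument gives the converse, so $M(x).P \approx_t M(x).Q$.

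The delicate point is the bookkeeping in the first two steps that guarantees $N$ is ground and that the closing substitution leaves the bound $x$ untouched while still licensing the passage from the open equivalence $P \approx_t Q$ to the closed equivalence $P\{N/x\} \approx_t Q\{N/x\}$: this is precisely where the definition of trace equivalence for non-closed processes, quantifying over all ground substitutions of the free variables, does the work that Proposition \ref{para} performed in the output case. The rest is routine, namely maintaining the safe-trace and bind-exclusivity hypotheses so that no $\alpha$-conversion is forced across the prepended input step, and trimming silent actions to line up with $\overset{\mu}{\Longrightarrow}$.
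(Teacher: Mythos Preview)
Your proof is correct and follows essentially the same approach as the paper's: reduce to the case where $M$ is closed and $\mathrm{fv}(P)\cup\mathrm{fv}(Q)\subseteq\{x\}$, unfold an arbitrary safe trace so that its first step is the input $M(x).P \overset{M(N)}{\longrightarrow} P\{N/x\}$ with $N$ ground, invoke the definition of $\approx_t$ for open processes to get $P\{N/x\} \approx_t Q\{N/x\}$, and prepend the matching input step on the $Q$-side. You are in fact more explicit than the paper about why $N$ must be ground and about the r\^ole of the open-process definition of $\approx_t$ (the paper writes ``by $P\approx_s Q$'' at that point, which appears to be a typo for $\approx_t$); one small slip on your side is the stray overline in ``$\overline{M}\sigma(x).P\sigma$'', which should read $M\sigma(x).P\sigma$.
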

\begin{proof}
Assume that $P \approx_t Q$. 

For all assignments $\sigma$, it is sufficient to prove that $M\sigma (x).P\sigma \approx_t M\sigma (x).Q\sigma$. Thus, we can suppose that $M$ is closed and $\mathrm{fv}(P) \cup \mathrm{fv}(Q) \subseteq \{x\}$ without loss of generality. Note that $x$ is not mapped by $\sigma$ because it is not free in $M(x).P$ and $M(x).Q$.
We arbitrarily take a safe trace \textbf{tr} of $M(x).P$.
Here, unfold(\textbf{tr}) is $M(x).P \overset{M(N)}{\longrightarrow} P' \overset{\mu}{\longrightarrow}$ ... and $P' \equiv P\{N/x\}$ for some ground $N$.
By $P \approx_s Q$, there exists a trace of $Q\{N/x\}$ which is statically equivalent to the part $P' \overset{\mu}{\longrightarrow}$ ... of unfold(\textbf{tr}). We add $M(x).Q  \overset{M(N)}{\longrightarrow}$ to it and we get the desired trace of $M(x).Q$ that is statically equivalent to unfold(\textbf{tr}). We omit some silent actions and get the desired trace. Hence, $M(x).P \subseteq_t M(x).Q$ and vice versa.
\end{proof}
\begin{prop} 
$A \approx_t B \Rightarrow \nu u.A \approx_t \nu u.B$ (When $u$ is a variable, $u \in \mathrm{dom}(A)$.)
\end{prop}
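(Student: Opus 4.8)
The plan is to reduce to the closed case, strip the outermost restriction from a safe trace of $\nu u.A$ using Lemma \ref{drop-nu}, invoke the hypothesis $A \subseteq_t B$ on the resulting trace of $A$, and then reinstate $\nu u$ on the witnessing trace of $B$, transporting static equivalence through the evaluation context $\nu u.[\_]$. First I would dispose of the non-closed case: for any closing substitution $\sigma$ of the free-but-not-bound variables of $\nu u.A$ and $\nu u.B$, capture-avoidance together with the fact that $u$ is bound gives $(\nu u.A)\sigma = \nu u.(A\sigma)$ and likewise for $B$, so since $A \approx_t B$ already means $A\sigma \approx_t B\sigma$ for every such $\sigma$, it suffices to prove the implication for closed $A,B$. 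The length-$0$ trace forces $\mathrm{fr}(A) \approx_s \mathrm{fr}(B)$, hence $\mathrm{dom}(A) = \mathrm{dom}(B)$; so when $u$ is a variable the hypothesis $u \in \mathrm{dom}(A)$ yields $u \in \mathrm{dom}(B)$ as well, and $\nu u.B$ is well formed. Moreover, since $u \in \mathrm{dom}(A)$ entails $u \in \mathrm{fv}(A)$, closedness of $\nu u.A$ propagates to $A$ (and symmetrically for $B$), which is exactly the side condition Lemma \ref{drop-nu} requires.

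Next, to prove $\nu u.A \subseteq_t \nu u.B$, I would take a safe trace $\mathbf{tr}$ of $\nu u.A$, say $\nu u.A = C_0 \overset{\mu_1}{\Longrightarrow} \dots \overset{\mu_n}{\Longrightarrow} C_n$. Because $u \in \mathrm{bn}(\nu u.A) \cup \mathrm{rv}(\nu u.A)$, safety guarantees $u \notin \mathrm{n}(\mu_i) \cup \mathrm{v}(\mu_i)$ for every $i$, which is the side condition of Lemma \ref{drop-nu}. Applying that lemma along $\mathrm{unfold}(\mathbf{tr})$ elementary step by elementary step (using that each intermediate process is closed, that $\mathrm{fv}(\mu_i) \subseteq \mathrm{dom}(C_{i-1}) = \mathrm{dom}(\nu u.A_{i-1})$, and that $\mu$ ranges over both silent and labelled actions) produces a trace $A = A_0 \overset{\mu_1}{\Longrightarrow} \dots \overset{\mu_n}{\Longrightarrow} A_n$ of $A$, carrying the same labels, with $C_i \equiv \nu u.A_i$. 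By $A \subseteq_t B$ (via Proposition \ref{std-def}) I then obtain a trace $B = B_0 \overset{\mu_1}{\Longrightarrow} \dots \overset{\mu_n}{\Longrightarrow} B_n$ of $B$ with the same labels and $A_i \approx_s B_i$ for all $i$.

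Finally I would reinstate the restriction. Since the labels $\mu_i$ satisfy $u \notin \mathrm{n}(\mu_i) \cup \mathrm{v}(\mu_i)$, the restriction rule of the labelled semantics lifts each labelled step, and closure of internal reduction under the evaluation context $\nu u.[\_]$ lifts each silent step, so $\nu u.B = \nu u.B_0 \overset{\mu_1}{\Longrightarrow} \dots \overset{\mu_n}{\Longrightarrow} \nu u.B_n$ is a trace of $\nu u.B$ with the same labels as $\mathbf{tr}$. For the static-equivalence requirement, from $A_i \approx_s B_i$ and the fact that $\approx_s$ is closed under application of an evaluation context (\cite[Lemma 4.1]{DBLP:journals/corr/AbadiBF16}) I get $\nu u.A_i \approx_s \nu u.B_i$; combined with $C_i \equiv \nu u.A_i$ and the fact that structural equivalence is finer than static equivalence, this yields $C_i \approx_s \nu u.B_i$. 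Hence the constructed trace of $\nu u.B$ is statically equivalent to $\mathbf{tr}$; after the routine $\alpha$-conversion and omission of silent actions needed to make it safe, this establishes $\nu u.A \subseteq_t \nu u.B$, and the symmetric argument gives $\nu u.B \subseteq_t \nu u.A$, so $\nu u.A \approx_t \nu u.B$.

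The hard part will be the bookkeeping around stripping and re-adding $\nu u$ while preserving closedness at every elementary step of $\mathrm{unfold}(\mathbf{tr})$, and in particular verifying that the closedness hypotheses of Lemma \ref{drop-nu} genuinely hold in the variable case; the transport of static equivalence through $\nu u.[\_]$ and the reinstatement of the restriction on the $B$-side are comparatively mechanical once the safety conditions on the labels are in hand.
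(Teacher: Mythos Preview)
Your proposal is correct and follows essentially the same approach as the paper: reduce to the closed case, peel off $\nu u$ along $\mathrm{unfold}(\mathbf{tr})$ via Lemma~\ref{drop-nu}, invoke the hypothesis on $A$ and $B$, then reinstate $\nu u$ using that the actions avoid $u$ and that static equivalence is preserved by the evaluation context $\nu u.[\_]$. Your write-up is in fact a bit more careful than the paper's in spelling out why $A$ remains closed in the variable case and why $u\in\mathrm{dom}(B)$, but the underlying argument is the same.
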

\begin{proof}
Assume that $A \approx_t B$. 

For all assignments $\sigma$, it is sufficient to prove that $(\nu u.A)\sigma \approx_t (\nu u.B)\sigma$. Thus, we can suppose that $A$ and $B$ are closed without loss of generality.
We arbitrarily take a safe trace \textbf{tr} of $\nu u.A$.
Here, unfold(\textbf{tr}) is $\nu u.A \overset{\mu_1}{\longrightarrow} C_1 \overset{\mu_2}{\longrightarrow}...$.
By Lemma \ref{drop-nu}, there exists $C_1'$ such that
$A \overset{\mu_1}{\longrightarrow} C_1' \land C_1 \equiv \nu u.C_1'$.
Moreover, we use Lemma \ref{drop-nu} repeatedly, and we get
\begin{align*}
\mathrm{unfold(\textbf{tr})} \sim_t \nu u.A \overset{\mu_1}{\longrightarrow} \nu u.C_1' \overset{\mu_2}{\longrightarrow}& \nu u.C_2' \overset{\mu_3}{\longrightarrow} ... \\
A  \overset{\mu_1}{\longrightarrow}\ \ \ \ \  C_1' \overset{\mu_2}{\longrightarrow}&\ \ \ \ \ C_2' \overset{\mu_3}{\longrightarrow} ...
\end{align*}
We take a trace of $B$ which is statically equivalent to the second line and add $\nu u.$ to each process. This is possible because \textbf{tr} is safe. Recall that every action in a safe trace contains no bound names and restricted variables. Hence, $\mu_i$ does not contain $u$.
\begin{align*}
B  \overset{\mu_1}{\longrightarrow}\ \ \ \ \  D_1 \overset{\mu_2}{\longrightarrow}&\ \ \ \ \ D_2 \overset{\mu_3}{\longrightarrow} ... \\
\nu u.B \overset{\mu_1}{\longrightarrow} \nu u.D_1 \overset{\mu_2}{\longrightarrow}& \nu u.D_2 \overset{\mu_3}{\longrightarrow} ...
\end{align*}
The last line is statically equivalent to unfold(\textbf{tr}). Note that restriction preserves static equivalence. We omit some internal reductions and get the desired trace. Hence, $\nu u.A \subseteq_t \nu u.B$ and vice versa.
\end{proof}
\begin{prop} 
$P \approx_t Q \Rightarrow \mathrm{if} ~M=N~ \mathrm{then} ~P~ \mathrm{else} ~R \approx_t \mathrm{if} ~M=N~ \mathrm{then} ~Q~ \mathrm{else} ~R$.
\end{prop}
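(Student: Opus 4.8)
The plan is to imitate the structure of the preceding propositions: assume $P \approx_t Q$, reduce to the closed case by substituting an arbitrary closing assignment $\sigma$, and then exploit that a closed conditional is \emph{deterministic} --- its only possible transition is a single internal reduction into one of its two branches, with the branch determined solely by whether $\Sigma \vdash M\sigma = N\sigma$.

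First I would reduce to the closed case. For every closing $\sigma$ it suffices to prove
\begin{equation*}
\mathrm{if} ~M\sigma=N\sigma~ \mathrm{then} ~P\sigma~ \mathrm{else} ~R\sigma \approx_t \mathrm{if} ~M\sigma=N\sigma~ \mathrm{then} ~Q\sigma~ \mathrm{else} ~R\sigma,
\end{equation*}
so I may assume $M, N$ are ground and $P, Q, R$ are closed, writing $C_P$ and $C_Q$ for the two conditionals. Since $P \approx_t Q$ entails $P\sigma \approx_t Q\sigma$, the hypothesis survives this reduction. Note that $\mathrm{fr}(C_P) = \mathrm{fr}(C_Q) = 0$, so $C_P \approx_s C_Q$ holds trivially, and that a conditional admits no labelled transition and exactly one internal reduction (its two reduction rules become mutually exclusive once $M, N$ are ground).

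Then I would split on the equational judgment. If $\Sigma \not\vdash M = N$, both $C_P$ and $C_Q$ reduce only to the common else-branch $R$, so every safe trace factors through the forced step $C_P \rightarrow R$ (resp.\ $C_Q \rightarrow R$) and the two processes have literally the same safe traces; thus $C_P \approx_t C_Q$ is immediate and does not even use the hypothesis. If $\Sigma \vdash M = N$, the only move of $C_P$ is $C_P \rightarrow P$; hence any trace of $C_P$ is, after absorbing this forced silent step into its leading $\Longrightarrow$, a trace of $P$ carrying the same actions $\mu_i$ and the same later processes $A_1, \ldots, A_n$. I would then invoke $P \subseteq_t Q$ (via Proposition \ref{std-def}) to obtain a statically equivalent trace of $Q$, prepend the reduction $C_Q \rightarrow Q$ (legitimate because $\Sigma \vdash M = N$ holds on the $Q$-side too), and verify $\sim_t$: the actions coincide, the initial processes satisfy $C_P \approx_s C_Q$, and the remaining processes are statically equivalent by construction. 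This yields $C_P \subseteq_t C_Q$, and the symmetric argument from $Q \subseteq_t P$ gives the reverse inclusion.

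The only delicate point is the bookkeeping around the forced initial reduction: I must argue cleanly that a conditional has no labelled transition and that its single internal reduction is absorbed into the first $\Longrightarrow$ of a trace, so that traces of $C_P$ and traces of its active branch stand in the obvious length- and action-preserving correspondence (with the degenerate trivial trace of $C_P$ matched by the trivial trace of $C_Q$ using $C_P \approx_s C_Q$). Once this correspondence is in place, static equivalence is inherited directly from $P \approx_t Q$ in the then-case and is an identity in the else-case, so no genuine computation remains.
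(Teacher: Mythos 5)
Your proposal is correct and follows essentially the same route as the paper: reduce to the closed case via a closing assignment, unfold a trace of the conditional into the forced internal reduction followed by a trace of the selected branch, invoke $P \approx_t Q$ when the branch is $P$ and the identity of traces when it is $R$, and prepend the corresponding reduction on the $Q$-side. Your explicit case split on $\Sigma \vdash M = N$ is just a restatement of the paper's split on whether $A \equiv P$ or $A \equiv R$, and the extra care about determinism and the degenerate trace is harmless.
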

\begin{proof}
Assume that $P \approx_t Q$. For all assignments $\sigma$, it is sufficient to prove that 
\begin{equation*}
\mathrm{if} ~M\sigma=N\sigma~ \mathrm{then} ~P\sigma~ \mathrm{else} ~R\sigma \approx_t \mathrm{if} ~M\sigma=N\sigma~ \mathrm{then} ~Q\sigma~ \mathrm{else} ~R\sigma.
\end{equation*}
Thus, we can suppose that $P, Q, R, M$ and $N$ are closed without loss of generality.
We arbitrarily take a safe trace \textbf{tr} of $\mathrm{if} ~M=N~ \mathrm{then} ~P~ \mathrm{else} ~R$.
Here, unfold(\textbf{tr}) is 
\begin{equation*}
\mathrm{if} ~M=N~ \mathrm{then} ~P~ \mathrm{else} ~R \longrightarrow A \overset{\mu}{\longrightarrow} ....
\end{equation*}
We know that $A \equiv P \lor A \equiv R$, so we can regard $A \overset{\mu}{\longrightarrow} ...$ as a trace of $P$ or $R$.
In the former case, there exists a trace of $Q$ which is statically equivalent to it.
We add $\mathrm{if} ~M=N~ \mathrm{then} ~Q~ \mathrm{else} ~R \longrightarrow$ to it and get a trace of $\mathrm{if} ~M=N~ \mathrm{then} ~Q~ \mathrm{else} ~R$ which is statically equivalent to unfold(\textbf{tr}).
In the latter case, we add $\mathrm{if} ~M=N~ \mathrm{then} ~Q~ \mathrm{else} ~R \longrightarrow$ to $A \overset{\mu}{\longrightarrow} ...$ and get a trace of $\mathrm{if} ~M=N~ \mathrm{then} ~Q~ \mathrm{else} ~R$ which is statically equivalent to unfold(\textbf{tr}).
We omit some internal reductions and get the desired trace. Hence, 
\begin{equation*}
\mathrm{if} ~M=N~ \mathrm{then} ~P~ \mathrm{else} ~R \subseteq_t \mathrm{if} ~M=N~ \mathrm{then} ~Q~ \mathrm{else} ~R
\end{equation*}
and vice versa.
\end{proof}
\begin{prop} 
$P \approx_t Q \Rightarrow P+R \approx_t Q+R$.
\end{prop}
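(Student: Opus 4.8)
The plan is to follow the same template as the conditional-branch case, since nondeterministic choice behaves analogously under the operational semantics. First I would reduce to closed processes: for every closing assignment $\sigma$ it suffices to show $P\sigma + R\sigma \approx_t Q\sigma + R\sigma$, and $P\sigma \approx_t Q\sigma$ is immediate from the hypothesis, so I may assume $P, Q, R$ are closed without loss of generality. Then I would pick an arbitrary safe trace $\mathbf{tr}$ of $P + R$ and examine $\mathrm{unfold}(\mathbf{tr})$.

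The key observation is that $P + R$ has no labelled transition at all: there is no structural operational rule whose subject is a choice, and the only structural-equivalence axiom touching $+$ is commutativity $P + R \equiv R + P$, which again produces a choice. Consequently the only moves available from $P + R$ are the internal reductions $P + R \longrightarrow P$ and $P + R \longrightarrow R$ (the latter obtained through commutativity). Hence $\mathrm{unfold}(\mathbf{tr})$ necessarily begins with a silent branch selection $P + R \longrightarrow A$ where $A \equiv P$ or $A \equiv R$, and I can regard the remainder $A \overset{\mu}{\longrightarrow} \ldots$ as a safe trace of $P$ or of $R$.

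Then I would split on the two cases. If $A \equiv P$, then by $P \approx_t Q$ there is a statically equivalent trace of $Q$; prepending the silent step $Q + R \longrightarrow Q$ yields a trace of $Q + R$ statically equivalent to $\mathrm{unfold}(\mathbf{tr})$. If $A \equiv R$, I simply prepend $Q + R \longrightarrow R$ to the same remaining trace. In both cases I absorb the extra internal reduction into a silent macro-step, obtaining the desired safe trace of $Q + R$; this gives $P + R \subseteq_t Q + R$, and the symmetric argument (swapping $P$ and $Q$) gives the reverse inclusion, hence $P + R \approx_t Q + R$.

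I expect the only genuine subtlety --- rather than a true obstacle --- to be the justification that $P + R$ cannot fire a labelled action and that its first step is forced to be a branch selection; this rests on inspecting the SOS rules together with the structural-equivalence axioms for $+$. Everything after that is bookkeeping identical to the conditional case: checking that prepending a silent reduction preserves safety and the ``no $\alpha$-conversion'' condition, and that static equivalence is unaffected by the extra unobservable step.
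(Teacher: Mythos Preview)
Your proposal is correct and follows essentially the same approach as the paper's own proof: reduce to closed processes, note that $\mathrm{unfold}(\mathbf{tr})$ must begin with the silent choice step $P+R \longrightarrow A$ with $A \equiv P$ or $A \equiv R$, case-split accordingly, and prepend the matching $Q+R \longrightarrow$ step. Your added justification that $P+R$ admits no labelled transition (by inspection of the SOS rules and the sole structural axiom $P+Q \equiv Q+P$) makes explicit what the paper leaves implicit, but the argument is otherwise identical.
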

\begin{proof}
Assume that $P \approx_t Q$. For all assignments $\sigma$, it is sufficient to prove that 
\begin{equation*}
P\sigma+R\sigma \approx_t Q\sigma+R\sigma.
\end{equation*}
Thus, we can suppose that $P, Q$ and $R$ are closed without loss of generality.
We arbitrarily take a safe trace \textbf{tr} of $P+R$.
Here, unfold(\textbf{tr}) is $P+R \longrightarrow A \overset{\mu}{\longrightarrow} ...$.
We know that $A \equiv P \lor A \equiv R$, so we can regard $A \overset{\mu}{\longrightarrow} ...$ as a trace of $P$ or $R$.
In the former case, there exists a trace of $Q$ which is statically equivalent to it.
We add $Q+R \longrightarrow$ to it and get a trace of $Q+R$ which is statically equivalent to unfold(\textbf{tr}).
In the latter case, we add $Q+R \longrightarrow$ to $A \overset{\mu}{\longrightarrow} ...$ and get a trace of $Q+R$ which is statically equivalent to unfold(\textbf{tr}).
We omit some internal reductions and get the desired trace. Hence, $P+R \subseteq_t Q+R$ and vice versa.
\end{proof}
Hereafter, we often use partial normal forms \cite{DBLP:journals/corr/AbadiBF16}. $\mathrm{pnf}(A)$ is the partial normal form of $A$. We call a process which is partial normal form a normal process.
\begin{definition}\label{nml-sb}
Let $\sigma$ and $\rho$ be active substitutions. 

We assume that $\sigma | \rho \equiv \{M_1/x_1,...,M_n/x_n\}$ where each $M_i$ does not contain $x_1,...,x_i$ by reordering.
We define $\sigma_0 = 0$ and $\sigma_{i+1} = \sigma_i\{M_{i+1}/x_{i+1}\} | \{M_{i+1}/x_{i+1}\}$. Then, we also define $\sigma \uplus \rho = \sigma_n$.
\end{definition}
\begin{definition}\label{pnf}
We suppose that $\mathrm{pnf}(A)=\nu \widetilde{n}. (\sigma | P)$ and $\mathrm{pnf}(B)=\nu \widetilde{m}. (\rho | Q)$.
\begin{align*}
\mathrm{pnf}(P) =& 0|P \\
\mathrm{pnf}(\{M/x\}) =& \{M/x\}|0 \\
\mathrm{pnf}(\nu n.A) =& \nu n\widetilde{n}. (\sigma | P) \\
\mathrm{pnf}(\nu x.A) =& \nu \widetilde{n}. (\sigma_{\mathrm{dom}(\sigma)\setminus \{x\}} | P) \\
\mathrm{pnf}(A|B) =& \nu \widetilde{n}\widetilde{m}. (\sigma \uplus \rho | (P | Q)(\sigma \uplus \rho)).
\end{align*}
\end{definition}
Note that $A \equiv \mathrm{pnf}(A)$.
\begin{lemma}\label{silent-rep-trace} 
If $A$ and $P$ are closed and $A|!P \longrightarrow B$, it holds one of the following:
\begin{enumerate}
\item\label{Aint} $\exists A'\ \mathrm{s.t.}\ A \longrightarrow A' \land B \equiv A'|!P$.
\item\label{Pint} $\exists P''\ \mathrm{s.t.}\ P|P \longrightarrow P'' \land B \equiv A|P''|!P$.
\item\label{APcomm} $\exists E\ \mathrm{s.t.}\ A|P \longrightarrow E \land B \equiv E|!P$.
\end{enumerate}
\begin{proof}
Let $\mathrm{pnf}(A) = \nu \widetilde{n}.(\sigma|Q)$.

Then, $\mathrm{pnf}(A|!P) = \nu \widetilde{n}.(\sigma|Q|!P)$. Note that $\mathrm{n}(P) \cap \widetilde{n} = \emptyset$.

By  \cite[Lemma B.23]{DBLP:journals/corr/AbadiBF16}, $Q|!P \longrightarrow R \land B \equiv  \nu \widetilde{n}.(\sigma|R)$ for some closed $R$.

By  \cite[Lemma B.24]{DBLP:journals/corr/AbadiBF16}, we consider the following four cases:
\begin{enumerate}
\item $Q \longrightarrow Q' \land R \equiv Q'|!P$ for some closed $Q'$.

$A \equiv \nu \widetilde{n}.(\sigma|Q) \longrightarrow  \nu \widetilde{n}.(\sigma|Q')$ and $\nu \widetilde{n}.(\sigma|Q') | !P \equiv \nu \widetilde{n}.(\sigma|Q'|!P) \equiv B$. 

Then, $A' = \nu \widetilde{n}.(\sigma|Q')$ satisfies case \ref{Aint} of this lemma.

\item $!P \longrightarrow P' \land R \equiv Q|P'$ for some closed $P'$.

By  \cite[Lemma B.24]{DBLP:journals/corr/AbadiBF16}, $P|P \longrightarrow P'' \land P' \equiv P''|!P$ for some closed $P''$.

$B \equiv \nu \widetilde{n}. (\sigma | R) \equiv \nu \widetilde{n}.(\sigma|Q)|P''|!P \equiv A | P'' | !P$, so case \ref{Pint} of this lemma is satisfied.

\item $Q \overset{N(x)}{\longrightarrow} B' \land !P \overset{\nu x.\overline{N} \langle x \rangle}{\longrightarrow} C \land R \equiv \nu x.(B'|C)$ for some $B', C, x$ and ground $N$.

By  \cite[Lemma B.18]{DBLP:journals/corr/AbadiBF16}, $P \overset{\nu x.\overline{N} \langle x \rangle}{\longrightarrow} D \land C \equiv D|!P$ for some $D$.

Then, 
\begin{align*}
A|!P \longrightarrow B \equiv& \nu \widetilde{n}.(\sigma|\nu x.(B'|D|!P)) \\
\equiv& \nu \widetilde{n}.(\sigma|\nu x.(B'|D)|!P) \\
\equiv& \nu \widetilde{n}.(\sigma|\nu x.(B'|D))|!P
\end{align*}
Thus, $E = \nu \widetilde{n}.(\sigma|\nu x.(B'|D))$ satisfies case \ref{APcomm} of this lemma.

\item $!P \overset{N(x)}{\longrightarrow} B' \land Q \overset{\nu x.\overline{N} \langle x \rangle}{\longrightarrow} C \land R \equiv \nu x.(B'|C)$ for some $B', C, x$ and ground $N$.

By  \cite[Lemma B.18]{DBLP:journals/corr/AbadiBF16}, $P \overset{N(x)}{\longrightarrow} D \land B' \equiv D|!P$ for some $D$.

Then,
\begin{align*}
A|!P \longrightarrow B \equiv& \nu \widetilde{n}.(\sigma|\nu x.(D|!P|C)) \\
\equiv& \nu \widetilde{n}.(\sigma|\nu x.(D|C)|!P) \\
\equiv& \nu \widetilde{n}.(\sigma|\nu x.(D|C))|!P
\end{align*}
Thus, $E = \nu \widetilde{n}.(\sigma|\nu x.(D|C))$ satisfies case \ref{APcomm} of this lemma.
\end{enumerate}
\end{proof}
\begin{corollary}\label{cor-silent-rep} 
If $A$ and $P$ are closed and $A|!P \longrightarrow B$, then $A|P|P \longrightarrow C \land B \equiv C|!P$ for some $C$.
\end{corollary}
\begin{lemma}\label{label-rep-trace} 
If $A$ and $P$ are closed and $A|!P \overset{\alpha}{\longrightarrow} B$ and $\mathrm{fv}(\alpha) \subseteq \mathrm{dom}(A)$ and $\mathrm{n}(\alpha) \cap \mathrm{bn}(A|!P) = \emptyset$, it holds one of the following:
\begin{enumerate}
\item\label{left} $\exists E\ \mathrm{s.t.}\ A \overset{\alpha}{\longrightarrow} E \land B \equiv E|!P$.
\item\label{right} $\exists F\ \mathrm{s.t.}\ A|P \overset {\alpha}{\longrightarrow} F \land B \equiv F|!P$.
\end{enumerate}
\end{lemma}
\begin{proof}
Let $\mathrm{pnf}(A) = \nu \widetilde{n}.(\sigma|Q)$.

Then, $\mathrm{pnf}(A|!P) = \nu \widetilde{n}.(\sigma|Q|!P)$.

By  \cite[Lemma B.19]{DBLP:journals/corr/AbadiBF16}, $Q|!P \overset{\alpha\sigma}{\longrightarrow} C \land B \equiv \nu \widetilde{n}.(\sigma|C)$ for some $C$.

By  \cite[Lemma B.18]{DBLP:journals/corr/AbadiBF16}, we consider the following two cases:
\begin{enumerate}
\item $Q \overset{\alpha\sigma}{\longrightarrow} D \land C \equiv D|!P$ for some $D$.

By Lemma \ref{drop-sigma}, $\sigma | Q \overset{\alpha}{\longrightarrow} \sigma | D$.

Thus, $A \equiv \nu \widetilde{n}.(\sigma|Q) \overset{\alpha}{\longrightarrow} \nu \widetilde{n}.(\sigma|D)$.

In addition, $\nu \widetilde{n}.(\sigma|D)|!P \equiv \nu \widetilde{n}.(\sigma|D|!P) \equiv B$, so $E = \nu \widetilde{n}.(\sigma|D)$ satisfies case \ref{left} of this lemma.

\item $!P \overset{\alpha\sigma}{\longrightarrow} D \land C \equiv Q|D$ for some $D$.

By  \cite[Lemma B.18]{DBLP:journals/corr/AbadiBF16}, $P \overset{\alpha\sigma}{\longrightarrow} E \land D \equiv E|!P$ for some $E$.

By Lemma \ref{drop-sigma}, $\sigma | P \overset{\alpha}{\longrightarrow} \sigma | E$.

Thus, $A | P \equiv \nu \widetilde{n}.(\sigma|Q)|P \overset{\alpha}{\longrightarrow} \nu \widetilde{n}.(\sigma|Q|E)$.

In addition, $\nu \widetilde{n}.(\sigma|Q|E) | !P \equiv \nu \widetilde{n}.(\sigma|Q|E|!P) \equiv B$, so $F = \nu \widetilde{n}.(\sigma|Q|E)$ satisfies case \ref{right} of this lemma.
\end{enumerate}
\end{proof}
\end{lemma}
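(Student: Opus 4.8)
The plan is to prove Lemma \ref{label-rep-trace} by reducing the transition of $A|!P$ to a transition of its partial normal form, and then applying the structural-operational lemmas from \cite{DBLP:journals/corr/AbadiBF16} that govern how a labelled action can be split across a parallel composition. First I would write $\mathrm{pnf}(A) = \nu \widetilde{n}.(\sigma|Q)$, so that by Definition \ref{pnf} we obtain $\mathrm{pnf}(A|!P) = \nu \widetilde{n}.(\sigma|Q|!P)$, using the fact that $!P$ is a plain process contributing only $0$ to the frame and that $\mathrm{n}(P) \cap \widetilde{n} = \emptyset$ since $A$ is bind-exclusive. Since $A|!P \equiv \mathrm{pnf}(A|!P)$ and the transition relation is closed under $\equiv$, the hypothesis $A|!P \overset{\alpha}{\longrightarrow} B$ transports to a transition of $\nu \widetilde{n}.(\sigma|Q|!P)$.

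Next I would peel off the name restrictions $\nu \widetilde{n}$. Because $\mathrm{n}(\alpha) \cap \mathrm{bn}(A|!P) = \emptyset$ and $\mathrm{fv}(\alpha) \subseteq \mathrm{dom}(A) = \mathrm{dom}(\sigma)$, the names in $\widetilde{n}$ do not occur in $\alpha$, so the restriction rule lets me commute the action past $\nu\widetilde{n}$; this is exactly what \cite[Lemma B.19]{DBLP:journals/corr/AbadiBF16} packages, yielding $Q|!P \overset{\alpha\sigma}{\longrightarrow} C$ with $B \equiv \nu \widetilde{n}.(\sigma|C)$ for some $C$. Here the label becomes $\alpha\sigma$ because the active substitution $\sigma$ has been applied to the underlying plain process; this mirrors the role of $\sigma$ in Lemma \ref{shift-sigma} and Lemma \ref{drop-sigma}.

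Then I would invoke \cite[Lemma B.18]{DBLP:journals/corr/AbadiBF16}, the analysis of how a labelled action on a parallel composition $Q|!P$ arises, to split into two cases: either $Q$ performs the action, giving $Q \overset{\alpha\sigma}{\longrightarrow} D$ with $C \equiv D|!P$, or $!P$ performs it, giving $!P \overset{\alpha\sigma}{\longrightarrow} D$ with $C \equiv Q|D$. In the first case I reattach $\sigma$ by Lemma \ref{drop-sigma} to get $\sigma|Q \overset{\alpha}{\longrightarrow} \sigma|D$, hence $A \equiv \nu\widetilde{n}.(\sigma|Q) \overset{\alpha}{\longrightarrow} \nu\widetilde{n}.(\sigma|D)$, and a short structural-equivalence computation shows $\nu\widetilde{n}.(\sigma|D)|!P \equiv B$, establishing case \ref{left} with $E = \nu\widetilde{n}.(\sigma|D)$. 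In the second case I further apply \cite[Lemma B.18]{DBLP:journals/corr/AbadiBF16} to $!P$ to extract a single copy, obtaining $P \overset{\alpha\sigma}{\longrightarrow} E$ with $D \equiv E|!P$, then reattach $\sigma$ via Lemma \ref{drop-sigma} and compose with $Q$ to get $A|P \overset{\alpha}{\longrightarrow} \nu\widetilde{n}.(\sigma|Q|E)$, establishing case \ref{right} with $F = \nu\widetilde{n}.(\sigma|Q|E)$.

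The main obstacle I expect is bookkeeping rather than conceptual: I must verify that the side conditions on labels survive each reduction step, in particular that $\mathrm{fv}(\alpha) \subseteq \mathrm{dom}(\sigma)$ (so that $\alpha\sigma$ is well-formed and Lemma \ref{drop-sigma} applies with the label $\alpha$ recovered) and that no freshly required $\alpha$-conversion reintroduces a name of $\widetilde{n}$ into $\alpha$. The symmetry with the silent-action argument of Lemma \ref{silent-rep-trace} means the structural-equivalence manipulations at the end of each case --- pushing $!P$ outside the restriction $\nu\widetilde{n}$, which is legitimate precisely because $\mathrm{n}(P) \cap \widetilde{n} = \emptyset$ --- follow the same template, so once the first case is checked carefully the second is routine.
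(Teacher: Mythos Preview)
Your proposal is correct and follows essentially the same approach as the paper: put $A$ in partial normal form, use \cite[Lemma B.19]{DBLP:journals/corr/AbadiBF16} to strip the restrictions and substitution, split via \cite[Lemma B.18]{DBLP:journals/corr/AbadiBF16}, and reattach $\sigma$ with Lemma \ref{drop-sigma}; the witnesses $E = \nu\widetilde{n}.(\sigma|D)$ and $F = \nu\widetilde{n}.(\sigma|Q|E)$ you construct are exactly those in the paper. Your additional remarks on why the side conditions on $\alpha$ and $\widetilde{n}$ are met are accurate and simply make explicit what the paper leaves implicit.
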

\begin{corollary}\label{cor-label-rep} 
If $A$ and $P$ are closed and $A|!P \overset{\alpha}{\longrightarrow} B$ and $\mathrm{fv}(\alpha) \subseteq \mathrm{dom}(A)$ and $\mathrm{n}(\alpha) \cap \mathrm{bn}(A|!P) = \emptyset$, then $A|P \overset{\alpha}{\longrightarrow} C \land B \equiv C|!P$ for some $C$.
\end{corollary}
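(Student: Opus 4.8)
The plan is to obtain the corollary as an immediate consequence of Lemma~\ref{label-rep-trace} by inspecting its two possible conclusions. The hypotheses of the corollary are word-for-word those of the lemma, so Lemma~\ref{label-rep-trace} applies directly; the only remaining task is to collapse each of its two cases into the single uniform statement the corollary asserts, namely a transition $A|P \overset{\alpha}{\longrightarrow} C$ with $B \equiv C|!P$ for a common witness $C$.

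Case~\ref{right} of the lemma is already exactly the desired conclusion: it supplies an $F$ with $A|P \overset{\alpha}{\longrightarrow} F$ and $B \equiv F|!P$, so here I would simply set $C = F$ and be done.

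For Case~\ref{left}, the lemma provides an $E$ with $A \overset{\alpha}{\longrightarrow} E$ and $B \equiv E|!P$. Here I would lift the transition of $A$ to a transition of $A|P$ by an application of the parallel rule of the labelled semantics, obtaining $A|P \overset{\alpha}{\longrightarrow} E|P$. I would then rewrite $B \equiv E|!P \equiv E|(P|!P) \equiv (E|P)|!P$, using the structural law $!P \equiv P|!P$ together with associativity of $|$ and the fact that $\equiv$ is closed under parallel composition. Taking $C = E|P$ then yields both $A|P \overset{\alpha}{\longrightarrow} C$ and $B \equiv C|!P$, completing this case.

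The only point needing care, and the main (minor) obstacle, is the side condition $\mathrm{rv}(\alpha) \cap \mathrm{fv}(P) = \emptyset$ attached to the parallel rule. When $\alpha$ is an input the set $\mathrm{rv}(\alpha)$ is empty and the condition is vacuous; when $\alpha$ is an output $\nu x.\overline{M} \langle x \rangle$ the variable $x$ is fresh, and since the hypothesised transition $A|!P \overset{\alpha}{\longrightarrow} B$ was itself derivable, the instance of the parallel rule used in deriving it already guarantees $x \notin \mathrm{fv}(!P) = \mathrm{fv}(P)$. Hence the condition holds and the lift in Case~\ref{left} is legitimate.
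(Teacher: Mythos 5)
Your proof is correct and is exactly the derivation the paper intends (the corollary is stated without proof, and your case split on Lemma \ref{label-rep-trace} together with $!P \equiv P|!P$ is the evident route). The only remark is that your handling of the side condition $\mathrm{rv}(\alpha) \cap \mathrm{fv}(P) = \emptyset$ is more elaborate than necessary: since $P$ is a closed plain process, $\mathrm{fv}(P) = \mathrm{dom}(P) = \emptyset$, so the condition holds trivially.
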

\begin{prop} 
$P \approx_t Q \Rightarrow !P \approx_t !Q$.
\end{prop}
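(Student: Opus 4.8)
The plan is to prove the two inclusions $!P \subseteq_t\ !Q$ and $!Q \subseteq_t\ !P$ separately; by symmetry it suffices to treat the first. As in the previous propositions, I would first discharge the free variables: since $\mathrm{fv}(!P) = \mathrm{fv}(P)$ and $(!P)\sigma = !(P\sigma)$, it is enough to prove the statement for closed $P$ and $Q$, and I assume this henceforth. I would work throughout with the standard characterisation of $\subseteq_t$ from Proposition \ref{std-def}, so that the task becomes: given any trace $\mathbf{tr}$ of $!P$, produce a trace of $!Q$ with the same labels whose $i$-th process is statically equivalent to $\mathbf{tr}[i]$.

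The core idea is that a finite trace of $!P$ only ever activates finitely many copies of $P$, so it is really a trace of a finite parallel product $P^k := \underbrace{P | \dots | P}_{k}$ with an inert $!P$ left over. Concretely, I would take $\mathrm{unfold}(\mathbf{tr})$ and process its single transitions $B_0 \to B_1 \to \dots$ one at a time, maintaining the invariant $B_i \equiv A_i | !P$ with $A_i$ closed (starting from $A_0 = 0$, using $0 | !P \equiv !P$). For a silent step, Corollary \ref{cor-silent-rep} yields $C_i$ with $A_{i-1}|P|P \longrightarrow C_i$ and $B_i \equiv C_i | !P$; for a labelled step $\mu_i = \alpha$, Corollary \ref{cor-label-rep} yields $C_i$ with $A_{i-1}|P \overset{\alpha}{\longrightarrow} C_i$ and $B_i \equiv C_i | !P$ (the side conditions $\mathrm{fv}(\alpha) \subseteq \mathrm{dom}(A_{i-1})$ and $\mathrm{n}(\alpha) \cap \mathrm{bn}(A_{i-1}|!P) = \emptyset$ hold because $\mathbf{tr}$ is safe and $\mathrm{dom}(!P) = \emptyset$). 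Setting $A_i := C_i$ preserves the invariant. Since each step consumes at most two fresh copies of $P$, after the whole unfolding at most $k := 2\,|\mathrm{unfold}(\mathbf{tr})|$ copies are used; reassembling, the transitions $A_{i-1}|P \overset{\alpha}{\to} C_i$ and $A_{i-1}|P|P \to C_i$ can be placed under the evaluation context $\cdot\, | P^{k-k_i}$ (where $k_i$ counts copies consumed through step $i$) to yield a genuine trace $\mathbf{tr}_P$ of $P^{k}$ with the same labels and $i$-th process $D_i \equiv A_i | P^{k-k_i}$. The key frame computation is $\mathrm{fr}(!P) = \mathrm{fr}(P^m) = 0$, whence $D_i \approx_s A_i \approx_s A_i | !P \equiv B_i$; thus $\mathbf{tr}_P$ is a trace of $P^k$ statically equivalent to $\mathbf{tr}$.

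It then remains to transport this across $\approx_t$ and reinstate the replication. From $P \approx_t Q$, iterating the parallel-composition case Proposition \ref{para} (together with transitivity of $\approx_t$) gives $P^k \approx_t Q^k$, so Proposition \ref{std-def} supplies a trace $\mathbf{tr}_Q$ of $Q^k$ with the same labels whose $i$-th process is statically equivalent to $D_i$, hence to $B_i$. Finally I would apply the evaluation context $\cdot\, | !Q$ to every transition of $\mathbf{tr}_Q$; using $Q^k | !Q \equiv\ !Q$ and closure of derivations under $\equiv$, the result is a trace of $!Q$, and since $\mathrm{fr}(!Q) = 0$ its $i$-th process is statically equivalent to that of $\mathbf{tr}_Q$, hence to $B_i$. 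This is the required trace, giving $!P \subseteq_t\ !Q$; swapping $P$ and $Q$ gives the converse, and therefore $!P \approx_t\ !Q$.

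I expect the main obstacle to be the bookkeeping of the peeling induction in the second paragraph: justifying that the residual $!P$ genuinely stays inert so that dropping it (and later re-attaching $!Q$) preserves static equivalence, checking that the freshly spawned copies can be $\alpha$-renamed to keep the trace safe and bind-exclusive, and verifying the side conditions of Corollaries \ref{cor-silent-rep} and \ref{cor-label-rep} at every step. Once the finite bound $k$ on the number of activated copies is secured, the remaining steps are routine applications of Proposition \ref{para}, Proposition \ref{std-def}, and the identity $\mathrm{fr}(!R) = 0$.
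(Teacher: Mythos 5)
Your proposal is correct and follows essentially the same route as the paper: peel the trace of $!P$ down to a trace of a finite parallel product $P^n$ via Corollaries \ref{cor-silent-rep} and \ref{cor-label-rep}, invoke Proposition \ref{para} to get $P^n \approx_t Q^n$, and reattach $!Q$ using $Q^n\,|\,!Q \equiv\ !Q$ and $\mathrm{fr}(!Q)=0$. Your version merely makes explicit some bookkeeping (the bound $k$ on consumed copies and the invariant $B_i \equiv A_i\,|\,!P$) that the paper leaves implicit.
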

\begin{proof}
Assume that $P \approx_t Q$.

For all assignments $\sigma$, it is sufficient to prove that $!P\sigma \approx_t !Q\sigma$. Thus, we can suppose that $P$ and $Q$ are closed without loss of generality.

We arbitrarily take a safe trace \textbf{tr} of $!P$. By corollary \ref{cor-silent-rep}, \ref{cor-label-rep}, we obtain a trace which is of the form $P^n|!P \overset{\mu_1}{\longrightarrow} A_1|!P \overset{\mu_2}{\longrightarrow} A_2|!P \overset{\mu_3}{\longrightarrow} ...$ and statically equivalent to unfold(\textbf{tr}). We also obtain $P^n \overset{\mu_1}{\longrightarrow} A_1 \overset{\mu_2}{\longrightarrow} A_2 \overset{\mu_3}{\longrightarrow} ...$ to convert each process in unfold(\textbf{tr}). Here, $n$ depends on \textbf{tr} and $P^n$ is $n$ concurrent processes. Strictly speaking, processes in $P^n$ are not same when $\mathrm{bn}(P) \neq \emptyset$. In this case, we $\alpha$-convert $P$ and make processes in $P^n$ bind-exclusive. 

By Proposition \ref{para}, $P^n \approx_t Q^n$, so there exists a trace of $Q^n$ which is statically equivalent to $P^n \overset{\mu_1}{\longrightarrow} A_1 \overset{\mu_2}{\longrightarrow} A_2 \overset{\mu_3}{\longrightarrow} ...$.

We add $|!Q$ to each process in the trace, omit some internal reductions and get the desired trace. Hence, $!P \subseteq_t !Q$ and vice versa.
\end{proof}
\section{The Case of Parallel Composition}
\begin{prop}\label{para} 
$A \approx_t B \Rightarrow A|C \approx_t B|C$.
\end{prop}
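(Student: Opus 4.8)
The plan is to reduce to closed processes, use symmetry, project a trace of $A|C$ onto $A$, apply the hypothesis, and then recombine with $C$. By the definition of $\approx_t$ for non-closed processes it suffices to treat closed $A,B,C$: instantiate the free variables by an arbitrary ground, capture-avoiding $\sigma$ and note that $(A|C)\sigma = A\sigma \mid C\sigma$. By symmetry of $\approx_t$ it is then enough to prove $A \subseteq_t B \Rightarrow A|C \subseteq_t B|C$. After $\alpha$-converting so that $A$, $B$, $C$ are pairwise bind-exclusive, I would invoke Proposition \ref{std-def} to pass to the standard formulation, fix a trace $\mathbf{tr}$ of $A|C$, and work with $\mathrm{unfold}(\mathbf{tr})$, a full silent/labelled derivation $A|C = E_0 \to E_1 \to \cdots$.

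Next I would decompose this derivation. Putting each $E_i$ into partial normal form (Definition \ref{pnf}) and proceeding by induction on its length, I would maintain the invariant that $E_i \equiv \nu \widetilde{r}_i.(A_i \mid C_i)$, where $A_i$ is a descendant of $A$, $C_i$ a descendant of $C$, and $\widetilde{r}_i$ collects the names of $A$ that have already been extruded into the shared scope. Each elementary transition is then classified by the operational rules as (a) internal to $A$, (b) internal to $C$, (c) a labelled action of $A$, (d) a labelled action of $C$, (e) a communication $A \to C$, or (f) a communication $C \to A$. Alongside this I would record an $A$-projection: a trace of $A$ that keeps the steps of type (a) silent and (c) labelled, that turns each communication of type (e) or (f) into a labelled output or input of $A$ mediated by a fresh alias variable, and that drops (b), (d) and the $C$-side of (e)/(f). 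Lemmas \ref{drop-nu} and \ref{drop-sigma} are what let me peel the restriction $\nu \widetilde{r}_i$ and the accumulated substitution off $A_i$ so that $A$ alone performs these steps; the scope-extrusion bookkeeping — deciding exactly which names move from inside $A$ into the shared $\nu \widetilde{r}_i$ at an output — is the delicate ingredient here.

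The $A$-projection is a genuine safe trace of $A$, so $A \subseteq_t B$ yields a safe trace of $B$ with identical labels and with each frame statically equivalent to the corresponding $A$-frame. I would then interleave this $B$-trace with a fresh copy of $C$'s omitted actions, re-synchronizing every mediated output/input back into an internal reduction of $B|C$; turning the two labelled steps back into a single silent step is legitimate because, when $\mu_i$ is silent, $\overset{\mu_i}{\Longrightarrow}$ is just $\Longrightarrow$ and absorbs any number of internal reductions. Lemma \ref{change-label} handles the case where $B$ communicates on a channel that is equal to the original channel only modulo the current frame, and Lemmas \ref{erase-sigma} and \ref{shift-sigma} move substitutions across the composition when re-forming $\nu \widetilde{r}.(\text{$B$-part} \mid \text{$C$-part})$.

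The remaining obligation is to verify $E_i \approx_s$ the recombined process at every marked point, and this is where the main obstacle lies. Since $C$ is plain it contributes to the frame only the terms it relays, so the combined frame is the $A$-frame (respectively $B$-frame) composed with $C$'s contributions, and static equivalence is preserved by this composition because $\approx_s$ is closed under evaluation contexts. The genuine difficulty is that after a communication $C$ now holds the concrete message $N$ sent by $A$, whereas in the $B$-world it holds the statically equivalent but syntactically different message $N'$ sent by $B$; one must show that $C$ thereafter takes the same branch at every conditional and relays statically equivalent values. This is precisely why the $A$-projection records every $A \to C$ message as a frame entry: static equivalence of the $A$- and $B$-frames then forces every test term $C$ could build to evaluate identically in both worlds, so $C$'s control flow and its outputs agree up to $\approx_s$. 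Carrying this invariant through the communication-with-extrusion case is the heart of the argument, while the other cases reduce to routine applications of the summarized lemmas. Dropping the auxiliary silent steps finally gives the required safe trace of $B|C$, so $A|C \subseteq_t B|C$, and by symmetry $A|C \approx_t B|C$.
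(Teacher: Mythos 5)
Your plan coincides with the paper's proof in every essential respect: the paper likewise reduces to closed processes, transforms a trace of $A|C$ into a ``concurrent normal form'' $\nu\widetilde{r}\widetilde{s}.(\nu\widetilde{x}.A_m\rho_m \mid \nu\widetilde{y}.C_m\sigma_m)$ tracking extruded names and exchanged messages, classifies transitions into the same six cases, extracts a safe trace of $A$ (with communications turned into labelled actions via alias variables), applies $A\subseteq_t B$, and recombines, with the crux being exactly the point you isolate --- that static equivalence of the accumulated frames forces $C$ to take the same branches and relay statically equivalent terms in the $B$-world. The only inessential simplification is your assumption that $C$ is plain; the paper allows extended $C$ by writing $C_m=\rho_m|Q_m$ in partial normal form, which changes nothing in the argument.
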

This proof is very long and complex, so we present an outline at first. We suppose that $A, B$ and $C$ are closed as before. 

First, we define a concurrent normal form. This is a special form of a trace of a process which is of the form $A|C$. A concurrent normal trace completely captures the change of scope of bound names.

Secondly, we prove that every safe trace of $A|C$ can be transformed into a concurrent normal form. This proof is constructive.

Thirdly, we constitute a correspondent trace from the concurrent normal trace by division into cases.

Finally, we convert the extracted traces, combine them and prove that the result is statically equivalent to the given trace.

In this section, we always assume that structural equivalent processes are constructed without $\alpha$-conversion.
\subsection{A Definition of a Concurrent Normal Trace}
\begin{definition}
$[\sigma\rho] = (\sigma \uplus \rho)_{|\mathrm{dom}(\sigma)}$.
\end{definition}
\begin{lemma}\label{bra}
$\sigma\rho \equiv [\sigma\rho]$.
\end{lemma}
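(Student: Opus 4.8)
The plan is to push all the work onto the two structural-equivalence rules that can alter the image of an active substitution, namely $A|\{M/x\} \equiv A\{M/x\}|\{M/x\}$ and $\{M/x\}\equiv\{N/x\}$ whenever $\Sigma\vdash M=N$. Since the well-formedness of $\sigma|\rho$ forces $\mathrm{dom}(\sigma)\cap\mathrm{dom}(\rho)=\emptyset$, both $\sigma\rho$ and $[\sigma\rho]=(\sigma\uplus\rho)_{|\mathrm{dom}(\sigma)}$ have domain exactly $\mathrm{dom}(\sigma)$, so the entire content of the lemma is the comparison of the term attached to each $x\in\mathrm{dom}(\sigma)$.

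First I would record two facts about $\uplus$. (i) $\sigma\uplus\rho\equiv\sigma|\rho$: writing $\sigma|\rho\equiv\{M_1/x_1,\dots,M_n/x_n\}$ in the cycle-free order of Definition~\ref{nml-sb}, an induction on $i$ gives $\sigma_i\equiv\{M_1/x_1,\dots,M_i/x_i\}$, each inductive step being a single application of $A|\{M/x\}\equiv A\{M/x\}|\{M/x\}$ with $A=\sigma_i$ and $\{M/x\}=\{M_{i+1}/x_{i+1}\}$. (ii) $\sigma\uplus\rho$ is a \emph{solved} form: because each $M_j$ avoids $x_1,\dots,x_j$ and later entries are substituted into earlier ones, its images contain no variable of $\mathrm{dom}(\sigma)\cup\mathrm{dom}(\rho)$; hence $\sigma\uplus\rho$ is idempotent and satisfies the fixpoint identities $y(\sigma\uplus\rho)=\bigl(y(\sigma|\rho)\bigr)(\sigma\uplus\rho)$ for every domain variable $y$.

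The core computation is then the images on $\mathrm{dom}(\sigma)$. For $x\in\mathrm{dom}(\sigma)$ the fixpoint identities yield
\begin{equation*}
x(\sigma\uplus\rho)=(x\sigma)(\sigma\uplus\rho)=\bigl((x\sigma)\rho\bigr)(\sigma\uplus\rho)=\bigl(x(\sigma\rho)\bigr)(\sigma\uplus\rho),
\end{equation*}
where the middle equality uses $N(\sigma\uplus\rho)=(N\rho)(\sigma\uplus\rho)$ for all $N$ (itself the fixpoint identity on $\mathrm{dom}(\rho)$, applied variable by variable). Since, with $\rho$ solved, the term $x(\sigma\rho)$ carries only $\mathrm{dom}(\sigma)$-variables among domain variables, and $\sigma\uplus\rho$ agrees with $[\sigma\rho]$ on $\mathrm{dom}(\sigma)$, this reads $x[\sigma\rho]=\bigl(x(\sigma\rho)\bigr)[\sigma\rho]$. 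Thus $[\sigma\rho]$ is exactly the solved form of $\sigma\rho$, and I finish by chasing the bindings of $\sigma\rho$ against one another via $A|\{M/x\}\equiv A\{M/x\}|\{M/x\}$ in cycle-free order — each step an instance of $\equiv$ — reaching a normal substitution whose images, by uniqueness of solved forms (cf. the earlier lemma on normal substitutions preceding Definition~\ref{nml-sb}), coincide with those of $[\sigma\rho]$ up to $\Sigma$; the equational rule then gives $\sigma\rho\equiv[\sigma\rho]$.

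The hard part will be that restriction to a subdomain does \emph{not} respect $\equiv$: one cannot simply restrict $\sigma\uplus\rho\equiv\sigma|\rho$ to $\mathrm{dom}(\sigma)$ and read off the answer, because $(\sigma|\rho)_{|\mathrm{dom}(\sigma)}=\sigma$, which is in general \emph{not} structurally equivalent to $[\sigma\rho]$. The images on $\mathrm{dom}(\sigma)$ must genuinely be solved, so the real difficulty is reconciling the simultaneous substitution hidden in $\sigma\rho$ with the iterated, chased substitution built into $\sigma\uplus\rho$. Keeping the cycle-free ordering and capture-avoidance straight through this reconciliation — and relying on the standing convention that the substitutions are solved, or else first normalising them via the proposition preceding Definition~\ref{nml-sb} — is where the bookkeeping burden of the proof lies.
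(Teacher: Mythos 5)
Your argument is correct in outline, but it takes a genuinely different and considerably heavier route than the paper. The paper's entire proof is the three-step chain $\sigma\rho \equiv \nu\widetilde{y}.(\sigma|\rho) \equiv \nu\widetilde{y}.(\sigma\uplus\rho) \equiv [\sigma\rho]$ with $\widetilde{y}=\mathrm{dom}(\rho)$: the middle step is just your fact (i) pushed under the evaluation context $\nu\widetilde{y}.[\_]$ (under which $\equiv$ is closed by definition), and the two outer steps garbage-collect the $\mathrm{dom}(\rho)$-part via $A|\{M/x\}\equiv A\{M/x\}|\{M/x\}$ followed by $\nu x.\{M/x\}\equiv 0$, which is legitimate precisely because the images are solved. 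This sidesteps the difficulty you correctly flag — that the meta-operation $(\cdot)_{|\mathrm{dom}(\sigma)}$ does not respect $\equiv$ — by never restricting an equivalence to a subdomain: the $\nu$ binder is applied to \emph{both} sides first, and only then is the bound part discarded. Your proof instead works at the level of images: you show that $[\sigma\rho]$ satisfies the fixpoint identity $x[\sigma\rho]=(x(\sigma\rho))[\sigma\rho]$, that $\sigma\rho$ chases (in cycle-free order, each step an instance of the substitution rule) to a unique solved form, and that uniqueness of solved forms forces the two to coincide. This buys an explicit explanation of \emph{why} the images agree — which the paper's third step silently relies on — at the cost of needing the easy converse of the paper's Lemma on solved substitutions (equal images imply $\equiv$, by the rule $\{M/x\}\equiv\{N/x\}$ and parallel congruence; the Lemma in the paper states only the direction $\equiv$ implies equal images) and a uniqueness-of-solved-forms induction that the paper never has to perform. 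Both proofs lean equally on the standing conventions that substitutions are cycle-free and may be taken solved after normalisation; your citation for that normalisation should point to the Proposition following the cycle-freeness definition in Chapter 2 rather than to material adjacent to Definition \ref{nml-sb}.
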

\begin{proof}
See Appendix.
\end{proof}
\begin{lemma}\label{sub-ex}
$\rho\sigma \equiv \rho[\sigma\rho]$.
\end{lemma}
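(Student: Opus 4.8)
The plan is to avoid comparing $\rho\sigma$ and $\rho[\sigma\rho]$ image-by-image and instead reduce the statement to three applications of the structural rules together with Lemma \ref{bra}. First I would observe that, by the proposition guaranteeing that every active substitution is $\equiv$ to one whose images avoid its own domain, we may assume without loss of generality that $\sigma$ and $\rho$ are normalized; since $\uplus$, $[\,\cdot\,]$ and hence $\rho\sigma$ and $\rho[\sigma\rho]$ are all defined up to $\equiv$, replacing $\sigma,\rho$ by normalized equivalents changes neither side of the claim up to $\equiv$. The only structural facts I need are the floating rule $A|\{M/x\}\equiv A\{M/x\}|\{M/x\}$, which applied componentwise (using that the images are domain-free, so the components of the substitution being pushed are left untouched) yields $\rho|\sigma\equiv(\rho\sigma)|\sigma$ and $\sigma|\rho\equiv(\sigma\rho)|\rho$, together with scope extrusion and the rule $\nu x.\{M/x\}\equiv 0$.

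Next I would introduce a restriction trick. Write $\widetilde{x}$ for an enumeration of $\mathrm{dom}(\sigma)=\mathrm{dom}([\sigma\rho])$. Since each image $(y\rho)\sigma$ has all $\mathrm{dom}(\sigma)$-variables substituted away, we have $\mathrm{fv}(\rho\sigma)\cap\mathrm{dom}(\sigma)=\emptyset$, whence
\begin{equation*}
\nu\widetilde{x}.(\rho|\sigma)\equiv\nu\widetilde{x}.((\rho\sigma)|\sigma)\equiv(\rho\sigma)|\nu\widetilde{x}.\sigma\equiv\rho\sigma,
\end{equation*}
using the floating rule, then scope extrusion (legitimate because $\widetilde{x}$ is not free in $\rho\sigma$), then $\nu\widetilde{x}.\sigma\equiv 0$ (the restrictions peel off one at a time because $\sigma$ is normalized). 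The identical computation with $[\sigma\rho]$ in place of $\sigma$ gives $\nu\widetilde{x}.(\rho|[\sigma\rho])\equiv\rho[\sigma\rho]$, since $[\sigma\rho]$ is normalized with domain $\mathrm{dom}(\sigma)$ and $(y\rho)[\sigma\rho]$ is again $\mathrm{dom}(\sigma)$-free.

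Finally I would connect the two restricted processes. From $\sigma|\rho\equiv(\sigma\rho)|\rho$ and Lemma \ref{bra} (so $\sigma\rho\equiv[\sigma\rho]$), composing in parallel with $\rho$ (an evaluation context) gives $\sigma|\rho\equiv[\sigma\rho]|\rho$, and commutativity of $|$ yields $\rho|\sigma\equiv\rho|[\sigma\rho]$. Placing this inside the context $\nu\widetilde{x}.(-)$ and chaining the three equivalences gives
\begin{equation*}
\rho\sigma\equiv\nu\widetilde{x}.(\rho|\sigma)\equiv\nu\widetilde{x}.(\rho|[\sigma\rho])\equiv\rho[\sigma\rho],
\end{equation*}
which is the claim.

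The main obstacle is conceptual rather than computational: $\rho\sigma$ and $\rho[\sigma\rho]$ genuinely differ as syntactic substitutions — $\rho\sigma$ still mentions the $\mathrm{dom}(\rho)$-variables that $\sigma$'s images drag into the picture, whereas in $\rho[\sigma\rho]$ those have already been resolved — so one cannot conclude directly from the pointwise characterization of $\equiv$ for domain-free substitutions. The purpose of the restriction trick is precisely to reintroduce $\sigma$ (resp.\ $[\sigma\rho]$) alongside $\rho$ so that the floating rule can reconcile the two forms, after which Lemma \ref{bra} does the real work. The care needed is in checking the freeness side conditions at each scope extrusion and in justifying the componentwise use of the floating rule, which is exactly where the normalization assumption is used.
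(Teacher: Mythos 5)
Your proof is correct and follows essentially the same route as the paper's: both express $\rho\sigma$ as $\nu\widetilde{x}.(\rho\mid\sigma)$ with $\widetilde{x}=\mathrm{dom}(\sigma)$, replace $\sigma$ by $[\sigma\rho]$ under the restriction via the floating rule and Lemma \ref{bra}, and then collapse back to $\rho[\sigma\rho]$. The only difference is that you spell out the normalization assumption and the freeness side conditions that the paper's four-line calculation leaves implicit.
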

\begin{proof}
See Appendix.
\end{proof}
\begin{lemma}\label{cket}
$(\sigma | P)\rho \equiv [\sigma\rho] | P[\rho\sigma]$.
\end{lemma}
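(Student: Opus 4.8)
The plan is to push the ordinary substitution $\rho$ through the parallel composition and then treat the resulting active substitution and plain process separately, relying on Lemmas \ref{bra} and \ref{sub-ex}. First I would use that a capture-avoiding substitution distributes over $|$ to write $(\sigma | P)\rho = \sigma\rho \mid P\rho$; since $\mathrm{dom}(\sigma)\cap\mathrm{dom}(\rho)=\emptyset$, the active substitution $\sigma$ is touched only through its image terms, so $\sigma\rho$ is again an active substitution on $\mathrm{dom}(\sigma)$. Lemma \ref{bra} then gives $\sigma\rho \equiv [\sigma\rho]$, whence $(\sigma\mid P)\rho \equiv [\sigma\rho] \mid P\rho$.

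Next I would use the frame $[\sigma\rho]$ to rewrite the plain part on both the source and the target. Iterating the rule $A\mid\{M/x\} \equiv A\{M/x\}\mid\{M/x\}$ over the components of $[\sigma\rho]$ yields $[\sigma\rho] \mid B \equiv [\sigma\rho] \mid B[\sigma\rho]$ for every process $B$; this step is clean precisely because the images of $\sigma\uplus\rho$, and hence of $[\sigma\rho]$, contain no variable of $\mathrm{dom}(\sigma)$, so $[\sigma\rho]$ has no effect on its own components. Applying this with $B=P\rho$ on the left and $B=P[\rho\sigma]$ on the right, it suffices to establish the syntactic identity $(P\rho)[\sigma\rho] = (P[\rho\sigma])[\sigma\rho]$ of plain processes, for then $[\sigma\rho]\mid P\rho \equiv [\sigma\rho]\mid (P\rho)[\sigma\rho] = [\sigma\rho]\mid (P[\rho\sigma])[\sigma\rho] \equiv [\sigma\rho]\mid P[\rho\sigma]$.

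Since a capture-avoiding substitution applied to a process is determined by its action on free variables, this identity reduces to a variable-by-variable check. For $v\notin\mathrm{dom}(\rho)$ both sides send $v$ to $v[\sigma\rho]$ (to $v$ itself when moreover $v\notin\mathrm{dom}(\sigma)$), so these cases are immediate. The only substantive case is $v\in\mathrm{dom}(\rho)$: the left side produces $(v\rho)[\sigma\rho]$, while the right side produces $v[\rho\sigma]$, the outer $[\sigma\rho]$ acting as the identity because $v[\rho\sigma]$ is an image of the reduced substitution $[\rho\sigma]$ and so carries no variable of $\mathrm{dom}(\sigma)$. The needed equality $(v\rho)[\sigma\rho] = v[\rho\sigma]$ is exactly what Lemmas \ref{sub-ex} and \ref{bra} give in combination: the former yields $\rho\sigma \equiv \rho[\sigma\rho]$, and the latter, applied with the roles of $\sigma$ and $\rho$ exchanged, yields $\rho\sigma \equiv [\rho\sigma]$, so $\rho[\sigma\rho] \equiv [\rho\sigma]$, and since both are active substitutions whose images contain no domain variable, the earlier lemma on structurally equivalent substitutions forces their images to coincide.

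I expect the main obstacle to be this final point, namely upgrading structural equivalence of active substitutions to genuine equality of the substituted plain processes. Structural equivalence of frames only equates images up to the equational theory and up to further reduction, whereas the subterms of $P$ are matched syntactically; the argument goes through only because the bracketed substitutions are produced by the cycle-free $\uplus$ operation, whose images are fully resolved and hence uniquely determined. I would therefore take care to verify beforehand that $\rho[\sigma\rho]$ is itself already reduced, i.e.\ its images contain no variable of $\mathrm{dom}(\sigma)\cup\mathrm{dom}(\rho)$, so that the equal-image lemma applies and delivers the identity on the nose rather than merely up to $\equiv$.
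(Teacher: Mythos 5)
Your first half coincides with the paper's proof: distribute $\rho$ over $|$, replace $\sigma\rho$ by $[\sigma\rho]$ via Lemma \ref{bra}, and iterate $A\mid\{M/x\}\equiv A\{M/x\}\mid\{M/x\}$ to reach $[\sigma\rho]\mid (P\rho)[\sigma\rho]$. The divergence, and the gap, lies in how you finish. You reduce the lemma to a \emph{syntactic} identity $(P\rho)[\sigma\rho]=(P[\rho\sigma])[\sigma\rho]$ and justify the essential case $(v\rho)[\sigma\rho]=v[\rho\sigma]$ by applying the paper's lemma on structurally equivalent active substitutions to $\rho[\sigma\rho]\equiv[\rho\sigma]$. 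That lemma cannot deliver what you need: its proof tests images with a conditional branch, so its conclusion $x\sigma=x\rho$ is equality in the equational theory $\Sigma$, not syntactic identity (it could not be otherwise, since $\{M/x\}\equiv\{N/x\}$ whenever $\Sigma\vdash M=N$). Your variable-by-variable check therefore yields only that corresponding terms of the two plain processes are $\Sigma$-equal, and $\Sigma$-equal terms sitting under action prefixes are not directly related by the generating rules of $\equiv$; to conclude you would still have to re-express the ordinary substitution as an active substitution under a restriction and rewrite it inside that evaluation context. In addition, the hypothesis of the equal-images lemma (images free of domain variables) holds for $[\rho\sigma]$ by construction of $\uplus$, but for $\rho[\sigma\rho]$ only when $\rho$ is itself internally resolved; you flag this, but it is an assumption on the input rather than something that can be verified unconditionally.

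The paper avoids both difficulties by never leaving structural equivalence: it rewrites $(P\rho)[\sigma\rho]$ as $\nu\widetilde{y}.(P\mid\rho[\sigma\rho])$ with $\widetilde{y}=\mathrm{dom}(\rho)$, replaces $\rho[\sigma\rho]$ by $\rho\sigma$ inside this evaluation context using Lemma \ref{sub-ex} (closure of $\equiv$ under evaluation contexts makes any comparison of images unnecessary), and then collapses back to $P\rho\sigma\equiv P[\rho\sigma]$. If you substitute that single manoeuvre for your pointwise comparison, the rest of your argument stands and becomes essentially the paper's proof.
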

\begin{proof}
See Appendix.
\end{proof}
Concurrent normal form completely records communication.
\begin{definition}\label{cnf}
A concurrent normal trace $\mathbf{tr}$ of $A|C$ is a trace which satisfies the following conditions.
\begin{enumerate}
\item $\mathbf{tr}$ is full and safe.

\item Each process in $\mathbf{tr}$ is of the form $\nu \widetilde{r_m} \widetilde{s_m}.(\nu \widetilde{x_m}. A_m \rho_m | \nu \widetilde{y_m}. C_m \sigma_m)$. 

In addition, The conditions below are satisfied.
\begin{itemize}
\item $\mathrm{fv}(A_m)\setminus \mathrm{dom}(A_m) \subseteq \mathrm{dom}(\rho_m) \land \mathrm{fv}(C_m)\setminus \mathrm{dom}(C_m) \subseteq \mathrm{dom}(\sigma_m)$.
\item $\mathrm{n}(\rho_m) \cap \widetilde{s_m} = \emptyset \land \mathrm{n}(\sigma_m) \cap \widetilde{r_m} = \emptyset$.  \item$A_m = \sigma_m | P_m \land C_m = \rho_m | Q_m$ and they are normal for some $P_m$ and $Q_m$, and $\mathrm{n}(Q_m) \cap \widetilde{s_m} = \emptyset \land \mathrm{n}(P_m) \cap \widetilde{r_m} = \emptyset$.
\end{itemize}
Let $D_m = \nu \widetilde{r_m} \widetilde{s_m}.(\nu \widetilde{x_m}. A_m \rho_m | \nu \widetilde{y_m}. C_m \sigma_m)$.

\item\label{tau} For every $D_m \longrightarrow D_{m+1}$ in $\mathbf{tr}$, it holds one of the following:
\begin{enumerate}
\item\label{Atau} $A_m \rho_m \longrightarrow A_{m+1}\rho_m$ and elements in $D_{m+1}$ except for $A_{m+1}$ are same as counterparts in $D_m$. 

\item\label{Ctau}  $C_m \sigma_m \longrightarrow C_{m+1}\sigma_m$ and elements in $D_{m+1}$ except for $C_{m+1}$ are same as counterparts in $D_m$. 

\item\label{AinCout} 
\begin{itemize}
\item $A_m \rho_m \overset{N_1(x)}{\longrightarrow} A_{m+1}\rho_m \land C_m\sigma_m \overset{\nu x.\overline{N_2}\langle x \rangle}{\longrightarrow} \nu \widetilde{m}. C_{m+1}\sigma_m$,
\item $\widetilde{r_{m+1}}=\widetilde{r_m}\widetilde{m}$, 
\item $\rho_{m+1} = \rho_{m} \uplus \{M/x\}$,
\item $\widetilde{m} = \mathrm{n}(M) \cap \mathrm{bn}(C_m)$, 
\item $N_1[\sigma_m \rho_m]=N_2[\rho_m \sigma_m]$: ground,
\item $\widetilde{r_m} \cap \mathrm{n}(N_1)=\emptyset \land \widetilde{s_m} \cap \mathrm{n}(N_2)=\emptyset$, 
\item $M[\sigma_m \rho_m]$: ground,
\item $\widetilde{y_{m+1}}=\widetilde{y_m}x$,
\item $\mathrm{n}(M) \cap \widetilde{s_m} = \emptyset$,
\end{itemize}
for some $N_1, N_2, M$ and $x$, and the other parts of $D_{m+1}$ are same as counterparts in $D_m$. 
\item\label{AoutCin}
\begin{itemize}
\item $C_m \sigma_m \overset{N_2(x)}{\longrightarrow} C_{m+1}\sigma_m$,
\item $A_m\rho_m \overset{\nu x.\overline{N_1}\langle x \rangle}{\longrightarrow} \nu \widetilde{m}. A_{m+1}\rho_m$, 
\item $\widetilde{s_{m+1}}=\widetilde{s_m}\widetilde{m}$, 
\item $\sigma_{m+1} = \sigma_{m} \uplus \{M/x\}$,
\item $\widetilde{m} = \mathrm{n}(M) \cap \mathrm{bn}(A_m)$, 
\item $N_1[\sigma_m \rho_m]=N_2[\rho_m \sigma_m]$: ground, 
\item $\widetilde{s_m} \cap \mathrm{n}(N_2)=\emptyset$, 
\item $\widetilde{r_m} \cap \mathrm{n}(N_1)=\emptyset$.,
\item $M[\rho_m \sigma_m]$: ground,
\item $ \widetilde{x_{m+1}}=\widetilde{x_m}x$,
\item $\mathrm{n}(M) \cap \widetilde{r_m} = \emptyset$,
\end{itemize}
for some $N_1, N_2, M$ and $x$, and the other parts of $D_{m+1}$ are same as counterparts in $D_m$. 
\end{enumerate}
\item For every $D_m \overset{N(M)}{\longrightarrow} D_{m+1}$ in $\mathbf{tr}$, it holds one of the following:
\begin{enumerate}
\item\label{Ain} $A_m \rho_m \overset{N(M)[\rho_m\sigma_m]}{\longrightarrow} A_{m+1}\rho_m$ and elements in $D_{m+1}$ except for $A_{m+1}$ are same as counterparts in $D_m$. 

\item\label{Cin}  $C_m \sigma_m \overset{N(M)[\sigma_m\rho_m]}{\longrightarrow} C_{m+1}\sigma_m$ and elements in $D_{m+1}$ except for $C_{m+1}$ are same as counterparts in $D_m$. 
\end{enumerate}
\item For every $D_m \overset{\nu x.\overline{N}\langle x \rangle}{\longrightarrow} D_{m+1}$ in $\mathbf{tr}$, it holds one of the following:
\begin{enumerate}
\item\label{Aout}
\begin{itemize}
\item $A_m \rho_m \overset{\nu x.\overline{N}\langle x \rangle [\rho_m \sigma_m]}{\longrightarrow} \nu \widetilde{m}.A_{m+1}\rho_m$, 
\item $\sigma_{m+1}=\sigma_m \uplus \{M/x\}$,
\item $\widetilde{s_{m+1}}=\widetilde{s_m}\widetilde{m}$,
\item $\widetilde{m} = \mathrm{n}(M) \cap \mathrm{bn}(A_m)$
\item $M[\rho_m \sigma_m]$: ground,
\item $\mathrm{n}(M) \cap \widetilde{r_m} = \emptyset$, 
\end{itemize}
and the other parts of $D_{m+1}$ are same as counterparts in $D_m$. 

\item\label{Cout}
\begin{itemize}
\item $C_m \sigma_m \overset{\nu x.\overline{N}\langle x \rangle [\sigma_m \rho_m]}{\longrightarrow} \nu \widetilde{m}.C_{m+1}\sigma_m$,
\item $\rho_{m+1}=\rho_m \uplus \{M/x\}$,
\item $\widetilde{r_{m+1}}=\widetilde{r_m}\widetilde{m}$, 
\item $\widetilde{m} = \mathrm{n}(M) \cap \mathrm{bn}(C_m)$, 
\item $M[\sigma_m \rho_m]$: ground,
\item $\mathrm{n}(M) \cap \widetilde{s_m} = \emptyset$, 
\end{itemize}
and the other parts of $D_{m+1}$ are same as counterparts in $D_m$. 
\end{enumerate}
\item
\begin{enumerate}
\item At \ref{Atau}, let $A_m'$ be a process obtained by applying $\rho_m$ to only for the part related to the transition from $A_m$. Then, $A_m' \longrightarrow A_{m+1}$, and \ref{Ctau} is similar.


\item At \ref{AinCout}, let $A_m'$ be a process obtained by substituting only for the part related to the transition from $A_m$. Then, $A_m' \overset{N_1[\sigma_m \rho_m](x)}{\longrightarrow} A_{m+1}$. $C_m$ is similar, and \ref{AoutCin} is similar.


\item At \ref{Ain}, let $A_m'$ be a process obtained by substituting only for the part related to the transition from $A_m$. Then, $A_m' \overset{N(\rho_m \uplus \sigma_m)(M)}{\longrightarrow} A_{m+1}$, and \ref{Cin} is similar.

\item At \ref{Aout}, let $A_m'$ be a process obtained by substituting only for the part related to the transition from $A_m$. Then, $A_m' \overset{\nu x.\overline{N(\rho_m \uplus \sigma_m)}\langle x \rangle}{\longrightarrow} A_{m+1}$, and \ref{Cout} is similar.

\end{enumerate}
\end{enumerate}
\end{definition}

\subsection{A Transformation into a Concurrent Normal Form}\label{trans}
We can transform all full safe traces of a parallel composed process into a concurrent normal form.

\begin{lemma}
For any full safe trace of $A|C$, there exists a concurrent normal trace of $A|C$ such that they are statically equivalent.
\end{lemma}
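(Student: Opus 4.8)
The goal is to show that any full safe trace of $A|C$ can be put in concurrent normal form without changing its static equivalence class. I would proceed by induction on the length of the trace, building the concurrent normal trace step by step while maintaining the decomposition $D_m = \nu\widetilde{r_m}\widetilde{s_m}.(\nu\widetilde{x_m}.A_m\rho_m \mid \nu\widetilde{y_m}.C_m\sigma_m)$ as an invariant. The base case sets $D_0$ to be $\mathrm{pnf}(A)\mid\mathrm{pnf}(C)$ with all the auxiliary name/variable sequences empty and $\rho_0=\sigma_0=0$; one checks the structural conditions of Definition~\ref{cnf} hold initially, which is immediate since $A$ and $C$ are closed and normal.

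\emph{Inductive step.} Given $D_m$ in the required form and a transition $D_m \overset{\mu_{m+1}}{\longrightarrow} D_{m+1}'$ in the original trace, I would first push the name restrictions $\nu\widetilde{r_m}\widetilde{s_m}$ outward using Lemma~\ref{drop-nu} repeatedly, reducing to a transition of $\nu\widetilde{x_m}.A_m\rho_m \mid \nu\widetilde{y_m}.C_m\sigma_m$. Then a case analysis on which component fires the transition mirrors the cases of Definition~\ref{cnf}: a silent action is either internal to $A_m\rho_m$ (case~\ref{Atau}), internal to $C_m\sigma_m$ (case~\ref{Ctau}), or a communication between them (cases~\ref{AinCout},~\ref{AoutCin}); a visible input lands in case~\ref{Ain} or~\ref{Cin}; a visible output in case~\ref{Aout} or~\ref{Cout}. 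In each case I would use Lemmas~\ref{drop-sigma}, \ref{change-label}, \ref{erase-sigma}, and especially \ref{shift-sigma} to relate the labels carried by the composite process to the ``dissolved'' labels $\alpha[\rho_m\sigma_m]$ and $\alpha[\sigma_m\rho_m]$ acting on the individual components, and Lemmas~\ref{bra}, \ref{sub-ex}, \ref{cket} to keep the active substitutions in the bracketed normal form while updating $\rho_{m+1}=\rho_m\uplus\{M/x\}$ (resp.\ $\sigma_{m+1}$) after a communication or output.

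\emph{Handling scope extrusion.} The crux of the construction is the bookkeeping for bound names that escape their original restriction when a message is output: when $A_m\rho_m \overset{\nu x.\overline{N}\langle x\rangle}{\longrightarrow}\nu\widetilde{m}.A_{m+1}\rho_m$, the names $\widetilde{m}=\mathrm{n}(M)\cap\mathrm{bn}(A_m)$ must migrate from the local restriction of $A_m$ to the shared prefix, which is exactly what the updates $\widetilde{s_{m+1}}=\widetilde{s_m}\widetilde{m}$ and $\sigma_{m+1}=\sigma_m\uplus\{M/x\}$ record. I would verify that after extrusion the side conditions $\mathrm{n}(\sigma_{m+1})\cap\widetilde{r_{m+1}}=\emptyset$ and groundness of $M[\sigma_m\rho_m]$ are preserved, relying on safeness of the original trace (its actions avoid $\mathrm{bn}\cup\mathrm{rv}$) to guarantee no capture occurs and that no $\alpha$-conversion is needed. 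Once $D_{m+1}$ is assembled, static equivalence $D_{m+1}'\approx_s D_{m+1}$ follows because the two differ only by structural rearrangement of restrictions and applications of active substitutions, both of which preserve frames.

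\emph{Main obstacle.} The hard part will be verifying, uniformly across the communication and output cases, that the label dissolution is consistent: that the channel and message seen by the whole process, after applying the full substitution, equals the channel seen by the firing component under its own substitution composed with the partner's (the conditions $N_1[\sigma_m\rho_m]=N_2[\rho_m\sigma_m]$ and the groundness requirements). This requires careful tracking of which variables live in which domain and repeated appeals to Lemma~\ref{shift-sigma} together with the exchange lemmas for $[\,\cdot\,]$; the danger is a variable in $\mathrm{dom}(\rho_m)$ being read by $C_m$ or vice versa, which the domain-disjointness invariant $\mathrm{dom}(\rho_m)\subseteq\mathrm{dom}(A)$-side and the conditions $\mathrm{n}(P_m)\cap\widetilde{r_m}=\emptyset$, $\mathrm{n}(Q_m)\cap\widetilde{s_m}=\emptyset$ are precisely designed to rule out. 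I expect most of the genuine difficulty to be concentrated in cases~\ref{AinCout} and~\ref{AoutCin}, where both a substitution update and a scope extrusion happen simultaneously.
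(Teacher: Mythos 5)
Your proposal is correct and follows essentially the same route as the paper: a step-by-step (inductive) transformation of the trace that maintains the decomposition $D_m = \nu\widetilde{r_m}\widetilde{s_m}.(\nu\widetilde{x_m}.A_m\rho_m \mid \nu\widetilde{y_m}.C_m\sigma_m)$ as an invariant, strips the outer restrictions with Lemma~\ref{drop-nu}, splits into the cases of Definition~\ref{cnf} according to which component acts, and uses the bracket lemmas (\ref{bra}, \ref{sub-ex}, \ref{cket}) together with Lemma~\ref{drop-sigma} and the partial-normal-form decomposition results to record scope extrusion via $\widetilde{m}=\mathrm{n}(M)\cap\mathrm{bn}(\cdot)$. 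The only cosmetic difference is that the paper leans on the decomposition lemmas of Abadi--Blanchet--Fournet (B.10, B.18, B.19, B.23, B.24) for the case split and actually obtains structural (not merely static) equivalence at each step, while Lemmas~\ref{change-label}, \ref{erase-sigma}, \ref{shift-sigma} that you invoke are reserved for the later extraction step rather than this transformation.
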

\begin{proof}
We transform the given trace in order from the top. Assuming that we transformed the first $m$-th process.

We suppose that $\nu \widetilde{r} \widetilde{s}.(\nu \widetilde{x}.A_m \rho | \nu \widetilde{y}. C_m \sigma) \overset{\mu}{\longrightarrow} D$.

By Lemma \ref{drop-nu}, $\nu \widetilde{x}.A_m \rho | \nu \widetilde{y}. C_m \sigma  \overset{\mu}{\longrightarrow} D' \land D \equiv \nu \widetilde{r} \widetilde{s}.D'$.

Let $A_m = \sigma | P \land C_m = \rho | Q$. Note that $A_m\rho \equiv [\sigma\rho] | P[\rho\sigma]$ and $C_m\sigma \equiv [\rho\sigma] | Q[\sigma\rho]$.

Then, $\nu \widetilde{x}\widetilde{y}. \sigma \uplus \rho | P[\rho\sigma] | Q[\sigma\rho] \overset{\mu}{\longrightarrow} D'$. That is, $(\sigma \uplus \rho)_{|dom(\sigma \uplus \rho)\setminus \widetilde{x}\widetilde{y}}| P[\rho\sigma] | Q[\sigma\rho] \overset{\mu}{\longrightarrow} D'$.

First, we consider when $\mu$ is silent.

By  \cite[Lemma B.23]{DBLP:journals/corr/AbadiBF16}, $P[\rho\sigma] | Q[\sigma\rho] \longrightarrow R \land D' \equiv (\sigma \uplus \rho)_{|dom(\sigma \uplus \rho)\setminus \widetilde{x}\widetilde{y}}| R$ for some closed $R$.

We consider four cases.
\begin{enumerate}
\item\label{Pint} $P[\rho\sigma] \longrightarrow P'[\rho\sigma] \land R \equiv P'[\rho\sigma] | Q[\sigma\rho]$ for some $P'$ where $\mathrm{fv}(P') \subseteq \mathrm{dom}(\rho)$.
Furthermore, let $P''$ be a process obtained by applying $[\rho\sigma]$ to only for the part related to the transition from $P$. Then, $P'' \longrightarrow P'$.

$A_m \rho \longrightarrow [\sigma \rho] | P'[\rho\sigma] \equiv (\sigma | P') \rho$.

Hence, 
\begin{align*}
\nu \widetilde{r} \widetilde{s}.(\nu \widetilde{x}.A_m \rho | \nu \widetilde{y}. C_m \sigma) \longrightarrow D & \equiv \nu \widetilde{r}\widetilde{s}. ((\sigma \uplus \rho)_{|dom(\sigma \uplus \rho)\setminus \widetilde{x}\widetilde{y}}| P'[\rho\sigma] | Q[\sigma\rho]) \\
& \equiv \nu \widetilde{r} \widetilde{s}.(\nu \widetilde{x}.(\sigma | P') \rho | \nu \widetilde{y}. C_m \sigma).
\end{align*}
This is the desired form.

\item $Q[\sigma\rho] \longrightarrow Q'[\sigma\rho] \land R \equiv P[\rho\sigma] | Q'[\sigma\rho]$ for some $Q'$ where $\mathrm{fv}(Q') \subseteq \mathrm{dom}(\sigma)$.
Furthermore, let $Q''$ be a process obtained by applying $[\sigma\rho]$ to only for the part related to the transition from $Q$. Then, $Q'' \longrightarrow Q'$.

This case is similar to case \ref{Pint}.

\item\label{PinQout} $P[\rho\sigma] \overset{N(x)}{\longrightarrow} E \land Q[\sigma\rho] \overset{\nu x. \overline{N} \langle x \rangle}{\longrightarrow} F \land R \equiv \nu x. (E|F)$ for some $E, F, x$ and ground $N$.

In this case, $A_m \rho \overset{N_1(x)}{\longrightarrow} [\sigma\rho] | E \land C_m \sigma \overset{\nu x. \overline{N_2} \langle x \rangle}{\longrightarrow} [\rho\sigma] | F$ for some $N_1$ and $N_2$ where $N_1[\sigma\rho] = N \land \widetilde{r} \cap \mathrm{n}(N_1) = \emptyset \land N_2[\rho\sigma] = N \land \widetilde{s} \cap \mathrm{n}(N_2) = \emptyset$.

By  \cite[Lemma B.10]{DBLP:journals/corr/AbadiBF16}, there exist $\widetilde{l}, P'$ and $P_2$ such that 
\begin{equation*}
P[\rho\sigma] \equiv (\nu \widetilde{l}. (N_2(x). P' | P_2))[\rho \sigma] \land E \equiv (\nu \widetilde{l}. (P' | P_2))[\rho \sigma].
\end{equation*}
Hence $A_m \rho \overset{N_1(x)}{\longrightarrow} (\sigma | \nu \widetilde{l}. (P' | P_2))\rho$.

We use again  \cite[Lemma B.10]{DBLP:journals/corr/AbadiBF16}. 

There exists $\widetilde{m}, \widetilde{t}, M, Q'$ and $Q_2$ such that $Q[\sigma\rho] \equiv (\nu \widetilde{m}\widetilde{t}. (\overline{N_1}\langle M \rangle. Q' | Q_2))[\sigma\rho]$ and $F \equiv (\nu \widetilde{m}. (\{M/x\} | \nu \widetilde{t}.(Q' | Q_2)))[\rho \sigma]$ where $\widetilde{m} \subseteq \mathrm{n}(M) \land \widetilde{t} \cap \mathrm{n}(M) = \emptyset \land \mathrm{n}(M) \cap \widetilde{s} = \emptyset$ and $M[\sigma\rho]$ is ground.

Hence $C_m \sigma \overset{\nu x. \overline{N_2}\langle x \rangle}{\longrightarrow} (\rho | \nu \widetilde{m}. (\{M/x\} | \nu \widetilde{t}. (Q' | Q_2)))\sigma \equiv \nu \widetilde{m}. (\rho \uplus \{M/x\} | \nu \widetilde{t}. (Q' | Q_2))\sigma$.

Therefore,  
\begin{align*}
&\nu \widetilde{r} \widetilde{s}.(\nu \widetilde{x}.A_m \rho | \nu \widetilde{y}. C_m \sigma) \\
\longrightarrow &\nu \widetilde{r}\widetilde{s}. ((\sigma \uplus \rho)_{|dom(\sigma \uplus \rho)\setminus \widetilde{x}\widetilde{y}}| \nu x. ((\nu \widetilde{l}. (P' | P_2))[\rho \sigma]|(\nu \widetilde{m}. (\{M/x\} | \nu \widetilde{t}.(Q' | Q_2)))[\rho \sigma])) \\
\equiv &\nu \widetilde{r} \widetilde{s}. (\nu \widetilde{x}. (\sigma | \nu \widetilde{l}. (P' | P_2))(\rho \uplus \{M/x\}) | \nu \widetilde{y}x. (\rho \uplus \{M/x\} | \nu \widetilde{t}. (Q' | Q_2))\sigma)
\end{align*}
This is a desired form.

\item $P[\rho\sigma] \overset{\nu x. \overline{N} \langle x \rangle}{\longrightarrow} E \land Q[\sigma\rho] \overset{N(x)}{\longrightarrow} F \land R \equiv \nu x. (E|F)$ for some $E, F, x$ and ground $N$.

This case is similar to case \ref{PinQout}.
\end{enumerate}
Second, we consider when $\mu$ is a labelled action $\alpha$. Note that $\widetilde{r}\widetilde{s} \cap \mathrm{n}(\alpha) = \emptyset$.

By  \cite[Lemma B.19]{DBLP:journals/corr/AbadiBF16}, $P[\rho\sigma] | Q[\sigma\rho] \overset{\alpha(\sigma \uplus \rho)}{\longrightarrow} E \land D' \equiv (\sigma \uplus \rho)_{|dom(\sigma \uplus \rho)\setminus \widetilde{x}\widetilde{y}}|E$ for some $E$.

We consider two cases.
\begin{enumerate}
\item\label{Palpha} $P[\rho\sigma] \overset{\alpha(\sigma \uplus \rho)}{\longrightarrow} F \land E \equiv F | Q[\sigma\rho]$ for some $F$.

Then, $A_m \rho \overset{\alpha[\rho\sigma]}{\longrightarrow} [\sigma\rho] | F$ because of $\sigma \uplus \rho = [\sigma\rho] | [\rho\sigma]$ and Lemma \ref{drop-sigma}.

In addition, we consider two cases.
\begin{enumerate}
\item $\alpha = N(M)$

There exists $\widetilde{l}, P'$ and $P_2$ such that $P[\rho\sigma] \equiv (\nu \widetilde{l}. (N[\sigma\rho](x). P' | P_2))[\rho\sigma]$ and $F \equiv (\nu \widetilde{l}. (P'\{M[\sigma\rho]/x\} | P_2))[\rho\sigma]$.

Hence, $A_m \rho \overset{N(M)[\rho\sigma]}{\longrightarrow} (\sigma | \nu \widetilde{l}. (P'\{M[\sigma\rho]/x\} | P_2))\rho$. 

$\nu \widetilde{x}. A_m \rho \overset{N(M)[\rho\sigma]}{\longrightarrow} \nu \widetilde{x}. (\sigma | \nu \widetilde{l}. (P'\{M[\sigma\rho]/x\} | P_2))\rho$. 

By Lemma \ref{drop-sigma}, $\nu \widetilde{x}. A_m \rho | [\rho\sigma] \overset{N(M)}{\longrightarrow} \nu \widetilde{x}. (\sigma | \nu \widetilde{l}. (P'\{M[\sigma\rho]/x\} | P_2))\rho | [\rho\sigma]$. 

$\nu \widetilde{x}. A_m \rho | [\rho\sigma] | Q[\sigma\rho] \overset{N(M)}{\longrightarrow} \nu \widetilde{x}. (\sigma | \nu \widetilde{l}. (P'\{M[\sigma\rho]/x\} | P_2))\rho | [\rho\sigma] | Q[\sigma\rho]$. 

By Lemma \ref{cket},

$\nu \widetilde{x}. A_m \rho | \nu \widetilde{y}. (\rho | Q)\sigma \overset{N(M)}{\longrightarrow} \nu \widetilde{x}. (\sigma | \nu \widetilde{l}. (P'\{M[\sigma\rho]/x\} | P_2))\rho | \nu \widetilde{y}. (\rho | Q)\sigma$.

Therefore,

$\nu \widetilde{r} \widetilde{s}.(\nu \widetilde{x}.A_m \rho | \nu \widetilde{y}. C_m \sigma) \overset{N(M)}{\longrightarrow} \nu \widetilde{r} \widetilde{s}.(\nu \widetilde{x}. (\sigma | \nu \widetilde{l}. (P'\{M[\sigma\rho]/x\} | P_2))\rho | \nu \widetilde{y}. C_m\sigma)$.

This is the desired form.

\item $\alpha = \nu x. \overline{N}\langle x \rangle$

There exists $\widetilde{m}, \widetilde{t}, M, P'$ and $P_2$ such that
\begin{equation*}
P[\rho\sigma] \equiv (\nu \widetilde{m} \widetilde{t}. (\overline{N[\sigma\rho]}\langle M \rangle). P' | P_2))[\rho\sigma] \land F \equiv (\nu \widetilde{m}. (\{M/x\} | \nu \widetilde{t}. (P' | P_2)))[\rho\sigma],
\end{equation*}
where $\widetilde{m} \subseteq \mathrm{n}(M) \land \widetilde{t} \cap \mathrm{n}(M) = \emptyset$ and $M[\rho\sigma]$ is ground.

$[\sigma\rho] | F \equiv (\sigma | (\nu \widetilde{m}. (\{M/x\} | \nu \widetilde{t}. (P' | P_2)))\rho \equiv \nu \widetilde{m}. (\sigma \uplus \{M/x\} | \nu \widetilde{t}. (P' | P_2))\rho$.

Hence, $A_m \rho \overset{(\nu x. \overline{N}\langle x \rangle)[\rho \sigma]}{\longrightarrow} \nu \widetilde{m}. (\sigma \uplus \{M/x\} | \nu \widetilde{t}. (P' | P_2))\rho$.

$\nu \widetilde{x}. A_m \rho \overset{(\nu x. \overline{N}\langle x \rangle)[\rho \sigma]}{\longrightarrow} \nu \widetilde{x}. \nu \widetilde{m}. (\sigma \uplus \{M/x\} | \nu \widetilde{t}. (P' | P_2))\rho$.

By Lemma \ref{drop-sigma}, $\nu \widetilde{x}. A_m \rho | [\rho\sigma] \overset{\nu x. \overline{N}\langle x \rangle}{\longrightarrow} \nu \widetilde{x}. \nu \widetilde{m}. (\sigma \uplus \{M/x\} | \nu \widetilde{t}. (P' | P_2))\rho | [\rho\sigma]$.

$\nu \widetilde{x}. A_m \rho | [\rho\sigma] | Q[\sigma\rho] \overset{\nu x. \overline{N}\langle x \rangle}{\longrightarrow} \nu \widetilde{x}. \nu \widetilde{m}. (\sigma \uplus \{M/x\} | \nu \widetilde{t}. (P' | P_2))\rho | [\rho\sigma] | Q[\sigma\rho]$.

By Lemma \ref{cket}, 

$\nu \widetilde{x}. A_m \rho | \nu \widetilde{y}. (\rho | Q)\sigma \overset{\nu x. \overline{N}\langle x \rangle}{\longrightarrow} \nu \widetilde{x}. \nu \widetilde{m}. (\sigma \uplus \{M/x\} | \nu \widetilde{t}. (P' | P_2))\rho | \nu \widetilde{y}. (\rho | Q)\sigma$.

Therefore,

$\nu \widetilde{r} \widetilde{s}.(\nu \widetilde{x}. A_m \rho | \nu \widetilde{y}. C_m\sigma) \overset{\nu x. \overline{N}\langle x \rangle}{\longrightarrow} \nu \widetilde{r} \widetilde{s}. (\nu \widetilde{x}. \nu \widetilde{m}. (\sigma \uplus \{M/x\} | \nu \widetilde{t}. (P' | P_2))\rho | \nu \widetilde{y}. C_m\sigma)$.

This is the desired form.
\end{enumerate}
\item $Q[\sigma\rho] \overset{\alpha(\sigma \uplus \rho)}{\longrightarrow} F \land E \equiv P[\sigma\rho] | F$ for some $F$.

This case is similar to case \ref{Palpha}.
\end{enumerate}
\end{proof}
\subsection{Extracting a Trace of a Component Process from Concurrent Normal Form}\label{extra}
Given a concurrent normal trace of $A|C$, we construct traces of $A$ and $C$ which are each process in them is of the form $\nu \widetilde{s_m}. A_m \rho_m$ or $\nu \widetilde{r_m}. C_m \sigma_m$. Assuming that we transformed the first $m$-th process.

We omit many subscripts and symmetric cases.

In case \ref{Atau} in Definition \ref{cnf}, we get \fbox{$\nu \widetilde{s_m}. A_m \rho_m \longrightarrow \nu \widetilde{s_{m+1}}. A_{m+1}\rho_{m+1}$}.

In case \ref{AinCout}, 
\begin{align*}
A_m \rho \overset{N_1(x)}{\longrightarrow}& A_{m+1}\rho. \\
A_m \rho \overset{N_1(M)}{\longrightarrow}& A_{m+1}\{M/x\}\rho \\
[\sigma\rho] | P_m[\rho\sigma] \overset{N_1(M)}{\longrightarrow}& [\sigma\rho] | P_{m+1}[(\rho \uplus \{M/x\})\sigma]. \\
[\rho\sigma] | [\sigma\rho] | P_m[\rho\sigma] \overset{N_1(M)}{\longrightarrow}& [\rho\sigma] | [\sigma\rho] | P_{m+1}[(\rho \uplus \{M/x\})\sigma]. \\
\end{align*} The left-hand side is normal, so by Lemma \ref{change-label} and \ref{sub-ex}, 


$[\rho\sigma] | [\sigma\rho] | P_m[\rho\sigma] \overset{N_2\rho(M)}{\longrightarrow} [\rho\sigma] | [\sigma\rho] | P_{m+1}[(\rho \uplus \{M/x\})\sigma]$.

Note that $N_1[\sigma\rho] = N_2[\rho\sigma]$ and $M[\sigma\rho]$ is ground.

By Lemma \ref{erase-sigma}, $[\sigma\rho] |  P_m[\rho\sigma] \overset{N_2\rho(M)}{\longrightarrow} [\sigma\rho] | P_{m+1}[(\rho \uplus \{M/x\})\sigma]$.

$A_m \rho \overset{N_2\rho (M)}{\longrightarrow} A_{m+1}(\rho \uplus \{M/x\})$.

We get \fbox{$\nu \widetilde{s_m}. A_m\rho_m \overset{N_2\rho_m (M\rho_m)}{\longrightarrow} \nu \widetilde{s_{m+1}}. A_{m+1}\rho_{m+1}$}.

$C_m \sigma \overset{\nu x. \overline{N_2} \langle x \rangle}{\longrightarrow} \nu \widetilde{m}. C_{m+1}\sigma$.

$[\rho\sigma] | Q_m[\sigma\rho] \overset{\nu x. \overline{N_2} \langle x \rangle}{\longrightarrow} \nu \widetilde{m}. ([(\rho \uplus \{M/x\} )\sigma] | Q_{m+1} [\sigma\rho])$.

$[\sigma\rho] | [\rho\sigma] | Q_m[\sigma\rho] \overset{\nu x. \overline{N_2} \langle x \rangle}{\longrightarrow} [\sigma\rho] | \nu \widetilde{m}. ([(\rho \uplus \{M/x\} )\sigma] | Q_{m+1} [\sigma\rho])$.

The left-hand side is normal, so by Lemma \ref{change-label} and \ref{sub-ex}, 


$[\sigma\rho] | [\rho\sigma] | Q_m[\sigma\rho] \overset{\nu x. \overline{N_1\sigma} \langle x \rangle}{\longrightarrow} [\sigma\rho] | \nu \widetilde{m}. ([(\rho \uplus \{M/x\} )\sigma] | Q_{m+1} [\sigma\rho])$.

By Lemma \ref{erase-sigma}, $[\rho\sigma] | Q_m[\sigma\rho] \overset{\nu x. \overline{N_1\sigma} \langle x \rangle}{\longrightarrow} \nu \widetilde{m}. ([(\rho \uplus \{M/x\} )\sigma] | Q_{m+1} [\sigma\rho])$.

$C_m\sigma \overset{\nu x. \overline{N_1\sigma} \langle x \rangle}{\longrightarrow} \nu \widetilde{m}. C_{m+1}\sigma$.

We get \fbox{$\nu \widetilde{r_m}. C_m\sigma_m \overset{\nu x. \overline{N_1\sigma_m} \langle x \rangle}{\longrightarrow} \nu \widetilde{r_{m+1}}. C_{m+1}\sigma_{m+1}$}.

In case \ref{Ain}, $A_m \rho \overset{N(M)[\rho\sigma]}{\longrightarrow} A_{m+1}\rho$.

$[\sigma\rho] | P[\rho\sigma] \overset{N(M)[\rho\sigma]}{\longrightarrow} A_{m+1}\rho$.


By Lemma \ref{change-label}, $[\sigma\rho] | P[\rho\sigma] \overset{N(M)\rho}{\longrightarrow} A_{m+1}\rho$. Note that $(N(M)[\rho\sigma])[\sigma\rho] = (N(M)\rho)[\sigma\rho]$.

$A_m\rho \overset{N(M)\rho}{\longrightarrow} A_{m+1}\rho$.

We get \fbox{$\nu \widetilde{s_m}. A_m\rho_m \overset{N(M)\rho_m}{\longrightarrow} \nu \widetilde{s_{m+1}}. A_{m+1}\rho_{m+1}$}.

Case \ref{Aout} is similar to case \ref{Ain}. We get \fbox{$\nu \widetilde{s_m}. A_m\rho_m \overset{\nu x. \overline{N}\langle x \rangle \rho_m}{\longrightarrow} \nu \widetilde{s_{m+1}}. A_{m+1}\rho_{m+1}$}.
\subsection{Composition of Traces}
Given trace-equivalent processes $A, B$, an arbitrary process $C$, and a trace of $A|C$, we construct a statically equivalent trace of $B|C$. By subsection \ref{trans}, we can suppose that a given trace \textbf{tr} of $A|C$ is a concurrent normal form without loss of generality. We extract traces $\mathbf{tr}'$ and $\mathbf{tr}'''$ of $A$ and $C$ as mentioned in subsection \ref{extra}. There exists a safe trace $\mathbf{tr}''$ of $B$ such that it satisfies the conditions below:

Each transition $\nu \widetilde{s_m}. A_m \rho_m \overset{\mu}{\longrightarrow} \nu \widetilde{s_{m+1}}. A_{m+1}\rho_{m+1}$ in $\mathbf{tr}'$ corresponds to the transition $\nu \widetilde{s_m'}. B_m \overset{\mu}{\Longrightarrow} \nu \widetilde{s_{m+1}'}. B_{m+1}$ in $\mathbf{tr}''$. Moreover, $B_m$ is of the form $\zeta_m | S_m$, it is normal and $\mathrm{dom}(\zeta_m) = \mathrm{dom}(\sigma_m)$. Especially, $S_m$ is closed.

Each process in a constructed trace is of the form $\nu \widetilde{r_m}\widetilde{s_m'}. (\nu \widetilde{x_m}. B_m | \nu \widetilde{y_m}. C_m\zeta_m)$.

We again omit many subscripts. Note that an internal reduction and a receive action do not change a frame. In addition, $\widetilde{s_m'}$ contains no names in other processes because of bind-exclusiveness. Let $\widetilde{z} = \mathrm{dom}(\sigma)$.
\begin{enumerate}
\item Case \ref{Atau} in Definition \ref{cnf}.
\begin{align}
\nu \widetilde{r}\widetilde{s}.  (\nu \widetilde{x}. A_m\rho | \nu \widetilde{y}. C_m \sigma) \longrightarrow & \nu \widetilde{r}\widetilde{s}. (\nu \widetilde{x}. A_{m+1}\rho | \nu \widetilde{y}. C_m \sigma). \label{3a} \\
\nu \widetilde{s}.  A_m \rho  \longrightarrow & \nu \widetilde{s}. A_{m+1}\rho. \label{Aint3a} \\
\nu \widetilde{s'}. B_m \Longrightarrow & \nu \widetilde{s'}. B_{m+1}. \label{Bint3a}
\end{align}
(\ref{Aint3a}) is in $\mathbf{tr}'$ and corrsponds to (\ref{3a}). (\ref{Bint3a}) is in $\mathbf{tr}''$ and corresponds to (\ref{Aint3a}).

By Lemma \ref{drop-nu}, we can suppose that $B_m \Longrightarrow B_{m+1}$. We get a desired transition
\begin{center}
\fbox{$\nu \widetilde{r}\widetilde{s'}. (\nu \widetilde{x}. B_m | \nu \widetilde{y}. C_m \zeta) \Longrightarrow \nu \widetilde{r}\widetilde{s'}. (\nu \widetilde{x}. B_{m+1} | \nu \widetilde{y}. C_m \zeta)$.}
\end{center}
\item Case \ref{Ctau}.
\begin{align}
\nu \widetilde{r}\widetilde{s}.  (\nu \widetilde{x}. A_m\rho | \nu \widetilde{y}. C_m \sigma) \longrightarrow & \nu \widetilde{r}\widetilde{s}. (\nu \widetilde{x}. A_m\rho | \nu \widetilde{y}. C_{m+1} \sigma). \label{3b} \\
\nu \widetilde{r}. C_m \sigma \longrightarrow & \nu \widetilde{r}. C_{m+1}\sigma \label{Cint3b}
\end{align}
(\ref{Cint3b}) is in $\mathbf{tr}'''$ and corresponds to (\ref{3b}).

By Lemma \ref{drop-nu}, we can suppose that $C_m \sigma \longrightarrow C_{m+1} \sigma$.

$C_m \sigma \equiv \nu \widetilde{z}. (C_m | \sigma) \equiv \nu \widetilde{z}. (\rho | Q | \sigma) \equiv \nu \widetilde{z}. (\rho \uplus \sigma | Q) \equiv \nu \widetilde{z}. (\rho \uplus \sigma | Q')$.

$Q'$ is a process obtained from $Q$ to substitute only for the part related to the transition.

$\rho | Q' \longrightarrow C_{m+1}$.

Similarly, $C_m \zeta \equiv \nu \widetilde{z}. (\rho \uplus \zeta | Q'')$.

$Q''$ is a process obtained from $Q$ to substitute similarly using $\rho \uplus \zeta$.

$\nu \widetilde{s}. [\sigma\rho] \approx_s \nu \widetilde{s'}. \zeta$. Recall that $\nu \widetilde{s}. A_m\rho \approx_s \nu \widetilde{s}'. B_n$

$\nu \widetilde{s}. \rho \uplus \sigma \approx_s \nu \widetilde{s'}. \rho \uplus \zeta$. Note that $\mathrm{n}(\rho) \cap \widetilde{s} = \emptyset$.

Thus, two terms affected by $\rho \uplus \sigma$ at $C_m\sigma \longrightarrow C_{m+1}\sigma$ are also equal in $Q''$.

$\rho | Q'' \longrightarrow C_{m+1}$.

$C_m \zeta \longrightarrow C_{m+1}\zeta$.

We get a desired transition
\begin{center}
\fbox{$\nu \widetilde{r}\widetilde{s'}. (\nu \widetilde{x}. B_m | \nu \widetilde{y}. C_m \zeta) \longrightarrow \nu \widetilde{r}\widetilde{s'}. (\nu \widetilde{x}. B_m | \nu \widetilde{y}. C_{m+1} \zeta)$.}
\end{center}
\item Case \ref{AinCout}.
\begin{align}
\nu \widetilde{r}\widetilde{s}.  (\nu \widetilde{x}. A_m\rho | \nu \widetilde{y}. C_m \sigma) \longrightarrow & \nu \widetilde{r}\widetilde{m}\widetilde{s}. (\nu \widetilde{x}. A_{m+1}(\rho \uplus \{M/x\}) | \nu \widetilde{y}x. C_{m+1} \sigma). \label{3c} \\
\nu \widetilde{s}. A_m \rho \overset{N_2 \rho(M)}{\longrightarrow} & \nu \widetilde{s}. A_{m+1}(\rho \uplus \{M/x\}) \label{Ain3c} \\
\nu \widetilde{r}. C_m \sigma \overset{\nu x. \overline{N_1\sigma}\langle x \rangle}{\longrightarrow} & \nu \widetilde{r}\widetilde{m}. C_{m+1}\sigma \label{Cout3c} \\
\nu \widetilde{s'}. B_m \overset{N_2\rho(M)}{\Longrightarrow} & \nu \widetilde{s'}. B_{m+1} \label{Bin3c}
\end{align}
(\ref{Ain3c}) is in $\mathbf{tr}'$. (\ref{Cout3c}) is in $\mathbf{tr}'''$. Both correspond to (\ref{3c}). (\ref{Bin3c}) is in $\mathbf{tr}''$ and corresponds to (\ref{Ain3c}).

By Lemma \ref{drop-nu}, we can suppose that $C_m \sigma \overset{\nu x. \overline{N_2}\langle x \rangle}{\longrightarrow} \nu \widetilde{m}.C_{m+1}\sigma$.

$C_m \sigma \equiv \nu \widetilde{z}. (\rho \uplus \sigma | Q')$.

$Q'$ is a process obtained from $Q$ to substitute only for the part related to transition.

$\rho | Q' \overset{\nu x. \overline{N}\langle x \rangle}{\longrightarrow} \nu \widetilde{m}.C_{m+1}$, where $N = N_1[\sigma\rho] = N_2[\rho\sigma]$.

Similarly, $C_m\zeta \equiv \nu \widetilde{z}. (\rho \uplus \zeta | Q'')$.

$Q''$ is a process obtained from $Q$ to substitute similarly using $\rho \uplus \zeta$.

$\nu \widetilde{s}. [\sigma\rho] \approx_s \nu \widetilde{s'}. \zeta$.

$\nu \widetilde{s}. \rho \uplus \sigma \approx_s \nu \widetilde{s'}. \rho \uplus \zeta$.

Some channel in $Q$ becomes equal to $N = N_2[\rho\sigma] = N_2(\rho \uplus \sigma)$ by $\rho \uplus \sigma$, so this channel becomes equal to $N_2(\rho \uplus \zeta)$ by $\rho \uplus \zeta$. Note that $\widetilde{s} \cap \mathrm{n}(N_2) = \emptyset$.

Thus, $\rho | Q'' \overset{\nu x. \overline{N_2\rho\zeta}\langle x \rangle}{\longrightarrow} \nu \widetilde{m}. C_{m+1}$. Note that $[\rho\zeta] = \rho\zeta$.

\begin{align*}
\nu \widetilde{r}\widetilde{s'}. (\nu \widetilde{x}. B_m | \nu \widetilde{y}. C_m \zeta) 
\equiv & \nu \widetilde{r}\widetilde{s'}. (\nu \widetilde{x}\widetilde{y}. (\zeta |\rho\zeta | S | Q\zeta))
\end{align*}
$\nu \widetilde{s'}. B \Longrightarrow \nu \widetilde{s'}. B' \overset{N_2\rho(M)}{\longrightarrow} \nu \widetilde{s'}. B'' \Longrightarrow \nu \widetilde{s'}.B_{m+1}$ for some $B'$ and $B''$ because $\nu \widetilde{s'}. B_m \Longrightarrow \nu \widetilde{s'}. B_{m+1}$.

We suppose that 
\begin{itemize}
\item $B' \equiv \zeta | \hat{S}$,
\item $\hat{S} \equiv \nu \widetilde{p}. (N_2\rho(x). S' | S_2)$,
\item $S \Longrightarrow \hat{S} \overset{N_2\rho(M\rho)}{\longrightarrow} D' \Longrightarrow D$,
\item $B_{m+1} = \zeta | D$,
\item $Q\zeta \overset{\nu x. \overline{N_2\rho\zeta}\langle x \rangle}{\longrightarrow} E$, and
\item $\nu \widetilde{m}. C_{m+1}\zeta \equiv \rho\zeta | E$.
\end{itemize}
Let $Q\zeta \equiv (\nu \widetilde{m}\widetilde{q}.(\overline{N_2\rho}\langle M \rangle. Q' | Q_2))\zeta$. This is possible because of $Q\zeta \overset{\nu x. \overline{N_2 \rho\zeta}\langle x \rangle}{\longrightarrow} E$.

By $\nu \widetilde{s'}.B \Longrightarrow \nu \widetilde{s'}. B'$ and $\hat{S} \overset{N_2\rho(M)}{\Longrightarrow} D$,
\begin{align*}
\nu \widetilde{r}\widetilde{s'}. (\nu \widetilde{x}. B_m | \nu \widetilde{y}. C_m \zeta) \Longrightarrow & \nu \widetilde{r}\widetilde{s'}. (\nu \widetilde{x}\widetilde{y}. (\zeta | \rho\zeta | \nu \widetilde{p}. (N_2\rho(x). S' | S_2) | (\nu \widetilde{m} \widetilde{q}. (\overline{N_2\rho} \langle M \rangle. Q' | Q_2))\zeta) \\
\Longrightarrow &  \nu \widetilde{r}\widetilde{s'}. (\nu \widetilde{x}\widetilde{y}. (\zeta | \rho\zeta | \nu x.(D | (\nu \widetilde{m}\widetilde{q}. (Q' | \{M/x\} | Q_2))\zeta))) \\
\equiv & \nu \widetilde{r}\widetilde{m}\widetilde{s'}. (\nu \widetilde{x}. (\zeta | D) | \nu \widetilde{y}x. (\rho \uplus \{M/x\} | \nu \widetilde{q}. (Q' | Q_2)))\zeta).
\end{align*}
We get the desired transition
\begin{center}
\fbox{$\nu \widetilde{r}\widetilde{s'}. (\nu \widetilde{x}. B_m | \nu \widetilde{y}. C_m \zeta) \Longrightarrow \nu \widetilde{r}\widetilde{m}\widetilde{s'}. (\nu \widetilde{x}. B_{m+1} | \nu \widetilde{y}x. C_{m+1}\zeta)$.}
\end{center}
\item Case \ref{AoutCin}.
\begin{align}
\nu \widetilde{r}\widetilde{s}. (\nu \widetilde{x}. A_m\rho | \nu \widetilde{y}. C_m \sigma) \longrightarrow & \nu \widetilde{r}\widetilde{s}\widetilde{m}. (\nu \widetilde{x}x. A_{m+1}\rho | \nu \widetilde{y}. C_{m+1}(\sigma \uplus \{M/x\})) \label{3d} \\
\nu \widetilde{s}. A_m\rho \overset{\nu x. \overline{N_2\rho}\langle x \rangle}{\longrightarrow} & \nu \widetilde{s}\widetilde{m}. A_{m+1}\rho \label{Aout3d} \\
\nu \widetilde{r}. C_m\sigma \overset{N_1\sigma(M)}{\longrightarrow} & \nu \widetilde{r}. C_{m+1}(\sigma \uplus \{M/x\}) \label{Cin3d} \\
\nu \widetilde{s'}. B_m \overset{\nu x. \overline{N_2\rho} \langle x \rangle}{\Longrightarrow} & \nu \widetilde{s'}\widetilde{m'}. B_{m+1} \label{Bout3d}
\end{align}
(\ref{Aout3d}) is in $\mathbf{tr}'$. (\ref{Cin3d}) is in $\mathbf{tr}'''$. Both correspond to (\ref{3d}). (\ref{Bout3d}) is in $\mathbf{tr}''$ and corresponds to (\ref{Aout3d}).

$C_m\sigma \overset{N_2(x)}{\longrightarrow} C_{m+1}\sigma$.

$C_m\sigma \equiv \nu \widetilde{z}. (\rho \uplus \sigma | Q')$, where $Q'$ is a process obtained from $Q$ to substitute only for the input channel.

$\rho | Q' \overset{N(x)}{\longrightarrow} C_{m+1}$, where $N=N_1[\sigma\rho]=N_2[\rho\sigma]$.

Similarly, $C_m\zeta \equiv \nu \widetilde{z}. (\rho \uplus \zeta | Q'')$.

$Q''$ is a process obtained from $Q$ to substitute similarly using $\rho \uplus \zeta$.

$\nu \widetilde{s}. (\rho \uplus \sigma) \approx_s \nu \widetilde{s'}. (\rho \uplus \zeta)$.

Some channel in $Q$ is equal to $N=N_2[\rho\sigma]=N_2(\rho\uplus\sigma)$ by $\rho \uplus \sigma$, so this channel becomes equal to $N_2(\rho \uplus \zeta)$ by $\rho \uplus \zeta$. Note that $\widetilde{s} \cap \mathrm{n}(N_2) = \emptyset$.

Thus, $\rho | Q'' \overset{N_2\rho\zeta(x)}{\longrightarrow} C_{m+1}$. Note that $[\rho\zeta] = \rho\zeta$.

$\nu \widetilde{s'}. B \Longrightarrow \nu \widetilde{s'}. B' \overset{\nu x. \overline{N_2\rho}\langle x \rangle}{\longrightarrow} \nu \widetilde{s''}. B'' \Longrightarrow \nu \widetilde{s'}.B_{m+1}$ for some $B'$ and $B''$.

We suppose that 
\begin{itemize}
\item $B' \equiv \zeta | \hat{S}$,
\item $\hat{S} \equiv \nu \widetilde{m'}\widetilde{p}.(\overline{N_2\rho}\langle M \rangle. S' | S_2)$,
\item $S \Longrightarrow \hat{S} \overset{\nu x. \overline{N_2\rho}\langle x \rangle}{\longrightarrow} D' \Longrightarrow D$,
\item $B_{m+1} = \zeta | D$,
\item $Q\zeta \overset{N_2\rho\zeta(x)}{\longrightarrow} E$, and
\item $C_{m+1}\zeta \equiv \rho\zeta | E$.
\end{itemize}
Let $Q\zeta \equiv (\nu \widetilde{q}.(N_2\rho(x). Q' | Q_2))\zeta$.

By $S \Longrightarrow \hat{S}$,
\begin{align*}
&\nu \widetilde{r} \widetilde{s'}. (\nu \widetilde{x}. B_m | \nu \widetilde{y}. C_m\zeta) \\
\equiv &  \nu \widetilde{r}\widetilde{s'}. (\nu \widetilde{x}\widetilde{y}. (\zeta | \rho\zeta | S | Q\zeta)) \\
\Longrightarrow &  \nu \widetilde{r}\widetilde{s'}. (\nu \widetilde{x} \widetilde{y}. (\zeta | \rho\zeta | \nu \widetilde{m'}\widetilde{p}. (\overline{N_2\rho}\langle M \rangle. S' | S_2) | (\nu \widetilde{q}. (N_2\rho(x). Q' | Q_2))\zeta) \\
\longrightarrow &  \nu \widetilde{r}\widetilde{s'}. (\nu \widetilde{x} \widetilde{y}. (\zeta | \rho\zeta | \nu x. (\nu \widetilde{m'}. (\{M/x\} | \nu \widetilde{p}. (S' | S_2)) | (\nu \widetilde{q}. Q' | Q_2))\zeta) \\
\equiv &  \nu \widetilde{r}\widetilde{s'}\widetilde{m'}. (\nu \widetilde{x}x. (\zeta \uplus \{M/x\} | \nu \widetilde{p}. (S' | S_2)) | \nu \widetilde{y}. (\rho | \nu \widetilde{q}. (Q' | Q_2))(\zeta \uplus \{M/x\})) \\
\Longrightarrow &  \nu \widetilde{r}\widetilde{s'}\widetilde{m'}. (\nu\widetilde{x}x. B_{m+1} | \nu\widetilde{y}. C_{m+1}(\zeta \uplus \{M/x\}))
\end{align*}
We get a desired transition
\begin{center}
\fbox{$\nu \widetilde{r}\widetilde{s'}. (\nu \widetilde{x}. B_m | \nu \widetilde{y}. C_m \zeta) \Longrightarrow \nu \widetilde{r}\widetilde{s'}\widetilde{m'}. (\nu \widetilde{x}x. B_{m+1} | \nu \widetilde{y}. C_{m+1}(\zeta \uplus \{M/x\}))$.}
\end{center}
\item Case \ref{Ain}.
\begin{align}
\nu \widetilde{r}\widetilde{s}. (\nu \widetilde{x}. A_m\rho | \nu \widetilde{y}. C_m \sigma) \overset{N(M)}{\longrightarrow} & \nu \widetilde{r}\widetilde{s}. (\nu \widetilde{x}. A_{m+1}\rho | \nu \widetilde{y}. C_m\sigma) \label{4a} \\
\nu \widetilde{s}. A_m\rho \overset{N(M)\rho}{\longrightarrow} & \nu \widetilde{s}. A_{m+1}\rho \label{Ain4a} \\
\nu \widetilde{s'}. B_m \overset{N(M)\rho}{\Longrightarrow} & \nu \widetilde{s'}. B_{m+1} \label{Bin4a}
\end{align}
(\ref{Ain4a}) is in $\mathbf{tr}'$ and corresponds to (\ref{4a}). Recall subsection \ref{extra}. (\ref{Bin4a}) is in $\mathbf{tr}'$ and corresponds to (\ref{Ain4a}).

By Lemma \ref{drop-nu}, we can suppose that $B_m \overset{N(M)\rho}{\Longrightarrow} B_{m+1}$.

By Lemma \ref{drop-sigma}, $\rho | B_m \overset{N(M)}{\Longrightarrow} \rho | B_{m+1}$.

We get a desired transition
\begin{center}
\fbox{$\nu \widetilde{r}\widetilde{s'}. (\nu \widetilde{x}. B_m | \nu \widetilde{y}. C_m \zeta) \overset{N(M)}{\Longrightarrow} \nu \widetilde{r}\widetilde{s'}. (\nu \widetilde{x}. B_{m+1} | \nu \widetilde{y}. C_m\zeta)$.}
\end{center}
\item Case \ref{Cin}.
\begin{align}
\nu \widetilde{r}\widetilde{s}. (\nu \widetilde{x}. A_m\rho | \nu \widetilde{y}. C_m \sigma) \overset{N(M)}{\longrightarrow} & \nu \widetilde{r}\widetilde{s}. (\nu \widetilde{x}. A_m\rho | \nu \widetilde{y}. C_{m+1}\sigma) \label{4b} \\
\nu \widetilde{r}. C_m\sigma \overset{N(M)\sigma}{\longrightarrow} & \nu \widetilde{r}.C_{m+1}\sigma \label{Cin4c}
\end{align}
(\ref{Cin4c}) is in $\mathbf{tr}'''$ and corresponds to (\ref{4b}).

$C_m\sigma \overset{N(M)\sigma}{\longrightarrow} C_{m+1}\sigma$.

$C_m\sigma \equiv \nu \widetilde{z}. (\rho \uplus \sigma | Q')$ where $Q'$ is a process obtained from $Q$ to substitute only for the input channel.

$\rho | Q' \overset{N(\rho \uplus \sigma)(M)}{\longrightarrow} C_{m+1}$.

Similarly, $C_m\zeta \equiv \nu \widetilde{z}. (\rho \uplus \zeta | Q'')$.

$Q''$ is a process obtained from $Q$ to substitute similarly using $\rho \uplus \zeta$.

Some channel in $Q$ is equal to $N(\rho \uplus \zeta)$ by $\rho \uplus \zeta$ because $\nu \widetilde{s}. (\rho \uplus \sigma) \approx_s \nu \widetilde{s'}. (\rho \uplus \zeta)$. Note that $\widetilde{s} \cap \mathrm{n}(N_2) = \emptyset$.

Thus, $\rho | Q'' \overset{N(\rho \uplus \zeta)(M)}{\longrightarrow} C_{m+1}$.

$\rho \uplus \zeta | Q'' \overset{N(\rho \uplus \zeta)(M)}{\longrightarrow} \zeta | C_{m+1}$. 

By Lemma \ref{change-label}, $\rho \uplus \zeta | Q'' \overset{N(M)\zeta}{\longrightarrow} \zeta | C_{m+1}$. 

$C_m\zeta \overset{N(M)\zeta}{\longrightarrow} C_{m+1}\zeta$.

By Lemma \ref{drop-sigma}, $\zeta | C_m\zeta \overset{N(M)}{\longrightarrow} \zeta | C_{m+1}\zeta$. We get a desired transition
\begin{center}
\fbox{$\nu \widetilde{r}\widetilde{s'}. (\nu \widetilde{x}. B_m | \nu \widetilde{y}. C_m \zeta) \overset{N(M)}{\longrightarrow} \nu \widetilde{r}\widetilde{s'}. (\nu \widetilde{x}. B_m | \nu \widetilde{y}. C_{m+1}\zeta)$.}
\end{center}
\item Case \ref{Aout}.
\begin{align}
\nu \widetilde{r}\widetilde{s}. (\nu \widetilde{x}. A_m\rho | \nu \widetilde{y}. C_m \sigma) \overset{\nu x. \overline{N}\langle x \rangle}{\longrightarrow} & \nu \widetilde{r}\widetilde{s}\widetilde{m}. (\nu \widetilde{x}. A_{m+1}\rho | \nu \widetilde{y}. C_m(\sigma \uplus \{M/x\})) \label{5a} \\
\nu \widetilde{s}. A_m\rho \overset{\nu x. \overline{N}\langle x \rangle \rho}{\longrightarrow} & \nu \widetilde{s}\widetilde{m}.A_{m+1}\rho \label{Aout5a} \\
\nu \widetilde{s'}. B_m \overset{\nu x. \overline{N}\langle x \rangle \rho}{\Longrightarrow} & \nu \widetilde{s'}\widetilde{m'}. B_{m+1} \label{Bout5a} \\
\nu \widetilde{r}\widetilde{s'}. (\nu \widetilde{x}. B_m | \nu \widetilde{y}. C_m \zeta) \overset{\nu x. \overline{N}\langle x \rangle \rho}{\Longrightarrow} & \nu \widetilde{r}\widetilde{s'}\widetilde{m'}. (\nu \widetilde{x}. B_{m+1} | \nu \widetilde{y}. C_m(\zeta \uplus \{M/x\})) \label{desired5a}
\end{align}
(\ref{Aout5a}) is in $\mathbf{tr}'$ and corresponds to (\ref{5a}). Recall subsection \ref{extra}. (\ref{Bout5a}) is in $\mathbf{tr}''$ and corresponds to (\ref{Aout5a}). (\ref{desired5a}) is deriverd from (\ref{Bout5a}). Note that $x$ does not appear in $C_m$.

(\ref{desired5a}) contains an output, so it changes a frame. We have to check that static equivalence is preserved.

$\nu \widetilde{s}.[\sigma\rho] \approx_s \nu \widetilde{s'}. \zeta$ because $\nu \widetilde{s}. A_m\rho \approx_s \nu \widetilde{s'}. B_m$.

$\nu \widetilde{s}\widetilde{m}. [(\sigma \uplus \{M/x\})\rho] \approx_s \nu \widetilde{s'}\widetilde{m'}. (\zeta \uplus \{M'/x\})$ because $\nu \widetilde{s}\widetilde{m}.A_{m+1}\rho \approx_s \nu \widetilde{s'}\widetilde{m'}. B_{m+1}$.

In addition, $\nu \widetilde{r}\widetilde{s}. (\nu \widetilde{x}.[\sigma\rho] | \nu \widetilde{y}. [\rho\sigma]) \approx_s \nu \widetilde{r}\widetilde{s'}. (\nu \widetilde{x}. \zeta | \nu \widetilde{y}. \rho\zeta)$. 

This is because $\nu \widetilde{r}\widetilde{s}. (\nu \widetilde{x}. A_m\rho | \nu \widetilde{y}. C_m \sigma) \approx_s \nu \widetilde{r}\widetilde{s'}. (\nu \widetilde{x}. B_m | \nu \widetilde{y}. C_m \zeta)$.

$P[\rho\sigma] \overset{\nu x. \overline{(N\rho)[\sigma\rho]}\langle x \rangle}{\longrightarrow} D \land \nu \widetilde{m}. A_{m+1}\rho \equiv [\sigma\rho] | D$ for some $D$. By Lemma \ref{shift-sigma}. Recall Definition \ref{cnf}.

Let $P[\rho\sigma] \equiv (\nu \widetilde{m}\widetilde{l}. (\overline{N}\langle M \rangle. P' | P_2))[\rho\sigma] \land D \equiv (\nu \widetilde{m}. (\{M/x\} | \nu \widetilde{l}. (P' | P_2)))[\rho\sigma]$.

Then, $A_{m+1}\rho \equiv (\sigma \uplus \{M/x\} | \nu \widetilde{l}. (P' | P_2))\rho$.
\begin{align*}
&\nu \widetilde{r}\widetilde{s}\widetilde{m}. (\nu \widetilde{x}. [(\sigma \uplus \{M/x\})\rho] | \nu \widetilde{y}. [\rho(\sigma \uplus \{M/x\})]) \\
\equiv & \nu \widetilde{r}. (\nu \widetilde{x}. \nu \widetilde{s}\widetilde{m}. [(\sigma \uplus \{M/x\})\rho] | \nu \widetilde{y}. \rho) \\
\approx_s & \nu \widetilde{r}. (\nu \widetilde{x}. \nu \widetilde{s'}\widetilde{m'}. (\zeta \uplus \{M'/x\}) | \nu \widetilde{y}. \rho) \\
\equiv & \nu \widetilde{r}\widetilde{s'}\widetilde{m'}. (\nu \widetilde{x}. (\zeta \uplus \{M'/x\}) | \nu \widetilde{y}. \rho (\zeta \uplus \{M'/x\}))
\end{align*}
Note that $\nu \widetilde{s}\widetilde{m}. [(\sigma \uplus \{M/x\})\rho] \approx_s \nu \widetilde{s'}\widetilde{m'}. (\zeta \uplus \{M'/x\})$.

Therefore, 

$\nu \widetilde{r}\widetilde{s}\widetilde{m}. (\nu \widetilde{x}. A_{m+1}\rho | \nu \widetilde{y}. C_m(\sigma \uplus \{M/x\})) \approx_s \nu \widetilde{r}\widetilde{s'}\widetilde{m'}. (\nu \widetilde{x}. B_{m+1}\rho | \nu \widetilde{y}. C_m(\zeta \uplus \{M'/x\}))$.

Hence, the transition below is a desired one.
\begin{center}
\fbox{$\nu \widetilde{r}\widetilde{s'}. (\nu \widetilde{x}. B_m | \nu \widetilde{y}. C_m \zeta) \overset{\nu x. \overline{N}\langle x \rangle}{\Longrightarrow} \nu \widetilde{r}\widetilde{s'}\widetilde{m'}. (\nu \widetilde{x}. B_{m+1}\rho | \nu \widetilde{y}. C_m(\zeta \uplus \{M/x\}))$.}
\end{center}
\item Case \ref{Cout}.
\begin{align}
\nu \widetilde{r}\widetilde{s}. (\nu \widetilde{x}. A_m\rho | \nu \widetilde{y}. C_m \sigma) \overset{\nu x. \overline{N}\langle x \rangle}{\longrightarrow} & \nu \widetilde{r}\widetilde{m}\widetilde{s}. (\nu \widetilde{x}. A_m(\rho \uplus \{M/x\}) | \nu \widetilde{y}. C_{m+1}\sigma) \label{5b} \\
\nu \widetilde{r}. C_m\sigma \overset{\nu x. \overline{N}\langle x \rangle \sigma}{\longrightarrow} & \nu \widetilde{r}\widetilde{m}. C_{m+1}\sigma \label{Cin5b}
\end{align}
(\ref{Cin5b}) is in $\mathbf{tr}'''$ and corresponds to (\ref{5b}).

In the same way with Case \ref{Cin}, $C_m\zeta \overset{\nu x. \overline{N\zeta}\langle x \rangle}{\longrightarrow} C_{m+1}\zeta$.

$\nu \widetilde{r}\widetilde{s'}. (\nu \widetilde{x}. B_m | \nu \widetilde{y}. C_m\zeta) \overset{\nu x. \overline{N}\langle x \rangle}{\longrightarrow} \nu \widetilde{r}\widetilde{m}\widetilde{s'}. (\nu \widetilde{x}. B_m | \nu \widetilde{y}. C_{m+1}\zeta)$.

We have to check that static equivalence is preserved.

Now, $\nu \widetilde{r}\widetilde{s}. (\nu \widetilde{x}. [\sigma\rho] | \nu \widetilde{y}. [\rho\sigma]) \approx_s \nu \widetilde{r}\widetilde{s'}. (\nu \widetilde{x}. \zeta | \nu \widetilde{y}. \rho \zeta)$.

This is because $\nu \widetilde{r}\widetilde{s}. (\nu \widetilde{x}. A_m\rho | \nu \widetilde{y}. C_m \sigma) \approx_s \nu \widetilde{r}\widetilde{s'}. (\nu \widetilde{x}. B_m | \nu \widetilde{y}. C_m \zeta)$.

$\nu \widetilde{s}.[\sigma\rho] \approx_s \nu \widetilde{s'}. \zeta$ because $\nu \widetilde{s}. A_m\rho \approx_s \nu \widetilde{s'}. B_m$.

$\nu \widetilde{s}. [\sigma\rho] | \nu \widetilde{y}. (\rho \uplus \{M/x\}) \approx_s \nu \widetilde{s'}. \zeta | \nu \widetilde{y}. (\rho \uplus \{M/x\})$.

Thus, $\nu \widetilde{r}\widetilde{m}\widetilde{s}. (\nu \widetilde{x}. [\sigma\rho] | \nu \widetilde{y}. (\rho \uplus \{M/x\})) \approx_s \nu \widetilde{r}\widetilde{m}\widetilde{s'}. (\nu \widetilde{x}. \zeta | \nu \widetilde{y}. (\sigma \uplus \{M/x\}))$.

Hence, the transition below is a desired one.
\begin{center}
\fbox{$\nu \widetilde{r}\widetilde{s'}. (\nu \widetilde{x}. B_m | \nu \widetilde{y}. C_m\zeta) \overset{\nu x. \overline{N}\langle x \rangle}{\longrightarrow} \nu \widetilde{r}\widetilde{m}\widetilde{s'}. (\nu \widetilde{x}. B_m | \nu \widetilde{y}. C_{m+1}\zeta)$.}
\end{center}
\end{enumerate}
Thus we have finished the proof of Lemma \ref{para} and Theorem \ref{main-cong}.
\chapter{An Epistemic Logic for the Applied Pi Calculus\label{epi}}
From now, we suppose that all processes are name-variable-distinct.  We redefine trace equivalence as below.
\begin{definition}
\begin{align*}
\mathrm{tr}(A) &= \{\mathbf{tr}|\mathbf{tr}\ \mathrm{is\ a\ trace\ of}\ A\} \\
\mathrm{tr_{max}}(A) &= \{\mathbf{tr} \in \mathrm{tr}(A)|\mathbf{tr}\ \mathrm{is\ maximal}\}
\end{align*}
\end{definition}
\begin{definition}
Let $A$ and $B$ be closed processes.
\begin{align*}
A \subseteq_t B &\overset{\mathrm{def}}{\Leftrightarrow} \forall \mathbf{tr} \in \mathrm{tr}(A) \exists \mathbf{tr}' \in \mathrm{tr}(B)\ s.t.\ \mathbf{tr} \sim_t \mathbf{tr}', \\
A \approx_t B &\overset{\mathrm{def}}{\Leftrightarrow} A \subseteq_t B \land B \subseteq_t A.
\end{align*}
Let A and B be two processes. Let $\sigma$ be a map that maps a variable in $(\mathrm{fv}(A)\setminus \mathrm{dom}(A)) \cup (\mathrm{fv}(B)\setminus \mathrm{dom}(B))$ to a ground term. When $A\sigma \approx_t B\sigma$ for all $\sigma$ and capture-avoidings, we also denote as $A \approx_t B$.
\end{definition}
Please note that this definition is equivalent to the previous definition because of Proposition \ref{std-def}.
\section{Syntax}
We adopt equality of terms as a primitive proposition. We give syntax of formulas.
\begin{align*}
\delta ::=& \top | M_1=M_2 | M\in \mathrm{dom} | \delta_1 \lor \delta_2 | \lnot \delta & \mathrm{Static\ formula} \\
\varphi ::=& \delta | \varphi_1 \lor \varphi_2 | \lnot \varphi | \langle \mu \rangle_- \varphi | F\varphi | K\varphi & \mathrm{Modal\ formula}
\end{align*}
where $M_1, M_2$ and $M$ are terms, and $\mu$ is an action. 
\section{Semantics}
%

Our logic is an LTL-like logic with an epistemic operator.
Let $A$ be a name-variable-distinct extended process that $\mathrm{fv}(A)\setminus \mathrm{dom}(A) = \widetilde{x}$, $\rho$ be an assignment from $\widetilde{x}$ to ground terms, \textbf{tr} be a trace of $A\rho$ and $0 \leq i \leq |\mathbf{tr}|, M_1$ and $M_2$ be terms.

We suppose that $\delta$ and $\varphi$ contain no variables other than $\widetilde{x} \cup \mathrm{dom}(\mathbf{tr}[i])$.
\begin{align*}
A, \rho, \mathbf{tr}, i \models \top&\ \mathrm{always\ holds.} \\
A, \rho, \mathbf{tr}, i \models M_1=M_2\ \mathrm{iff}&\ (M_1\rho=M_2\rho)\mathrm{fr}(\mathbf{tr}[i]) \\
A, \rho, \mathbf{tr}, i \models x \in \mathrm{dom} \ \mathrm{iff}&\ x \in \mathrm{dom}(\mathbf{tr}[i]) \\
A, \rho, \mathbf{tr}, i \models M \in \mathrm{dom}& \ \mathrm{never\ hold\ when}\ M\ \mathrm{is\ not\ a\ variable.} \\
A, \rho, \mathbf{tr}, i \models \delta_1 \lor \delta_2 \ \mathrm{iff}&\ A, \rho, \mathbf{tr}, i \models \delta_1 \lor A, \rho, \mathbf{tr}, i \models \delta_2 \\
A, \rho, \mathbf{tr}, i \models \lnot \delta \ \mathrm{iff}&\ A, \rho, \mathbf{tr}, i \not\models \delta \\
A, \rho, \mathbf{tr}, i \models \varphi_1 \lor \varphi_2 \ \mathrm{iff}&\ A, \rho, \mathbf{tr}, i \models \varphi_1 \lor A, \rho, \mathbf{tr}, i \models \varphi_2 \\
A, \rho, \mathbf{tr}, i \models \lnot \varphi \ \mathrm{iff}&\ A, \rho, \mathbf{tr}, i \not\models \varphi \\
A, \rho, \mathbf{tr}, i \models \langle \mu \rangle_- \varphi\ \mathrm{iff}&\ \mathbf{tr}[i-1] \overset{\mu}{\Longrightarrow} \mathbf{tr}[i]\ \mathrm{in}\ \mathbf{tr}\ \land  A, \rho, \mathbf{tr}, i-1 \models \varphi \\
A, \rho, \mathbf{tr}, i \models F\varphi\ \mathrm{iff}&\ \exists j \geq i\ \mathrm{s.t.}\ A, \rho, \mathbf{tr}, j \models \varphi \\
A, \rho, \mathbf{tr}, i \models K\varphi\ \mathrm{iff}&\ \forall \rho'\ \forall \mathbf{tr}'\in \mathrm{tr}(A\rho'); \mathbf{tr}[0, i] \sim_t \mathbf{tr}'[0, i] \Rightarrow A, \rho', \mathbf{tr}', i \models \varphi
\end{align*}
We also define validity of formulas containing free variables. 

Let $\mathrm{dom}(\mathbf{tr}[i]) = \widetilde{y}$. We suppose that $\varphi$ contains no variables other than $\widetilde{x}, \widetilde{y}$ and $\widetilde{z}$.
\begin{equation*}
A, \rho, \mathbf{tr}, i \models \varphi(\widetilde{x}, \widetilde{y}, \widetilde{z})\ \mathrm{iff}\ \forall \widetilde{M}: \mathrm{closed}; A, \rho, \mathbf{tr}, i \models \varphi(\widetilde{x}, \widetilde{y}, \widetilde{M}).
\end{equation*}

\begin{definition} 
$A \models \varphi$ if and only if $\forall \rho\ \forall \mathbf{tr} \in \mathrm{tr}(A\rho); A, \rho, \mathbf{tr}, 0 \models \varphi$.
\end{definition}
\begin{definition} 
$A \sqsubseteq_s B$ if and only if $\forall \delta\ \forall \rho; A, \rho, A\rho, 0\models \delta \Rightarrow B, \rho, B\rho, 0 \models \delta$. 

$A \equiv_s B$ if and only if $A \sqsubseteq_s B \land B \sqsubseteq_s A$.
\end{definition}
\begin{definition} 
$A \sqsubseteq_L B \overset{\mathrm{def}}{\Leftrightarrow} \forall \rho \forall \mathbf{tr} \in \mathrm{tr}(A\rho) \exists \mathbf{tr}' \in \mathrm{tr}(B\rho)$

$\mathrm{s.t.}\ \forall i \forall \varphi; [A, \rho, \mathbf{tr}, i \models \varphi \Leftrightarrow B, \rho, \mathbf{tr}', i \models \varphi]$.


$A \equiv_L B$ if and only if $A \sqsubseteq_L B \land B \sqsubseteq_L A$.
%
\end{definition}
\begin{prop}
We assume that $A \sqsubseteq_L B$. Then $B \models \varphi \Rightarrow A \models \varphi$.
\begin{proof}
Assuming that $B \models \varphi$, we arbitrarily take $\rho$ and $\mathbf{tr} \in \mathrm{tr}(A\rho)$.

By assumption, there exists $\mathbf{tr}' \in \mathrm{tr}(B\rho)$ such that 
\begin{equation}\label{upper}
\forall i \forall \varphi; [A, \rho, \mathbf{tr}, i \models \varphi \Leftrightarrow B, \rho, \mathbf{tr}', i \models \varphi].
\end{equation}
Then, $B, \rho, \mathbf{tr}', 0 \models \varphi$ because of $B \models \varphi$.

By (\ref{upper}), $A, \rho, \mathbf{tr}, 0 \models \varphi$.

By arbitrariness of $\rho$ and $\mathbf{tr}$, it holds that $A \models \varphi$.
\end{proof}
\end{prop}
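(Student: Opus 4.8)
The plan is to prove the implication by a direct unfolding of the definitions of $\sqsubseteq_L$ and $\models$, transporting satisfaction of $\varphi$ along the witnessing trace supplied by $\sqsubseteq_L$. Concretely, I would assume $A \sqsubseteq_L B$ together with $B \models \varphi$, and then verify the defining condition of $A \models \varphi$, namely that $A, \rho, \mathbf{tr}, 0 \models \varphi$ holds for every assignment $\rho$ and every $\mathbf{tr} \in \mathrm{tr}(A\rho)$.

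First I would fix an arbitrary $\rho$ and an arbitrary trace $\mathbf{tr} \in \mathrm{tr}(A\rho)$. Applying the definition of $A \sqsubseteq_L B$ to this $\rho$ and this $\mathbf{tr}$ produces a companion trace $\mathbf{tr}' \in \mathrm{tr}(B\rho)$ enjoying the biconditional $A, \rho, \mathbf{tr}, i \models \psi \Leftrightarrow B, \rho, \mathbf{tr}', i \models \psi$ for all indices $i$ and all formulas $\psi$. The crucial feature is that a single $\mathbf{tr}'$ works uniformly over every $i$ and every $\psi$; in particular it applies to the formula $\varphi$ itself at the initial index $i = 0$.

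Next I would invoke $B \models \varphi$. Since $\mathbf{tr}' \in \mathrm{tr}(B\rho)$, the definition of $\models$ applied to $B$ gives $B, \rho, \mathbf{tr}', 0 \models \varphi$. Transporting this statement across the biconditional, instantiated at $i = 0$ and $\psi = \varphi$, yields $A, \rho, \mathbf{tr}, 0 \models \varphi$. Because $\rho$ and $\mathbf{tr}$ were arbitrary, the defining condition of $A \models \varphi$ is met, completing the argument.

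There is essentially no genuinely hard step here: the statement is an immediate consequence of the definitions. The only point worth flagging is that $\sqsubseteq_L$ is strong enough to match traces together with their entire logical-satisfaction behaviour, rather than merely their static data or their $\sim_t$-equivalence class, so that once the witness $\mathbf{tr}'$ is produced the transfer of $\varphi$ at the root is automatic. I would therefore keep the exposition brief and let the definition-chasing carry the proof.
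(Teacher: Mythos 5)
Your proof is correct and follows exactly the same route as the paper's: fix $\rho$ and $\mathbf{tr} \in \mathrm{tr}(A\rho)$, obtain the witness $\mathbf{tr}' \in \mathrm{tr}(B\rho)$ from the definition of $\sqsubseteq_L$, use $B \models \varphi$ to get $B, \rho, \mathbf{tr}', 0 \models \varphi$, and transfer back through the biconditional at $i = 0$. Nothing to add.
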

\section{Correspondence with Trace Equivalence}
We prove that trace equivalent processes satisfy the same formulas and vice versa.
\begin{lemma}\label{sta-log} 
$\mathrm{[}\forall \rho; A\rho \approx_s B\rho\mathrm{]}$ $\Leftrightarrow A \equiv_s B$ 
\end{lemma}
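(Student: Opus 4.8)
The plan is to prove both implications by exploiting the fact that the static fragment $\delta$ of the logic is, at position $0$ of the trivial trace $A\rho$, a direct syntactic encoding of the two ingredients of static equivalence: the predicate $x \in \mathrm{dom}$ records $\mathrm{dom}(\mathrm{fr}(A\rho))$, and the atom $M_1 = M_2$ records which equations hold in $\mathrm{fr}(A\rho)$. Throughout I fix an arbitrary assignment $\rho$ sending $\widetilde{x} = \mathrm{fv}(A)\setminus\mathrm{dom}(A)$ (and the analogous set for $B$) to ground terms, and I recall that since $\mathbf{tr} = A\rho$ has length $0$ we have $\mathbf{tr}[0] = A\rho$, hence $\mathrm{fr}(\mathbf{tr}[0]) = \mathrm{fr}(A\rho)$ and $\mathrm{dom}(A\rho) = \mathrm{dom}(A)$.

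For the direction $(\Rightarrow)$, assuming $A\rho \approx_s B\rho$ for every $\rho$, I would fix $\rho$ and argue by structural induction on $\delta$ that $A, \rho, A\rho, 0 \models \delta \Leftrightarrow B, \rho, B\rho, 0 \models \delta$. The case $\top$ is immediate. For $M_1 = M_2$ the semantics unfolds to $(M_1\rho = M_2\rho)\mathrm{fr}(A\rho)$, and applying the defining clause of $\approx_s$ to the particular terms $M_1\rho, M_2\rho$ converts this to $(M_1\rho = M_2\rho)\mathrm{fr}(B\rho)$, i.e. to the truth of the formula at $B$. For $M \in \mathrm{dom}$ with $M = x$ a variable, the two sides read $x \in \mathrm{dom}(A\rho)$ and $x \in \mathrm{dom}(B\rho)$, which agree because $\approx_s$ forces $\mathrm{dom}(\mathrm{fr}(A\rho)) = \mathrm{dom}(\mathrm{fr}(B\rho))$; when $M$ is not a variable neither side ever holds. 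The Boolean cases $\delta_1 \lor \delta_2$ and $\lnot\delta$ follow directly from the induction hypothesis.

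For the direction $(\Leftarrow)$, assuming $A \equiv_s B$ I would fix $\rho$ and verify the two conjuncts of $\mathrm{fr}(A\rho) \approx_s \mathrm{fr}(B\rho)$ by instantiating $A \equiv_s B$ at suitable primitive formulas. Taking $\delta = (x \in \mathrm{dom})$ for each variable $x$ yields $x \in \mathrm{dom}(A\rho) \Leftrightarrow x \in \mathrm{dom}(B\rho)$, whence $\mathrm{dom}(A\rho) = \mathrm{dom}(B\rho)$ by extensionality. For the equality condition I would take an arbitrary pair of terms $M, N$ and use $\delta = (M = N)$: when $\mathrm{v}(M) \cup \mathrm{v}(N) \subseteq \mathrm{dom}(A\rho)$ the instantiation transfers $(M = N)\mathrm{fr}(A\rho) \Leftrightarrow (M = N)\mathrm{fr}(B\rho)$, while if some variable lies outside the common domain both sides are false by the side condition in the definition of $(M = N)\varphi$, so the equivalence holds in every case.

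The only genuinely subtle point, and the one I would handle most carefully, is the interaction between $\rho$ and the test terms $M, N$ in the backward direction: the logical atom $M_1 = M_2$ is evaluated as $(M_1\rho = M_2\rho)\mathrm{fr}(A\rho)$, whereas $\approx_s$ quantifies over raw terms whose variables must lie in the domain. The reconciliation is that $\rho$ substitutes only variables of $\widetilde{x} = \mathrm{fv}(A)\setminus\mathrm{dom}(A)$, which is disjoint from $\mathrm{dom}(A\rho)$; hence any test term $M$ with $\mathrm{v}(M) \subseteq \mathrm{dom}(A\rho)$ satisfies $M\rho = M$, so choosing $M_1 = M$ and $M_2 = N$ makes the logical atom coincide exactly with the equation $(M = N)\mathrm{fr}(A\rho)$ appearing in $\approx_s$. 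I would also check the well-formedness bookkeeping in passing, namely that each instantiated formula contains variables only in $\widetilde{x} \cup \mathrm{dom}(A\rho)$, so that every appeal to the logical semantics is legitimate.
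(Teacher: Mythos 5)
Your proposal is correct and follows essentially the same route as the paper: a structural induction on static formulas for the forward direction (treating $\top$, $M_1=M_2$, $M\in\mathrm{dom}$, and the Boolean connectives exactly as the paper does), and instantiation of $\equiv_s$ at the atoms $x\in\mathrm{dom}$ and $M=N$ for the converse. The only difference is that you make explicit the bookkeeping point that $\rho$ acts only on $\mathrm{fv}(A)\setminus\mathrm{dom}(A)$, so test terms over the domain are fixed by $\rho$ — a detail the paper leaves implicit.
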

\begin{proof}
$\Rightarrow$) We prove $A \sqsubseteq_s B$ by induction on the syntax of static formulas.

Let $\widetilde{x} = (\mathrm{fv}(A)\setminus \mathrm{dom}(A)) \cup (\mathrm{fv}(B)\setminus \mathrm{dom}(B))$.

We arbitrarily take an assignment $\rho$ from $\widetilde{x}$ to terms. We suppose $A, \rho, A\rho, 0\models \delta$ and $\delta$ contains no variables other than free variables of $A$. Here, we omitted restriction of a domain.
\begin{enumerate}
\item $\top$.

Trivially, $B, \rho, B\rho, 0\models \top$.

%
%
\item $M_1 = M_2$.

By assumption, $(M_1\rho = M_2\rho)\mathrm{fr}(A\rho)$.

By $A\rho \approx_s B\rho$, $(M_1\rho = M_2\rho)\mathrm{fr}(B\rho)$.

This means that $B, \rho, B\rho, 0\models M_1 = M_2$.

\item $M \in \mathrm{dom}$.

$M$ must be a variable $x$.

By definition, $x \in \mathrm{dom}(A\rho)$.

By $A\rho \approx_s B\rho$, $x \in \mathrm{dom}(B\rho)$.

This means that $B, \rho, B\rho, 0\models x \in \mathrm{dom}$.

%
%
%
%
\item $\delta_1 \lor \delta_2$

By assumption, $A, \rho, A\rho, 0\models \delta_1 \lor \delta_2$.

By definition, $A, \rho, A\rho, 0\models \delta_1$ or $A, \rho, A\rho, 0\models \delta_2$.

By induction hypothesis, $B, \rho, B\rho, 0\models \delta_1$ or $B, \rho, B\rho, 0\models \delta_2$.

This means that $B, \rho, B\rho, 0\models \delta_1 \lor \delta_2$.

\item $\lnot \delta$.

By assumption, $A, \rho, A\rho, 0\not\models \delta$.

By induction hypothesis, $B, \rho, B\rho, 0\not\models \delta$.

This means that $B, \rho, B\rho, 0\models \lnot\delta$.
%
%
%
%
\end{enumerate}
Thus, $A \sqsubseteq_s B$ and vice versa.

$\Leftarrow$) We arbitrarily take an assignment $\rho$ from $\widetilde{x}$ to terms.

We arbitrarily take $x \in \mathrm{dom}(A\rho)$. Then, $A, \rho, A\rho, 0\models x \in \mathrm{dom}$.

By assumption, $B, \rho, B\rho, 0\models x \in \mathrm{dom}$. That is, $x \in \mathrm{dom}(B\rho)$.

Thus, $\mathrm{dom}(A\rho) \subseteq \mathrm{dom}(B\rho)$. The converse is similar.
%
%
%
%
%

We suppose that $(M=N)\mathrm{fr}(A\rho)$. This means that $A, \rho, A\rho, 0\models M = N$. By assumption, $B, \rho, B\rho, 0\models M = N$. In other words, $(M=N)\mathrm{fr}(B\rho)$. Similarly, $(M=N)\mathrm{fr}(B\rho) \Rightarrow (M=N)\mathrm{fr}(A\rho)$. Therefore, $A\rho \approx_s B\rho$.
\end{proof}
\begin{theorem}\label{correspondence}
\begin{enumerate}
\item\label{tr-lg} $A \approx_t B \Rightarrow A \sqsubseteq_L B$.
\item\label{lg-tr} $A \sqsubseteq_L B \Rightarrow A \subseteq_t B$.
\end{enumerate}
\end{theorem}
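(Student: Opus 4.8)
The plan is to prove both implications by reducing everything to the single relation $\sim_t$ on traces, exploiting that $\sim_t$ is an equivalence relation (it is built from the static equivalence $\approx_s$, which is one, together with equality of lengths and labels) and that the prefix operation $\mathbf{tr} \mapsto \mathbf{tr}[0,i]$ preserves $\sim_t$.

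For \ref{tr-lg} I would prove the following strengthened statement by induction on the formula $\varphi$: assuming $A \approx_t B$, for every $\rho$, every $\mathbf{tr} \in \mathrm{tr}(A\rho)$ and $\mathbf{tr}' \in \mathrm{tr}(B\rho)$ with $\mathbf{tr} \sim_t \mathbf{tr}'$, and every index $i$, we have $A, \rho, \mathbf{tr}, i \models \varphi \Leftrightarrow B, \rho, \mathbf{tr}', i \models \varphi$. This yields \ref{tr-lg} at once: given $\rho$ and $\mathbf{tr} \in \mathrm{tr}(A\rho)$, trace equivalence $A\rho \subseteq_t B\rho$ supplies a witness $\mathbf{tr}' \in \mathrm{tr}(B\rho)$ with $\mathbf{tr} \sim_t \mathbf{tr}'$, and the strengthened statement gives the required formula-by-formula agreement. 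The base case $\varphi = \delta$ holds because $\mathbf{tr} \sim_t \mathbf{tr}'$ forces $\mathbf{tr}[i] \approx_s \mathbf{tr}'[i]$ and a static formula reads only the frame at position $i$; this is exactly Lemma \ref{sta-log} localized to index $i$. The Boolean cases are immediate from the induction hypothesis; $\langle \mu \rangle_-$ uses that $\mathbf{tr} \sim_t \mathbf{tr}'$ equates the $i$-th labels (so the transition guard agrees) together with the hypothesis at $i-1$; and $F$ uses that the two traces have equal length.

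The delicate case is $K$, which I expect to be the main obstacle, precisely because $K$ quantifies over the whole set $\mathrm{tr}(A\rho')$ rather than over the single trace in hand. To transfer $A, \rho, \mathbf{tr}, i \models K\varphi$ to $B, \rho, \mathbf{tr}', i \models K\varphi$, I would take an arbitrary $\rho'$ and any $\mathbf{tr}_B \in \mathrm{tr}(B\rho')$ with $\mathbf{tr}'[0,i] \sim_t \mathbf{tr}_B[0,i]$; since $B\rho' \subseteq_t A\rho'$ there is $\mathbf{tr}_A \in \mathrm{tr}(A\rho')$ with $\mathbf{tr}_A \sim_t \mathbf{tr}_B$, whence by transitivity $\mathbf{tr}[0,i] \sim_t \mathbf{tr}'[0,i] \sim_t \mathbf{tr}_B[0,i] \sim_t \mathbf{tr}_A[0,i]$. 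The hypothesis then gives $A, \rho', \mathbf{tr}_A, i \models \varphi$, and the induction hypothesis, applied at $\rho'$ to the pair $\mathbf{tr}_A \sim_t \mathbf{tr}_B$ for the smaller formula $\varphi$, yields $B, \rho', \mathbf{tr}_B, i \models \varphi$. As $\rho'$ and $\mathbf{tr}_B$ were arbitrary, $B, \rho, \mathbf{tr}', i \models K\varphi$ follows; the converse direction is symmetric, using $A\rho' \subseteq_t B\rho'$. This is the point where both halves of $A \approx_t B$ and the transitivity of $\sim_t$ are genuinely needed.

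For \ref{lg-tr} I would work with closed $A, B$ (so $\rho$ is empty and $A\rho = A$) and read $\sim_t$ off logical agreement. Given $\mathbf{tr} \in \mathrm{tr}(A)$ of length $n$, the hypothesis $A \sqsubseteq_L B$ provides $\mathbf{tr}' \in \mathrm{tr}(B)$ agreeing with $\mathbf{tr}$ on all formulas at all indices. Testing $\top$ at each $i \le n$ forces the right-hand sides to be defined up to index $n$, i.e.\ $|\mathbf{tr}'| \ge n$, so the prefix $\mathbf{tr}'' = \mathbf{tr}'[0,n]$ is again a trace of $B$, now of length $n$; it remains to verify $\mathbf{tr} \sim_t \mathbf{tr}''$. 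Equality of $i$-th labels is extracted by $\langle \mu \rangle_- \top$: from $A, \rho, \mathbf{tr}, i \models \langle \mu_i \rangle_- \top$ and agreement, the $i$-th transition of $\mathbf{tr}''$ also carries label $\mu_i$. Static equivalence $\mathbf{tr}[i] \approx_s \mathbf{tr}''[i]$ comes from the static formulas: agreement on all $x \in \mathrm{dom}$ gives equal domains, and agreement on all $M = N$ gives equal equality tests, which together are the definition of $\approx_s$ (again by the argument of Lemma \ref{sta-log}). Hence $\mathbf{tr} \sim_t \mathbf{tr}''$, giving $A \subseteq_t B$. The only subtlety here is the length bookkeeping, handled cleanly by truncation since a prefix of a trace is again a trace.
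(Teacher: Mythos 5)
Your proposal is correct and follows essentially the same route as the paper: part (1) is proved by induction on $\varphi$ of the strengthened claim that $\sim_t$-related traces satisfy the same formulas at every index, with the $K$ case resolved exactly as you describe via transitivity of $\sim_t$ together with the reverse inclusion $B\rho' \subseteq_t A\rho'$, and part (2) reads the label sequence and the static equivalences off formula agreement. The only cosmetic difference is in part (2), where the paper pins down the labels and the exact length of $\mathbf{tr}'$ using the single formula $F\langle \mu_n \rangle \cdots \langle \mu_1 \rangle \top$ together with its one-step extensions evaluated at index $0$, rather than your indexwise $\langle \mu_i \rangle_- \top$ tests followed by truncation.
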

\begin{proof}
Let $\widetilde{x} = (\mathrm{fv}(A)\setminus \mathrm{dom}(A)) \cup (\mathrm{fv}(B)\setminus \mathrm{dom}(B))$. 

\ref{tr-lg}) We prove
\begin{equation}\label{howto}
\forall \rho \forall \mathbf{tr} \in \mathrm{tr}(A\rho) \forall \mathbf{tr}' \in \mathrm{tr}(B\rho); \mathbf{tr} \sim_t \mathbf{tr}' \Rightarrow \forall i \forall \varphi; [A, \rho, \mathbf{tr}, i \models \varphi \Leftrightarrow B, \rho, \mathbf{tr}', i \models \varphi]
\end{equation}
by induction on the syntax of modal formulas. Here, we omitted restriction of a domain.

We take arbitrarily an assignment $\rho$ from $\widetilde{x}$ to terms. We also arbitrarily take traces \textbf{tr} and $\mathbf{tr}'$ of $A\rho$ and $B\rho$.  We arbitrarily take $i$.
\begin{enumerate}
\item $\delta$.

We suppose that $A, \rho, \mathbf{tr}, i \models \delta$.

This means that $\mathbf{tr}[i], 0, \mathbf{tr}[i], 0\models \delta$. Here, the first 0 expresses an empty map.

By $\mathbf{tr} \sim_t \mathbf{tr}'$, $\mathbf{tr}[i] \approx_s \mathbf{tr}'[i]$. By Lemma \ref{sta-log}, $\mathbf{tr}[i] \equiv_s \mathbf{tr}'[i]$.

Thus, $\mathbf{tr}'[i], 0, \mathbf{tr}'[i], 0\models \delta$. This implies that $B, \rho, \mathbf{tr}', i\models \delta$

The converse is similar.

\item $\varphi_1 \lor \varphi_2$.

By induction hypothesis, 

$A, \rho, \mathbf{tr}, i\models \varphi_1 \Leftrightarrow B, \rho, \mathbf{tr}', i\models \varphi_1$,
and $A, \rho, \mathbf{tr}, i\models \varphi_2 \Leftrightarrow B, \rho, \mathbf{tr}', i\models \varphi_2$

Thus, $A, \rho, \mathbf{tr}, i\models \varphi_1 \lor \varphi_2 \Leftrightarrow B, \rho, \mathbf{tr}', i\models \varphi_1 \lor \varphi_2$.

\item $\lnot \varphi$.

By induction hypothesis, $A, \rho, \mathbf{tr}, i\not\models \varphi \Leftrightarrow B, \rho, \mathbf{tr}', i\not\models \varphi$.

Thus, $A, \rho, \mathbf{tr}, i\models \lnot\varphi \Leftrightarrow B, \rho, \mathbf{tr}', i\models \lnot\varphi$.

\item $\langle \mu \rangle_- \varphi$.

We suppose that $A, \rho, \mathbf{tr}, i \models \langle \mu \rangle_- \varphi$.

By definition, $\mathbf{tr}[i-1] \overset{\mu}{\Longrightarrow} \mathbf{tr}[i]\ \mathrm{in}\ \mathbf{tr}\ \land  A, \rho, \mathbf{tr}, i-1 \models \varphi$.

$\mathbf{tr}'[i-1] \overset{\mu}{\Longrightarrow} \mathbf{tr}'[i]\ \mathrm{in}\ \mathbf{tr}'$ because $\mathbf{tr} \sim_t \mathbf{tr}'$.

By induction hypothesis, $B, \rho, \mathbf{tr}, i-1 \models \varphi$.

Thus, $B, \rho, \mathbf{tr}', i \models \langle \mu \rangle_- \varphi$.

The converse is similar.

\item $F\varphi$.

We suppose that $A, \rho, \mathbf{tr}, i \models F\varphi$.

By definition, $\exists j \geq i\ \mathrm{s.t.}\ A, \rho, \mathbf{tr}, j \models \varphi$.

By induction hypothesis, $B, \rho, \mathbf{tr}', j \models \varphi$.

Thus, $B, \rho, \mathbf{tr}', i \models F\varphi$.

The converse is similar.

\item $K\varphi$.

We suppose that $A, \rho, \mathbf{tr}, i \models K\varphi$.

By definition, $\forall \rho'\ \forall \mathbf{tr}''\in \mathrm{tr}(A\rho'); \mathbf{tr}[0, i] \sim_t \mathbf{tr}''[0, i] \Rightarrow A, \rho', \mathbf{tr}'', i \models \varphi$.

We arbitrarily take $\rho'$ and $\mathbf{tr}''' \in \mathrm{tr}(B\rho')$ such as $\mathbf{tr}'[0, i] \sim_t \mathbf{tr}'''[0, i]$

By assumption, there exists a trace $\mathbf{tr}''$ of $A\rho'$ such as $\mathbf{tr}''' \sim_t \mathbf{tr}''$.

Now, $\mathbf{tr}[0, i] \sim_t \mathbf{tr}'[0, i] \sim_t \mathbf{tr}'''[0, i] \sim_t \mathbf{tr}''[0, i]$, so $A, \rho', \mathbf{tr}'', i \models \varphi$.

By induction hypothesis, $B, \rho', \mathbf{tr}''', i\models \varphi$.

By arbitrariness of $\rho'$ and $\mathbf{tr}'''$, it follows that $B, \rho, \mathbf{tr}', i\models K\varphi$.

The converse is similar.
\end{enumerate}
We complete to prove (\ref{howto}), and it immediately follows that $A \approx_t B \Rightarrow A \sqsubseteq_L B$.

\ref{lg-tr}) We take arbitrarily an assignment $\rho$ from $\widetilde{x}$ to terms. We also take arbitrarily a trace \textbf{tr} of $A\rho$.

By assumption, $\exists \mathbf{tr}' \in \mathrm{tr}(B\rho) \mathrm{s.t.} \forall i \forall \varphi; [A, \rho, \mathbf{tr}, i \models \varphi \Leftrightarrow B, \rho, \mathbf{tr}', i \models \varphi]$.

We prove $\mathbf{tr} \sim_t \mathbf{tr}'$.

Let $\mathbf{tr} = A_0 \coloneqq A\rho \overset{\mu_1}{\Longrightarrow} A_1 \overset{\mu_2}{\Longrightarrow},..., \overset{\mu_n}{\Longrightarrow} A_n$. It holds that $A, \rho, \mathbf{tr}, 0\models F\langle \mu_n \rangle ... \langle \mu_1 \rangle \top$, so it also holds that $B, \rho, \mathbf{tr}', 0\models F\langle \mu_n \rangle ... \langle \mu_1 \rangle \top$.

Moreover, $\forall \mu; A, \rho, \mathbf{tr}, 0\not\models F\langle \mu \rangle \langle \mu_n \rangle ... \langle \mu_1 \rangle \top$ and $\forall \mu; A, \rho, \mathbf{tr}, 0\not\models F\langle \mu_n \rangle ... \langle \mu_1 \rangle \langle \mu \rangle \top$. Thus, $\forall \mu; B, \rho, \mathbf{tr}', 0\not\models F\langle \mu \rangle \langle \mu_n \rangle ... \langle \mu_1 \rangle \top$ and $\forall \mu; B, \rho, \mathbf{tr}', 0\not\models F\langle \mu_n \rangle ... \langle \mu_1 \rangle \langle \mu \rangle \top$.

Therefore, $\mathbf{tr}'$ is of the form $B_0 \coloneqq B\rho \overset{\mu_1}{\Longrightarrow} B_1 \overset{\mu_2}{\Longrightarrow},..., \overset{\mu_n}{\Longrightarrow} B_n$.

We arbitrarily take $i$. We suppose that $(M=N)\mathrm{fr}(A_i)$.

Now, $A, \rho, \mathbf{tr}, i\models M=N$. Hence, $B, \rho, \mathbf{tr}', i\models M=N$.

That is, $(M=N)\mathrm{fr}(B_i)$. Thus $(M=N)\mathrm{fr}(A_i) \Rightarrow (M=N)\mathrm{fr}(B_i)$ and the converse similarly holds.

It follows that $A_i \approx_s B_i$. Therefore, $\mathbf{tr} \sim_t \mathbf{tr}'$.

By arbitrariness of tr, it immediately follows that $A \sqsubseteq_L B \Rightarrow A \subseteq_t B$.
\end{proof}
We can immediately conclude that the next theorem holds.
\begin{theorem}
$A \approx_t B \Leftrightarrow A \equiv_L B$.
\end{theorem}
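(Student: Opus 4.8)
The plan is to derive this biconditional as an immediate corollary of Theorem~\ref{correspondence}, exploiting the fact that both $\approx_t$ and $\equiv_L$ are defined symmetrically as the conjunction of an asymmetric relation with its converse: $A \approx_t B \Leftrightarrow (A \subseteq_t B \land B \subseteq_t A)$ and $A \equiv_L B \Leftrightarrow (A \sqsubseteq_L B \land B \sqsubseteq_L A)$. Since Theorem~\ref{correspondence} relates the one-sided relations $\subseteq_t$ and $\sqsubseteq_L$, the full equivalence should follow simply by applying each part of that theorem to both orderings of the pair $(A,B)$. No new induction or trace manipulation is needed here; all the substantive work has already been carried out in Theorem~\ref{correspondence} and Lemma~\ref{sta-log}.

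For the forward direction I would assume $A \approx_t B$. Since $\approx_t$ is symmetric, this gives both $A \approx_t B$ and $B \approx_t A$. Applying Theorem~\ref{correspondence}.\ref{tr-lg} to the pair $(A,B)$ yields $A \sqsubseteq_L B$, and applying the same part to $(B,A)$ yields $B \sqsubseteq_L A$. By the definition of $\equiv_L$ these two containments give $A \equiv_L B$. For the backward direction I would assume $A \equiv_L B$, which unfolds to $A \sqsubseteq_L B$ and $B \sqsubseteq_L A$. Applying Theorem~\ref{correspondence}.\ref{lg-tr} to $(A,B)$ gives $A \subseteq_t B$, and applying it to $(B,A)$ gives $B \subseteq_t A$; together these are exactly $A \approx_t B$.

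The only point requiring any care is bookkeeping: I must check that the free-variable set $\widetilde{x} = (\mathrm{fv}(A)\setminus \mathrm{dom}(A)) \cup (\mathrm{fv}(B)\setminus \mathrm{dom}(B))$ used in Theorem~\ref{correspondence} is itself symmetric in $A$ and $B$, so that the hypotheses of that theorem are genuinely available for both orderings $(A,B)$ and $(B,A)$; this is immediate from the union being commutative. Consequently there is no real obstacle at this stage — the theorem is a two-line assembly, and its content is entirely inherited from the logical-versus-trace correspondence established in Theorem~\ref{correspondence}, whose hard direction (\ref{lg-tr}) in turn relied on the expressibility of each finite trace by a modal formula of the form $F\langle \mu_n \rangle \cdots \langle \mu_1 \rangle \top$ together with the static-formula characterization of $\approx_s$ from Lemma~\ref{sta-log}.
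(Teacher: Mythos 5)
Your proof is correct and matches the paper exactly: the paper states this theorem as an immediate consequence of Theorem~\ref{correspondence}, and your two-line assembly (applying part~\ref{tr-lg} to both orderings of $(A,B)$ for the forward direction and part~\ref{lg-tr} to both orderings for the converse) is precisely that argument.
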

\section{Applications}
In this section, we abbreviate $\lnot F \lnot \varphi$ to $G\varphi$ ,$\lnot K \lnot \varphi$ to $P\varphi$, $\lnot (\lnot \varphi_1 \lor \lnot \varphi_2)$ to $\varphi_1 \land \varphi_2$, $\lnot \varphi_1 \lor \varphi_2$ to $\varphi_1 \rightarrow \varphi_2$ and $\lnot(M_1=M_2)$ to $M_1 \neq M_2$.

At first, we define minimal secrecy. This can be regarded as a generalization of minimal anonymity  \cite[Definition 2.3]{mano2010role}.
\begin{definition}
$x$ is minimally secret with respect to $\delta$ in $A$ iff $A \models G(\delta(x) \rightarrow P(\lnot \delta(x)))$.
\end{definition}
As a matter of fact, this is a very weak property. For instance, although $A$ and $B$ satisfy it, $A|B$ does not always satisfy. In addition, although $x$ is minimally secret with respect to nontrivial $\delta$, $x$ is not always minimally secret with respect to $\lnot \delta$.
\begin{example}
Let $\delta(z): z \neq a \land z \neq b$. Let
\begin{align*}
P &= \mathrm{if} ~x=a~ \mathrm{then} ~\overline{c}\langle s \rangle~ \mathrm{else} ~\overline{d}\langle s \rangle, \\
Q &= \mathrm{if} ~x=b~ \mathrm{then} ~\overline{c}\langle s \rangle~ \mathrm{else} ~\overline{d}\langle s \rangle.
\end{align*}
Then
\begin{align*}
P \models G(\delta(x) \rightarrow P(\lnot \delta(x))), \\
Q \models G(\delta(x) \rightarrow P(\lnot \delta(x))), \\
P \not\models G(\lnot\delta(x) \rightarrow P( \delta(x))), \\
P|Q \not\models  G(\delta(x) \rightarrow P(\lnot \delta(x))).
\end{align*}
In fact, when $\rho = [x \mapsto a]$, we can take the trace below:
\begin{equation*}
P\rho \overset{\nu y. \overline{c}\langle y \rangle}{\longrightarrow} \{s/y\}.
\end{equation*}
However, when $x$ is assigned to a term other than $a$ and $b$, there exist no traces whose actions correspond to the above.

When $\rho = [x \mapsto c]$, we can take the trace below:
\begin{equation*}
P\rho|Q\rho \overset{\nu y. \overline{d}\langle y \rangle}{\longrightarrow} Q\rho|\{s/y\} \overset{\nu z. \overline{d}\langle z \rangle}{\longrightarrow} \{s/y, s/z\}.
\end{equation*}
However, when $x$ is assigned to $a$ or $b$, there exist no traces whose actions correspond to the above.
\end{example}
Besides, although $x$ is minimally secret with respect to $\delta$ in $A$, $x$ is not always secret in $A^2$.
\begin{example}
Let $\delta(z): z=a$. Let
\begin{align*}
P &= \mathrm{if} ~x=a~ \mathrm{then} ~(\overline{a}\langle s \rangle + \overline{b}\langle s \rangle)~ \mathrm{else} ~\overline{a}\langle s \rangle, \\
Q &= \mathrm{if} ~x=b~ \mathrm{then} ~\overline{b}\langle s \rangle~ \mathrm{else} ~\overline{c}\langle s \rangle.
\end{align*}
Then
\begin{align*}
P+Q \models G(\delta(x) \rightarrow P(\lnot \delta(x))), \\
(P+Q)^2 \not\models G(\delta(x) \rightarrow P(\lnot \delta(x))).
\end{align*}
In fact, when $\rho = [x \mapsto a]$, we can take the trace below:
\begin{equation*}
(P+Q)^2\rho \overset{\nu y. \overline{b}\langle y \rangle}{\longrightarrow} (P+Q)\rho | \{s/y\} \overset{\nu z. \overline{c}\langle z \rangle}{\longrightarrow} \{s/y, s/z\}.
\end{equation*}
However, when $x$ is assigned to a term other than $a$, there exist no traces whose actions correspond to the above.
\end{example}
What is more, minimal secrecy is not preserved by restriction. Even though $\delta$ does not contain the restricted name, preservation does not hold in general.
\begin{example}
Let $\delta(z): z=a$. Let
\begin{align*}
P &= \mathrm{if} ~x=a~ \mathrm{then} ~\overline{b}\langle s \rangle, \\
Q &= \mathrm{if} ~x=n~ \mathrm{then} ~\overline{b}\langle s \rangle.
\end{align*}
Then
\begin{align*}
P + Q \models& G(\delta(x) \rightarrow P(\lnot \delta(x))), \\
\nu n. (P + Q) \not\models& G(\delta(x) \rightarrow P(\lnot \delta(x))).
\end{align*}
In fact, when $\rho = [x \mapsto a]$, we can take the trace below:
\begin{equation*}
\nu n. (P+Q)\rho \longrightarrow (\nu n. P)\rho \overset{\nu y. \overline{b}\langle y \rangle}{\longrightarrow} \{s/y\}.
\end{equation*}
However, when $x$ is assigned to a term other than $a$, there exist no traces whose actions correspond to the above. Note that $(\nu n. Q)\sigma$ can do nothing for any assignment $\sigma$ because of capture-avoiding.
\end{example}
Moreover, although $x$ is minimally secret with respect to $\delta_1$ and $\delta_2$ in $A$, $x$ is not always minimally secret with respect to $\delta_1 \lor \delta_2$. $\delta_2 = \lnot \delta_1$ is a counterexample.

On the other hand, $\land$ preserves minimal secrecy.
\begin{prop}
If $x$ is minimally secret with respect to $\delta_1$ and $\delta_2$ in $A$, then $x$ is minimally secret with respect to $\delta_1 \land \delta_2$ in $A$.
\end{prop}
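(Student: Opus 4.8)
The plan is to unfold all the abbreviations and reduce the claim to a pointwise statement about a single position of a single trace, where the hypothesis can be applied directly. Recall that $G\psi = \lnot F\lnot\psi$, so $A \models G\psi$ means $\forall \rho\, \forall \mathbf{tr}\in\mathrm{tr}(A\rho)\, \forall i \geq 0;\ A,\rho,\mathbf{tr},i \models \psi$; and $A,\rho,\mathbf{tr},i \models P\varphi$ (with $P\varphi = \lnot K\lnot\varphi$) unfolds to the existence of $\rho'$ and $\mathbf{tr}'\in\mathrm{tr}(A\rho')$ with $\mathbf{tr}[0,i]\sim_t\mathbf{tr}'[0,i]$ and $A,\rho',\mathbf{tr}',i \models \varphi$. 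After this unfolding, together with the clauses for $\rightarrow$ and $\land$, the goal $A \models G((\delta_1\land\delta_2)(x)\rightarrow P(\lnot(\delta_1\land\delta_2)(x)))$ becomes: for every $\rho$, every $\mathbf{tr}\in\mathrm{tr}(A\rho)$ and every $i$, if both $A,\rho,\mathbf{tr},i \models \delta_1(x)$ and $A,\rho,\mathbf{tr},i \models \delta_2(x)$, then there exist $\rho'$ and $\mathbf{tr}'\in\mathrm{tr}(A\rho')$ with $\mathbf{tr}[0,i]\sim_t\mathbf{tr}'[0,i]$ and $A,\rho',\mathbf{tr}',i \models \lnot(\delta_1\land\delta_2)(x)$.

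Next, I would fix such $\rho,\mathbf{tr},i$ for which both conjuncts hold. In particular $A,\rho,\mathbf{tr},i \models \delta_1(x)$. The hypothesis that $x$ is minimally secret with respect to $\delta_1$, namely $A \models G(\delta_1(x)\rightarrow P(\lnot\delta_1(x)))$, unfolds in the same way and therefore supplies an assignment $\rho'$ and a trace $\mathbf{tr}'\in\mathrm{tr}(A\rho')$ with $\mathbf{tr}[0,i]\sim_t\mathbf{tr}'[0,i]$ and $A,\rho',\mathbf{tr}',i \models \lnot\delta_1(x)$. This pair $\rho',\mathbf{tr}'$ is exactly the witness required above.

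Finally, I would observe that $\lnot\delta_1(x)$ is enough to falsify the conjunction. Since $\delta_1\land\delta_2$ abbreviates $\lnot(\lnot\delta_1\lor\lnot\delta_2)$, we have $\lnot(\delta_1\land\delta_2)(x)=\lnot\lnot(\lnot\delta_1(x)\lor\lnot\delta_2(x))$, and by the semantic clauses for $\lnot$ and $\lor$ this holds precisely when $\lnot\delta_1(x)\lor\lnot\delta_2(x)$ holds. The established $A,\rho',\mathbf{tr}',i\models\lnot\delta_1(x)$ provides the left disjunct, so $A,\rho',\mathbf{tr}',i\models\lnot(\delta_1\land\delta_2)(x)$. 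This completes the witness for $P(\lnot(\delta_1\land\delta_2)(x))$ and hence the whole implication.

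There is no substantive obstacle here; the proof is a direct unfolding. The only points requiring care are the bookkeeping of the domain/free-variable convention (the variable $x$ is covered by the assignments $\rho,\rho'$, so each $\delta_j(x)$ is evaluated as a closed static formula at position $i$) and the observation that only one of the two hypotheses is actually used. I would therefore remark that, by symmetry, the hypothesis for $\delta_2$ would serve equally well, so that minimal secrecy with respect to either component already yields minimal secrecy with respect to $\delta_1\land\delta_2$.
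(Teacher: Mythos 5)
Your proof is correct and follows essentially the same route as the paper's: fix $\rho$, $\mathbf{tr}$, $i$ with $\delta_1\land\delta_2$ holding, invoke minimal secrecy of $\delta_1$ alone to obtain the witness for $P(\lnot\delta_1)$, and weaken to $P(\lnot\delta_1\lor\lnot\delta_2)$. The paper likewise uses only the $\delta_1$ hypothesis, so your closing remark about symmetry is an accurate (if unstated in the paper) observation.
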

\begin{proof}
We arbitrarily take $\rho$, $\mathbf{tr} \in \mathrm{tr}(A\rho)$ and $i$. We suppose that $A, \rho, \mathbf{tr}, i\models \delta_1 \land \delta_2$. 

By definition, $A, \rho, \mathbf{tr}, i\models \delta_1$.

By assumption, $A, \rho, \mathbf{tr}, i\models P(\lnot \delta_1)$.

It immediately follows that $A, \rho, \mathbf{tr}, i\models P(\lnot \delta_1 \lor \lnot \delta_2)$.

By arbitrariness of $\rho, \mathbf{tr}$ and $i$, we conclude that $A \models G(\delta_1 \land \delta_2 \rightarrow P(\lnot \delta_1 \lor \lnot \delta_2))$.
\end{proof}
We define total secrecy. This can also be regarded as a generalization of total anonimity \cite[Definition 2.4]{mano2010role}.
\begin{definition}\label{total-sec}
$x$ is totally secret in $A(x, \widetilde{y})$ iff 
\begin{equation*}
\forall \delta(z, \widetilde{z}, \widetilde{w}); A(x, \widetilde{y})\models G(\delta (x, \widetilde{y}, \widetilde{w}) \rightarrow P(\lnot \delta (x, \widetilde{y}, \widetilde{w})))
\end{equation*}
where $\delta$ contains no variables other than ones in $\{z\} \cup \widetilde{z} \cup \widetilde{w}$ and satisfies that $\forall \widetilde{N} \forall \psi \exists M:ground\ \mathrm{s.t.}\ \psi \models \lnot \delta (M, \widetilde{N}, \widetilde{w})$. Besieds,  $|\widetilde{y}| = |\widetilde{z}|$ and $\widetilde{w} \cap (\{x\} \cup \widetilde{y}) = \emptyset$.
\end{definition}
\begin{prop}\label{to-sec}
$x$ is totally secret in $A(x, \widetilde{y}) \Leftrightarrow A(x, \widetilde{y}) \approx_t A(x', \widetilde{y})$.
\end{prop}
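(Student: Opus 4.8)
The plan is to peel the biconditional apart into the two implications and, in both, translate between the epistemic modality and trace inclusion. By the definition of $\approx_t$ on open processes, $A(x,\widetilde y)\approx_t A(x',\widetilde y)$ amounts to the statement that, with the context $\widetilde y$ held fixed, $A(a,\widetilde y)\approx_t A(a',\widetilde y)$ for every pair of ground terms $a,a'$; so throughout I treat $x$ as the only reassignable variable and $\widetilde y$ as fixed. The single fact I will lean on is that $\delta$ is a \emph{static} formula: by (the $\delta$-case of) Lemma~\ref{sta-log} its truth at a state $\mathbf{tr}[i]$ depends only on $\mathrm{fr}(\mathbf{tr}[i])$ up to $\approx_s$, so a statically equivalent state satisfies the same static formulas. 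The real content is that the clause $P\lnot\delta$ (i.e. $\lnot K\lnot\delta$) is exactly a demand for a statically-equivalent-prefix trace of an $x$-reinstantiated process, which is what trace inclusion supplies.

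For the implication from trace equivalence to total secrecy, I would fix a nontrivial $\delta$, an assignment $\rho$ with $\rho(x)=a$, a trace $\mathbf{tr}\in\mathrm{tr}(A\rho)$, a tuple $\widetilde c$ for $\widetilde w$, and an index $i$ with $A,\rho,\mathbf{tr},i\models\delta(x,\widetilde y,\widetilde c)$, and construct a witness for $P\lnot\delta$. The nontriviality clause, applied with $\widetilde N:=\rho\widetilde y$ and $\psi:=\mathbf{tr}[i]$, yields a ground term $M$ with $\mathbf{tr}[i]\models\lnot\delta(M,\rho\widetilde y,\widetilde c)$. Setting $\rho':=\rho[x\mapsto M]$, the hypothesis $A(a,\widetilde y)\approx_t A(M,\widetilde y)$ turns the prefix $\mathbf{tr}[0,i]$ into a statically equivalent prefix $\mathbf{tr}'[0,i]$ of a trace of $A\rho'$, so $\mathbf{tr}'[i]\approx_s\mathbf{tr}[i]$. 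As $\delta$ is static, $\lnot\delta$ transfers to $\mathbf{tr}'[i]$, and because $\rho'(x)=M$ with the context unchanged this reads $A,\rho',\mathbf{tr}',i\models\lnot\delta(x,\widetilde y,\widetilde c)$. Hence $A,\rho,\mathbf{tr},i\models P\lnot\delta$, and the global form $G(\delta\rightarrow P\lnot\delta)$ follows by the arbitrariness of $i$, $\mathbf{tr}$, $\rho$ and $\widetilde c$.

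For the converse, assuming total secrecy I would prove $A(a,\widetilde y)\subseteq_t A(a',\widetilde y)$ for arbitrary ground $a,a'$ (the reverse inclusion being symmetric, and the case where $a,a'$ are equated in the theory being trivial, so assume $a\neq a'$). The trick is to instantiate total secrecy with the single static formula $\delta(z,\widetilde z,\widetilde w):=(z\neq a')$; it is nontrivial since for any $\widetilde N,\psi$ the choice $M:=a'$ makes $\lnot\delta(a',\widetilde N,\widetilde w)=(a'=a')$ hold. Given any trace $\mathbf{tr}$ of $A\rho$ with $\rho(x)=a$ and $|\mathbf{tr}|=n$, the antecedent $\delta(x,\dots)=(x\neq a')$ holds at $\mathbf{tr}[n]$, so total secrecy delivers $A,\rho,\mathbf{tr},n\models P(x=a')$: there exist $\rho'$ and $\mathbf{tr}'\in\mathrm{tr}(A\rho')$ with $\mathbf{tr}[0,n]\sim_t\mathbf{tr}'[0,n]$ (i.e. $\mathbf{tr}\sim_t\mathbf{tr}'$) and $\rho'(x)=a'$. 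Since the accessible world reassigns only $x$, $\mathbf{tr}'$ is a trace of $A(a',\widetilde y)$ statically equivalent to $\mathbf{tr}$, which is precisely the required inclusion.

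The delicate point, and the step I expect to be the main obstacle, is the bookkeeping of exactly which variables the modality $K$ (hence $P$) is permitted to reassign. Both directions succeed only if the accessible worlds fix the context $\widetilde y$ and vary solely the secret $x$: in the converse direction, if $\rho'$ could also move $\widetilde y$, then $P(x=a')$ would merely produce a statically equivalent trace of $A(a',\widetilde y{}')$ for some \emph{other} context instantiation, which is strictly weaker than inclusion into $A(a',\widetilde y)$ with the original context. I would therefore be careful to set up the interpretation of $A(x,\widetilde y)$ so that $\widetilde y$ is part of the fixed structure (a domain/context parameter) rather than a freely reassignable input, and to verify that the nontriviality condition is calibrated exactly to guarantee the falsifying value $M$ of $x$ used in the first implication while keeping $\delta$ genuinely a property of $x$ relative to that context.
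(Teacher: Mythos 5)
Your direction from trace equivalence to total secrecy is essentially the paper's argument: use the nontriviality clause to produce a falsifying ground term $M$ for $\delta$ at $\mathrm{fr}(\mathbf{tr}[i])$, reinstantiate $x$ to $M$, pull a statically equivalent trace across $A(\rho(x),\rho(\widetilde y))\approx_t A(M,\rho(\widetilde y))$, and transfer the static formula $\lnot\delta$ along $\approx_s$ via Lemma~\ref{sta-log}. That half is in order.

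The other direction has a genuine gap, and it is exactly the one you flag at the end but do not actually repair. In the paper's semantics the modality $K$ (hence $P$) quantifies over \emph{all} assignments $\rho'$ of the free variables of $A$, so with $\delta(z,\widetilde z,\widetilde w):=(z\neq a')$ the witness for $P(x=a')$ is some $\rho'$ with $\rho'(x)=a'$ but with $\rho'(\widetilde y)$ unconstrained: you obtain a trace of $A(a',\widetilde N')$ statically equivalent to $\mathbf{tr}$ for some possibly different context instantiation $\widetilde N'$, which does not establish $A(a,\widetilde N)\subseteq_t A(a',\widetilde N)$. Your proposed remedy --- to ``set up the interpretation so that $\widetilde y$ is part of the fixed structure'' --- amounts to changing the semantics of $K$, which is not available: total secrecy is defined relative to the paper's $K$, which reassigns every free variable. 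The paper's repair stays inside the given semantics: it chooses $\delta(z,\widetilde z):=(z\neq M_2)\lor(\widetilde z\neq\widetilde N)$, whose negation forces the accessible world to satisfy both $x=M_2$ \emph{and} $\widetilde y=\widetilde N$; this is precisely why Definition~\ref{total-sec} equips $\delta$ with the extra argument tuple $\widetilde z$ with $|\widetilde z|=|\widetilde y|$. (The paper also runs this direction by contraposition --- from a trace of $A(M_1,\widetilde N)$ with no statically equivalent counterpart among traces of $A(M_2,\widetilde N)$ it derives $K\delta$ at the end of that trace, contradicting secrecy --- but that is a cosmetic difference; the essential ingredient missing from your argument is the disjunct $\widetilde z\neq\widetilde N$ that pins down the context in the accessible world.)
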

\begin{proof}
$\Rightarrow$) We suppose for the sake of contradiction that $A(x, \widetilde{y}) \not\approx_t A(x', \widetilde{y})$

Then, there exist $M_1, M_2$ and $\widetilde{N}$ which are closed and whose every name is not in bn($A$) such that $A(M_1, \widetilde{N}) \not\approx_t A(M_2, \widetilde{N})$.

We suppose that $A(M_1, \widetilde{N}) \not\subseteq_t A(M_2, \widetilde{N})$ without loss of generality. That is, there exists $\mathbf{tr} \in \mathrm{tr}(A(M_1, \widetilde{N}))$ such that any trace of $A(M_2, \widetilde{N})$ is not statically equivalent to \textbf{tr}.

Let $\delta(z, \widetilde{z})$ be $z \neq M_2 \lor \widetilde{z} \neq \widetilde{N}$.
\begin{equation*}
A(x, \widetilde{y}), (x \mapsto M_1, \widetilde{y} \mapsto \widetilde{N}), \mathbf{tr}, |\mathbf{tr}|\models \delta(x, \widetilde{y}) 
\end{equation*}
The below follows because of how to take $\mathbf{tr}$.
\begin{equation*}
A(x, \widetilde{y}), (x \mapsto M_1, \widetilde{y} \mapsto \widetilde{N}), \mathbf{tr}, |\mathbf{tr}|\models K\delta(x, \widetilde{y})
\end{equation*}
This contradicts total secrecy.

$\Leftarrow$) We arbitrarily take $\delta, \rho, \mathbf{tr}$ and i, where $\delta$ meets the demand of Definition \ref{total-sec}.

We suppose that $A(x, \widetilde{y}), \rho, \mathbf{tr}, i\models \delta(x, \widetilde{y}, \widetilde{w})$.

We take $M$ such that $\mathrm{fr}(\mathbf{tr}[i])\models \lnot \delta(M, \rho(\widetilde{y}), \widetilde{w})$.
Let $\rho'$ be
\[
\rho'(y) = \begin{cases}
M & (y = x') \\
\rho(y) & (otherwise).
\end{cases}
\]
By assumption, $A(\rho(x), \rho(\widetilde{y})) \approx_t A(M, \rho(\widetilde{y}))$.

Hence, there exists $\mathbf{tr}' \in \mathrm{tr}(A(M, \rho(\widetilde{y})))$ such that $\mathbf{tr} \sim_t \mathbf{tr}'$.

By Lemma \ref{sta-log}, $\mathbf{tr}[i] \equiv_s \mathbf{tr}'[i]$ and $\mathbf{tr}'[i], 0, \mathbf{tr}'[i], 0\models \lnot \delta(M, \rho(\widetilde{y}), \widetilde{w})$.

Thus, $A(x, \widetilde{y}), \rho', \mathbf{tr}', i\models \lnot \delta(M, \rho(\widetilde{y}), \widetilde{w})$.

Therefore, $A(x, \widetilde{y}), \rho, \mathbf{tr}, i\models P(\lnot \delta(x, \rho(\widetilde{y}), \widetilde{w}))$.

That is, $A(x, \widetilde{y}) \models G(\delta(x, \widetilde{y}, \widetilde{w}) \rightarrow P(\lnot \delta(x, \rho(\widetilde{y}), \widetilde{w})))$.
\end{proof}
\begin{theorem}
If $x$ is totally secret in $A(x, \widetilde{y})$, then $x$ is also totally secret in $E[A(x, \widetilde{y})]$ for all contexts $E[\_]$.
\end{theorem}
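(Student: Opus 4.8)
The plan is to reduce total secrecy to trace equivalence via Proposition \ref{to-sec} and then exploit the congruence of $\approx_t$ (Theorem \ref{main-cong}). First I would apply Proposition \ref{to-sec} to the hypothesis: since $x$ is totally secret in $A(x, \widetilde{y})$, we obtain $A(x, \widetilde{y}) \approx_t A(x', \widetilde{y})$, where $x'$ is a fresh variable chosen to occur neither in $A$ nor in the context $E[\_]$. Because $\approx_t$ is a congruence, applying $E[\_]$ to both sides preserves trace equivalence, so $E[A(x, \widetilde{y})] \approx_t E[A(x', \widetilde{y})]$.

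Next I would observe that, after $\alpha$-converting so that $E[\_]$ neither binds nor contains $x$ or $x'$ freely, the renamed process $E[A(x', \widetilde{y})]$ is literally $(E[A(x, \widetilde{y})])\{x'/x\}$: replacing the secret variable $x$ throughout the composed process yields exactly the context applied to the renamed component. Viewing $E[A(x, \widetilde{y})]$ as a process with distinguished free variable $x$ and with parameter list given by $\widetilde{y}$ together with any free variables contributed by $E$, the relation $E[A(x, \widetilde{y})] \approx_t (E[A(x, \widetilde{y})])\{x'/x\}$ is precisely the right-hand side of Proposition \ref{to-sec} instantiated at the process $E[A(x, \widetilde{y})]$. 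Applying that proposition in the converse direction then yields that $x$ is totally secret in $E[A(x, \widetilde{y})]$, which is the claim.

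The main obstacle is bookkeeping of variable occurrences rather than any deep argument. I must ensure that $x'$ is genuinely fresh for $E[\_]$, so that $E[A(x', \widetilde{y})]$ and $(E[A(x, \widetilde{y})])\{x'/x\}$ really coincide, and that the context neither captures nor binds $x$, $x'$, or the parameters $\widetilde{y}$ (the degenerate case where $E$ binds $x$ makes secrecy trivial and is excluded by keeping the secret free). These freshness requirements are arranged harmlessly by $\alpha$-conversion, since $\approx_t$ is invariant under it. The quantifier structure of Definition \ref{total-sec}, in particular the auxiliary parameters $\widetilde{z}, \widetilde{w}$, needs no separate treatment here, because $\approx_t$ on open processes already quantifies over all ground instantiations of the free variables, and that quantification is exactly what Proposition \ref{to-sec} packages into the equivalence $A(x, \widetilde{y}) \approx_t A(x', \widetilde{y})$.
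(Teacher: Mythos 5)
Your proposal is correct and is essentially the paper's own argument: the paper proves this theorem in one line by combining Proposition \ref{to-sec} (total secrecy $\Leftrightarrow$ $A(x,\widetilde{y}) \approx_t A(x',\widetilde{y})$) with the congruence of $\approx_t$ (Theorem \ref{main-cong}), exactly as you do. Your additional care about the freshness of $x'$ and the identification of $E[A(x',\widetilde{y})]$ with $(E[A(x,\widetilde{y})])\{x'/x\}$ is sound bookkeeping that the paper leaves implicit.
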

\begin{proof}
It immediately follows from Proposition \ref{to-sec} and Theorem \ref{main-cong}.
\end{proof}
Our framework can also introduce role interchangeability \cite[Subsection 2.2.3]{tsukada2016compositional} to the applied pi calculus.
\begin{definition}
Let $\mathrm{fv}(A)\setminus \mathrm{dom}(A) \subseteq \{x_1,...,x_p\}$, $J$ be a finite set and $I = \{1,...,p\}$.

$(x_i, \delta_k)$ is role interchangeable with respect to $\{\delta_j(z_j, \widetilde{y_j})\}_{j \in J}$ in $A$ iff
\begin{equation*}
A(x_1,...,x_p) \models G(\delta_k(x_i, \widetilde{y_k}) \rightarrow \bigwedge_{l \in I} \bigwedge_{j \in J}(\delta_j(x_l, \widetilde{y_j}) \rightarrow P(\delta_k(x_l, \widetilde{y_k}) \land \delta_j(x_i, \widetilde{y_j}))))
\end{equation*}
where $\widetilde{y_j} \cap \{x_1,...,x_p\} = \emptyset$.
\end{definition}
\begin{prop}\label{nscon}\leavevmode \par
$\forall \widetilde{M} \forall i \forall \mathrm{\mathbf{tr}\in tr}(A(M_1,...,M_p))$

$\exists \widetilde{N} \exists \mathrm{\mathbf{tr}' \in tr}(A(M_i, N_2,...,N_{i-1}, M_1, N_{i+1},..., N_p)) \mathrm{s.t.} \mathbf{tr} \sim_t \mathbf{tr}'$

$\Leftrightarrow (x_1, \delta_k)$ is role interchangeable with respect to $\{\delta_j\}_{j \in J}$ in $A$ for all $\{\delta_j\}_{j \in J}$ and $k$.
\end{prop}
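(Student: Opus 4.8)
The plan is to prove both implications by the method used for total secrecy in Proposition~\ref{to-sec}, with the possibility operator $P=\lnot K\lnot$ doing the work and Lemma~\ref{sta-log}---in its pointwise form along a trace, exactly as in the $\delta$-case of Theorem~\ref{correspondence}---transferring the truth of static formulas across statically equivalent frames. Throughout I write $\rho$ for the assignment with $\rho(x_t)=M_t$, and I recall that unfolding $P\psi$ at a point $m$ yields some $\rho'$ and some $\mathbf{tr}'\in\mathrm{tr}(A\rho')$ with $\mathbf{tr}[0,m]\sim_t\mathbf{tr}'[0,m]$ and $A,\rho',\mathbf{tr}',m\models\psi$. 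The swap of positions $1$ and $i$ in the trace condition corresponds to the index $l$ ranging over $I$ in the role-interchangeability formula, so below I identify the trace-condition's $i$ with the formula's $l$.

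For the forward direction (trace condition $\Rightarrow$ role interchangeability) I would fix an arbitrary family $\{\delta_j\}_{j\in J}$, an index $k$, and---to establish $A\models G(\dots)$---arbitrary $\rho$, $\mathbf{tr}\in\mathrm{tr}(A\rho)$, a point $m$, and indices $l,j$, assuming both antecedents $A,\rho,\mathbf{tr},m\models\delta_k(x_1,\widetilde{y_k})$ and $A,\rho,\mathbf{tr},m\models\delta_j(x_l,\widetilde{y_j})$. Applying the trace condition with $i:=l$ to $\mathbf{tr}$ produces $\widetilde N$ and a trace $\mathbf{tr}'$ of the swapped process $A(M_l,N_2,\dots,M_1,\dots,N_p)$ with $\mathbf{tr}\sim_t\mathbf{tr}'$; let $\rho'$ be the corresponding assignment ($\rho'(x_1)=M_l$, $\rho'(x_l)=M_1$, $\rho'(x_t)=N_t$ otherwise). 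Full static equivalence supplies the prefix $\mathbf{tr}[0,m]\sim_t\mathbf{tr}'[0,m]$ that $P$ demands, and since $\mathbf{tr}[m]\approx_s\mathbf{tr}'[m]$, Lemma~\ref{sta-log} transfers the two antecedents into $A,\rho',\mathbf{tr}',m\models\delta_k(x_l,\widetilde{y_k})\land\delta_j(x_1,\widetilde{y_j})$: indeed $\delta_k(x_l)$ under $\rho'$ reduces to $\delta_k(M_1)$, which held as $\delta_k(x_1)$ under $\rho$, and symmetrically for $\delta_j$, with the auxiliary variables $\widetilde{y_k},\widetilde{y_j}$ handled by static equivalence when they lie in the domain and by the universal convention when genuinely free. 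This $(\rho',\mathbf{tr}')$ witnesses $P(\delta_k(x_l,\widetilde{y_k})\land\delta_j(x_1,\widetilde{y_j}))$, which is the required conclusion.

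For the backward direction (role interchangeability $\Rightarrow$ trace condition) I would fix $\widetilde M$, an index $i$, and $\mathbf{tr}\in\mathrm{tr}(A(M_1,\dots,M_p))$, and instantiate the hypothesis at the two-element family consisting of $\delta_k(z):=(z=M_1)$ and $\delta_j(z):=(z=M_i)$, taking the distinguished role to be $\delta_k$ and evaluating the $G$-formula at the \emph{endpoint} $m:=|\mathbf{tr}|$. Both antecedents $x_1=M_1$ and $x_i=M_i$ hold under $\rho$ by reflexivity, so role interchangeability gives $A,\rho,\mathbf{tr},|\mathbf{tr}|\models P((x_i=M_1)\land(x_1=M_i))$. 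Unfolding $P$ yields $\rho'$ and $\mathbf{tr}'\in\mathrm{tr}(A\rho')$ with $\mathbf{tr}=\mathbf{tr}[0,|\mathbf{tr}|]\sim_t\mathbf{tr}'[0,|\mathbf{tr}|]$ and, since $M_1,M_i$ are variable-free, with $\rho'(x_i)=_\Sigma M_1$ and $\rho'(x_1)=_\Sigma M_i$ (equalities modulo the equational theory). Setting $N_t:=\rho'(x_t)$ for $t\neq 1,i$ and truncating $\mathbf{tr}'$ to length $|\mathbf{tr}|$ (an initial segment of a trace is again a trace) provides the desired $\mathbf{tr}'$ with $\mathbf{tr}\sim_t\mathbf{tr}'$ by transitivity of $\sim_t$.

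The main obstacle is the final step of the backward direction: the equalities delivered by $P$ pin the swapped values only \emph{up to $\Sigma$} ($\rho'(x_1)=_\Sigma M_i$, not syntactically), whereas the trace condition names $A(M_i,N_2,\dots,M_1,\dots,N_p)$ with $M_i,M_1$ placed literally. Bridging this gap requires the auxiliary fact that substituting $\Sigma$-equal ground terms for a free variable carries $\mathrm{tr}(A\rho')$ into $\mathrm{tr}(A(M_i,N_2,\dots,M_1,\dots,N_p))$ modulo $\sim_t$; this holds because transitions only ever test $\Sigma$-equality (in conditionals and in matching communication channels) and frames are compared up to $\Sigma$-equality by static equivalence, so a $\Sigma$-preserving change of a ground argument induces a $\sim_t$-preserving correspondence of traces. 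I would isolate this as a short lemma, after which transitivity of $\sim_t$ (immediate from transitivity of $\approx_s$) closes the argument. A secondary point requiring care, in both directions, is the bookkeeping of the auxiliary variables $\widetilde{y_j},\widetilde{w}$ and the universal-quantification convention for genuinely free variables, handled precisely as in the $\delta$-case of Theorem~\ref{correspondence}.
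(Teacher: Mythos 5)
Your proposal follows essentially the same route as the paper's proof: the forward direction uses the trace condition to produce the witness $(\rho',\mathbf{tr}')$ for $P$ and transfers the static antecedents across $\approx_s$ via Lemma~\ref{sta-log}, and the backward direction instantiates role interchangeability at the two equality predicates $\delta_1(z):z=M_1$, $\delta_i(z):z=M_i$, evaluates at the endpoint $|\mathbf{tr}|$, and unfolds $P$ to extract $\rho'$ and $\mathbf{tr}'$. The one point where you go beyond the paper is the final step of the backward direction: the paper simply asserts that $\models \delta_1(x_i)\land\delta_i(x_1)$ ``means'' $\rho'(x_1)=M_i$ and $\rho'(x_i)=M_1$, silently identifying $\Sigma$-equality (relative to the frame) with syntactic identity, whereas you correctly flag this gap and propose the needed auxiliary lemma that $\Sigma$-equal ground substitutions induce $\sim_t$-corresponding traces --- a legitimate refinement rather than a divergence.
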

\begin{proof}
$\Rightarrow$) We arbitrarily take $\widetilde{M}$ and $i$.

Let $\rho = [x_1 \mapsto M_1 ,..., x_p \mapsto M_p]$.

We arbitrarily take $\mathbf{tr} \in \mathrm{tr}(A\rho)$ and $t$.

We suppose that $A, \rho, \mathbf{tr}, t \models \delta_k(x_1, \widetilde{y_k}) \land \delta_j(x_i, \widetilde{y_j})$.

By assumption, $\exists \widetilde{N} \exists \mathrm{\mathbf{tr}' \in tr}(A(M_i, N_2,...,N_{i-1}, M_1, N_{i+1},..., N_p)) \mathrm{s.t.}\ \mathbf{tr} \sim_t \mathbf{tr}'$.

Let $\rho' = [x_1 \mapsto M_i, x_2 \mapsto N_2,...,x_{i-1} \mapsto N_{i-1}, x_i \mapsto M_i, x_{i+1} \mapsto M_{i+1},...,x_P \mapsto N_p]$.

By Lemma \ref{sta-log}, $\mathbf{tr}[t] \equiv_s \mathbf{tr}'[t]$ and $\mathbf{tr}'[t], 0, \mathbf{tr}'[t], 0\models \delta_k(\rho(x_1), \widetilde{y_k}) \land \delta_j(\rho(x_i), \widetilde{y_j})$.

Thus, $A, \rho', \mathbf{tr}', t \models  \delta_k(x_i, \widetilde{y_k}) \land \delta_j(x_1, \widetilde{y_j})$.

Hence, $A, \rho, \mathbf{tr}, t \models P(\delta_k(x_i, \widetilde{y_k}) \land \delta_j(x_1, \widetilde{y_j}))$.

That is, $A \models G(\delta_k(x_1, \widetilde{y_k}) \rightarrow \bigwedge_{i \in I} \bigwedge_{j \in J}(\delta_j(x_i, \widetilde{y_j}) \rightarrow P(\delta_k(x_i, \widetilde{y_k}) \land \delta_j(x_1, \widetilde{y_j}))))$.

$\Leftarrow$) We arbitrarily take $\widetilde{M}$ and $i$.

Let $\rho = [x_1 \mapsto M_1 ,..., x_p \mapsto M_p]$.

We arbitrarily take $\mathbf{tr} \in \mathrm{tr}(A\rho)$.

Let $\delta_1(z) : z=M_1$ and $\delta_i(z) : z=M_i$.

By assumption, $(x_1, \delta_1)$ is role interchangeable with respect to $\{ \delta_1, \delta_i \}$.

Hence, $A \models G(\delta_1(x_1) \rightarrow (\delta_i(x_i) \rightarrow P(\delta_1(x_i) \land \delta_i(x_1))))$.

Therefore, $A, \rho, \mathbf{tr}, |\mathbf{tr}| \models \delta_1(x_1) \rightarrow (\delta_i(x_i) \rightarrow P(\delta_1(x_i) \land \delta_i(x_1)))$.

Because $A, \rho, \mathbf{tr}, |\mathbf{tr}| \models \delta_1(x_1) \land \delta_i(x_i)$, it holds that $A, \rho, \mathbf{tr}, t \models P(\delta_1(x_i) \land \delta_i(x_1))$.

That is, $\exists \rho' \exists \mathbf{tr}' \in \mathrm{tr}(A\rho') \mathrm{s.t.}\ \mathbf{tr} \sim_t \mathbf{tr}' \land A, \rho', \mathbf{tr}', |\mathbf{tr}'| \models \delta_1(x_i) \land \delta_i(x_1)$.

This means that $\rho'(x_1)=M_i \land \rho'(x_i)=M_1$.

Let $N_j = \rho'(x_j) (j \neq 1, i)$. 

Then, $\exists \mathrm{\mathbf{tr}' \in tr}(A(M_i, N_2,...,N_{i-1}, M_1, N_{i+1},..., N_p)) \mathrm{s.t.}\ \mathbf{tr} \sim_t \mathbf{tr}'$ holds.
\end{proof}
\begin{corollary}
$\forall l \in I \setminus \{i\}; A(x_1,...,x_i,...,x_l,...,x_p) \approx_t A(x_1,...,x_l,...,x_i,...,x_p)$

$\Rightarrow (x_i, \delta_k)$ is role interchangeable with respect to $\{\delta_j\}_{j \in J}$ in $A$ for all $\{\delta_j\}_{j \in J}$ and $k$.
\end{corollary}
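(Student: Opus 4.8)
The plan is to deduce the corollary directly from Proposition~\ref{nscon}, which already characterizes role interchangeability of a single variable (simultaneously for all families $\{\delta_j\}_{j\in J}$ and all $k$) by a purely trace-theoretic condition, so that none of the epistemic-logic machinery has to be reopened. Since Proposition~\ref{nscon} is stated for $x_1$ with the swap between positions $1$ and the quantified index, I would first record its evident reindexed form for a fixed $i$: the pair $(x_i,\delta_k)$ is role interchangeable with respect to every $\{\delta_j\}_{j\in J}$ and every $k$ if and only if, for all ground $\widetilde{M}$, all $l\in I\setminus\{i\}$, and all $\mathbf{tr}\in\mathrm{tr}(A(M_1,\dots,M_p))$, there exist $\widetilde{N}$ and a trace $\mathbf{tr}'$ of the process obtained by placing $M_l$ in position $i$, $M_i$ in position $l$, and the $N_j$ in the remaining positions, with $\mathbf{tr}\sim_t\mathbf{tr}'$. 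This reindexing is merely a renaming of the universally quantified swap-index and carries no new content.

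Thus it suffices to verify this trace condition. First I would fix ground terms $\widetilde{M}$, an index $l\in I\setminus\{i\}$, and a trace $\mathbf{tr}\in\mathrm{tr}(A(M_1,\dots,M_p))$, and then make the simplest admissible choice of witnesses: $N_j=M_j$ for every $j\neq i,l$. With this choice the target process of the reindexed statement is exactly $A$ with the values $M_i$ and $M_l$ interchanged between positions $i$ and $l$; call it $A_{\mathrm{sw}}$.

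Next I would unfold the hypothesis. For the chosen $l$ it reads $A(x_1,\dots,x_i,\dots,x_l,\dots,x_p)\approx_t A(x_1,\dots,x_l,\dots,x_i,\dots,x_p)$, and by the definition of $\approx_t$ on open processes this means $A(\dots)\sigma\approx_t A(\dots)\sigma$ for every assignment $\sigma$ of the free variables to ground terms. Instantiating at $\sigma=[x_1\mapsto M_1,\dots,x_p\mapsto M_p]$ turns the left-hand process into $A(M_1,\dots,M_p)$ and the right-hand process, whose positions $i$ and $l$ carry $x_l$ and $x_i$, into precisely $A_{\mathrm{sw}}$. Hence $A(M_1,\dots,M_p)\approx_t A_{\mathrm{sw}}$, and in particular $A(M_1,\dots,M_p)\subseteq_t A_{\mathrm{sw}}$. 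Since $\mathbf{tr}\in\mathrm{tr}(A(M_1,\dots,M_p))$, the definition of $\subseteq_t$ supplies a trace $\mathbf{tr}'\in\mathrm{tr}(A_{\mathrm{sw}})$ with $\mathbf{tr}\sim_t\mathbf{tr}'$, which is exactly the required witness; invoking the reindexed Proposition~\ref{nscon} then yields role interchangeability of $(x_i,\delta_k)$ for all $\{\delta_j\}_{j\in J}$ and $k$.

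The argument is short precisely because Proposition~\ref{nscon} already absorbs the quantification over $\{\delta_j\}_{j\in J}$ and $k$; the only thing the corollary adds is the passage from the ground trace condition to the open-process trace equivalence. Accordingly, the one point demanding care — and the main, though mild, obstacle — is the bookkeeping of indices: matching the reindexed swap in Proposition~\ref{nscon} with the position swap realized by the hypothesis after instantiating $\sigma$, and confirming that the choice $N_j=M_j$ really produces the process $A_{\mathrm{sw}}$ appearing in that statement. Everything else is a direct application of the definitions of $\approx_t$ and $\subseteq_t$.
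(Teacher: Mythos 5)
Your proposal is correct and follows the route the paper clearly intends: the corollary is stated without proof as an immediate consequence of Proposition~\ref{nscon}, and you supply exactly the missing details — reindexing the proposition from $x_1$ to $x_i$, choosing the witnesses $N_j=M_j$, and instantiating the open-process trace equivalence at the ground assignment to obtain the swapped process and the required statically equivalent trace via $\subseteq_t$. No gaps.
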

%
%
%
%
%
%
When $p=2$, the converse holds. It immediately follows from Proposition \ref{nscon}.
\begin{prop}
$A(x_1, x_2) \approx_t A(x_2, x_1)$

$\Leftrightarrow (x_1, \delta_k)$ is role interchangeable with respect to $\{\delta_j\}_{j \in J}$ in $A$ for all $\{\delta_j\}_{j \in J}$ and $k$.
\end{prop}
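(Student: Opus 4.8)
The plan is to derive this proposition directly from Proposition \ref{nscon} by specializing to $p = 2$, exactly as the remark preceding the statement suggests. Proposition \ref{nscon} already equates role interchangeability of $(x_1, \delta_k)$ (for all families $\{\delta_j\}_{j \in J}$ and all $k$) with a purely semantic condition on traces, so the only work left is to show that this semantic condition collapses to $A(x_1, x_2) \approx_t A(x_2, x_1)$ when there are just two parameter variables. I would therefore not reprove anything about the epistemic logic; I would treat Proposition \ref{nscon} as a black box and manipulate its left-hand condition.

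First I would unfold the left-hand side of Proposition \ref{nscon} for $p = 2$, where $I = \{1, 2\}$ and $i$ ranges over $\{1, 2\}$. For $i = 1$ the target process is $A(M_1, M_2)$ itself, so the requirement is met by taking $\mathbf{tr}' = \mathbf{tr}$ and contributes nothing. For $i = 2$ there are no indices other than $1$ and $2$, so the tuple $(M_i, N_2, \dots, N_{i-1}, M_1, N_{i+1}, \dots, N_p)$ carries no auxiliary terms $\widetilde{N}$ and is simply $(M_2, M_1)$. Hence the whole left-hand side reduces to: for all closed $M_1, M_2$ and every $\mathbf{tr} \in \mathrm{tr}(A(M_1, M_2))$ there is $\mathbf{tr}' \in \mathrm{tr}(A(M_2, M_1))$ with $\mathbf{tr} \sim_t \mathbf{tr}'$, that is, $A(M_1, M_2) \subseteq_t A(M_2, M_1)$ for all closed $M_1, M_2$.

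Next I would close this inclusion into an equivalence using the arbitrariness of the pair. Since the inclusion holds for every closed pair, instantiating it at the swapped pair $(M_2, M_1)$ yields $A(M_2, M_1) \subseteq_t A(M_1, M_2)$, and therefore $A(M_1, M_2) \approx_t A(M_2, M_1)$ for all closed $M_1, M_2$. By the definition of $\approx_t$ on processes with free variables --- which quantifies over all ground assignments $\sigma$ of the free variables --- taking $\sigma = [x_1 \mapsto M_1, x_2 \mapsto M_2]$ shows that the statement ``$A(M_1, M_2) \approx_t A(M_2, M_1)$ for all closed $M_1, M_2$'' is literally $A(x_1, x_2) \approx_t A(x_2, x_1)$. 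Chaining this equivalence with Proposition \ref{nscon} then gives both directions of the claim at once.

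The argument is essentially bookkeeping, so the only point that needs care is the reduction in the second and third paragraphs: one must verify that for $p = 2$ the auxiliary variables $\widetilde{N}$ genuinely disappear rather than ranging nontrivially, that the $i = 1$ case is vacuous, and that the two one-sided inclusions arising from the symmetric roles of $M_1$ and $M_2$ really combine into $\approx_t$ under the ground-assignment definition of trace equivalence for open processes. I expect no substantive obstacle beyond matching these definitions precisely.
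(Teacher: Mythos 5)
Your proposal is correct and takes exactly the route the paper intends: the paper gives no proof beyond the remark that the claim ``immediately follows from Proposition \ref{nscon},'' and your specialization to $p=2$ (the $i=1$ case being trivial via $N_2 = M_2$, the $i=2$ case collapsing to $A(M_1,M_2)\subseteq_t A(M_2,M_1)$, and symmetry of the universal quantification over closed $M_1,M_2$ upgrading this to $\approx_t$ under the ground-assignment definition) is precisely the bookkeeping that remark leaves implicit.
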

%
%
%
%
%
%
%
%
We define openness. This can be regarded as a generalization of identity \cite[Subsection 2.2.5]{tsukada2016compositional}.
\begin{definition}
$x$ is open with respect to $\delta$ in $A$ iff
\begin{equation*}
\forall \rho \forall \mathbf{tr} \in \mathrm{tr_{max}}(A\rho); A, \rho, \mathbf{tr}, |\mathbf{tr}| \models \delta(x) \rightarrow K\delta(x)
\end{equation*}
\end{definition}
This is also very weak. As a matter of fact, openness is neither preserved by restriction nor parallel composition.
\begin{example}
Let $\delta(z): z=m$. Let
\begin{equation*}
P = \mathrm{if} ~x=m~ \mathrm{then} ~\overline{a}\langle n \rangle~ \mathrm{else} ~\overline{a}\langle f(n) \rangle.
\end{equation*}
Then $\forall \rho \forall \mathbf{tr} \in \mathrm{tr_{max}}(P\rho);$
\begin{equation*}
P, \rho, \mathbf{tr}, |\mathbf{tr}| \models \delta(x) \rightarrow K\delta(x).
\end{equation*}
However,
\begin{equation*}
\nu n.P, [x \mapsto m], \nu n.P\{m/x\} \overset{\nu y.\overline{a}\langle y \rangle}{\longrightarrow} \nu n.\{n/y\} , 1 \not\models \delta(x) \rightarrow K\delta(x).
\end{equation*}
\end{example}
\begin{example}
Let $\delta(z): z=m$. Let
\begin{align*}
P &= \mathrm{if} ~x=m~ \mathrm{then} ~\overline{a}\langle n \rangle~ \mathrm{else} ~\overline{b}\langle n \rangle, \\
Q &= \mathrm{if} ~x=m~ \mathrm{then} ~\overline{b}\langle n \rangle~ \mathrm{else} ~\overline{a}\langle n \rangle.
\end{align*}
We arbitrarily take $\rho, \mathbf{tr} \in \mathrm{tr}(P\rho)$ and $\mathbf{tr}' \in \mathrm{tr}(Q\rho)$. Then
\begin{align*}
P, \rho, \mathbf{tr}, |\mathbf{tr}| \models \delta(x) \rightarrow K\delta(x), \\
Q, \rho, \mathbf{tr}', |\mathbf{tr}'| \models \delta(x) \rightarrow K\delta(x), \\
P | Q, \sigma, \mathbf{tr}_0, 2 \not\models \delta(x) \rightarrow K\delta(x),
\end{align*}
\noindent
where $\sigma = [x \mapsto m]$ and $\mathbf{tr}_0$ is 
\begin{equation*}
\mathbf{tr}_0: P\sigma | Q\sigma \overset{\nu y.\overline{a}\langle y \rangle}{\longrightarrow} Q\sigma | \{n/y\} \overset{\nu z.\overline{b}\langle z \rangle}{\longrightarrow} \{n/y, n/z\}.
\end{equation*}
In fact, when $\sigma' \neq \sigma$, we can take the trace below:
\begin{equation*}
\mathbf{tr}_1: P\sigma' | Q\sigma' \overset{\nu y.\overline{a}\langle y \rangle}{\longrightarrow} P\sigma' | \{n/y\} \overset{\nu z.\overline{b}\langle z \rangle}{\longrightarrow} \{n/y, n/z\}.
\end{equation*}
Then, $\mathbf{tr}_0 \sim_t \mathbf{tr}_1$. That is, $x$ is not open with respect to $\delta$ in $P|Q$.
\end{example}
\begin{problem}\label{epi-pr}\leavevmode \par
\textbf{Input:} An extended process $A$, an assignment $\rho$, a trace $\mathbf{tr} \in \mathrm{tr}(A)$, a non-negative number $i \leq |\mathbf{tr}|$ and a formula $\varphi$.

\textbf{Question:} Does $A, \rho, \mathbf{tr}, i \models \varphi$ hold?
\end{problem}
This problem is trivially undecidable in general because word problem is undecidable in general.
\begin{prop}
Problem \ref{epi-pr} is undecidable in general, even when the word problem in $\Sigma$ is decidable.
\end{prop}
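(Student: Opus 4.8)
The plan is to reduce an undecidable reachability problem to Problem~\ref{epi-pr}, exploiting the quantification over all traces that is hidden inside the epistemic modality. Recall that $P\varphi = \lnot K \lnot \varphi$, so by the semantics of $K$ we have, for a closed plain process $A$ with empty initial frame and the trivial length-$0$ trace $\mathbf{tr}_0$ consisting of $A$ alone,
\begin{equation*}
A, \emptyset, \mathbf{tr}_0, 0 \models P\chi \quad\Longleftrightarrow\quad \exists \mathbf{tr}' \in \mathrm{tr}(A);\ A, \emptyset, \mathbf{tr}', 0 \models \chi,
\end{equation*}
since the side condition $\mathbf{tr}_0[0,0] \sim_t \mathbf{tr}'[0,0]$ is vacuously met (both frames are $0$) when $A$ is closed, so the auxiliary assignment $\rho'$ plays no role. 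Thus $P$ converts a property of a single given trace into an existential quantifier over \emph{all} traces of $A$; this is the source of undecidability, and crucially it is independent of the equational theory. I would therefore fix $\Sigma$ to carry the free (empty) equational theory, whose word problem is trivially decidable.

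Next I would use the fact that the plain pi calculus with replication is Turing complete: from any Turing machine $M$ one can computably build a closed plain process $A_M$ over $\Sigma$ that simulates $M$ on empty input and can perform a distinguished output, say $\overline{a}\langle s \rangle$ with label $\nu y.\overline{a}\langle y \rangle$, if and only if $M$ halts. Such an encoding uses only names, parallel composition and replication, so it stays within the free signature. I then take the instance of Problem~\ref{epi-pr} given by $A := A_M$, $\rho := \emptyset$, $\mathbf{tr} := \mathbf{tr}_0$, $i := 0$ and
\begin{equation*}
\varphi := P\,F\,\langle \nu y.\overline{a}\langle y \rangle \rangle_- \top .
\end{equation*}
Unwinding the semantics, $F\,\langle \nu y.\overline{a}\langle y \rangle \rangle_- \top$ holds at position $0$ of a trace $\mathbf{tr}'$ exactly when some step of $\mathbf{tr}'$ carries the label $\nu y.\overline{a}\langle y \rangle$, i.e. when $\mathbf{tr}'$ performs the halt action. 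Combined with the displayed equivalence for $P$, this yields $A_M, \emptyset, \mathbf{tr}_0, 0 \models \varphi$ iff some trace of $A_M$ performs $\overline{a}$, iff $M$ halts. As $M \mapsto (A_M, \emptyset, \mathbf{tr}_0, 0, \varphi)$ is computable and the halting problem is undecidable, Problem~\ref{epi-pr} is undecidable over a signature with decidable word problem.

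The main obstacle is the encoding step and its verification: producing $A_M$ and proving that the halt action is reachable precisely when $M$ halts, which amounts to a faithful operational simulation of $M$ in the calculus (for instance via a counter-machine encoding with replicated cell processes, or a tape encoding). One must also check the logical bookkeeping, namely that the side condition of $K$ at $i=0$ is genuinely vacuous for closed $A$, so that $P$ really ranges over all traces, and that $\langle \mu \rangle_-$ records exactly the intended label, so that $F\,\langle \mu \rangle_- \top$ expresses \emph{``$\mu$ occurs somewhere in the trace.''} Everything else is routine.
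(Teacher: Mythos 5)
Your proposal is correct in outline, but it takes a genuinely different route from the paper's. The paper keeps the process replication-free and the relevant trace essentially unique, and reduces the static-equivalence problem of Abadi and Cortier (undecidable even for a signature with decidable word problem) to the $\sim_t$ side condition hidden in the semantics of $K$: deciding $A,\rho,\mathbf{tr},i \models P(x\neq b)$ for the process $\mathrm{if}~x=b~\mathrm{then}~P~\mathrm{else}~Q$ amounts to deciding $\varphi \approx_s \psi$ for two given frames. You instead make static equivalence trivial (free theory) and push all of the undecidability into the existential quantification over traces performed by $P$, reducing the halting problem via Turing completeness of the calculus with replication. Your two bookkeeping observations are correct readings of the semantics: for a closed plain $A$ at $i=0$ the condition $\mathbf{tr}_0[0,0]\sim_t\mathbf{tr}'[0,0]$ is just $A\approx_s A$ and there is only the empty $\rho'$, so $P$ ranges over all of $\mathrm{tr}(A)$; and $F\langle\mu\rangle_-\top$ at position $0$ holds exactly when some step of the trace carries label $\mu$. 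Two caveats. First, the entire technical weight of your argument rests on the unproven construction of $A_M$ with ``the distinguished output is reachable iff $M$ halts''; this is standard for the pi calculus but is not a one-line citation in this thesis, whereas the paper's reduction needs only the already-cited undecidability of static equivalence. Second, your version proves something strictly stronger --- undecidability even when static equivalence is decidable --- which puts it in direct tension with the paper's later claim that Problem \ref{epi-pr} becomes decidable once static equivalence and the word problem are decidable and $\rho$ is a name assignment. The tension is resolved only because that decidability argument relies, via Proposition \ref{tr-de}, on the set of traces matching a given action sequence being finite, which fails in the presence of replication; so if you pursue your route you should state explicitly that your counterexample uses replication and therefore does not contradict the later results when those are read (as the final corollary does) as restricted to replication-free processes.
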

\begin{proof}
We again reduce the decision problem for static equivalence to Problem \ref{epi-pr}.

Let $\varphi$ and $\psi$ be frames. We assume that $\mathrm{dom}(\varphi) = \mathrm{dom}(\psi)$.

Let $\varphi = \nu \widetilde{n}. \{\fracinline{M_1}{x_1},...,\fracinline{M_l}{x_l}\}, \psi = \nu \widetilde{m}. \{\fracinline{N_1}{x_1},...,\fracinline{N_l}{x_l}\}$.

Let $P = \nu \widetilde{n}. \overline{a}\langle M_1 \rangle ... \overline{a}\langle M_l \rangle, Q = \nu \widetilde{m}. \overline{a}\langle N_1 \rangle ... \overline{a}\langle N_l \rangle$, where $a \notin \widetilde{n} \cup \widetilde{m}$.

Let $A = \mathrm{if} ~x=b~ \mathrm{then} ~P~ \mathrm{else} ~Q$ and $\rho : x \mapsto b$.

Let \textbf{tr} be $A\rho \overset{\nu x_1. \overline{a}\langle x_1 \rangle}{\longrightarrow} \nu \widetilde{n}.(\overline{a}\langle M_2 \rangle ... \overline{a}\langle M_l \rangle | \{\fracinline{M_1}{x_1}\}) \overset{\nu x_2. \overline{a}\langle x_2 \rangle}{\longrightarrow} ... \overset{\nu x_l. \overline{a}\langle x_l \rangle}{\longrightarrow} \nu \widetilde{n}. \{\fracinline{M_1}{x_1},...,\fracinline{M_l}{x_l}\}$.

We prove that $\varphi \approx_s \psi \Leftrightarrow A, \rho, \mathbf{tr}, i \models P(x \neq b)$.

\textbf{tr} is the only trace of $A$ which is statically equivalent to \textbf{tr}.

We arbitrarily take an assignment $\rho'$ which does not map $x$ to $b$.

A trace $\mathbf{tr}' \in \mathrm{tr}(A\rho')$ whose actions correspond to \textbf{tr} is the only below:
\begin{equation*}
A\rho' \overset{\nu x_1. \overline{a}\langle x_1 \rangle}{\Longrightarrow} \nu \widetilde{m}. (\overline{a}\langle N_2 \rangle ... \overline{a}\langle N_l \rangle | \{\fracinline{N_1}{x_1}\}) \overset{\nu x_2. \overline{a} \langle x_2 \rangle}{\longrightarrow} ... \overset{\nu x_l. \overline{a} \langle x_l \rangle}{\longrightarrow} \nu \widetilde{m}. \{\fracinline{N_1}{x_1},...,\fracinline{N_l}{x_l}\}.
\end{equation*}
Because $A, \rho', \mathbf{tr}', i \models x \neq b$, it holds that $A, \rho, \mathbf{tr}, i \models P(x \neq b) \Rightarrow \varphi \approx_s \psi$.

On the other hand, it holds that $\mathbf{tr} \sim_t \mathbf{tr}'$ if $\varphi \approx_s \psi$. This is similar to the proof of Proposition \ref{tr-ud}.

Hence, it follows that $\varphi \approx_s \psi \Rightarrow A, \rho, \mathbf{tr}, i \models P(x \neq b)$.
\end{proof}
\begin{lemma}\label{fin-ass}
Let $A$ be an extended process and $B$ a closed extended process.

Consider a modal formula $\varphi$, assignments $\rho,\rho'$ and an arbitrary $\mathbf{tr}\in \mathrm{tr}(B)$, together with a permutation $\pi$ which does not alter names in $A$, $\mathbf{tr}[0,i]$ and $\varphi$ for some $i\leq |\mathbf{tr}|$.

We assume that $\rho'$ is obtained by $\pi$ from $\rho$.

Let $X_i = \{\mathbf{tr}' \in \mathrm{tr}(A\rho) | \mathbf{tr}'[0,i] \sim_t \mathbf{tr}[0,i]\}$ and $Y_i = \{\mathbf{tr}' \in \mathrm{tr}(A\rho') | \mathbf{tr}'[0,i] \sim_t \mathbf{tr}[0,i]\}$.

Then there exists a bijection $f_i: X_i \rightarrow Y_i$ such that
\begin{equation}
A, \rho, \mathbf{tr}', i \models \varphi \Leftrightarrow A, \rho', f_i(\mathbf{tr}'), i \models \varphi.
\end{equation}
\end{lemma}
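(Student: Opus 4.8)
The plan is to deduce the lemma from a single \emph{equivariance principle}: name permutations are automorphisms of the entire semantic apparatus. Concretely, I would first prove that for every permutation $\pi$ of names, every extended process $D$, assignment $\sigma$, trace $\mathbf{s}\in\mathrm{tr}(D\sigma)$, index $j\le|\mathbf{s}|$ and modal formula $\psi$,
\begin{equation*}
D,\sigma,\mathbf{s},j\models\psi \iff \pi(D),\pi(\sigma),\pi(\mathbf{s}),j\models\pi(\psi),
\end{equation*}
where $\pi$ acts on names only, fixing every variable and hence fixing $\mathrm{dom}(\cdot)$ and $\mathrm{fv}(\cdot)$. The lemma is then immediate: since $\pi$ is assumed to fix every name of $A$ and of $\varphi$, we have $\pi(A)=A$ and $\pi(\varphi)=\varphi$, and since $\rho'$ is obtained from $\rho$ by $\pi$ we have $\pi(\rho)=\rho'$; instantiating the principle with $D=A$, $\sigma=\rho$, $\mathbf{s}=\mathbf{tr}'$ and $\psi=\varphi$ yields $A,\rho,\mathbf{tr}',i\models\varphi \iff A,\rho',\pi(\mathbf{tr}'),i\models\varphi$, so $f_i:=\pi|_{X_i}$ will be the required map.

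Before the principle I would record the underlying equivariance facts, all of which hold because the operational rules and the equational theory mention no particular name: $\pi$ commutes with $\longrightarrow$ and $\Longrightarrow$ (so it carries any trace of $D\sigma$ to a trace of $\pi(D\sigma)=\pi(D)\pi(\sigma)$, the closedness and $\mathrm{fv}(\mu)\subseteq\mathrm{dom}$ side-conditions being untouched as $\pi$ fixes variables), and $\pi$ preserves $(M=N)\varphi$, hence $\approx_s$, hence $\sim_t$. With these in hand I would check that $f_i$ lands in $Y_i$ and is bijective: $\pi(\mathbf{tr}')\in\mathrm{tr}(A\rho')$ by the previous sentence, and $\pi(\mathbf{tr}')[0,i]\sim_t\mathbf{tr}[0,i]$ because the first $i$ actions of $\mathbf{tr}'$ coincide with those of $\mathbf{tr}$ (by $\mathbf{tr}'[0,i]\sim_t\mathbf{tr}[0,i]$) and $\pi$ fixes the names of $\mathbf{tr}[0,i]$, while $\mathbf{tr}'[j]\approx_s\mathbf{tr}[j]$ gives $\pi(\mathbf{tr}')[j]\approx_s\pi(\mathbf{tr}[j])=\mathbf{tr}[j]$ for $j\le i$; bijectivity follows since $\pi^{-1}$ fixes the same names and supplies the inverse map $Y_i\to X_i$.

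The core of the argument is the induction on $\psi$ establishing the equivariance principle. The atomic cases ($\top$, $M_1=M_2$, $x\in\mathrm{dom}$) use the preservation facts above together with $\pi$ fixing variables; the Boolean connectives, $\langle\mu\rangle_-$ and $F$ are routine appeals to the induction hypothesis, using that $\pi$ carries the transition $\mathbf{s}[j-1]\overset{\mu}{\Longrightarrow}\mathbf{s}[j]$ to $\pi(\mathbf{s})[j-1]\overset{\pi(\mu)}{\Longrightarrow}\pi(\mathbf{s})[j]$. I expect the only genuinely delicate step to be the epistemic case $\psi=K\chi$, because $K$ quantifies over \emph{all} assignments $\tau$ and all traces $\mathbf{u}\in\mathrm{tr}(A\tau)$ with $\mathbf{s}[0,j]\sim_t\mathbf{u}[0,j]$. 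There I would argue back-and-forth: given a witness $(\tau',\mathbf{u}')$ for the permuted side, set $\tau=\pi^{-1}(\tau')$ and $\mathbf{u}=\pi^{-1}(\mathbf{u}')$; equivariance of $\Longrightarrow$ gives $\mathbf{u}\in\mathrm{tr}(A\tau)$, and applying $\pi^{-1}$ to $\pi(\mathbf{s})[0,j]\sim_t\mathbf{u}'[0,j]$ (using preservation of $\sim_t$) yields $\mathbf{s}[0,j]\sim_t\mathbf{u}[0,j]$, so the unpermuted hypothesis applies to $(\tau,\mathbf{u})$ and the induction hypothesis for $\chi$ transports the conclusion back to $(\tau',\mathbf{u}')$. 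The symmetric direction is identical with $\pi$ in place of $\pi^{-1}$. This closes the induction and hence the lemma.
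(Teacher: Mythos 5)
Your proposal is correct and takes essentially the same route as the paper: the paper also defines $f_i(\mathbf{tr}') = \pi(\mathbf{tr}')$ and concludes by the equivariance $A,\rho,\mathbf{tr}',i\models\varphi \Leftrightarrow A,\pi(\rho),\pi(\mathbf{tr}'),i\models\varphi$, which it asserts without further argument. Your version simply supplies the details the paper leaves implicit (the induction establishing equivariance, the check that $f_i$ lands in $Y_i$, and bijectivity via $\pi^{-1}$), so it is a strictly more complete rendering of the same proof.
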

\begin{proof}
We define $f_i$ as $f_i(\mathbf{tr}') = \pi (\mathbf{tr}')$.

By assumption,  $\rho' = \pi (\rho)$.
\begin{equation*}
A, \rho, \mathbf{tr}', i \models \varphi \Leftrightarrow A, \pi(\rho), \pi(\mathbf{tr}'), i \models \varphi \Leftrightarrow A, \rho', f_i(\mathbf{tr}'), i \models \varphi.
\end{equation*}
\end{proof}
\begin{lemma}\label{quo-sp}
Let $T$ be a finite set of variables. Let $S$ be a finite set of names.

We define an equivalence relation $\asymp_S$ between assignments which is a map from $T$ to names.
\begin{multline}
\asymp_S = \{(\rho, \rho') | \mathrm{There\ exists\ a\ permutation}\ \pi \ \mathrm{which\ does\ not\ change\ names\ in}\ S \ \\ \mathrm{such\ that}\ \rho' \mathrm{is\ obtained\ by}\ \pi \ \mathrm{from}\ \rho.\}
\end{multline}
Then the quotient space by $\asymp_S$ is finite.
\end{lemma}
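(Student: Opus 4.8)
Let $T$ be a finite set of variables and $S$ a finite set of names. Define $\asymp_S$ on assignments $\rho : T \to \text{names}$ by $\rho \asymp_S \rho'$ iff some permutation $\pi$ fixing every name in $S$ carries $\rho$ to $\rho'$. Claim: the quotient $\{\rho : T \to \text{names}\}/\asymp_S$ is finite.

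Let me think about how I would prove this.

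The key intuition: an assignment $\rho$ sends each of the finitely many variables in $T$ to a name. Up to a name-permutation fixing $S$, the only data that matters about $\rho$ is (a) for each variable, whether its image lies in $S$ (and if so, which element of $S$), and (b) the pattern of equalities among the images that lie outside $S$. A permutation fixing $S$ can send any name outside $S$ to any other name outside $S$, so the actual identity of the non-$S$ images is irrelevant; only the equality pattern survives.

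So the plan is to produce a complete invariant of the $\asymp_S$-class that takes finitely many values, which immediately bounds the number of classes.

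First I would set up notation: write $T = \{v_1, \dots, v_k\}$ with $k = |T|$, and let $N$ denote the (infinite) set of all names. Given $\rho$, I attach to it the following combinatorial data, which I claim is a complete $\asymp_S$-invariant. For each pair $(i,j)$ record whether $\rho(v_i) = \rho(v_j)$; for each $i$ record whether $\rho(v_i) \in S$; and for each $i$ with $\rho(v_i) \in S$ record the specific element $\rho(v_i)$. Equivalently, this is a function $c_\rho : T \to S \sqcup \{1, \dots, k\}$ where $c_\rho(v_i) = \rho(v_i)$ if $\rho(v_i) \in S$, and otherwise $c_\rho(v_i)$ is a "color" in $\{1,\dots,k\}$ encoding the equality class of $\rho(v_i)$ among the non-$S$ images (using at most $k$ colors since there are at most $k$ distinct images). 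The number of such functions is at most $(|S| + k)^k$, which is finite.

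Second, I would prove this invariant is complete: $c_\rho = c_{\rho'}$ implies $\rho \asymp_S \rho'$. Given two assignments with the same invariant, I construct the required permutation $\pi$ explicitly. On $S$ set $\pi$ to be the identity. For the non-$S$ images: since the equality patterns agree, there is a well-defined bijection from the set of distinct non-$S$ values of $\rho$ to the set of distinct non-$S$ values of $\rho'$ sending $\rho(v_i) \mapsto \rho'(v_i)$; this is consistent precisely because $\rho(v_i) = \rho(v_j) \Leftrightarrow \rho'(v_i) = \rho'(v_j)$. Both of these value-sets are finite subsets of $N \setminus S$, and any bijection between two finite subsets of the infinite set $N \setminus S$ extends to a permutation of $N \setminus S$ (hence of $N$, fixing $S$); I would note $N \setminus S$ is infinite so there is enough room to extend. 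That permutation $\pi$ satisfies $\pi \circ \rho = \rho'$, giving $\rho \asymp_S \rho'$. The converse direction (that $\asymp_S$-equivalent assignments share the invariant) is immediate since a permutation fixing $S$ preserves membership in $S$, preserves the identity of elements of $S$, and preserves equalities and inequalities among images.

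The only mildly delicate step is the extension of a finite partial bijection to a full permutation of names fixing $S$, which relies on $N \setminus S$ being infinite — harmless here, but it is the one place the argument uses a genuine hypothesis rather than pure bookkeeping, so I would state it carefully. Everything else is a finite counting bound. Concluding, the map $\rho \mapsto c_\rho$ descends to an injection from the quotient into a finite set, so the quotient is finite.
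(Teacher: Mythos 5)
Your proof is correct and follows essentially the same route as the paper's: both arguments show that the $\asymp_S$-class of $\rho$ is completely determined by the restriction of $\rho$ to the variables sent into $S$ together with the equality pattern among the images lying outside $S$, and that this invariant takes only finitely many values. Your write-up is merely more explicit about constructing the permutation witnessing completeness (and, if anything, slightly overcautious: extending a finite partial bijection on $N\setminus S$ to a permutation does not actually require $N\setminus S$ to be infinite, since the two finite value-sets have equal cardinality).
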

\begin{proof}
Let $T = \{x_1 ,..., x_l , y_1 ,..., y_{m_n}\}$ and
\begin{equation}
\rho = [x_1 \mapsto a_1 ,..., x_l \mapsto a_l , y_1 \mapsto b_1 ,..., y_{m_1} \mapsto b_1 , y_{m_1 +1} \mapsto b_2 ,..., y_{m_2} \mapsto b_2 ,..., y_{m_n} \mapsto b_n],
\end{equation}
where $a_1 ,..., a_l$ are names in $S$ and $b_1 ,..., b_n$ are names which is not in $S$. In addition, $i \neq j \Rightarrow b_i \neq b_j$, but we do not assume that $i \neq j \Rightarrow a_i \neq a_j$.

Then $\rho \asymp_S \rho' \Leftrightarrow [\rho(x_i) = \rho'(x_i)$ and $\rho'(y_i) = \pi(\rho(y_i))$ for some $\pi$].

Each equivalence class is determined by a division of variables such as described above. The number of such divisions is finite, so the quotient space is finite.
\end{proof}
We a bit change semantics of $K\varphi$.
\begin{equation*}
A, \rho, \mathbf{tr}, i \models K\varphi\ \mathrm{iff}\ \forall \rho'\ \forall \mathbf{tr}'\in \mathrm{tr}(A\rho'); \mathbf{tr}[0, i] \sim_t \mathbf{tr}'[0, i] \Rightarrow A, \rho', \mathbf{tr}', i \models \varphi,
\end{equation*}
where $\rho'$ is an assignment to names.
\begin{prop}
If static equivalence and word problem in $\Sigma$ are decidable and $\rho$ is restricted to an assignment to names, then Problem \ref{epi-pr} is decidable.
\end{prop}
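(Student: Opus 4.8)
The plan is to decide $A,\rho,\mathbf{tr},i\models\varphi$ by structural induction on the modal formula $\varphi$, using crucially that the input trace $\mathbf{tr}$ is \emph{finite}, so that every modality which refers back to $\mathbf{tr}$ itself ranges over only finitely many positions. For a static formula the only nontrivial case is $M_1=M_2$: its truth is $(M_1\rho=M_2\rho)\mathrm{fr}(\mathbf{tr}[i])$, and unfolding the definition of $(M=N)\varphi$ this amounts to checking membership of the variables in the (finite) domain and then an equality of ground terms modulo the equational theory, i.e.\ an instance of the word problem, which is decidable by hypothesis. The predicate $x\in\mathrm{dom}$ is a membership test in the finite set $\mathrm{dom}(\mathbf{tr}[i])$, and the Boolean cases are immediate from the induction hypothesis. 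For $\langle\mu\rangle_-\psi$ I would test whether the recorded step $\mathbf{tr}[i-1]\overset{\mu}{\Longrightarrow}\mathbf{tr}[i]$ actually carries the label $\mu$ and then recurse at $i-1$; for $F\psi$ I would test $\psi$ at each of the finitely many positions $i,\dots,|\mathbf{tr}|$. All of these are decidable once the induction hypothesis is available.

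The entire difficulty sits in the case $\varphi=K\psi$, where the semantics quantifies universally over all name-assignments $\rho'$ and all traces $\mathbf{tr}'\in\mathrm{tr}(A\rho')$ sharing the $i$-prefix with $\mathbf{tr}$. First I would collapse the assignment quantifier. Taking $S$ to be the finite set of names occurring in $A$, in $\mathbf{tr}[0,i]$ and in $\psi$, Lemma \ref{quo-sp} shows that the name-assignments modulo the permutation equivalence $\asymp_S$ form a \emph{finite} quotient, and Lemma \ref{fin-ass} shows that a permutation fixing $S$ induces a satisfaction-preserving bijection between the relevant trace sets of two $\asymp_S$-equivalent assignments. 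Hence it suffices to run the test for one representative $\rho'$ in each of the finitely many classes; this is precisely where the restriction of $\rho'$ to names, imposed just before the statement, is indispensable, since otherwise Lemma \ref{quo-sp} is unavailable.

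For each representative $\rho'$ the remaining task is to decide whether $\psi$ holds at position $i$ for \emph{every} trace $\mathbf{tr}'$ of $A\rho'$ whose $i$-prefix is statically equivalent to $\mathbf{tr}[0,i]$. Since $\mathbf{tr}[0,i]\sim_t\mathbf{tr}'[0,i]$ pins down the action sequence $\mu_1,\dots,\mu_i$, the argument behind Proposition \ref{tr-de} applies: the candidate prefixes $\mathbf{tr}'[0,i]$ are finite in number and effectively enumerable, and the test $\mathbf{tr}'[0,i]\sim_t\mathbf{tr}[0,i]$ is carried out step by step using decidability of static equivalence. Those subformulas of $\psi$ that look only at the present, at the past, or at sibling traces with the same $i$-prefix (the static, $\langle\mu\rangle_-$ and nested $K$ cases) are then determined by $\mathbf{tr}'[0,i]$ and are already decided; moreover a static $\delta$ is \emph{constant along the future}, because it mentions only variables of $\mathrm{dom}(\mathbf{tr}[i])$ whose active substitutions are frozen after step $i$ and are unaffected by the later restricted names, so every $F$ or $G$ applied directly to a static formula collapses to its value at $i$.

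The genuine obstacle, which I expect to be the hard part, is the behaviour of the future modality $F$ nested inside $K$ when its body is \emph{not} future-constant --- typically an $F\langle\mu\rangle_-\psi'$ guarded by $K$. For such formulas the universal quantification over the unbounded, infinitely branching continuations of a matching prefix is asking a reachability or invariance question about the continuations of $A\rho'$ from the prefix endpoints, and reducing this to a finite check is the crux of the whole proof. The plan is to establish a bound, depending only on $\psi$ and on the finitely many matching prefixes, on the length of continuation that must be inspected, so that after fixing a prefix the set of traces $\mathbf{tr}'$ relevant to $\psi$ becomes finite and each is checkable by the already-available static-equivalence and word-problem decision procedures. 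Producing that bound, equivalently showing that the configurations reachable from the matching-prefix endpoints are finitely representable as far as $\psi$ can observe, is the step on which the whole argument turns and the one I expect to require the most care.
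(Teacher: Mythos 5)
Your overall strategy is the same as the paper's: structural induction on $\varphi$, with the static cases reduced to the word problem and a finite domain test, the Boolean, $\langle\mu\rangle_-$ and $F$ cases handled by direct inspection of the finitely many positions of the given $\mathbf{tr}$, and the $K$ case reduced to finitely many representative assignments via Lemma \ref{quo-sp} and Lemma \ref{fin-ass}, with the candidate matching prefixes enumerated by the argument of Proposition \ref{tr-de}. Up to the last step of the $K$ case your proposal and the paper's proof coincide, and you have correctly identified all the auxiliary results the paper invokes.

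The gap is that you do not finish the $K\varphi$ case: you observe that when $\varphi$ contains a future modality, the truth of $A,\rho',\mathbf{tr}',i\models\varphi$ depends on the continuation of $\mathbf{tr}'$ beyond position $i$, and that the set of full traces extending a fixed matching prefix is infinite (continuations can be arbitrarily long and input actions can carry arbitrary ground terms), but you only announce a \emph{plan} to "establish a bound" on the continuations that must be inspected, without producing it. As written, the proposal therefore does not prove the statement; the quantitative step on which, by your own account, "the whole argument turns" is absent. For comparison, the paper disposes of this point in a single sentence, asserting that the number of traces $\mathbf{tr}'$ with $\mathbf{tr}'[0,i]\sim_t\mathbf{tr}[0,i]$ is finite and applying the induction hypothesis to each; that assertion is only defensible if one reads it as a claim about prefixes and additionally argues that satisfaction of $\varphi$ at position $i$ is determined by $\mathbf{tr}'[0,i]$ together with a decidable property of the continuations. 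Your diagnosis of where the difficulty lies is thus sharper than the paper's treatment, but to turn the proposal into a proof you must either supply the promised bound or prove a determinacy lemma showing that, for the formulas reachable in the induction, the value of $\varphi$ at $(\mathbf{tr}',i)$ depends only on finitely much effectively computable data about the extensions of $\mathbf{tr}'[0,i]$.
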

\begin{proof}
We prove by induction on $\varphi$.
\begin{enumerate}
\item $\top$.

$A, \rho, \mathbf{tr}, i \models \top$ always holds, so this is decidable.

\item $M_1 = M_2$.

By assumption, this is decidable.

\item $M \in \mathrm{dom}$.

This is trivially decidable.

\item $\delta_1 \lor \delta_2$.

By induction hypothesis, $\delta_1$ and $\delta_2$ are decidable, so $\delta_1 \lor \delta_2$ is so.

Moreover, $\lnot \delta, \varphi_1 \lor \varphi_2$ and $\lnot \varphi$ are similar.
%
%
%
%
%

\item $\langle \mu \rangle_- \varphi$.

Whether $\mathbf{tr}[i-1] \overset{\mu}{\longrightarrow} \mathbf{tr}[i]$ holds in \textbf{tr} is clearly decidable.

By induction hypothesis, $A, \rho, \mathbf{tr}, i-1 \models \varphi$ is decidable, so $\langle \mu \rangle_- \varphi$ is so.

\item $F \varphi$.

By induction hypothesis, $A, \rho, \mathbf{tr}, i \models \varphi ,..., A, \rho, \mathbf{tr}, |\mathbf{tr}| \models \varphi$ are decidable, so $F \varphi$ is so.

\item $K \varphi$.

We consider a quotient space by $\asymp_{\mathrm{n}(\mathbf{tr}[0,i])\cup \mathrm{n}(A)\cup \mathrm{n}(\varphi)}$. By Lemma \ref{quo-sp}, this space is finite. Here, $\mathrm{n}(\mathbf{tr}[0,i])$ is a set of names which appear in $\mathbf{tr}[0,i]$.

Each equivalence class is determined by a division of variables, so this is computable.

We arbitrarily take a representative $\rho'$ of each equivalence class. 

By Proposition \ref{tr-de}, whether there exists $\mathbf{tr}' \in \mathrm{tr}(A\rho')$ such that $\mathbf{tr}[0,i] \sim_t \mathbf{tr}'[0,i]$ is decidable.

If such traces exist, the number of them is finite.

We arbitrarily take $\mathbf{tr}' \in \mathrm{tr}(A\rho')$ such as $\mathbf{tr}'[0,i] \sim_t \mathbf{tr}[0,i]$.

By induction hypothesis, $A, \rho', \mathbf{tr}', i \models \varphi$ is decidable.

We arbitrarily take $\rho'' \asymp_{\mathrm{n}(\mathbf{tr}[0,i])\cup \mathrm{n}(A)\cup \mathrm{n}(\varphi)} \rho'$.

By Lemma \ref{fin-ass},
\begin{equation}
A, \rho', \mathbf{tr}', i \models \varphi \Leftrightarrow A, \rho'', f_i(\mathbf{tr}'), i \models \varphi
\end{equation}
for some $\pi$ such that $\rho'' = \pi(\rho')$. Moreover, $\mathbf{tr}'[0,i] \sim_t f_i(\mathbf{tr}')[0,i]$. This is because $f_i(\mathbf{tr}'[0,i]) \sim_t \mathbf{tr}[0,i]$.

This is why we only have to check whether a representative $\rho'$ of each equivalence class satisfies that $A, \rho', \mathbf{tr}', i \models \varphi$.

This procedure can always be completed because of the finiteness of the quotient space.

In other words, $A, \rho, \mathbf{tr}, i \models K\varphi$ is decidable.
\end{enumerate}
\end{proof}
It is proved in  \cite[Theorem 1]{abadi2006deciding} that static equivalence on a convergent subterm theory is decidable, so the corollary below immediately follows.
\begin{corollary}
If the equational theory on $\Sigma$ is a convergent subterm theory and $\rho$ is restricted to an assignment to names, then Problem \ref{epi-pr} is decidable.
\end{corollary}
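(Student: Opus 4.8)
The plan is to derive this corollary directly from the immediately preceding proposition, which asserts that Problem~\ref{epi-pr} is decidable whenever both static equivalence on $\Sigma$ and the word problem in $\Sigma$ are decidable and $\rho$ ranges over assignments to names. Since the hypothesis here already restricts $\rho$ to an assignment to names, it suffices to verify the two decidability premises of that proposition under the single assumption that the equational theory on $\Sigma$ is a convergent subterm theory.

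First I would dispatch the static-equivalence premise. This is immediate: by \cite[Theorem 1]{abadi2006deciding}, static equivalence on a convergent subterm theory is decidable, exactly as recorded in the remark preceding the statement. No further work is needed for this half.

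Second, I would establish that the word problem in $\Sigma$ is decidable. Here I would orient each defining equation left-to-right as a rewrite rule. Because a convergent subterm theory is defined by finitely many equations whose right-hand side is a \emph{proper} subterm of the left-hand side, each rewrite step strictly decreases the size of the term, so the rewrite system terminates; together with the convergence built into the definition, every term then has a computable unique normal form. Consequently $\Sigma \vdash M = N$ holds if and only if $M$ and $N$ share the same normal form, and this can be decided by normalizing both terms and comparing them. This settles the word-problem premise.

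Finally I would combine the two facts and invoke the preceding proposition to conclude that Problem~\ref{epi-pr} is decidable. I do not anticipate a genuine obstacle, since the corollary is a routine instantiation; the only point deserving a sentence of care is the justification that a convergent subterm theory yields a decidable word problem, which I would argue exactly as above via termination by the proper-subterm condition plus confluence.
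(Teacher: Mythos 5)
Your proposal is correct and follows essentially the same route as the paper: the corollary is an immediate instantiation of the preceding proposition, with the static-equivalence premise discharged by \cite[Theorem 1]{abadi2006deciding}. Your explicit verification of the word-problem premise (termination from the proper-subterm condition plus confluence, hence decidability via unique normal forms) is a welcome extra step that the paper leaves implicit, and it is argued correctly.
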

We again change semantics.
\begin{equation*}
A \models \varphi\ \mathrm{iff}\ \forall \rho\ \forall \mathbf{tr}\in \mathrm{tr}(A\rho); A, \rho, \mathbf{tr}, 0 \models \varphi,
\end{equation*}
where $\rho'$ is an assignment to names and inputted messages in \textbf{tr} are only variables.
\begin{problem}\label{epi-pr2} \leavevmode \par
\textbf{Input:} An extended process $A$ and a formula $\varphi$.

\textbf{Question:} Does $A \models \varphi$ hold?
\end{problem}
\begin{corollary}
If the equational theory on $\Sigma$ is a convergent subterm theory and $A$ contains no replications, Problem \ref{epi-pr2} is decidable. 
\end{corollary}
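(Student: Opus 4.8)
The plan is to reduce the doubly-universal statement $A \models \varphi$, which quantifies over all name-assignments $\rho$ and all traces $\mathbf{tr} \in \mathrm{tr}(A\rho)$, to a finite battery of instances of Problem \ref{epi-pr}, each of which is already known to be decidable under a convergent subterm theory by the corollary preceding Problem \ref{epi-pr2}.

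First I would show that for every fixed $\rho$ the set $\mathrm{tr}(A\rho)$ is finite and effectively enumerable. This is exactly where the two hypotheses enter: since $A$ carries no replication, each plain subprocess performs only boundedly many transitions, so every trace has bounded length; and since inputted messages are restricted to variables, each input action contributes only one symbolic successor rather than infinitely many instantiations. Together these make the labelled transition system finitely branching and of bounded depth, so $\mathrm{tr}(A\rho)$ is finite up to $\alpha$-equivalence and can be listed.

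Next I would collapse the outer quantifier over $\rho$ to finitely many representatives. Taking $S = \mathrm{n}(A) \cup \mathrm{n}(\varphi)$, Lemma \ref{quo-sp} gives that the assignments to names modulo $\asymp_S$ form a finite, computable quotient. The point is that the property $\forall \mathbf{tr} \in \mathrm{tr}(A\rho); A, \rho, \mathbf{tr}, 0 \models \varphi$ depends only on the $\asymp_S$-class of $\rho$: if $\rho' = \pi(\rho)$ for a permutation $\pi$ fixing $S$, then $\pi$ transports $A\rho$ to $A\rho'$ and induces a bijection $\mathrm{tr}(A\rho) \to \mathrm{tr}(A\rho')$, while the equivariance of satisfaction under name permutations fixing the names of $A$ and $\varphi$ (the identity established inside the proof of Lemma \ref{fin-ass}) gives $A, \rho, \mathbf{tr}, 0 \models \varphi \Leftrightarrow A, \rho', \pi(\mathbf{tr}), 0 \models \varphi$. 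Hence it suffices to test one representative per class.

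Finally, I would assemble the decision procedure: enumerate the finitely many representatives $\rho$; for each, enumerate the finite set $\mathrm{tr}(A\rho)$; and for each pair $(\rho, \mathbf{tr})$ decide $A, \rho, \mathbf{tr}, 0 \models \varphi$ by the corollary following Problem \ref{epi-pr}, which applies because a convergent subterm theory has decidable static equivalence (\cite[Theorem 1]{abadi2006deciding}) and word problem and $\rho$ ranges over names. Then $A \models \varphi$ holds iff all these finitely many checks succeed, so it is decidable. The main obstacle I expect is the justification of the symmetry reduction for the outer assignment quantifier --- in particular verifying that permuting the names not occurring in $A$ or $\varphi$ is both sound and complete at index $0$ --- together with making precise that the no-replication and variable-input hypotheses are exactly what guarantee finiteness of $\mathrm{tr}(A\rho)$; the per-instance model-checking, including the treatment of nested $K$ and $F$, is then inherited wholesale from the previous corollary.
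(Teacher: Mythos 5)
The paper states this as an unproved corollary, and your reduction is exactly the intended argument: collapse the quantifier over $\rho$ to finitely many representatives via Lemma \ref{quo-sp} together with the permutation-equivariance identity from Lemma \ref{fin-ass}, use the no-replication and variable-input hypotheses to bound the traces of each $A\rho$, and discharge each resulting instance with the preceding corollary on Problem \ref{epi-pr}. The one point to tighten is your claim that $\mathrm{tr}(A\rho)$ is finite up to $\alpha$-equivalence: it is not, since output labels carry arbitrary fresh alias variables and, by Lemma \ref{change-label}, an input or output channel in a label may be replaced by any term equal to it modulo the current frame, so the enumeration must be taken up to a coarser equivalence on traces and you must then check that this equivalence preserves satisfaction of the fixed $\varphi$ (which only discriminates the finitely many labels occurring under its $\langle \mu \rangle_-$ modalities). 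You flag this yourself as the residual obstacle, so it is a matter of detail rather than a wrong approach.
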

\chapter{Related Work}
\section{Process Algebras}
Process algebras are special labelled transition systems. Many kinds of them are proposed. Calculus of Communicating Systems (CCS) \cite{Milner:1989:CC:63446} is one of the origins. In CCS, a process can do interaction and silent action. This is a difference between an automaton and a process. However, communication topology cannot change because a process cannot exchange messages.

The pi calculus \cite{Milner:1992:CMP:162037.162038,Milner:1992:CMP:162037.162039,Milner:1999:CMS:329902} is an extension of CCS. The pi calculus processes can pass and create names, so this model handles mobility, that is, change of topology of process connection. The pi calculus is simple, but it is powerful enough to simulate the lambda calculus.

The spi calculus \cite{abadi1999calculus} is an extension of the pi calculus. It enables us to handle symmetric key encryption based on the Dolev-Yao model \cite{1056650}. In the spi calculus, two cyphertexts obtained by encrypting different plaintexts are indistinguishable unless an observer gets a secret key. Abadi and Gordon proved authenticity and secrecy of Wide Mouthed Frog protocol to use the spi calculus.

We focused on the applied pi calculus \cite{DBLP:journals/corr/AbadiBF16}. It is also an extension of the pi calculus. It can handle an arbitrary algebra, so in particular, it can handle public key encryption. Proverif \cite{blanchet2010proverif} is a tool to check bisimilarity and it is used for formal methods. In this calculus, a process can send not only names but also terms via an alias variable. We can handle not only merely secrecy but also stricter properties such as non-malleability because of these features.
%
%
%
%
\section{Logics}
Combining logic and a labelled transition system is well investigated.

Hennessy-Milner logic \cite{10.1007/3-540-10003-2_79} is an origin of logic about the behaviour of labelled transition systems. It is a modal logic characterizing observational congruence on labelled transition systems. Namely, systems satisfying the same modal formulas are observational equivalent when these labelled transition systems are image-finite.


Logical characterizations of strong static equivalence and labelled bisimulation in the applied pi calculus are provided in  \cite{pedersen2006logics}. This logic resembles Hennessy-Milner logic, but it also states relations between terms. Strong static equivalence demands that two frames enable a term to reduce same times and it is strictly stronger than ordinal static equivalence. In the report, only convergent subterm theories are considered. 

An epistemic logic for the applied pi calculus was already developed by Chadha, Delaune and Kremer \cite{chadha2009epistemic}. They defined formulas $\mathrm{\textbf{Has}}$ and $\widehat{\mathrm{\textbf{evt}}}$. $\mathrm{\textbf{Has}}$ directly represents an attacker's knowledge, and $\widehat{\mathrm{\textbf{evt}}}$ means that a particular event had occurred. They also suggested that trace equivalence is more suitable than labelled bisimilarity when we handle privacy.  On the other hand, a correspondent relation between logic and behaviour of processes was not provided. As a matter of fact, $\alpha$-equivalent processes do not always satisfy the same formulas.

Knight, Mardare and Panangaden \cite{knight2012combining} provided an epistemic logic for a labelled transition system. This framework is based on Hennessy-Milner logic, and it handles multiple agent's knowledge. Knowledge is based on a sequence of transitions, which is called a history.  They also proved weak completeness to construct Fischer-Ladner closure. However, compositionality was not discussed.
\section{Formal Approaches}
Formal methods enable us to prove that a security protocol satisfies desired properties. Many methods were developed.

Nowadays, multiple cryptographic protocols are often composed (e.g. electronic voting). Universally composability \cite{canetti2001universally} ensures that security properties are preserved by embedding in other protocols. In this framework, a program is expressed as a probabilistic polynomial time interactive Turing machine. In addition, security properties are represented as indistinguishability between an actual protocol and an ideal protocol. 

Dolev-Yao model assumes that it is impossible to break a cypher without a key, but of course, real encryption algorithms are not perfect. That is, soundness of Dolev-Yao model is not trivial. However, it was proved in  \cite{micciancio2004soundness} that Dolev-Yao model is sound if encryption is IND-CCA secure.

Hoare logic is used to prove the correctness of a program. Variants of Hoare logic are often used to prove safeness of a security protocol. 

The safeness of encryption is often proved using games. Namely, it is proved that the probability is almost $\frac{1}{2}$ that a probabilistic polynomial time attacker decodes a cryptogram of $1$ bit. In this regard, the game is transformed into a trivial game. Probabilistic Hoare logic ensures that transformation of games is reasonable. For instance, the security of ElGamal cryptosystem was proved by this logic in  \cite{corin2005probabilistic}, but the proof was not perfectly formal.

Protocol composition logic \cite{datta2007protocol} is also a variant of Hoare logic. It enables us to modularize a proof for a protocol because it is possible sequentially to compose Hoare triples. In the paper, ISO-9798-3 protocol was divided into two parts, and it was proved that the protocol is safe composing proofs that each part is safe.

Privacy and anonymity are also well studied and formulated.

Three privacy-type properties, vote-privacy, receipt-freeness and coercion-resistance, of electronic voting were formulated in  \cite{delaune2009verifying} using the applied pi calculus. It is noteworthy that indistinguishability is represented by labelled bisimilarity.  Three voting protocols are considered in the paper. For instance, it was proved that FOO92 \cite{fujioka1992practical} satisfies privacy and does not satisfy receipt-free.

(Strong) Unlinkability and (strong) anonymity were formulated by the applied pi calculus in  \cite{arapinis2010analysing}. These are defined for special forms of processes called well-formed $p$-party protocols. Unlinkability was formulated using traces, while strong unlinkability was formulated using labelled bisimilarity. Anonymity was similar. It is proved that unlinkability is not stronger than anonymity and vice versa. Namely, they are independent notions.

Tsukada, Sakurada, Mano and Manabe \cite{tsukada2016compositional} studied sequential and parallel compositionality of security notions to use an epistemic logic for a multiagent system. They proved that neither anonymity nor privacy is generally preserved by composition. They also provided a sufficient condition for preservation. In addition, role interchangeability implies privacy and anonymity under the suitable assumption. Privacy of FOO92 was also proved in the multi-agent system. However, this word ``parallel'' merely means that the same agent acts two actions. In other words, concurrency was not considered.
\chapter{Conclusion}
\section{Summary}
In this thesis, we proved that trace equivalence is a congruence and  provided an epistemic logic for the applied pi calculus to handle secrecy. We defined concurrent normal traces to use partial normal forms for analyzing transitions of parallel composed processes. In addition, we formulated secrecy, role-interchangeability, and openness to generalize privacy, role-interchangeability, and onymity regarding multiagent systems. Moreover, we associated trace equivalence with total secrecy.
Minimal secrecy is not preserved by an application of context, but total secrecy is preserved because of congruency of trace equivalence. We also give a sufficient condition for role-interchangeability.

The definition of concurrent normal traces is very complex. This is caused by difficulty in handling bound names.

We conclude that trace equivalence is a suitable notion to express indistinguishability in the view of security in the presence of a non-adaptive active adversary.

\section{Future Work}
In this thesis, we focused on trace equivalence. Many interesting problems remain.

It is proved in  \cite[Proposition 5]{abadi2006deciding} that even static equivalence is not decidable. On the other hand,  \cite[Theorem 1]{abadi2006deciding} proves that static equivalence is decidable in polynomial time for convergent subterm theories. We intend to study conditions to make trace equivalence decidable.

Secondly, our epistemic logic states only an adversary's knowledge. We intend to construct a logic for a process's knowledge. It will bridge a gap between multiagent systems and process calculi.

Thirdly, formalizations of other security properties such as non-malleability and unlinkability are also next topics.

Finally, what logic is suitable for security in the presence of an adaptive attacker is still open. 

\newpage


\bibliographystyle{plain}
\bibliography{reference}

\begin{thebibliography}{10}

\bibitem{DBLP:journals/corr/AbadiBF16}
Mart{\'\i}n Abadi, Bruno Blanchet, and C{\'e}dric Fournet.
\newblock The applied pi calculus: Mobile values, new names, and secure
  communication.
\newblock {\em arXiv preprint arXiv:1609.03003}, 2016.

\bibitem{abadi2006deciding}
Mart{\'\i}n Abadi and V{\'e}ronique Cortier.
\newblock Deciding knowledge in security protocols under equational theories.
\newblock {\em Theoretical Computer Science}, 367(1-2):2--32, 2006.

\bibitem{abadi1999calculus}
Mart{\'\i}n Abadi and Andrew~D Gordon.
\newblock A calculus for cryptographic protocols: The spi calculus.
\newblock {\em Information and computation}, 148(1):1--70, 1999.

\bibitem{arapinis2010analysing}
Myrto Arapinis, Tom Chothia, Eike Ritter, and Mark Ryan.
\newblock Analysing unlinkability and anonymity using the applied pi calculus.
\newblock In {\em Computer Security Foundations Symposium (CSF), 2010 23rd
  IEEE}, pages 107--121. IEEE, 2010.

\bibitem{blanchet2010proverif}
Bruno Blanchet, V~Cheval, X~Allamigeon, and B~Smyth.
\newblock Proverif: Cryptographic protocol verifier in the formal model.
\newblock {\em URL http://prosecco. gforge. inria.
  fr/personal/bblanche/proverif}, 2010.

\bibitem{canetti2001universally}
Ran Canetti.
\newblock Universally composable security: A new paradigm for cryptographic
  protocols.
\newblock In {\em Foundations of Computer Science, 2001. Proceedings. 42nd IEEE
  Symposium on}, pages 136--145. IEEE, 2001.

\bibitem{chadha2009epistemic}
Rohit Chadha, St{\'e}phanie Delaune, and Steve Kremer.
\newblock Epistemic logic for the applied pi calculus.
\newblock In {\em Formal Techniques for Distributed Systems}, pages 182--197.
  Springer, 2009.

\bibitem{corin2005probabilistic}
Ricardo Corin and Jerry den Hartog.
\newblock A probabilistic hoare-style logic for game-based cryptographic proofs
  (extended version).
\newblock {\em IACR Cryptology ePrint Archive}, 2005:467, 2005.

\bibitem{datta2007protocol}
Anupam Datta, Ante Derek, John~C Mitchell, and Arnab Roy.
\newblock Protocol composition logic ({PCL}).
\newblock {\em Electronic Notes in Theoretical Computer Science}, 172:311--358,
  2007.

\bibitem{delaune2009verifying}
St{\'e}phanie Delaune, Steve Kremer, and Mark Ryan.
\newblock Verifying privacy-type properties of electronic voting protocols.
\newblock {\em Journal of Computer Security}, 17(4):435--487, 2009.

\bibitem{1056650}
Danny Dolev and Andrew~C. Yao.
\newblock On the security of public key protocols.
\newblock {\em IEEE Transactions on Information Theory}, 29(2):198--208, March
  1983.

\bibitem{fujioka1992practical}
Atsushi Fujioka, Tatsuaki Okamoto, and Kazuo Ohta.
\newblock A practical secret voting scheme for large scale elections.
\newblock In {\em International Workshop on the Theory and Application of
  Cryptographic Techniques}, pages 244--251. Springer, 1992.

\bibitem{10.1007/3-540-10003-2_79}
Matthew Hennessy and Robin Milner.
\newblock On observing nondeterminism and concurrency.
\newblock In Jaco de~Bakker and Jan van Leeuwen, editors, {\em Automata,
  Languages and Programming}, pages 299--309, Berlin, Heidelberg, 1980.
  Springer Berlin Heidelberg.

\bibitem{knight2012combining}
Sophia Knight, Radu Mardare, and Prakash Panangaden.
\newblock Combining epistemic logic and hennessy-milner logic.
\newblock In {\em Logic and Program Semantics}, pages 219--243. Springer, 2012.

\bibitem{mano2010role}
Ken Mano, Yoshinobu Kawabe, Hideki Sakurada, and Yasuyuki Tsukada.
\newblock Role interchange for anonymity and privacy of voting.
\newblock {\em Journal of Logic and Computation}, 20(6):1251--1288, 2010.

\bibitem{micciancio2004soundness}
Daniele Micciancio and Bogdan Warinschi.
\newblock Soundness of formal encryption in the presence of active adversaries.
\newblock In {\em Theory of Cryptography Conference}, pages 133--151. Springer,
  2004.

\bibitem{Milner:1989:CC:63446}
Robin Milner.
\newblock {\em Communication and Concurrency}.
\newblock Prentice-Hall, Inc., Upper Saddle River, NJ, USA, 1989.

\bibitem{Milner:1999:CMS:329902}
Robin Milner.
\newblock {\em Communicating and Mobile Systems: The $\pi$-calculus}.
\newblock Cambridge University Press, New York, NY, USA, 1999.

\bibitem{Milner:1992:CMP:162037.162038}
Robin Milner, Joachim Parrow, and David Walker.
\newblock A calculus of mobile processes, i.
\newblock {\em Inf. Comput.}, 100(1):1--40, 1992.

\bibitem{Milner:1992:CMP:162037.162039}
Robin Milner, Joachim Parrow, and David Walker.
\newblock A calculus of mobile processes, ii.
\newblock {\em Inf. Comput.}, 100(1):41--77, 1992.

\bibitem{pedersen2006logics}
Michael~David Pedersen.
\newblock Logics for the applied pi calculus.
\newblock {\em BRICS Report Series}, 13(19), 2006.

\bibitem{tsukada2016compositional}
Yasuyuki Tsukada, Hideki Sakurada, Ken Mano, and Yoshifumi Manabe.
\newblock On compositional reasoning about anonymity and privacy in epistemic
  logic.
\newblock {\em Annals of Mathematics and Artificial Intelligence},
  78(2):101--129, 2016.

\end{thebibliography}

\renewcommand{\thesection}{\Alph{section}}
\appendix
\chapter{Lemmas for Chapter 3}
We prove lemmas used in chapter \ref{cong}.

\begin{lemma}[Lemma \ref{drop-sigma}]
$P \overset{\alpha \sigma}{\longrightarrow} A \Rightarrow \sigma | P \overset{\alpha}{\longrightarrow} \sigma | A$.
\end{lemma}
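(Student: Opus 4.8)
The plan is to argue by induction on the derivation of $P \overset{\alpha\sigma}{\longrightarrow} A$, treating the labelled action $\alpha$ and the active substitution $\sigma$ as fixed. Since $\alpha$ is visible, the only prefixes that can fire are an input or an output, so the derivation is built from the two prefix axioms together with the rules for restriction, parallel composition and structural equivalence. A conditional or a sum must first take a silent step and hence contributes no visible transition, while a replication is handled through $!P_0 \equiv P_0 \,|\, !P_0$. Throughout I rely on the normalisation guaranteed by the Proposition on cycle-free substitutions, namely that the images of $\sigma$ contain no variable of $\mathrm{dom}(\sigma)$; this makes $\sigma$ idempotent on the terms that occur in $\alpha$ and in the firing prefix.

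The structural and compound cases are routine and merely reshuffle $\sigma$. If the last rule is structural, with $P \equiv P'$, $P' \overset{\alpha\sigma}{\longrightarrow} B'$ and $B' \equiv A$, the induction hypothesis gives $\sigma \,|\, P' \overset{\alpha}{\longrightarrow} \sigma \,|\, B'$; since $\equiv$ is a congruence for parallel composition, $\sigma \,|\, P \equiv \sigma \,|\, P'$ and $\sigma \,|\, B' \equiv \sigma \,|\, A$, and closure of $\overset{\alpha}{\longrightarrow}$ under $\equiv$ finishes the case. For $P = \nu u.P_0$ I would push the restriction outward, $\sigma \,|\, \nu u.P_0 \equiv \nu u.(\sigma \,|\, P_0)$ after an $\alpha$-renaming making $u$ fresh for $\sigma$, apply the induction hypothesis to $P_0 \overset{\alpha\sigma}{\longrightarrow} A_0$, and re-apply the restriction rule (legitimate since $u \notin \mathrm{n}(\alpha) \cup \mathrm{v}(\alpha)$). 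For $P = P_1 \,|\, P_2$ only one component fires, say $P_1 \overset{\alpha\sigma}{\longrightarrow} A_1$; the induction hypothesis yields $\sigma \,|\, P_1 \overset{\alpha}{\longrightarrow} \sigma \,|\, A_1$ and the parallel rule adds $P_2$ back on.

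The heart of the argument is the two prefix axioms, where the label genuinely changes from $\alpha\sigma$ to $\alpha$. Take the input case $P = M(y).P_0$, whose transition is $P \overset{M(N_0)}{\longrightarrow} P_0\{N_0/y\}$ with $M(N_0) = \alpha\sigma$, so that $\alpha = M'(N')$ with $M'\sigma = M$ and $N'\sigma = N_0$. The key observation is that the active-substitution rule $R \,|\, \{L/z\} \equiv R\{L/z\} \,|\, \{L/z\}$ yields, more generally, $\sigma \,|\, R \equiv \sigma \,|\, R'$ whenever $R\sigma = R'\sigma$ (via $\sigma \,|\, R \equiv R\sigma \,|\, \sigma = R'\sigma \,|\, \sigma \equiv \sigma \,|\, R'$). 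Using the normalisation ($M\sigma = M$, because $M = M'\sigma$ carries no domain variable), I can replace the firing prefix by one written exactly as demanded by $\alpha$: $\sigma \,|\, M(y).P_0 \equiv \sigma \,|\, M'(y).P_0$. Now $M'(y).P_0 \overset{M'(N')}{\longrightarrow} P_0\{N'/y\}$ fires the un-substituted label, the parallel rule supplies $\sigma$, and a last appeal to $\sigma \,|\, R \equiv \sigma \,|\, R'$ identifies $\sigma \,|\, P_0\{N'/y\}$ with $\sigma \,|\, P_0\{N_0/y\} = \sigma \,|\, A$. The output axiom is symmetric, rewriting $\overline{M}\langle N \rangle$ as $\overline{M'}\langle N \rangle$ with $M'\sigma = M$ and keeping the fresh alias variable disjoint from $\mathrm{dom}(\sigma)$.

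I expect the main obstacle to be exactly this prefix bookkeeping: making precise the slogan that $\sigma$ absorbs the substitution in the label, as the equivalence $\sigma \,|\, R \equiv \sigma \,|\, R'$ for $R\sigma = R'\sigma$, and checking that no capture occurs when the fresh alias or bound variables meet $\mathrm{dom}(\sigma)$. Both are controlled by the standing assumption that $\sigma$ is cycle-free with images avoiding its own domain, so that applying $\sigma$ is idempotent on the channels and messages involved; the compound and structural cases carry no real difficulty.
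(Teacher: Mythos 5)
Your proof is correct, and its decisive step is the same one the paper uses: in the presence of the active substitution $\sigma$, the structural rule $A\,|\,\{M/x\} \equiv A\{M/x\}\,|\,\{M/x\}$ lets you rewrite the firing prefix so that the label $\alpha\sigma$ becomes $\alpha$ (your observation that $\sigma\,|\,R \equiv R\sigma\,|\,\sigma = R'\sigma\,|\,\sigma \equiv \sigma\,|\,R'$ whenever $R\sigma = R'\sigma$ is exactly the mechanism at work in the paper's two cases). Where you differ is in the scaffolding: the paper does not induct on the derivation at all, but instead invokes the shape lemma of Abadi--Blanchet--Fournet (their Lemma B.10) to conclude immediately that any $P$ with $P \overset{\alpha\sigma}{\longrightarrow} A$ satisfies $P \equiv \nu\widetilde{n}.(\pi.P'\,|\,P_2)$ with the prefix $\pi$ exposed, reducing everything to your ``heart of the argument'' in one stroke; you instead re-derive that reduction by hand, carrying $\sigma$ through the restriction, parallel and structural-closure cases. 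Your route is more self-contained and makes visible the bookkeeping (freshness of restricted $u$ with respect to $\mathrm{n}(\alpha)\cup\mathrm{v}(\alpha)$, idempotence of $\sigma$ from cycle-freeness) that the citation hides, at the cost of length; the paper's route is shorter but leans on an external normal-form result. One small point to tighten: in the restriction case you should note that $\mathrm{n}(\alpha)\subseteq\mathrm{n}(\alpha\sigma)$, so the side condition $u\notin\mathrm{n}(\alpha)$ for re-applying the restriction rule with the un-substituted label is inherited from $u\notin\mathrm{n}(\alpha\sigma)$ rather than requiring a fresh argument.
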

\begin{proof}
We prove by case analysis.
\begin{enumerate}
\item $\alpha = N(M)$

By  \cite[Lemma B.10]{DBLP:journals/corr/AbadiBF16}, $P \equiv \nu \widetilde{n}.(N\sigma (x).P'|P_2)$ and $A \equiv \nu \widetilde{n}.(P'\{M\sigma /x \} | P_2)$ for some $\widetilde{n}, P', P_2$.

Hence, $\sigma | P \equiv \sigma | \nu \widetilde{n}.(N (x).P'|P_2)$ and $\sigma | A \equiv \sigma | \nu \widetilde{n}.(P'\{M /x \} | P_2)$.

Therefore, $\sigma | P \overset{\alpha}{\longrightarrow} \sigma | A$.

\item $\alpha = \nu x.\overline{N} \langle x \rangle$

By  \cite[Lemma B.10]{DBLP:journals/corr/AbadiBF16}, $P \equiv \nu \widetilde{n}.(\overline{N\sigma} \langle M\sigma \rangle .P'|P_2)$ and $A \equiv \nu \widetilde{n}.(P' | \{M\sigma /x \} | P_2)$ for some $\widetilde{n}, P', P_2$.

Hence, $\sigma | P \equiv \sigma | \nu \widetilde{n}.(\overline{N} \langle M \rangle.P'|P_2)$ and $\sigma | A \equiv \sigma | \nu \widetilde{n}.(P' | \{M /x \} | P_2)$.

Therefore, $\sigma | P \overset{\alpha}{\longrightarrow} \sigma | A$.
\end{enumerate}
\end{proof}

\begin{lemma}[Lemma \ref{drop-nu}]
$\nu u.A \overset{\mu}{\longrightarrow} B \land A:\mathrm{closed} \land \mathrm{fv}(\mu) \subseteq \mathrm{dom}(\nu u.A) \land \mathrm{n}(\mu) \cap \mathrm{bn}(\nu u.A) = \emptyset$

$\Rightarrow \exists B' \ s.t.\ A \overset{\mu}{\longrightarrow} B' \land B \equiv \nu u.B'$
\end{lemma}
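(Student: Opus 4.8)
The plan is to reduce the statement to the partial normal form of $A$ and then invoke the transition-decomposition lemmas of \cite{DBLP:journals/corr/AbadiBF16}, which already absorb the scope-extrusion reasoning that makes a naive induction on the derivation awkward (the structural-equivalence closure rule can bury the restriction $\nu u$ and there is no clean cancellation $\nu u.X\equiv\nu u.Y\Rightarrow X\equiv Y$). First I would record what the two side conditions buy. Since $u\notin\mathrm{dom}(\nu u.A)$ when $u$ is a variable, $\mathrm{fv}(\mu)\subseteq\mathrm{dom}(\nu u.A)$ forces $u\notin\mathrm{fv}(\mu)$; and since $u\in\mathrm{bn}(\nu u.A)$ when $u$ is a name, $\mathrm{n}(\mu)\cap\mathrm{bn}(\nu u.A)=\emptyset$ forces $u\notin\mathrm{n}(\mu)$. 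Together these give $u\notin\mathrm{n}(\mu)\cup\mathrm{v}(\mu)$, which is exactly the side condition of the restriction rule, so any transition I recover for $A$ may be re-wrapped under $\nu u$.

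Write $\mathrm{pnf}(A)=\nu\widetilde{n}.(\sigma|P)$ (Definition \ref{pnf}), so that $A\equiv\nu\widetilde{n}.(\sigma|P)$ and $\mathrm{dom}(\sigma)=\mathrm{dom}(A)$. I would then split on whether $\mu$ is silent or labelled, and within each on whether $u$ is a name or a domain variable. For the labelled case $\mu=\alpha$ with $u$ a name, $\nu u.A\equiv\nu u\widetilde{n}.(\sigma|P)$ is again a normal process whose top restrictions $u\widetilde{n}$ avoid $\mathrm{n}(\alpha)$ and whose domain $\mathrm{dom}(\sigma)$ contains $\mathrm{fv}(\alpha)$. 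Applying \cite[Lemma B.19]{DBLP:journals/corr/AbadiBF16} peels off both the restrictions and the substitution, yielding $P\overset{\alpha\sigma}{\longrightarrow}C$ with $B\equiv\nu u\widetilde{n}.(\sigma|C)$. Recomposing in the other direction, Lemma \ref{drop-sigma} gives $\sigma|P\overset{\alpha}{\longrightarrow}\sigma|C$, and the restriction rule applied $|\widetilde{n}|$ times (its side condition holds because $\widetilde{n}\cap\mathrm{n}(\alpha)=\emptyset$) gives $A\equiv\nu\widetilde{n}.(\sigma|P)\overset{\alpha}{\longrightarrow}\nu\widetilde{n}.(\sigma|C)$; call the target $B'$. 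Then $B\equiv\nu u.\nu\widetilde{n}.(\sigma|C)=\nu u.B'$, as required. The silent case is identical but uses the internal-reduction decomposition \cite[Lemma B.23]{DBLP:journals/corr/AbadiBF16} in place of B.19, and for the recomposition it uses closure of $\longrightarrow$ under the evaluation context $\nu\widetilde{n}.(\sigma|[\_])$ directly.

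When $u$ is a variable it must lie in $\mathrm{dom}(\sigma)$; here I would first rewrite $\nu u.A$ as the normal process $\mathrm{pnf}(\nu u.A)$ (the $\nu x$ clause of Definition \ref{pnf}), which deletes $u$ from the domain of $\sigma$ and propagates its value into $P$, and then run the same decomposition-then-recomposition argument, now observing that $\alpha$ does not mention $u$ by the side condition just derived. The main obstacle throughout is precisely this bookkeeping: matching the outer $\nu u$ against the top-level restrictions and the domain of the normal form without accidental capture, keeping the name case (where $u$ joins $\widetilde{n}$) and the variable case (where $u$ leaves $\mathrm{dom}(\sigma)$) cleanly apart, and confirming that an output whose transmitted message mentions $u$ causes no trouble. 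This last point is painless here only because labels in the applied pi calculus carry the alias variable $x$ rather than the message itself, so an extruded restricted name simply remains under $\nu u$ and the equivalence $B\equiv\nu u.B'$ holds without any further renaming.
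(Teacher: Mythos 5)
Your proposal is correct and follows essentially the same route as the paper's own proof: reduce to $\mathrm{pnf}(A)=\nu\widetilde{n}.(\sigma|P)$, split on name/variable and silent/labelled, decompose via \cite[Lemmas B.19, B.23]{DBLP:journals/corr/AbadiBF16}, and recompose with Lemma \ref{drop-sigma} and closure under restriction/evaluation contexts. The only (immaterial) divergence is your remark that the $\nu x$ clause of Definition \ref{pnf} propagates the value of $x$ into $P$ --- the paper's definition merely restricts the domain of $\sigma$ --- but this does not affect the argument.
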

\begin{proof}
Let $\mathrm{pnf}(A) = \nu \widetilde{n}.(\sigma | P)$. We prove by case analysis.
\begin{enumerate}
\item $u$ is a name $n$, and $\mu$ is silent.

$\mathrm{pnf}(\nu n.A) = \nu n \widetilde{n}.(\sigma | P)$.

By  \cite[Lemma B.23]{DBLP:journals/corr/AbadiBF16}, $P \longrightarrow P'$ and $B \equiv \nu n \widetilde{n}.(\sigma | P')$ for some closed $P'$. 

Therefore, $ \nu \widetilde{n}.(\sigma | P) \longrightarrow  \nu \widetilde{n}.(\sigma | P')$ because internal reductions are closed by an application of an evaluation context.

$B' =  \nu \widetilde{n}.(\sigma | P')$ satisfies this lemma.

\item $u$ is a name $n$, and $\mu$ is a labelled action $\alpha$.

$\mathrm{pnf}(\nu n.A) = \nu n \widetilde{n}.(\sigma | P)$.

By  \cite[Lemma B.19]{DBLP:journals/corr/AbadiBF16}, $P \overset{\alpha \sigma}{\longrightarrow} C$ and $B \equiv \nu n \widetilde{n}.(\sigma | C)$ for some $C$.

By Lemma \ref{drop-sigma}, $\sigma | P \overset{\alpha}{\longrightarrow} \sigma | C$.

Therefore, $ \nu \widetilde{n}.(\sigma | P) \overset{\alpha}{\longrightarrow}  \nu \widetilde{n}.(\sigma | C)$.

$B' =  \nu \widetilde{n}.(\sigma | C)$ satisfies this lemma.

\item $u$ is a variable $x$, and $\mu$ is silent.

$\mathrm{pnf}(\nu x.A) = \nu \widetilde{n}.(\sigma' | P)$ where $\sigma' = \sigma_{|dom(\sigma)\setminus \{x\}}$. Note that $\sigma' \equiv \nu x.\sigma$.

By  \cite[Lemma B.23]{DBLP:journals/corr/AbadiBF16}, $P \longrightarrow P'$ and $B \equiv \nu \widetilde{n}.(\sigma' | P')$ for some closed $P'$.

Therefore, $ \nu \widetilde{n}.(\sigma | P) \longrightarrow  \nu \widetilde{n}.(\sigma | P')$ because internal reductions are closed by an application of an evaluation context.

$B' =  \nu \widetilde{n}.(\sigma | P')$ satisfies this lemma.

\item $u$ is a variable $x$, and $\mu$ is a labelled action $\alpha$.

$\mathrm{pnf}(\nu x.A) = \nu \widetilde{n}.(\sigma' | P)$ where $\sigma' = \sigma_{|dom(\sigma)\setminus \{x\}}$.

By  \cite[Lemma B.19]{DBLP:journals/corr/AbadiBF16}, $P \overset{\alpha \sigma'}{\longrightarrow} C$ and $B \equiv \nu \widetilde{n}.(\sigma' | C)$ for some $C$.

By Lemma \ref{drop-sigma}, $\sigma' | P \overset{\alpha}{\longrightarrow} \sigma' | C$.

Therefore, $ \nu \widetilde{n}.(\sigma | P) \overset{\alpha}{\longrightarrow}  \nu \widetilde{n}.(\sigma | C)$.

$B' =  \nu \widetilde{n}.(\sigma | C)$ satisfies this lemma.
\end{enumerate}

\end{proof}

\begin{lemma}[Lemma \ref{change-label}]
Let $\nu \widetilde{n}. (\sigma | P)$ be a closed normal process.

$\nu \widetilde{n}. (\sigma | P) \overset{\alpha}{\longrightarrow} A \land \alpha\sigma = \beta\sigma \land \mathrm{fv}(\alpha) \subseteq \mathrm{dom}(\sigma) \land \widetilde{n} \cap (\mathrm{n}(\alpha) \cup \mathrm{n}(\beta)) = \emptyset$

$ \Rightarrow \nu \widetilde{n}. (\sigma | P) \overset{\beta}{\longrightarrow} A$.
\end{lemma}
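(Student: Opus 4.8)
The plan is to reduce the relabelling to the purely syntactic fact that $\alpha$ and $\beta$ differ only in material that $\sigma$ collapses, and to shuttle between the plain process $P$ and its instance $P\sigma$ using the three lemmas already at hand. Since $\nu\widetilde{n}.(\sigma|P)$ is closed and normal, the images $x\sigma$ are ground, so $\sigma$ is idempotent on terms and $\sigma|P\equiv\sigma|P\sigma$ via the structural rule $A|\{M/x\}\equiv A\{M/x\}|\{M/x\}$. I would also record a free consequence of the hypotheses: because $\mathrm{fv}(\alpha)\subseteq\mathrm{dom}(\sigma)$, the label $\alpha\sigma$ is ground, hence so is $\beta\sigma=\alpha\sigma$, which forces $\mathrm{fv}(\beta)\subseteq\mathrm{dom}(\sigma)$ as well; this is what keeps the relabelled transition well-formed.

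First I would peel off the name restrictions. Applying Lemma \ref{drop-nu}, iterated over $\widetilde{n}$, to $\nu\widetilde{n}.(\sigma|P)\overset{\alpha}{\longrightarrow}A$ yields $A'$ with $\sigma|P\overset{\alpha}{\longrightarrow}A'$ and $A\equiv\nu\widetilde{n}.A'$. The side conditions $\mathrm{fv}(\alpha)\subseteq\mathrm{dom}(\sigma)$ and $\widetilde{n}\cap\mathrm{n}(\alpha)=\emptyset$ hold by hypothesis, and disjointness of $\mathrm{n}(\alpha)$ from the remaining bound names of $P$ is supplied by the standing bind-exclusiveness and safety conventions of this chapter.

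Next I would pass to the instance and swap the label. By Lemma \ref{shift-sigma}, from $\sigma|P\overset{\alpha}{\longrightarrow}A'$ (closed, normal, with $\mathrm{fv}(\alpha)\subseteq\mathrm{dom}(\sigma)$) I obtain $B'$ with $P\sigma\overset{\alpha\sigma}{\longrightarrow}B'$ and $A'\equiv\sigma|B'$. Since $\alpha\sigma=\beta\sigma$, the same derivation reads $P\sigma\overset{\beta\sigma}{\longrightarrow}B'$, and feeding this into Lemma \ref{drop-sigma} with the plain process $P\sigma$ and the action $\beta$ gives $\sigma|P\sigma\overset{\beta}{\longrightarrow}\sigma|B'$. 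Using $\sigma|P\equiv\sigma|P\sigma$ together with closure of the transition relation under structural equivalence, this becomes $\sigma|P\overset{\beta}{\longrightarrow}\sigma|B'\equiv A'$.

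Finally I would reinstate the restrictions: the restriction rule of the operational semantics applies because $\widetilde{n}\cap(\mathrm{n}(\beta)\cup\mathrm{v}(\beta))=\emptyset$ (the name part is a hypothesis, the variable part is automatic), yielding $\nu\widetilde{n}.(\sigma|P)\overset{\beta}{\longrightarrow}\nu\widetilde{n}.(\sigma|B')$, and $\nu\widetilde{n}.(\sigma|B')\equiv\nu\widetilde{n}.A'\equiv A$ closes the argument. The only genuinely delicate point is the two-fold appearance of $\sigma$: one must be sure that $\sigma|P$ and $\sigma|P\sigma$ are structurally equal and that re-applying $\sigma$ does not disturb $B'$, both of which hinge on the groundness of the images of $\sigma$ for a closed normal process. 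Everything else is routine checking of the side conditions of the three invoked lemmas.
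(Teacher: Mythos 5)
Your argument is correct, and it rests on exactly the same key observation as the paper's proof: once the frame is stripped away, the underlying plain-process transition carries the label $\alpha\sigma=\beta\sigma$, so it can be read with either name and then lifted back with Lemma \ref{drop-sigma} and the restriction rule. The difference is purely in how you strip: the paper invokes Lemma B.19 of the Abadi--Blanchet--Fournet reference once, directly obtaining $P\overset{\alpha\sigma}{\longrightarrow}B$ with $A\equiv\nu\widetilde{n}.(\sigma|B)$, and then immediately applies Lemma \ref{drop-sigma} to $P$; you instead chain the paper's own Lemma \ref{drop-nu} with Lemma \ref{shift-sigma}, which lands you on the instantiated process $P\sigma\overset{\alpha\sigma}{\longrightarrow}B'$ and therefore forces the extra structural step $\sigma|P\equiv\sigma|P\sigma$ (justified, as you say, by groundness of the images of $\sigma$ for a closed process) before you can come back to $\sigma|P$. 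Your route is marginally longer but has the virtue of staying inside the lemmas proved in this thesis rather than appealing to the external Lemma B.19; both routes share the same latent side condition, which you flag explicitly, namely that $\mathrm{n}(\alpha)$ must also avoid the bound names internal to $P$, which the statement of Lemma \ref{change-label} does not literally guarantee and which both proofs discharge by the chapter's standing safety and bind-exclusiveness conventions.
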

\begin{proof}
Let $\nu \widetilde{n}. (\sigma | P)$ be a closed normal process.

By  \cite[Lemma B.19]{DBLP:journals/corr/AbadiBF16}, $P \overset{\alpha\sigma}{\longrightarrow} B \land A \equiv \nu \widetilde{n}. (\sigma | B)$.

By the assumption, $P \overset{\beta\sigma}{\longrightarrow} B$.

By Lemma \ref{drop-sigma}, $\sigma | P \overset{\beta}{\longrightarrow} \sigma | B$, so $\nu \widetilde{n}. (\sigma | P) \overset{\beta}{\longrightarrow} \nu \widetilde{n}. (\sigma | B)$.

Therefore, $\nu \widetilde{n}. (\sigma | P) \overset{\beta}{\longrightarrow} A$.
\end{proof}
\newpage
\begin{lemma}[Lemma \ref{erase-sigma}]\leavevmode \par
$\sigma | A \overset{\mu}{\longrightarrow} \sigma | B \land \mathrm{dom}(\sigma) \cap \mathrm{fv}(\mu) = \emptyset \land x \in \mathrm{dom}(\sigma) \Rightarrow x\sigma$: closed 

$\Rightarrow A\sigma \overset{\mu}{\longrightarrow} B\sigma$.
\end{lemma}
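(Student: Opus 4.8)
The plan is to reduce the statement to a short manipulation with structural equivalence plus the restriction rule, rather than repeating the action-by-action analysis of Lemma \ref{drop-sigma}. The key observation is that since every image $x\sigma$ with $x \in \mathrm{dom}(\sigma)$ is closed, the active substitution $\sigma$ can be fully applied to $A$ and $B$ and then discarded. Write $\widetilde{x} = \mathrm{dom}(\sigma)$. Using the rule $A | \{M/x\} \equiv A\{M/x\} | \{M/x\}$ repeatedly (legitimate because the $x\sigma$ are closed, so no one substitution disturbs another), I first obtain $\sigma | A \equiv \sigma | A\sigma$ and likewise $\sigma | B \equiv \sigma | B\sigma$. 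Since $\overset{\mu}{\longrightarrow}$ is closed under $\equiv$ (Definition \ref{int_red} for silent $\mu$, and the last structural rule of the labelled semantics for $\mu=\alpha$), the hypothesis $\sigma | A \overset{\mu}{\longrightarrow} \sigma | B$ upgrades to $\sigma | A\sigma \overset{\mu}{\longrightarrow} \sigma | B\sigma$, where now $\mathrm{fv}(A\sigma) \cap \widetilde{x} = \mathrm{fv}(B\sigma) \cap \widetilde{x} = \emptyset$, again because the images are closed.

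Next I restrict the whole transition by $\nu\widetilde{x}$. For silent $\mu$ this is justified because $\nu\widetilde{x}.[\_]$ is an evaluation context; for $\mu=\alpha$ it is the explicit restriction rule, whose side condition $\widetilde{x} \cap (\mathrm{n}(\alpha)\cup\mathrm{v}(\alpha))=\emptyset$ holds since the $x_i$ are variables (disjoint from $\mathrm{n}(\alpha)$), since $\mathrm{fv}(\alpha)\cap\widetilde{x}=\emptyset$ is exactly the hypothesis $\mathrm{dom}(\sigma)\cap\mathrm{fv}(\mu)=\emptyset$, and since the fresh output alias variable lies outside $\mathrm{dom}(\sigma)$ by the freshness convention. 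This yields $\nu\widetilde{x}.(\sigma | A\sigma) \overset{\mu}{\longrightarrow} \nu\widetilde{x}.(\sigma | B\sigma)$. Finally I simplify both endpoints: because $\widetilde{x}\cap\mathrm{fv}(A\sigma)=\emptyset$, scope extrusion gives $\nu\widetilde{x}.(\sigma | A\sigma) \equiv (\nu\widetilde{x}.\sigma) | A\sigma$, and since each image is closed we have $\mathrm{fv}(\{x\sigma/x\})=\{x\}$, so restricting the variables one at a time and using $\nu x.\{M/x\}\equiv 0$ collapses $\nu\widetilde{x}.\sigma \equiv 0$; hence $\nu\widetilde{x}.(\sigma | A\sigma)\equiv A\sigma$ and symmetrically $\nu\widetilde{x}.(\sigma | B\sigma)\equiv B\sigma$. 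One more appeal to closure of $\overset{\mu}{\longrightarrow}$ under $\equiv$ delivers $A\sigma \overset{\mu}{\longrightarrow} B\sigma$.

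The routine content is confined to the iterated equivalence of the first paragraph. The step I expect to require the most care is the restriction side condition in the labelled case, specifically confirming that the bound alias variable introduced by an output action does not clash with $\mathrm{dom}(\sigma)$, together with making sure the two closed-image hypotheses are invoked exactly where needed, namely for $\nu\widetilde{x}.\sigma\equiv 0$ and for $\mathrm{fv}(A\sigma)\cap\widetilde{x}=\emptyset$. If one preferred to avoid this freshness bookkeeping, the alternative is the direct route mirroring Lemma \ref{drop-sigma}: case split on $\mu\in\{N(M),\ \nu x.\overline{N}\langle x\rangle,\ \tau\}$, use \cite[Lemma B.10]{DBLP:journals/corr/AbadiBF16}, \cite[Lemma B.19]{DBLP:journals/corr/AbadiBF16}, or \cite[Lemma B.23]{DBLP:journals/corr/AbadiBF16} to expose the acting prefix, apply $\sigma$ to the exposed redex, and recombine; but the structural argument above is shorter and keeps the three hypotheses transparently accounted for.
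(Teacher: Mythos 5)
Your proposal is correct and follows essentially the same route as the paper's (very terse) proof: restrict the given transition by $\nu\widetilde{x}$ with $\widetilde{x}=\mathrm{dom}(\sigma)$ and collapse $\nu\widetilde{x}.(\sigma\mid A)\equiv A\sigma$ via the active-substitution and $\nu x.\{M/x\}\equiv 0$ rules. You have merely made explicit the structural-equivalence chain and the restriction side conditions that the paper leaves implicit.
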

\begin{proof}
Let $\widetilde{x} = \mathrm{dom}(\sigma)$.

$\nu \widetilde{x}. (\sigma | A) \overset{\mu}{\longrightarrow} \nu \widetilde{x}. (\sigma | B)$.

$A\sigma \overset{\mu}{\longrightarrow} B\sigma$.
%
%
\end{proof}
\begin{lemma}[Lemma \ref{shift-sigma}]
$\sigma | P \overset{\mu}{\longrightarrow} B \land \sigma | P$: closed normal $\land \mathrm{fv}(\mu) \subseteq \mathrm{dom}(\sigma)$

$\Rightarrow \exists B'\ \mathrm{s.t.}\ P\sigma \overset{\mu\sigma}{\longrightarrow} B' \land B \equiv \sigma | B'$.
\end{lemma}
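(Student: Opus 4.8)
The plan is to reduce the statement to the structural operational lemmas of Abadi, Blanchet and Fournet \cite{DBLP:journals/corr/AbadiBF16} (namely their Lemmas B.19 and B.23, already used throughout this appendix), after exploiting one structural feature of partial normal forms: the plain part of a normal process is never touched by its own active substitution. Concretely, first I would record the invariant that for a normal process $\sigma | P$ one has $\mathrm{fv}(P) \cap \mathrm{dom}(\sigma) = \emptyset$, hence $P\sigma = P$. This follows by inspecting Definition \ref{pnf}: the only clause producing a nontrivial substitution alongside a nontrivial plain part is the one for $A|B$, where the plain component is $(P|Q)(\sigma \uplus \rho)$, i.e. the substitution $\sigma \uplus \rho$ is applied to it; since $\uplus$ (Definition \ref{nml-sb}) yields a cycle-free substitution whose images avoid the domain, the result has no free occurrence of a domain variable, while the remaining clauses only shrink the domain or add restrictions. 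When $\sigma | P$ is moreover closed, $\mathrm{fv}(P) \subseteq \mathrm{dom}(\sigma)$, so in fact $P$ is ground; either way $P\sigma = P$.

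With $P\sigma = P$ in hand, I would split on the shape of $\mu$. If $\mu$ is silent, then $\mu\sigma = \mu$, and applying \cite[Lemma B.23]{DBLP:journals/corr/AbadiBF16} to $\sigma | P \longrightarrow B$ (a normal process with empty name prefix) yields a closed $B'$ with $P \longrightarrow B'$ and $B \equiv \sigma | B'$; rewriting $P$ as $P\sigma$ gives $P\sigma \overset{\mu\sigma}{\longrightarrow} B'$, as required. If $\mu$ is a labelled action $\alpha$, then the hypothesis $\mathrm{fv}(\alpha) \subseteq \mathrm{dom}(\sigma)$ together with the empty name prefix lets me apply \cite[Lemma B.19]{DBLP:journals/corr/AbadiBF16} to $\sigma | P \overset{\alpha}{\longrightarrow} B$, producing $B'$ with $P \overset{\alpha\sigma}{\longrightarrow} B'$ and $B \equiv \sigma | B'$; again replacing $P$ by $P\sigma$ closes the case. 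Note that this is exactly the complementary situation to Lemma \ref{erase-sigma}: there the label is disjoint from $\mathrm{dom}(\sigma)$ and the substitution can simply be erased, whereas here the label mentions domain variables, so the substitution must instead be pushed into the label, which is precisely what the passage from $\mu$ to $\mu\sigma$ records.

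I expect the main obstacle to be the first step, the identity $P\sigma = P$. It is the conceptual heart of the lemma --- it is what separates the active-substitution presentation $\sigma | P$ from the applied presentation $P\sigma$ --- and it must be argued from the normal-form machinery rather than assumed, since for an arbitrary extended process $\sigma | P$ the plain part may well contain free domain variables. Once this is settled, the two cases are routine invocations of the cited transition lemmas, so the remaining care goes into checking that $\sigma | P$ really is presented with an empty restriction prefix (so that the conclusions of B.19 and B.23 read $B \equiv \sigma | B'$ with no stray $\nu\widetilde{n}$) and that the side conditions of those lemmas, in particular $\mathrm{fv}(\alpha) \subseteq \mathrm{dom}(\sigma)$, are met.
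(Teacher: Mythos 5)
Your proof is correct, and its skeleton coincides with the paper's: the case split on $\mu$ and the appeals to \cite[Lemmas B.19 and B.23]{DBLP:journals/corr/AbadiBF16} are exactly the paper's opening moves, and they deliver the same intermediate facts ($P \overset{\mu\sigma}{\longrightarrow} Q$ with $B \equiv \sigma | Q$). The divergence is in how the transition of $P$ becomes a transition of $P\sigma$. You establish the structural invariant $\mathrm{fv}(P) \cap \mathrm{dom}(\sigma) = \emptyset$ for normal processes, so that $P\sigma = P$ and nothing further is needed; the paper instead pushes the substitution through the transition by citing \cite[Lemma B.3]{DBLP:journals/corr/AbadiBF16} (silent case) and \cite[Lemma B.11]{DBLP:journals/corr/AbadiBF16} (labelled case) to get $P\sigma \overset{\mu\sigma}{\longrightarrow} Q\sigma$, and then closes with the structural equivalence $\sigma | Q\sigma \equiv \sigma | Q \equiv B$. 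The trade-off is real: your route front-loads a one-time observation about the shape of partial normal forms, which you rightly identify as the point needing proof --- it is the clause $\mathrm{pnf}(A|B) = \nu\widetilde{n}\widetilde{m}.(\sigma\uplus\rho \mid (P|Q)(\sigma\uplus\rho))$ of Definition \ref{pnf}, together with the fact that the images of $\uplus$ (Definition \ref{nml-sb}) avoid the domain, that eliminates all free domain variables from the plain part --- whereas the paper's version never needs $P\sigma = P$ and so stays valid under a looser reading of ``normal process'' in which the plain part may still mention domain variables, at the cost of two extra external lemmas and one more structural-equivalence step. Both arguments produce a legitimate witness $B'$ (yours is $Q$ itself; the paper's is $Q\sigma$), so I see no gap, only a different placement of the work.
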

\begin{proof}\leavevmode \par
\begin{enumerate}
\item $\mu$ is silent.

By  \cite[Lemma B.23]{DBLP:journals/corr/AbadiBF16}, $P \longrightarrow Q \land B \equiv \sigma | Q$.

By  \cite[Lemma B.3]{DBLP:journals/corr/AbadiBF16}, $P\sigma \longrightarrow Q\sigma$.

$\sigma | Q\sigma \equiv \sigma | Q \equiv B$.

\item $\mu$ is a labelled action $\alpha$.

By  \cite[Lemma B.19]{DBLP:journals/corr/AbadiBF16}, $P \overset{\alpha\sigma}{\longrightarrow} C \land B \equiv \sigma | C$ for some $C$.

By  \cite[Lemma B.11]{DBLP:journals/corr/AbadiBF16}, $P\sigma \overset{\alpha\sigma}{\longrightarrow} C\sigma$.

$\sigma | C\sigma \equiv \sigma | C \equiv B $.
\end{enumerate}
\end{proof}

\begin{lemma}[Lemma \ref{bra}]
$\sigma\rho \equiv [\sigma\rho]$.
\end{lemma}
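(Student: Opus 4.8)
The plan is to verify the two ingredients that $\equiv$ demands of two bare active substitutions---that they have the same domain and equationally equal images---after first isolating the one structural fact that drives everything: that $\uplus$ merely reorganises a parallel composition. Concretely, the first step is to prove
\begin{equation*}
\sigma \uplus \rho \equiv \sigma \mid \rho .
\end{equation*}
Writing $\sigma\mid\rho \equiv \{M_1/x_1,\dots,M_n/x_n\}$ in the reordered form of Definition \ref{nml-sb} (so each $M_i$ avoids $x_1,\dots,x_i$), I would induct on $i$ to show $\sigma_i \equiv \{M_1/x_1\}\mid\cdots\mid\{M_i/x_i\}$. The inductive step is a single application of the structural axiom $A\mid\{M/x\} \equiv A\{M/x\}\mid\{M/x\}$, read from right to left inside the evaluation context: since $\sigma_{i+1} = \sigma_i\{M_{i+1}/x_{i+1}\}\mid\{M_{i+1}/x_{i+1}\}$, this axiom gives $\sigma_{i+1}\equiv \sigma_i\mid\{M_{i+1}/x_{i+1}\}$, and the hypothesis closes the step. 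Taking $i=n$ yields the displayed equivalence.

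Alongside this I would record, by the same induction, that $\sigma\uplus\rho$ has \emph{reduced} images: each $x_i(\sigma\uplus\rho)$ equals $M_i\{M_{i+1}/x_{i+1}\}\cdots\{M_n/x_n\}$, which contains none of $x_1,\dots,x_n$ (it avoids $x_1,\dots,x_i$ because $M_i$ does, and the trailing substitutions erase $x_{i+1},\dots,x_n$). The domain bookkeeping is then immediate: $\mathrm{dom}(\sigma\uplus\rho)=\mathrm{dom}(\sigma)\cup\mathrm{dom}(\rho)$ is a disjoint union (parallel composition is only permitted on disjoint domains), so $\mathrm{dom}([\sigma\rho]) = \mathrm{dom}(\sigma\uplus\rho)\cap\mathrm{dom}(\sigma) = \mathrm{dom}(\sigma) = \mathrm{dom}(\sigma\rho)$, matching the two sides of the claim.

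For the images I would re-attach $\rho$ to both candidates and reduce. Pushing the factors of $\rho$ into $\sigma$ by repeated use of the same axiom gives $\sigma\rho \mid \rho \equiv \sigma \mid \rho \equiv \sigma\uplus\rho$; hence $(\sigma\rho)\uplus\rho \equiv \sigma\uplus\rho$, and both of these carry reduced images by the second step. The substitution lemma stated just after the definition of labelled bisimulation (two $\equiv$-equivalent substitutions with reduced images have equal domains and agree on each coordinate modulo $\Sigma$) then forces $x_i\big((\sigma\rho)\uplus\rho\big) = x_i(\sigma\uplus\rho)$ for every $x_i \in \mathrm{dom}(\sigma)$. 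Since $x_i(\sigma\uplus\rho) = x_i[\sigma\rho]$ by definition, and $x_i((\sigma\rho)\uplus\rho) = x_i(\sigma\rho)$ once $x_i(\sigma\rho)$ has itself been brought to reduced form, this gives $\Sigma \vdash x_i(\sigma\rho) = x_i[\sigma\rho]$; together with the domain equality, $\sigma\rho\equiv[\sigma\rho]$ follows.

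The step I expect to be the genuine obstacle is this image comparison: matching the iterated back-substitution $M_i\{M_{i+1}/x_{i+1}\}\cdots\{M_n/x_n\}$ hidden inside $\uplus$ against the single post-composition $M_i\rho$, through the arbitrary interleaving of $\mathrm{dom}(\sigma)$- and $\mathrm{dom}(\rho)$-variables forced by the reordering. Handling it cleanly requires checking that the $\mathrm{dom}(\rho)$-variables occurring in $x_i\sigma$ are exactly the ones erased by the trailing substitutions, and that their replacements coincide with $\rho$'s images modulo $\Sigma$, so that the reduction of $x_i(\sigma\rho)$ is well defined and stable. This is also the place where the companion identities of Lemmas \ref{sub-ex} and \ref{cket} are most naturally established, by a shared induction on $n$.
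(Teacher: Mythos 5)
Your first step is sound and is exactly the ingredient the paper's one-line proof rests on: iterating the alias axiom $A|\{M/x\} \equiv A\{M/x\}|\{M/x\}$ gives $\sigma \uplus \rho \equiv \sigma | \rho$, and the same induction shows $\sigma \uplus \rho$ has reduced images. The domain bookkeeping is also fine. The gap is in the image comparison: the equality $\Sigma \vdash x_i(\sigma\rho) = x_i[\sigma\rho]$ that you aim for is false in general. Take $\sigma = \{f(y)/x_1,\, a/x_2\}$ and $\rho = \{g(x_2)/y\}$ (cycle-free in the order $x_1, y, x_2$). Then $x_1(\sigma\rho) = f(g(x_2))$ while $x_1[\sigma\rho] = f(g(a))$; for a generic theory these are not provably equal, since one contains the free variable $x_2$ and the other the name $a$. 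The two substitutions are nonetheless structurally equivalent, but the witness is the alias axiom applied \emph{among the components of $\sigma\rho$ itself}, namely $\{f(g(x_2))/x_1\}|\{a/x_2\} \equiv \{f(g(a))/x_1\}|\{a/x_2\}$ --- not coordinatewise $\Sigma$-equality. Your hedge ``once $x_i(\sigma\rho)$ has itself been brought to reduced form'' is precisely where the argument breaks: bringing a coordinate to reduced form is an $\equiv$-preserving operation on the whole substitution, not a $\Sigma$-preserving operation on that coordinate, so the conclusion of the substitution lemma about the reduced pair $(\sigma\rho)\uplus\rho \equiv \sigma\uplus\rho$ cannot be transported back to $\sigma\rho$ coordinate by coordinate. (The substitution lemma is also only stated for substitutions with reduced images, which $\sigma\rho$ need not have.)

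The repair is to stay structural and never compare coordinates, which is what the paper does. With $\widetilde{y} = \mathrm{dom}(\rho)$, the alias axiom gives $\sigma|\rho \equiv \sigma\rho|\rho$, hence $\nu\widetilde{y}.(\sigma|\rho) \equiv \sigma\rho\,|\,\nu\widetilde{y}.\rho \equiv \sigma\rho$ by scope extrusion and $\nu x.\{M/x\} \equiv 0$ (this uses that $\rho$'s images avoid $\mathrm{dom}(\rho)$, which one may assume by the normalization proposition of Chapter 2). On the other side, $\nu\widetilde{y}.(\sigma|\rho) \equiv \nu\widetilde{y}.(\sigma\uplus\rho) \equiv (\sigma\uplus\rho)_{|\mathrm{dom}(\sigma)} = [\sigma\rho]$, where the $\widetilde{y}$-components are garbage-collected legitimately precisely because $\sigma\uplus\rho$ has reduced images --- the fact you already established. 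Once this is in place, your detour through $(\sigma\rho)\uplus\rho \equiv \sigma\uplus\rho$ and the substitution lemma is unnecessary.
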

\begin{proof}
Let $\widetilde{y} = \mathrm{dom}(\rho)$. Then, $\sigma\rho \equiv \nu \widetilde{y}.(\sigma | \rho) \equiv \nu \widetilde{y}.(\sigma \uplus \rho) \equiv [\sigma\rho]$.
\end{proof}
\begin{lemma}[Lemma \ref{sub-ex}]
$\rho\sigma \equiv \rho[\sigma\rho]$.
\end{lemma}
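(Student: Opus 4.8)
The plan is to prove the equivalence by reducing both active substitutions to one common \emph{solved} frame, namely $(\sigma\uplus\rho)_{|\mathrm{dom}(\rho)}$, and then reading off the equivalence. The first observation is that $\rho\sigma$ and $\rho[\sigma\rho]$ are both active substitutions with domain $\mathrm{dom}(\rho)$, so the whole statement lives at the level of frames; by the earlier lemma characterising $\equiv$ on active substitutions, it suffices to match their images up to the equational theory once both have been brought to solved form. I would therefore treat the two sides separately and show each is structurally equivalent to $(\sigma\uplus\rho)_{|\mathrm{dom}(\rho)}$.

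For the left-hand side I would simply invoke Lemma \ref{bra} with the roles of $\sigma$ and $\rho$ interchanged, which gives $\rho\sigma\equiv[\rho\sigma]=(\rho\uplus\sigma)_{|\mathrm{dom}(\rho)}$. Since $\sigma|\rho\equiv\rho|\sigma$ and $\uplus$ produces the solved form of its argument (unique up to $\equiv$), we have $\rho\uplus\sigma\equiv\sigma\uplus\rho$, and hence $\rho\sigma\equiv(\sigma\uplus\rho)_{|\mathrm{dom}(\rho)}$. This half is essentially a restatement of Lemma \ref{bra} together with the symmetry of $\uplus$, so it should be routine.

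The substance is the right-hand side, where I would show $\rho[\sigma\rho]\equiv(\sigma\uplus\rho)_{|\mathrm{dom}(\rho)}$ termwise. For $y\in\mathrm{dom}(\rho)$ the image under $(\sigma\uplus\rho)_{|\mathrm{dom}(\rho)}$ is $y(\sigma\uplus\rho)$, i.e. $y\rho$ after the full triangular solving of $\sigma\uplus\rho$, whereas $\rho[\sigma\rho]$ assigns $(y\rho)[\sigma\rho]$. These agree because $[\sigma\rho]=(\sigma\uplus\rho)_{|\mathrm{dom}(\sigma)}$ is exactly the $\mathrm{dom}(\sigma)$-part of the solved substitution, so applying it to $y\rho$ carries out precisely the $\sigma$-contribution of the solving. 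The delicate point, which I expect to be the main obstacle, is that a cycle-free substitution need not be idempotent: the definition of cycle-freeness only forbids $x_i$ from occurring in $M_1,\dots,M_i$, so $y\rho$ may still contain later variables of $\mathrm{dom}(\rho)$, and $(y\rho)[\sigma\rho]$ may then fail to be syntactically equal to the fully solved $y(\sigma\uplus\rho)$. Consequently $\rho\sigma$ and $\rho[\sigma\rho]$ are in general only structurally equivalent, the remaining $\mathrm{dom}(\rho)$-variables being absorbed by $\rho$'s own active-substitution rule $A|\{M/x\}\equiv A\{M/x\}|\{M/x\}$. The cleanest way to discharge this reconciliation is by induction along the triangular order $x_1,\dots,x_n$ used to define $\uplus$ in Definition \ref{nml-sb}, peeling off one binding at a time exactly as in the normalisation proposition, so that at each step one only compares a single substitution before and after one solving step.
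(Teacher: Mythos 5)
Your strategy is workable but it is a genuinely different, and much heavier, route than the one the paper takes. The paper never descends to a termwise comparison of solved forms at all: it writes $\rho\sigma \equiv \nu\widetilde{x}.(\rho\,|\,\sigma)$ with $\widetilde{x}=\mathrm{dom}(\sigma)$, uses the alias rule $A\,|\,\{M/x\}\equiv A\{M/x\}\,|\,\{M/x\}$ to replace $\sigma$ by $\sigma\rho$ under the restriction, rewrites $\sigma\rho\equiv[\sigma\rho]$ by Lemma \ref{bra}, and folds the restriction back up to get $\rho[\sigma\rho]$ --- four lines, no induction. Your plan instead routes both sides through the common solved form $(\sigma\uplus\rho)_{|\mathrm{dom}(\rho)}$; the left half is indeed just Lemma \ref{bra} with the roles swapped plus commutativity of $\uplus$, but the right half is exactly the part you flag as delicate, and it is where all the work sits: showing $\rho[\sigma\rho]\equiv(\sigma\uplus\rho)_{|\mathrm{dom}(\rho)}$ requires re-proving, by induction along the triangular order, that absorbing the residual $\mathrm{dom}(\rho)$-variables via the alias rule yields the fully solved images --- essentially a second copy of the argument behind Lemma \ref{bra} and the normalisation proposition. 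Two further small cautions: the lemma you invoke to reduce $\equiv$ to termwise equality is stated in the paper only in the direction ($\equiv$ implies termwise equality) and only for substitutions whose images contain no domain variables, so you need the easy converse (which follows from the rule $\{M/x\}\equiv\{N/x\}$ when $\Sigma\vdash M=N$) and you must first bring both sides to solved form before it applies; and the claim $\rho\uplus\sigma\equiv\sigma\uplus\rho$ silently relies on well-definedness of $\uplus$ up to the choice of triangular reordering. None of this is wrong, but your proposal buys generality it does not need at the price of an induction the paper's equational chain avoids entirely.
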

\begin{proof}
Let $\widetilde{x} = \mathrm{dom}(\sigma)$. Then,
\begin{align*}
\rho\sigma &\equiv \nu \widetilde{x}. (\rho | \sigma) \\
&\equiv \nu \widetilde{x}. (\rho | \sigma\rho) \\
&\equiv \nu \widetilde{x}. (\rho | [\sigma\rho])\ \ \mathrm{By\ Lemma\ \ref{bra}} \\
&\equiv \rho[\sigma\rho].
\end{align*}
\end{proof}
\begin{lemma}[Lemma \ref{cket}]
$(\sigma | P)\rho \equiv [\sigma\rho] | P[\rho\sigma]$.
\end{lemma}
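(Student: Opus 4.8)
The plan is to prove $(\sigma \mid P)\rho \equiv [\sigma\rho] \mid P[\rho\sigma]$ by unfolding what the floating active substitution $\rho$ does to the composite $\sigma \mid P$, and then reassembling the pieces using the two substitution-exchange lemmas already at hand. First I would write $\widetilde{y} = \mathrm{dom}(\rho)$ and pull $\rho$ into scope by the structural rule for active substitutions, observing that $(\sigma \mid P)\rho$ is the process obtained by letting $\rho$ act on the free occurrences of its domain variables throughout $\sigma \mid P$. Because parallel composition distributes over this action, the key is to track separately how $\rho$ affects the substitution part $\sigma$ and the plain part $P$.

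The core computation I would carry out is this. Applying $\rho$ to $\sigma \mid P$ should give $\sigma\rho \mid P\rho$ up to structural equivalence, but this is not yet in the target shape because $\sigma\rho$ still records $\rho$'s action while $P\rho$ does not see the bindings introduced by $\sigma$. Here I would invoke Lemma~\ref{bra}, which gives $\sigma\rho \equiv [\sigma\rho]$, to rewrite the left component into $[\sigma\rho]$. For the plain part, the subtlety is that once $[\sigma\rho]$ is floating alongside $P$, the variables in $\mathrm{dom}(\sigma)$ are again subject to substitution, so $P$ effectively receives the combined effect of first $\rho$ and then $\sigma$; this is precisely the content encoded in $P[\rho\sigma]$, and Lemma~\ref{sub-ex} ($\rho\sigma \equiv \rho[\sigma\rho]$) is the tool that reconciles the order of these two substitutions. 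I would therefore chain the structural equivalences so that the $\sigma$-part collapses to $[\sigma\rho]$ and the $P$-part picks up exactly $[\rho\sigma]$, yielding $[\sigma\rho] \mid P[\rho\sigma]$.

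The main obstacle I anticipate is the bookkeeping of which substitution sees which bindings, i.e.\ getting the interleaving of $\sigma$ and $\rho$ exactly right so that no variable capture occurs and the $\uplus$-based definition of $[\,\cdot\,]$ lines up with the naive composition. In particular, I expect the delicate step to be justifying that the floating copy of $\sigma$ (after $\rho$ has acted) contributes $[\rho\sigma]$ rather than $[\sigma\rho]$ to $P$, which is exactly where Lemma~\ref{sub-ex} must be applied in the correct direction. Since both \ref{bra} and \ref{sub-ex} are already available, the argument is a short sequence of structural-equivalence rewrites, and I would present it as a single $\equiv$-chain mirroring the proof of Lemma~\ref{sub-ex}, being careful to restrict and unrestrict the domains of $\sigma$ and $\rho$ at the appropriate moments so that each step is a legal instance of the structural rules for active substitutions.
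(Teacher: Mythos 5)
Your proposal matches the paper's proof essentially step for step: distribute $\rho$ over the parallel composition to get $\sigma\rho \mid P\rho$, collapse the left component to $[\sigma\rho]$ via Lemma~\ref{bra}, let the floating $[\sigma\rho]$ act on $P\rho$, and then use Lemma~\ref{sub-ex} to rewrite $\rho[\sigma\rho]$ back to $\rho\sigma$ so that the plain part becomes $P[\rho\sigma]$. The paper carries this out as exactly the single $\equiv$-chain you describe (restricting $\mathrm{dom}(\rho)$ to fold and unfold the substitution application), so your approach is correct and identical in substance.
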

\begin{proof}
Let $\widetilde{y} = \mathrm{dom}(\rho)$. Then,
\begin{align*}
(\sigma | P)\rho &= \sigma\rho | P\rho \\
&\equiv [\sigma\rho] | P\rho \\
&\equiv [\sigma\rho] | P\rho[\sigma\rho] \\
&\equiv [\sigma\rho] | \nu \widetilde{y}. (P | \rho[\sigma\rho]) \\
&\equiv [\sigma\rho] | \nu \widetilde{y}. (P | \rho\sigma) \ \ \mathrm{By\ Lemma\ \ref{sub-ex}} \\
&\equiv [\sigma\rho] | P\rho\sigma.
\end{align*}
\end{proof}

\end{document}